\newcommand{\remove}[1]{}
\newtheorem{thm}{Theorem}[section]
\newtheorem{claim}[thm]{Claim}
\newtheorem{lemma}[thm]{Lemma}
\newtheorem{define}[thm]{Definition}
\newtheorem{cor}[thm]{Corollary}
\newtheorem{remark}[thm]{Remark}
\newtheorem{THM}{Theorem}
\renewcommand{\remove}[1]{}
\newcommand{\polylog}{{\rm polylog}}
\newcommand{\poly}{{\rm poly}}
\newcommand{\eps}{{\varepsilon}}
\renewcommand{\l}{\left}
\renewcommand{\r}{\right}
\newcommand{\ep}{{\epsilon}}
\newcommand{\la}{{\lambda}}
\newcommand{\comments}[1]{}
\newcommand{\nmExt}{\textnormal{nmExt}}
\newcommand{\D}{\mathbf{\mathcal{D}}}
\newcommand{\scirc}{\hspace{0.1cm}\circ \hspace{0.1cm}}
\newcommand{\slice}{\textnormal{Slice}}
\newcommand{\samp}{\textnormal{Samp}}
\newcommand{\Ext}{\textnormal{Ext}}
\newcommand{\laExt}{\textnormal{laExt}}
\newcommand{\support}{\textnormal{support}}
\newcommand{\reduce}{\textnormal{reduce}}
\newcommand{\nipm}{\textnormal{NIPM}}
\newcommand{\ipm}{\textnormal{IPM}}
\newcommand{\E} {\mathbf{E}}
\newcommand{\N}{\mathcal{N}}
\newcommand{\A}{\mathcal{A}}
\newcommand{\U}{\mathbf{U}}
\newcommand{\X}{\mathbf{X}}
\newcommand{\Y}{\mathbf{Y}}
\newcommand{\xb}{\mathbf{x}}
\newcommand{\zo}{\{0, 1\}}
\newcommand{\pr}{\mathbf{Pr}}
\newcommand{\s}{\mathbf{S}}
\newcommand{\W}{\mathbf{W}}
\newcommand{\V}{\mathbf{V}}
\newcommand{\rr}{\mathbf{R}}
\newcommand{\Z}{\mathbf{Z}}
\newcommand{\Q}{\mathbf{Q}}
\newcommand{\w}{\mathbf{W}}
\newcommand{\flip}{\textnormal{flip-flop}}
\newcommand{\adv}{\textnormal{advGen}}
\def\old@comma{,}
    \old@comma\discretionary{}{}{}%
\newcommand{\x}[1]{{}$\kern-2\mathsurround${}
  \binoppenalty10000 \relpenalty10000 #1{}$\kern-2\mathsurround${}}
\def\draft{1}   
    \def\ShowAuthNotes{1}
    \def\ShowAuthNotes{0}
\newcommand{\authnote}[2]{{ \footnotesize \bf{\color{red}[#1's Note: {\color{blue}#2}]}}}
\newcommand{\authnote}[2]{}
\def\sand{%
  \end{tabular}%
  \hskip 0.5em \@plus.17fil\relax
  \begin{tabular}[t]{c}}
\begin{document} 
\title{Explicit Non-Malleable Extractors, Multi-Source Extractors  and Almost Optimal Privacy Amplification Protocols}
\author{  Eshan Chattopadhyay\thanks{
Partially supported by NSF Grant CCF-1526952 and a Dissertation Writing Fellowship awarded by UT Austin.
}\\ Department of Computer Science,\\ University of Texas at Austin \\ \href{mailto:eshanc@cs.utexas.edu}{eshanc@cs.utexas.edu}
 \and Xin Li\\ Department of Computer Science\\John Hopkins University \\  \href{mailto:lixints@cs.jhu.edu}{lixints@cs.jhu.edu}}

 \maketitle
\thispagestyle{empty}

\begin{abstract}
We make progress in the following three problems: 1. Constructing optimal seeded non-malleable extractors; 2. Constructing optimal privacy amplification protocols with an active adversary, for any possible security parameter; 3. Constructing extractors for independent weak random sources, when the min-entropy is extremely small (i.e., near logarithmic). 

For the first two problems, the best known non-malleable extractors by Chattopadhyay, Goyal and Li \cite{CGL15}, and by Cohen \cite{Coh15nm,Coh16a} all require seed length and min-entropy at least $\log^2 (1/\eps)$, where $\eps$ is the error of the extractor. As a result, the best known explicit privacy amplification protocols with an active adversary, which achieve 2 rounds of communication and optimal entropy loss in \cite{Li15b,CGL15}, can only handle security parameter up to $s=\Omega(\sqrt{k})$, where $k$ is the min-entropy of the shared secret weak random source. For larger $s$ the best known protocol with optimal entropy loss in \cite{Li15b} requires $O(s/\sqrt{k})$ rounds of communication.

In this paper we give an explicit non-malleable extractor that only requires seed length and min-entropy $\log^{1+o(1)} (n/\eps)$, which also yields a 2-round privacy amplification protocol with optimal entropy loss for security parameter up to $s=k^{1-\alpha}$ for any constant $\alpha>0$.

For the third problem, previously the best known extractor which supports the smallest min-entropy due to Li \cite{Li13b}, requires min-entropy $\log^{2+\delta} n$ and uses $O(1/\delta)$ sources, for any constant $\delta>0$. A very recent result by Cohen and Schulman \cite{Coh16b} improves this, and constructed explicit extractors that use $O(1/\delta)$ sources for min-entropy $\log^{1+\delta} n$, any constant $\delta>0$. In this paper we further improve their result, and give an explicit extractor that uses $O(1)$ (an absolute constant) sources for min-entropy $\log^{1+o(1)} n$.

The key ingredient in all our constructions is a generalized, and much more efficient version of the independence preserving merger introduced in \cite{Coh16b}, which we call \emph{non-malleable independence preserving merger}. Our construction of the merger also simplifies that of \cite{Coh16b}, and may be of independent interest.
\end{abstract}
\clearpage 
\setcounter{page}{1}
\section{Introduction}
The theory of \emph{randomness extractors} is a broad area and a fundamental branch of the more general study of pseudorandomness. Informally, randomness extractors are functions that transform biased probability distributions (weak random sources) into almost uniform probability distributions. Here we measure the entropy of a weak random source by the standard min-entropy. A source $\X$  is said to have min-entropy $k$ if for any $x$, $\Pr[\X=x] \le 2^{-k}$. An $(n,k)$-source $\X$ is a distribution on $n$ bits with min-entropy at least $k$.

It is well known that it is impossible to construct deterministic randomness extractors when the input is just one (arbitrary) weak random source, even if the min-entropy is as large as $n-1$. A natural relaxation is then to give the extractor a short independent uniform seed, and such extractors are called \emph{seeded extractors}. With this relaxation it is indeed possible to construct extractors that work for any weak random source with essentially any min-entropy. We now formally define such extractors.

\begin{define}
 The statistical distance between two distributions $\D_1$ and $\D_2$ over some universal set $\Omega$ is  defined as $|\D_1-\D_2| = \frac{1}{2}\sum_{d \in \Omega}|\pr[\D_1=d]- \pr[\D_2=d]|$. We say $\D_1$ is $\epsilon$-close to $\D_2$ if $|\D_1-\D_2| \le \epsilon$ and denote it by $\D_1 \approx_{\epsilon} \D_2$.
 \end{define}

\begin{define}[\cite{NZ96}]
 A function $\Ext:\{0,1\}^{n} \times \{ 0,1\}^d \rightarrow \{ 0,1\}^m$ is a seeded extractor for min-entropy $k$ and error $\epsilon$ if for any source $\X$ of min-entropy $k$, $|\Ext(\X,\U_d) - \U_m| \le \epsilon$. $\Ext$ is strong if in addition $|(\Ext(\X,\U_d), \U_d) - (\U_m,\U_d) | \le \epsilon$, where $\U_m$ and $\U_d$ are independent. 
 \end{define}
Through a long line of research we now have explicit constructions of seeded extractors with almost optimal parameters \cite{LRVW03,GUV09,DKSS09}.

In recent years, there has been much interest in the study of two other kinds of randomness extractors. The first one, known as \emph{non-malleable extractors} (introduced by Dodis and Wichs  \cite{DW09}), is a generalization of strong seeded extractors. 

\begin{define}[Non-malleable extractor] A function $\nmExt:\{0,1\}^n \times \{ 0,1\}^d \rightarrow \{ 0,1\}^m$ is a  $(k,\epsilon)$-non-malleable extractor if the following holds: For any $(n,k)$-source $\X$, an independent uniform seed $\Y$   on $d$ bits and any function $\A : \{0,1\}^d \rightarrow \{0,1\}^d $ with no fixed points,\footnote{i.e., for any $x$, $\A(x) \neq x$} $$   |(\nmExt(\X,\Y),\nmExt(X,\A(\Y)),\Y)- (\U_m,\nmExt(\X,\A(\Y)),\Y) | \le \epsilon.$$
\end{define}

The second one, known as \emph{multi-source extractors} (first studied by Chor and Goldreich \cite{CG88}), is another natural relaxation of deterministic extractors for one weak random source, in the sense that now the input to the extractor are several (at least two) independent weak random sources. Curiously, although this problem was first studied around 30 years ago, it was not until recently that significant progress has been achieved.

The above two kinds of extractors are closely related, and in many cases techniques used for one can also be used to improve the constructions of the other. These connections have been demonstrated in a number of works (e.g., \cite{Li12b,Li13a,Li13b,Li15c,CZ15}).

We now briefly discuss the motivations for these two kinds of extractors.

\subsection{Non-malleable extractors and privacy amplification}
The initial motivation for non-malleable extractors comes from the problem of privacy amplification with an active adversary \cite{BBR88,Mau92,BBCM95}. As a basic problem in information theoretic cryptography, privacy amplification deals with the case where two parties want to communicate with each other to convert their shared secret weak random source $\X$ into shared secret nearly uniform random bits. On the other hand, the communication channel is watched by an adversary Eve, who has unlimited computational power. To make this task possible, we assume two parties have local (non-shared) uniform random bits. 

If Eve is passive (i.e., can only see the messages but cannot change them), this problem can be solved easily by applying the aforementioned strong seeded extractors. However, in the case where Eve is active (i.e., can arbitrarily change, delete and reorder messages), the problem becomes much more complicated. The major challenge here is to design a protocol that uses as few number of interactions as possible, and outputs a uniform random string $\rr$ that has length as close to $H_{\infty}(\X)$ as possible (the difference is called \emph{entropy loss}). A bit more formally, we pick a security parameter $s$, and if the adversary Eve remains passive during the protocol then the two parties should achieve shared secret random bits that are $2^{-s}$-close to uniform. On the other hand, if Eve is active, then the probability that Eve can successfully make the two parties output two different strings without being detected should be at most $2^{-s}$. We refer the readers to \cite{DLWZ11} for a formal definition.

There has been a long line of work on this problem \cite{MW07,dkrs,DW09,RW03,KR09,ckor,DLWZ11,CRS12,Li12a,Li12b,Li15b,ADD14}. When the entropy rate of $\X$ is large, i.e., bigger than $1/2$, there are known protocols that take only one round (e.g., \cite{MW07,dkrs}). However these protocols all have very large entropy loss. When the entropy rate of $\X$ is smaller than $1/2$, \cite{DW09} showed that no one round protocol exists; furthermore the length of $\rr$ has to be at least $O(s)$ smaller than $H_{\infty}(\X)$. Thus, the natural goal is to design a two-round protocol with such optimal entropy loss. However, all protocols before the work of \cite{DLWZ11} either need to use $O(s)$ rounds, or need to incur an entropy loss of $O(s^2)$. 

In \cite{DW09}, Dodis and Wichs showed that explicit constructions of the aforementioned non-malleable extractors can be used to give two-round privacy amplification protocols with optimal entropy loss. Using the probabilistic method, they also showed that non-malleable extractors exist when $k>2m+2\log(1/\eps) + \log d + 6$ and $d>\log(n-k+1) + 2\log (1/\eps) + 5$. However, they were not able to give explicit constructions even for min-entropy $k=n-1$.\ The first explicit construction of non-malleable extractors appeared in \cite{DLWZ11}, with subsequent improvements in \cite{CRS12,Li12a,DY12,Li12b,ADD14}. All these constructions require the min-entropy  of the weak source to be bigger than $0.49n$, and thus only give two-round privacy amplification protocols with optimal entropy loss for such min-entropy. Together with some other ideas, \cite{DLWZ11} also gives $\poly(1/\delta)$ round protocols with optimal entropy loss for min-entropy $k \geq \delta n$, any constant $\delta>0$. This was subsequently improved by one of the authors in \cite{Li12b} to obtain a two-round protocol with optimal entropy loss for min-entropy $k \geq \delta n$, any constant $\delta>0$. In the general case, using a relaxation of non-malleable extractors called non-malleable condensers, one of the authors \cite{Li15b} also obtained a two-round protocol with optimal entropy loss for min-entropy $k \geq C\log^2 n$, some constant $C>1$, as long as the security parameter $s$ satisfies $k \geq Cs^2$. For larger security parameter, the best known protocol with optimal entropy loss in \cite{Li12b} still takes $O(s/\sqrt{k})$ rounds.

In a recent work, Chattopadhyay, Goyal and Li \cite{CGL15} constructed explicit non-malleable extractors with error $\eps$, for min-entropy $k =\Omega( \log^2{(n/\epsilon)})$ and seed-length $d=O(\log^2(n/\epsilon))$. This gives an alternative protocol matching that of \cite{Li12b}. Subsequently, Cohen \cite{Coh15nm} improved this result, and constructed non-malleable extractors with seed length $d= O(\log(n/\epsilon)\log((\log n)/\epsilon))$ and min-entropy $k = \Omega(\log(n/\epsilon)\log((\log n)/\epsilon))$. In this work, he also gave another construction that worked for $k=n/(\log n)^{O(1)}$ with seed-length $O(\log n)$. In a follow up, Cohen \cite{Coh16a}  constructed non-malleable extractors with seed length $d=O(\log n+ \log^3(1/\epsilon))$ and min-entropy $k=\Omega(d)$. However, in terms of the general error parameter $\eps$, all of these results require min-entropy and seed length at least $\log^2 (1/\eps)$, thus none of them can be used to improve the privacy amplification protocols in \cite{Li15b}.

A recent work by Aggarwal, Hosseini and Lovett \cite{AHL15} obtained some conditional results. In particular, they used a weaker variant of non-malleable extractors to construct privacy amplification protocols with optimal entropy loss for $k=\Omega(\log(1/\epsilon) \log n)$ assuming a conjecture in additive combinatorics.

\subsection{Multi-source extractors for independent sources}
As mentioned before, Chor and Goldreich \cite{CG88} introduced the problem of designing extractors for two or more independent sources. Explicit constructions of such extractors can also be used in explicit constructions of Ramsey graphs (\cite{BRSW12,Coh15b,CZ15}). A simple probabilistic argument shows the existence of two-source extractors for min-entropy $k \ge  \log n +O(1)$. However, explicit constructions of such functions are extremely challenging.
  
Chor and Goldreich \cite{CG88}  proved that the inner-product function is a two-source extractor for min-entropy greather than $n/2$.  It was not until $20$ years later when  Bourgain \cite{B2} broke the entropy rate $1/2$ barrier and  constructed a two-source extractor for min-entropy $0.49 n$.  Raz \cite{Raz05} obtained another construction which requires one source with min-entropy more than $n/2$ and the other source with min-entropy $O(\log n)$. Recently, Chattopadhyay and Zuckerman \cite{CZ15} improved the situation substantially by constructing two-source extractors for min-entropy $k \geq \polylog(n)$, with subsequent improvements obtained by Li \cite{Li:2source} and Meka \cite{Mek:resil}. The ultimate goal here is to obtain two-source extractors matching the entropy bound given by the probabilistic method. 

If we allow the extractor to have a constant number of sources instead of just two sources, then an exciting line of work \cite{BIW,BKSSW10,Rao06,BRSW12,RZ08,Li11,Li13a,Li13b,Li15c,Coh15}  constructed extractors with excellent parameters. However, the smallest entropy these constructions can achieve is $\log^{2+\delta}n$ for any constant $\delta>0$ \cite{Li13b}, which uses $O(1/\delta)+O(1)$ sources. In a very recent work, Cohen and Schulman \cite{Coh16b} managed to break this ``quadratic" barrier, and constructed extractors for $O(1/\delta)+O(1)$ sources, each with min-entropy $\log^{1+\delta}n$. 

\subsection{Our results}
\noindent \textbf{Non-Malleable Extractors}\hspace{0.2cm}Our first result is a new construction of non-malleable extractors that breaks the $\log^2 (1/\eps)$ barrier for min-entropy and seed length. Specifically, we have the following theorem.

\begin{THM} \label{thm:nmext1} There exists a constant $C>0$ s.t for all $n, k \in \mathbb{N}$ and any  $\epsilon>0$, with $k \ge \log(n/\epsilon)2^{C\sqrt{\log \log(n/\epsilon)}}$, there exists an explicit $(k,\epsilon)$-non-malleable extractor $\nmExt:\zo^n \times \zo^d \rightarrow \zo^{m}$, where $d= \log(n/\epsilon)2^{C\sqrt{\log \log(n/\epsilon)}}$ and $m= k/2^{\sqrt{\log \log (n/\epsilon)}}$.
\end{THM}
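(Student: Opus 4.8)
The plan is to follow the ``advice generator plus correlation breaker'' template for seeded non-malleable extractors (as in \cite{CGL15,Coh15nm,Coh16a}), but to replace the sequential, bit-by-bit correlation breaker of those works---whose seed length and min-entropy are both $\Theta(\log^2(n/\epsilon))$---by a recursive one built from the new \nipm. Write $N=n/\epsilon$, $d_0=\Theta(\log N)$, and let $g=2^{\Theta(\sqrt{\log\log N})}$ be the target overhead, so $d=d_0 g$ and it suffices to treat the threshold case $k=d_0 g$ (up to the differing constants in the exponents for $d$ and for the output length). Let $\A$ be the tampering function and write primes for tampered quantities. First I would build an advice generator $\adv(\X,\Y)\in\zo^{a}$ with $a=O(\log N)$ such that, except with probability $\epsilon$ over $(\X,\Y)$, $\adv(\X,\Y)\neq\adv(\X,\A(\Y))$, while $\Y$ keeps linear min-entropy and $\X$ keeps min-entropy $k-O(\log N)$ once we condition on $(\adv(\X,\Y),\adv(\X,\A(\Y)))$. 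This is standard: take a short slice $\Y_1$ of $\Y$, set $\W_0=\Ext(\X,\Y_1)$ for an optimal strong seeded extractor, encode $\Y$ by a Reed--Solomon-type code of constant relative distance and block length $\poly(d)$, sample one symbol of $\enc(\Y)$ at a position computed from $\W_0$, and output $(\Y_1,\W_0,\text{position},\text{symbol})$; if $\Y_1\neq\Y_1'$ the advice already differs, and otherwise $\enc(\Y),\enc(\Y')$ disagree on a constant fraction of coordinates while the sampled position is a function of $\X$ and $\Y_1$ only---hence near-uniform on a large set after conditioning on the ($\Y$-determined) disagreement pattern---so the sampled symbols differ except with probability $\epsilon$.

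Next I would set $\nmExt(\X,\Y)=\mathrm{nmCB}(\W,\Y_2,\alpha)$, where $\alpha=\adv(\X,\Y)$, $\Y_2$ is a fresh block of $\Y$, and $\W=\Ext(\X,\Y_3)$ for a third slice $\Y_3$, so that after fixing $\Y_1,\Y_3,\alpha,\alpha'$ the pair $(\W,\Y_2)$ is two independent sources---$\W$ with min-entropy $\ge d_0 g$ and $\Y_2$ near-uniform of length $\ge d_0 g$---and $\alpha\neq\alpha'$. It then suffices to construct a non-malleable correlation breaker $\mathrm{nmCB}$ with $(\mathrm{nmCB}(\W,\Y_2,\alpha),\mathrm{nmCB}(\W',\Y_2',\alpha'),\Y_2)\approx_{\epsilon}(\U_m,\mathrm{nmCB}(\W',\Y_2',\alpha'),\Y_2)$ whenever $\alpha\neq\alpha'$, with output length $m=k/2^{\sqrt{\log\log(n/\epsilon)}}$.

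The heart of the proof is this correlation breaker. The base gadget is Cohen's \flip\ (itself built from alternating extraction), which handles a single advice bit at seed and entropy cost $O(\log N)$; composing $a$ of them sequentially costs $\Theta(\log^2 N)$, the barrier to beat. Instead I would split $\alpha$ into $L$ blocks, recursively run $\mathrm{nmCB}$ on each block to obtain $L$ candidate outputs (the rows of a matrix $M$); since $\alpha\neq\alpha'$ some block disagrees, so the corresponding row of $M$ is ``good'' (uniform given the entire tampered matrix $M'$); then apply the \nipm\ to $M$, fed fresh parts of $\Y_2$ and fresh min-entropy from $\W$, to collapse $M$ to one good row. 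The gain is that the \nipm\ is much more efficient than the independence-preserving merger of \cite{Coh16b}: it merges $r$ rows with seed length and entropy loss growing far more slowly than the $\Theta(r\log N)$ of \cite{Coh16b}, via a tournament-style merging that uses the non-malleability of \flip\ to ``protect'' the good row at every pairing (this is the ``non-malleable independence preserving merger'' of the title). Unrolling the recursion with branching $L$ and depth $D$ subject to $L^D\gtrsim a$, the total seed and entropy cost is the sum over levels of the per-level \nipm\ cost; balancing $D$ against the per-level overhead forces $D=\Theta(\sqrt{\log\log N})$, $L=2^{\Theta(\sqrt{\log\log N})}$, and total cost $O(\log N)\cdot 2^{\Theta(\sqrt{\log\log N})}=d_0 g$, which is exactly the seed length, the min-entropy requirement, and (after accounting for the entropy consumed across the $\Theta(\sqrt{\log\log N})$ merging levels) the output length $m=k/2^{\sqrt{\log\log(n/\epsilon)}}$ in the statement.

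Finally I would assemble everything, fixing the $O(1)$ auxiliary variables ($\Y_1,\Y_3$, the advice strings, intermediate extractor outputs) one at a time and checking that at each stage the residual source/seed pair is still independent with the claimed min-entropy, so that the lemmas for $\adv$, $\Ext$, \flip, and \nipm\ apply; the overall error is a sum of $\polylog(N)$ terms each forced below $\epsilon/\polylog(N)$ by running everything with error $\epsilon/\polylog(N)$, which only changes $\log(1/\epsilon)$ into $\log(1/\epsilon)+O(\log\log N)$ inside $d_0$. The main obstacle is entirely in the previous paragraph: giving a clean definition of when a (tampered) row of $M$ is ``good'', proving that one \nipm\ merging step preserves goodness while spending fresh seed and entropy that do \emph{not} scale linearly in the number of rows, and checking that the recursion then closes with only a $2^{O(\sqrt{\log\log(n/\epsilon)})}$ overhead rather than the $\polylog(n/\epsilon)$ or $\log^{\delta}(n/\epsilon)$ factors in \cite{Coh15nm,Coh16b}.
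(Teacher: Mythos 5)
Your plan is correct and follows essentially the same route as the paper: an advice generator producing $O(\log(n/\epsilon))$ bits, one \flip\ invocation per advice bit to form an $L\times m$ matrix with at least one ``good'' row, and a recursive \nipm\ that merges $\ell=2^{\Theta(\sqrt{\log\log(n/\epsilon)})}$ rows per level over $\Theta(\sqrt{\log\log(n/\epsilon)})$ levels, with exactly the balancing you describe giving seed length and entropy $\log(n/\epsilon)\cdot 2^{O(\sqrt{\log\log(n/\epsilon)})}$. The one piece you flag as the main obstacle---the basic merging step---is realized in the paper not as a tournament of \flip\ gadgets but as an $\ell$-row alternating extraction (a look-ahead extractor cycling through a fresh row at each step) that consumes only a small, geometrically growing slice of the seed at each level, which is precisely what keeps the total overhead at $2^{O(\sqrt{\log\log(n/\epsilon)})}$ rather than $\polylog(n/\epsilon)$.
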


We also construct a non-malleable extractor with seed-length $O(\log n)$ for  min-entropy $k=\Omega(\log n)$ and $\epsilon \ge 2^{-\log^{1-\beta}(n)}$ for any $\beta>0$. Prior to this, explicit non-malleable extractors with seed-length $O(\log n)$ either requires min-entropy at least $n/\poly(\log n)$ \cite{Coh15nm} or requires $\epsilon \ge 2^{-\log^{1/3}(n)}$ \cite{Coh16a}.
\begin{THM}There exists a constant $C>0$ s.t for  and all $n, k \in \mathbb{N}$ with $k \ge C\log n$, any constant $0<\beta<1$, and any  $\epsilon \ge 2^{-\log^{1-\beta}(n)}$, there exists an explicit $(k,\epsilon)$-non-malleable extractor $\nmExt:\zo^n \times \zo^d \rightarrow \zo^{m}$, where $d= O(\log n)$ and $m=\Omega(\log(1/\epsilon))$.
\end{THM}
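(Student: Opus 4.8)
The plan is to reuse the same construction that yields Theorem~\ref{thm:nmext1} --- an advice generator composed with a non-malleable independence preserving merger, glued together with a few standard strong seeded extractions --- but to tune its internal parameters to the regime $\epsilon \ge 2^{-\log^{1-\beta}(n)}$ instead of the worst case. The first point is that here the error is almost free: $\log(1/\epsilon) \le \log^{1-\beta}(n)$, so $\log(n/\epsilon) = \Theta(\log n)$ and $\log\log(n/\epsilon) = \Theta(\log\log n)$. Hence the advice generator, built from a strong seeded extractor and a suitable encoding, uses seed length $O(\log(n/\epsilon)) = O(\log n)$ and produces an advice string of length $O(\log(n/\epsilon))$; likewise every auxiliary extraction step costs only $O(\log n)$ bits of seed in this regime.

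The actual work is in how the merger is invoked. The factor $2^{C\sqrt{\log\log(n/\epsilon)}}$ in Theorem~\ref{thm:nmext1} arises from choosing the recursion depth $\ell$ of the non-malleable independence preserving merger so as to balance a $2^{O(\ell)}$ multiplicative overhead against a term that shrinks like $(\log(1/\epsilon))^{O(1/\ell)}$, the optimal trade-off being $\ell = \Theta(\sqrt{\log\log(n/\epsilon)})$. What the construction actually delivers, for an arbitrary choice of depth $\ell$, should be a bound of the shape
\[ d = 2^{O(\ell)}\bigl(\log(n/\epsilon) + (\log(1/\epsilon))^{1+O(1/\ell)}\bigr),\qquad k = 2^{O(\ell)}\bigl(\log(n/\epsilon) + (\log(1/\epsilon))^{1+O(1/\ell)}\bigr),\qquad m = k/2^{O(\ell)}, \]
with Theorem~\ref{thm:nmext1} being the instantiation $\ell = \Theta(\sqrt{\log\log(n/\epsilon)})$. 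For the present statement one instead takes $\ell$ to be a \emph{constant} depending only on $\beta$, large enough that $(1-\beta)(1+O(1/\ell)) < 1$. Then $2^{O(\ell)} = O_\beta(1)$, and using $\log(1/\epsilon) \le \log^{1-\beta}(n)$ the overhead satisfies $(\log(1/\epsilon))^{1+O(1/\ell)} \le \log^{(1-\beta)(1+O(1/\ell))}(n) = o(\log n)$; therefore $d = O(\log n)$, the required min-entropy is $O(\log n)$, so $k \ge C\log n$ suffices for a suitable constant $C = C(\beta)$, and $m = k/2^{O(\ell)} = \Omega(k) = \Omega(\log n) \ge \Omega(\log(1/\epsilon))$, since $\log(1/\epsilon)\le\log^{1-\beta}(n)\le\log n$. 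This also explains why one cannot merely cite Theorem~\ref{thm:nmext1} as a black box: its bound $\log(n/\epsilon)\,2^{C\sqrt{\log\log(n/\epsilon)}}$ already equals $\log^{1+o(1)}(n)$ in this regime, because a constant depth is suboptimal for general $\epsilon$ --- one has to re-enter the construction and re-choose $\ell$.

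The main obstacle is therefore bookkeeping rather than a new idea: one must verify that the $n$-dependence of both the seed length and the min-entropy requirement is genuinely confined to a single clean additive term $O_\ell(\log(n/\epsilon))$, while the $2^{O(\ell)}$ iteration overhead, the error accumulated over the $\ell$ levels of the merger, and the cost of handling the $O(\log(n/\epsilon))$-bit advice string all attach only to quantities polynomial in $\log(1/\epsilon)$ with exponent $1 + O(1/\ell)$. Concretely I would check: (i) the advice generator (possibly after an advice-length-reduction step) contributes only $O(\log(n/\epsilon))$ to the seed and does not reintroduce $n$-dependence into the merger beyond that one additive term; (ii) each recursive merge preserves independence from the adversary's tampered copy with only a $2^{O(1)}$-factor loss in parameters and an additional $O(\epsilon)$ error, so $\ell = O_\beta(1)$ levels lose only a constant factor overall; and (iii) with $k = \Theta(\log n)$ and $\ell = O(1)$ levels each consuming $O_\ell(\log(n/\epsilon))$ bits of the source's entropy, one is still left with $\Omega(\log n)$ bits from which to extract the output at error $\epsilon$ --- which is fine because $\log(1/\epsilon) \ll \log n$. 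With these in hand the theorem follows by substituting $\ell = \ell(\beta)$ and $\epsilon \ge 2^{-\log^{1-\beta}(n)}$ into the refined bound.
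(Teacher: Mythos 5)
Your proposal is correct in outcome but takes a genuinely different route from the paper. The paper never re-tunes the merger's recursion depth: it proves the refined Theorem~\ref{nmext}, in which the $2^{C\sqrt{\log\log(n/\epsilon_1)}}$ blow-up multiplies $\log(k/\epsilon_1)$ rather than $\log(n/\epsilon_1)$ (because the merger in Algorithm~3 acts on a matrix whose rows have length $O(k)$, so its seed cost scales with $\log(m''/\epsilon_1)=O(\log(k/\epsilon_1))$), while the only genuinely $\Theta(\log(n/\epsilon_1))$ contribution is the additive cost of the advice generator and the flip-flop step. Theorem~\ref{main_nm_2} then follows by instantiating $k=\Theta(\log(n/\epsilon_1))$ with $\epsilon_1=\epsilon/(2C\log n)$: the first term becomes $\log(k/\epsilon_1)2^{O(\sqrt{\log\log n})}=\log^{1-\beta+o(1)}(n)=o(\log n)$ and the second is $O(\log n)$, with the \emph{same} depth $\Theta(\sqrt{\log\log(n/\epsilon)})$ as in Theorem~\ref{thm:nmext1}. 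Your route instead fixes the depth at a constant $\ell(\beta)$, which also works, but your claimed intermediate trade-off is imprecise in a way you should repair: in Theorem~\ref{rec_thm} a depth-$r$ merge of the $L=\Theta(\log(n/\epsilon))$ rows produced by the advice generator uses blocks of size $L^{1/r}$, so the per-level cost is $(\log(n/\epsilon))^{1/r}\log(k/\epsilon)$, not $(\log(1/\epsilon))^{1+O(1/r)}$; the $n$-dependence is \emph{not} confined to a single additive term (your check (i) fails as literally stated), since the number of rows to be merged is itself $\Theta(\log(n/\epsilon))$. This extra $\log^{1/r}(n)$ factor is benign --- with $k=O(\log n)$ and $\log(1/\epsilon)\le\log^{1-\beta}(n)$ the term is $O(\log^{1/r+1-\beta}n)=o(\log n)$ once $r>1/\beta$ --- so your conclusion survives with the depth chosen as a function of $\beta$ in essentially the way you indicate. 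What each approach buys: the paper's argument requires no new analysis of the merger (it is a two-line corollary of Theorem~\ref{nmext}) and gives a single construction uniform in $\beta$, whereas yours requires re-entering the recursion with a $\beta$-dependent depth but yields the slightly stronger output length $m=\Omega(k)$ in place of $m=\Omega(k)/2^{\sqrt{\log\log(n/\epsilon)}}$.
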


\begin{remark}
A careful examination reveals that our seed length and min-entropy requirement are better than those of \cite{Coh15nm,Coh16a} in all cases except the case that $\eps$ is large enough (e.g., $\epsilon \ge 2^{-\log^{1/3}(n)}$), where both \cite{Coh16a} and our results require seed length and min-entropy $O(\log n)$.
\end{remark}

Note that given any error parameter $\eps$, our non-malleable extractor in Theorem~\ref{thm:nmext1} only requires min-entropy and seed length $\log^{1+o(1)} (n/\eps)$.  
 
 
 We also show how to further lower the min-entropy requirement of the non-malleable extractor in Theorem $\ref{thm:nmext1}$ at the expense of using a larger seed. We complement this result by constructing another non-malleable extractor with shorter seed-length than in Theorem $\ref{thm:nmext1}$ at the expense of larger entropy. We now state these results more formally.
 
 \begin{THM}\label{nmext_better_entropy}  There exists a constant $C>0$ such that for all $n, k \in \mathbb{N}$ and any  $\epsilon>0$, with  $k \ge \log(n/\epsilon)2^{2^{C\sqrt{\log \log \log(n/\epsilon)}}}$, there exists an explicit $(k,\epsilon)$-non-malleable extractor $\nmExt:\zo^n \times \zo^d \rightarrow \zo^{m}$, where $d= (\log(n/\epsilon))^32^{(\log \log \log(n/\epsilon))^{O(1)}}, m= \Omega(k)$.
\end{THM}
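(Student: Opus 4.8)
The plan is to reuse the construction template behind Theorem~\ref{thm:nmext1} — a non-malleable extractor assembled from an advice-based correlation breaker, which is itself built by iterating the non-malleable independence-preserving merger ($\nipm$) — and to re-balance the internal parameters of the $\nipm$ so as to trade a polynomially larger seed for a substantially weaker min-entropy requirement, all while keeping the output length at $\Omega(k)$. Write $L = \log(n/\eps)$ throughout.

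First I would set up the outer shell, which is identical to that of Theorem~\ref{thm:nmext1}. Using a small portion of the seed $\Y$ together with $\X$, one produces a short advice string $\alpha = \alpha(\X,\Y)$ of length $a = \poly(L)$ with the property that $\alpha \neq \alpha' := \alpha(\X,\A(\Y))$ whenever $\Y \neq \A(\Y)$ — which always holds since $\A$ has no fixed points. One then feeds $\X$, a fresh portion of $\Y$, and the advice $\alpha$ into an $(\X,\Y)$-correlation breaker with advice; its output is $\eps$-close to uniform even conditioned on the tampered output, and the final non-malleable extractor output is obtained by one more round of strong seeded extraction from $\X$ using this string. The only real freedom is in how the correlation breaker is instantiated.

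Second, the correlation breaker is built by breaking $\alpha$ into chunks, maintaining $t$ ``rows'' (each a short tuple of strings), and collapsing the $t$ rows down to a single row with the $\nipm$, which guarantees that if even one of the $t$ rows was good — independent of its tampered copy — then so is the merged row. The tunable parameters are $t$, the recursion depth used inside the $\nipm$, and the per-level error budget; the $\nipm$ spends, at each level, a seed and entropy cost equal to $\poly(L)$ times an overhead factor that grows with $t$ and the depth. In Theorem~\ref{thm:nmext1} these are tuned to keep the seed near-linear in $L$, which forces the overhead to be $2^{C\sqrt{\log\log L}}$ and pins the min-entropy to the same value. For Theorem~\ref{nmext_better_entropy} I would instead allow the seed to grow to roughly $L^3$ times the merger overhead, thereby affording a deeper and more finely layered recursion in the $\nipm$; re-optimizing the recursion under this relaxed seed budget drives the overhead appearing in the \emph{entropy} bound down to $2^{2^{C\sqrt{\log\log\log L}}}$ (and the overhead inside the seed down to $2^{(\log\log\log L)^{O(1)}}$). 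Since only a lower-order portion of $\X$'s entropy is consumed by advice generation and merging, the final extraction can still output $\Omega(k)$ bits.

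The main obstacle — as always with recursive merger constructions — is the simultaneous control of three competing quantities across the now-larger number of merging steps: the total entropy consumed, which must stay comfortably below $k$; the error, which accumulates additively over levels and rows, so that using more rows/levels forces a smaller per-step error, which in turn inflates the per-step seed and entropy cost, creating a feedback loop that has to be solved self-consistently; and the seed length, which is permitted to reach $L^3 \cdot 2^{(\log\log\log L)^{O(1)}}$ but no more. The crux is to exhibit a layered unrolling of the $\nipm$ recursion — roughly on the order of $\sqrt{\log\log\log L}$ levels with a carefully chosen branching profile — whose accumulated overhead is only $2^{(\log\log\log L)^{O(1)}}$ in the seed, while the entropy charged to each surviving row shrinks fast enough that the binding constraint on the source becomes exactly $k \geq L \cdot 2^{2^{C\sqrt{\log\log\log L}}}$. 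Everything else — checking that the no-fixed-point hypothesis propagates to $\alpha \neq \alpha'$, verifying the parameters of the final seeded extraction, and summing up the errors — is routine given the properties of the $\nipm$ established earlier in the paper.
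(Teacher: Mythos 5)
There is a genuine gap: your proposal is missing the one structural idea that separates Theorem~\ref{nmext_better_entropy} from Theorem~\ref{nmext_better_seed}, namely the \emph{exchange of the roles of the source and the seed} in the non-malleable extractor relative to Algorithm~\ref{alg3}. The deeper recursion you invoke (iterating Lemma~\ref{gradual_nipm} to reach the $\nipm$ of Theorem~\ref{advanced_nipm}) trades the merger's \emph{seed length} against its \emph{matrix-row length}: after $O(\sqrt{\log\log L})$ rounds of the reduction one gets a merger whose seed is only $2^{2^{O(\sqrt{\log\log L})}}\log(m/\eps)$ but whose $L$ rows must each have length roughly $L\cdot 2^{(\log\log L)^{O(1)}}\log(m/\eps)$, for a total matrix size of about $L^2$ times near-logarithmic factors. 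The recursion depth itself never contributes a $\poly(L)$ factor anywhere — it contributes only the $2^{2^{O(\sqrt{\log\log L})}}$ and $2^{(\log\log L)^{O(1)}}$ overheads — so your causal story (``allowing the seed to grow to $L^3$ affords a deeper recursion, which drives the entropy overhead down'') does not reflect the actual mechanism. The $(\log(n/\eps))^3$ in the final seed arises because the \emph{matrix} is charged to $\Y$; the $\log(n/\eps)2^{2^{C\sqrt{\log\log\log(n/\eps)}}}$ in the entropy arises because the \emph{merger seed} is charged to $\X$.

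Concretely: if you instantiate your plan inside the unmodified template of Theorem~\ref{thm:nmext1} — matrix $\V$ derived from $\X$ via the flip-flop step, $\nipm$ seed $\overline{\Y}$ derived from $\Y$ — then plugging in the advanced $\nipm$ forces $k\gtrsim(\log(n/\eps))^3 2^{(\log\log\log(n/\eps))^{O(1)}}$ and $d\approx\log(n/\eps)2^{2^{O(\sqrt{\log\log\log(n/\eps)})}}$, which is exactly Theorem~\ref{nmext_better_seed}, i.e., the opposite assignment of which of $k,d$ receives the polynomial blow-up. What the paper does (and states only as a sketch, ``without proof'') is to modify Algorithm~\ref{alg3} so that the $\nipm$'s seed is a deterministic function of the source $\X$ and the $L\times m$ matrix is a deterministic function of the seed $\Y$; equivalently, one can build the ``vice versa'' merger with short rows and a $\poly(L)$-length seed and keep the algorithm unmodified. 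Either way, some such swap must be stated explicitly — ``re-balancing internal parameters'' of the recursion cannot produce it. (Two smaller inaccuracies: the advice has length $O(\log(n/\eps))$, not $\poly(L)$; and the number of applications of Lemma~\ref{gradual_nipm} is $O(\sqrt{\log\log L})=O(\sqrt{\log\log\log(n/\eps)})$, not $\sqrt{\log\log\log L}$ for your $L=\log(n/\eps)$.)
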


\begin{THM}\label{nmext_better_seed} There exists a constant $C>0$ such that for all $n, k \in \mathbb{N}$ and any $\epsilon>0$, with  $k \ge (\log(n/\epsilon))^32^{(\log \log \log(n/\epsilon))^C}$, there exists an explicit $(k,\epsilon)$-non-malleable extractor $\nmExt:\zo^n \times \zo^d \rightarrow \zo^{m}$, where $d= \log(n/\epsilon)2^{2^{O(\sqrt{\log \log \log(n/\epsilon)})}}, m= \frac{k}{\log(n/\epsilon)2^{(\log \log \log(n/\epsilon))^{O(1)}}} - O((\log(n/\epsilon))^2)$.
\end{THM}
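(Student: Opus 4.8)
The plan is to follow the four-stage template behind Theorem~\ref{thm:nmext1}---an advice generator, the construction of a \emph{non-malleable somewhere-random source}, an application of the non-malleable independence preserving merger (NIPM), and a final strong-seeded extraction---but with the internal parameters re-tuned to trade a larger min-entropy requirement for a shorter seed. Write $\Y'=\A(\Y)$. First I would split $\Y$ into a short prefix $\Y_0$ (with $\Y'_0$ the matching prefix of $\Y'$) and a long remainder, partition $\X$ into blocks $\X_1\circ\X_2\circ\cdots$ to be consumed one at a time, and run the advice generator $\adv(\X_1,\Y_0)$ to obtain a string $\alpha$ of length $\poly(\log(n/\eps))$ with $\alpha\neq\adv(\X_1,\Y'_0)=\alpha'$ except with probability $O(\eps)$; this costs only $O(\log(n/\eps))$ seed bits and a negligible fraction of the min-entropy of $\X$. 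Let $j^\star$ be the first coordinate on which $\alpha$ and $\alpha'$ disagree.

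Next I would use $\alpha$, a fresh slice of $\Y$, and a fresh block of $\X$ to build, via a $\flip$-style alternating-extraction gadget driven by the bits of $\alpha$, a matrix $\W$ with $t=\poly(\log(n/\eps))$ rows such that (i) when $\alpha_{j^\star}\neq\alpha'_{j^\star}$ the $j^\star$-th row of $\W$ is $O(\eps)$-close to uniform conditioned on $\X$ and on the \emph{entire} adversarial matrix $\W'$, and (ii) every row still carries enough residual entropy to feed the merger. I would then apply the NIPM recursively to $\W$: at each level, group the rows into blocks of size $b$, merge each block to a single row using a fresh block of $\X$, and stop once one row remains after $\approx\log_b t$ levels; because the merger preserves the non-malleable goodness of the descendant of row $j^\star$, the surviving row $\rr$ ends up $O(\eps\cdot\mathrm{depth})$-close to uniform conditioned on $\X$ and on the adversary's surviving row. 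Finally I would use $\rr$ as the seed of a standard strong seeded extractor applied to the still-untouched part of $\X$, which retains min-entropy at least $k-\log(n/\eps)\,2^{(\log\log\log(n/\eps))^{O(1)}}-O((\log(n/\eps))^2)$, and transfer the non-malleability from $\rr$ to the output by the usual argument, obtaining a $(k,O(\eps))$-non-malleable extractor with the claimed parameters; rescaling $\eps$ by a constant then finishes.

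The hard part will be the parameter bookkeeping inside the merger recursion. Each level reduces the number of rows by the branching factor $b$ but inflates the required row length---and hence the seed length and the amount of min-entropy drawn from $\X$---by a super-constant factor, so one must track how this compounds over $\Theta(\mathrm{depth})$ levels and optimize the branching across levels; with $t=\poly(\log(n/\eps))$ (so that $\log t=\Theta(\log\log(n/\eps))$), this optimization is exactly what produces the $2^{2^{O(\sqrt{\log\log\log(n/\eps)})}}$ factor in $d$. One must then verify that the stated lower bound $k\ge(\log(n/\eps))^3\,2^{(\log\log\log(n/\eps))^C}$ is just large enough to pay the cumulative min-entropy toll of the recursion and keep the accumulated error below $\eps$, while still leaving $\Omega\big(k/(\log(n/\eps)\,2^{(\log\log\log(n/\eps))^{O(1)}})\big)$ bits of entropy of $\X$ for the final extraction step that yields $m$. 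A secondary subtlety is guaranteeing that every call to the $\flip$ gadget and to the merger is fed a fresh, sufficiently long, and sufficiently independent block of $\X$; this is handled by a block-source decomposition together with iterated conditioning, and it is where the additive $O((\log(n/\eps))^2)$ loss in $m$ originates.
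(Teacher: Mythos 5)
Your high-level template (advice generator, $\flip$-based matrix, recursive merging, final bookkeeping) matches the paper's route, which is to run Algorithm~\ref{alg3} exactly as in Theorem~\ref{nmext} but with the merger replaced by the more involved $\nipm$ of Theorem~\ref{advanced_nipm}. However, the one genuinely new ingredient that this theorem needs is missing from your proposal: the \emph{bootstrapping} of Lemma~\ref{gradual_nipm}. You propose a single recursion in which each level merges blocks of $b$ rows and you ``optimize the branching across levels.'' That is precisely the construction of Theorem~\ref{rec_thm}/Theorem~\ref{basic_rec_nipm}, and no choice of branching factors (constant or varying) can push its seed length below $2^{\Theta(\sqrt{\log L})}\log(m/\eps)$: the seed length is of the form $c^{\log L/\log \ell}\cdot \ell\log(m/\eps)$ when the block-merger is the basic $\ell$-row alternating-extraction merger, and this is minimized at $\log\ell=\Theta(\sqrt{\log L})$. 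That yields Theorem~\ref{thm:nmext1}, not the $2^{2^{O(\sqrt{\log\log\log(n/\eps)})}}$ dependence claimed here. To get the improved seed you must recursively \emph{substitute the improved merger itself as the block-merger}, re-optimize the block size at each substitution (moving the exponent from $(\log L)^{1/q}$ to $(\log L)^{1/(q+1)}$), and iterate this self-improvement $O(\sqrt{\log\log L})$ times; this is what Lemma~\ref{gradual_nipm} does and what produces both the doubly-exponential seed term and the multiplicative output loss $L2^{(\log\log L)^{O(1)}}$ of Theorem~\ref{advanced_nipm}, and hence the $(\log(n/\eps))^3 2^{(\log\log\log(n/\eps))^{C}}$ entropy floor.

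A second, related problem is that your construction puts the matrix and the merging randomness on the wrong sides. For the short-seed theorem the matrix rows must be (after conditioning) deterministic functions of $\X$ with length $\Omega(k)$ --- this is where the cubic-in-$\log(n/\eps)$ entropy requirement comes from, since the advanced $\nipm$ only outputs $\frac{m}{L2^{(\log\log L)^{O(1)}}}-O(L\log(m/\eps))$ bits from rows of length $m$ --- while every slice consumed by the merger must come from $\Y$, so that $\{\X,\X'\}$ stays independent of $\{\Y,\Y'\}$ throughout the inductive argument of Theorem~\ref{thm:t-nipm}. Feeding the merger ``a fresh block of $\X$'' at each level, when the rows being merged are themselves functions of $\X$, breaks exactly the independence the $\nipm$ analysis relies on; the swapped-roles variant you are gesturing at is the one used for Theorem~\ref{nmext_better_entropy} and gives $m=\Omega(k)$ with a \emph{large} seed. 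Likewise, the factor $\log(n/\eps)2^{(\log\log\log(n/\eps))^{O(1)}}$ dividing $k$ in the output length is the merger's multiplicative row-length shrinkage, not residual entropy of $\X$ available for a final extraction; the paper's output is simply the $\nipm$ output, with no terminal extraction from $\X$.
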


Table~\ref{nmext_table} summarizes our new non-malleable extractors compared to previous results.

\noindent \textbf{Privacy Amplification}\hspace{0.2cm}Using Theorem $\ref{thm:nmext1}$ and the protocol in \cite{DW09}, we immediately obtain a two-round privacy amplification protocol with optimal entropy loss, for almost all possible security parameters.

\begin{THM}
There exists a constant $C>0$ such that for any security parameter $s$ with $k \geq (s +\log n)2^{C\sqrt{\log (s+ \log n)}}$, there exists an explicit 2-round privacy amplification protocol  for $(n, k)$-sources with  entropy loss $O(\log n+s)$ and communication complexity $(s +\log n)2^{O(\sqrt{\log (s+ \log n)})}$, in the presence of an active adversary.
\end{THM}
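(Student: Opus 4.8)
The plan is to derive this as a direct corollary of Theorem~\ref{thm:nmext1} via the reduction of Dodis and Wichs~\cite{DW09} from privacy amplification to seeded non-malleable extractors; essentially the only work is to instantiate the extractor with the right error and to bookkeep parameters.

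First I would fix the error of the extractor to be $\eps = 2^{-\Theta(s+\log n)}$, chosen small enough that the overall protocol error will be at most $2^{-s}$. Then $\log(n/\eps) = \Theta(s+\log n)$ and $\log\log(n/\eps) = \Theta(\log(s+\log n))$, so the min-entropy hypothesis $k \ge \log(n/\eps)\,2^{C\sqrt{\log\log(n/\eps)}}$ of Theorem~\ref{thm:nmext1} becomes precisely the stated bound $k \ge (s+\log n)\,2^{C'\sqrt{\log(s+\log n)}}$ after adjusting the constant. Theorem~\ref{thm:nmext1} then supplies an explicit $(k,\eps)$-non-malleable extractor with seed length $d = (s+\log n)\,2^{O(\sqrt{\log(s+\log n)})}$ and output length $k/2^{\sqrt{\log\log(n/\eps)}} \ge s+\log n$. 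Since a prefix of a non-malleable extractor is again a non-malleable extractor with the same error (statistical distance does not grow under projections), I would truncate its output to exactly $m = \Theta(s+\log n)$ bits, which is all the protocol needs.

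Next I would run the Dodis--Wichs two-round protocol with this $\nmExt$ and an optimal strong seeded extractor $\Ext$ (for instance from \cite{GUV09}) of error $2^{-\Theta(s+\log n)}$: Alice sends a fresh uniform seed $\Y$; the adversary delivers some $\Y'$ to Bob; Bob draws a fresh uniform seed $\s_B$ for $\Ext$, keys a one-time message authentication code with $\nmExt(\X,\Y')$, and sends $(\s_B,\sigma)$, where $\sigma$ is the tag of $\s_B$; Alice verifies $\sigma$ under the key $\nmExt(\X,\Y)$ and, if it checks out, both parties output $\Ext(\X,\s_B)$. If the adversary changes the seed (i.e.\ $\Y'\ne\Y$), non-malleability makes $\nmExt(\X,\Y)$ $\eps$-close to uniform even given $\nmExt(\X,\Y')$ and $\Y$, so the MAC key is unpredictable to the adversary and the forging probability is $\eps + 2^{-\Omega(m)} = 2^{-\Theta(s+\log n)}$; if the adversary instead only tampers with $(\s_B,\sigma)$, then $\Y'=\Y$ and strongness of $\nmExt$ together with one-time MAC security give the same bound. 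Conditioning $\X$ on the public transcript costs only $O(m) = O(s+\log n)$ bits of min-entropy, so $\Ext$ still outputs $k - O(s+\log n)$ bits that are close to uniform, giving entropy loss $O(\log n + s)$; the total communication is $|\Y| + |\s_B| + |\sigma| = d + O(\log n + s) = (s+\log n)\,2^{O(\sqrt{\log(s+\log n)})}$.

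I do not expect any genuine obstacle here, since the Dodis--Wichs reduction is black-box. The only points requiring (routine) care are choosing $\eps$ so that the error contributions of the extractor, the MAC, and the final extractor together stay below $2^{-s}$, and truncating the extractor output as above so that the entropy loss remains $O(s+\log n)$ rather than scaling with the larger output length $k/2^{\sqrt{\log\log(n/\eps)}}$ of Theorem~\ref{thm:nmext1}.
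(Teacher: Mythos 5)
Your proposal is correct and matches the paper's approach exactly: the paper obtains this theorem as an immediate corollary of Theorem~\ref{thm:nmext1} by plugging the non-malleable extractor into the Dodis--Wichs two-round protocol, with the same choice of error $\eps = 2^{-\Theta(s+\log n)}$ so that $\log(n/\eps)=\Theta(s+\log n)$ makes the parameters line up. Your additional bookkeeping (truncating the output to $\Theta(s+\log n)$ bits for the MAC key and accounting for the transcript's entropy cost) is exactly the routine verification the paper leaves implicit.
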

 In particular, this gives us two-round privacy amplification protocols with optimal entropy loss for security parameter $s \leq k^{1-\alpha}$ for any constant $\alpha>0$.

Instead if we use the non-malleable extractor from Theorem $\ref{nmext_better_entropy}$, we obtain a  two-round privacy amplification protocol with optimal entropy loss, for even smaller min-entropy (at the expense of larger communication complexity). More formally, we have the following theorem.
\begin{THM}
There exists a constant $C>0$ such that for any security parameter $s$ with $k \geq (s +\log n)2^{2^{C\sqrt{\log \log (s+ \log n)}}}$, there exists an explicit 2-round privacy amplification protocol  for $(n, k)$-sources with  entropy loss $O(\log n+s)$ and communication complexity $(s +\log n)^32^{(\log \log(s+\log n))^{O(1)}}$, in the presence of an active adversary.
\end{THM}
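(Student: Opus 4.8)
The plan is to obtain the protocol by instantiating the non-malleable extractor of Theorem~\ref{nmext_better_entropy} with error $\epsilon \approx 2^{-s}$ and feeding it into the reduction of Dodis and Wichs~\cite{DW09} from explicit non-malleable extractors to two-round privacy amplification, exactly as Theorem~\ref{thm:nmext1} was used to derive the preceding theorem. Concretely, recall (see also \cite{Li12b,CGL15,Li15b}) that an explicit $(k,\epsilon)$-non-malleable extractor $\nmExt\colon\zo^n\times\zo^d\to\zo^m$ with output length $m\ge C(\log(1/\epsilon)+\log n)$ yields an explicit two-round privacy amplification protocol for $(n,k)$-sources with security parameter $s=\Omega(\log(1/\epsilon))$: in round~$1$ Alice sends a fresh uniform seed of length $d$; in round~$2$ Bob replies with a fresh uniform seed for a near-optimal strong seeded extractor (e.g.~\cite{GUV09}), together with a one-time message authentication code of that seed keyed by (a prefix of) $\nmExt(\X,\cdot)$; Alice checks the authentication and, if it passes, both parties output the near-optimal extractor applied to $\X$ and Bob's seed. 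In this reduction the seed length $d$ of the non-malleable extractor affects only the communication complexity, while the entropy loss is governed by the near-optimal extractor and the MAC and equals $O(\log(1/\epsilon)+\log n)$.

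First I would set $\epsilon=2^{-(s+2\log n)}$, so that $\log(1/\epsilon)=\Theta(s+\log n)$ and $\log(n/\epsilon)=\Theta(s+\log n)$, hence also $\log\log\log(n/\epsilon)=\Theta(\log\log(s+\log n))$. Substituting into the hypothesis of Theorem~\ref{nmext_better_entropy}, the requirement $k\ge\log(n/\epsilon)\,2^{2^{C\sqrt{\log\log\log(n/\epsilon)}}}$ becomes $k\ge (s+\log n)\,2^{2^{C'\sqrt{\log\log(s+\log n)}}}$ after absorbing constants into the exponent, which is precisely the min-entropy hypothesis in the statement. Likewise the seed length $d=(\log(n/\epsilon))^3\,2^{(\log\log\log(n/\epsilon))^{O(1)}}$ becomes $(s+\log n)^3\,2^{(\log\log(s+\log n))^{O(1)}}$; adding the $O(s+\log n)$ bits for Bob's seed and the MAC, this is the claimed communication complexity. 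Finally the output-length side condition $m=\Omega(k)\ge C(\log(1/\epsilon)+\log n)$ holds with enormous slack, since $k\gg s+\log n$ by the hypothesis.

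With these substitutions the reduction directly yields a two-round protocol with security parameter $s$ and entropy loss $O(\log(1/\epsilon)+\log n)=O(s+\log n)$, as claimed. I do not expect a substantive obstacle: this is a parameter instantiation of an established reduction. The one point requiring care is to use the ``authenticate-then-extract'' form of the protocol --- the non-malleable extractor is used only to key the MAC that authenticates Bob's fresh seed, and the final key is produced by a near-optimal strong seeded extractor --- rather than outputting $\nmExt(\X,\Y)$ directly. Only in this form is the entropy loss $O(s+\log n)$ independent of the (much larger) seed length $d$, which is what produces the asymmetry between the entropy loss and the communication complexity in the statement.
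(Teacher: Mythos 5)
Your proposal is correct and is essentially the paper's own argument: the authors obtain this theorem by plugging the non-malleable extractor of Theorem~\ref{nmext_better_entropy} (with $\epsilon\approx 2^{-\Theta(s+\log n)}$) into the Dodis--Wichs authenticate-then-extract protocol, exactly as you describe, with the seed length governing the communication and the MAC plus final strong extractor governing the $O(s+\log n)$ entropy loss. The parameter substitutions you carry out match the paper's stated bounds.
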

\noindent \textbf{$t$-Non-Malleable Extractors and $2$-Source Extractors}\hspace{0.2cm}Our techniques for constructing non-malleable extractors can be generalized directly to construct $t$-non-malleable extractors (non-malleable extractors with $t$ tampering functions, see Definition $\ref{def_tnmext}$ and Theorem $\ref{nmext_multi}$). Such $t$-non-malleable extractors were used in \cite{CZ15} to construct two-source extractors. With subsequent improvements \cite{Li:2source,Mek:resil}, the best known $2$-source extractor for constant error requires min-entropy $C(\log n)^{10}$, and for polynomially small error requires min-entropy $C (\log n)^{18}$. By plugging in our improved $t$-non-malleable extractor from Theorem $\ref{nmext_multi}$, we obtain two-source extractors that require min-entropy $(\log n)^{8}$ for constant error, and $(\log n)^{14}$ for polynomially small error (see Theorem $\ref{2ext1}$ and Theorem $\ref{2ext2}$). By a well-known connection to Ramsey graphs (see \cite{BRSW12}), the constant error $2$-source extractor implies an explicit $2^{(\log \log n)^8}$-Ramsey graph on $n$ vertices.
 
\noindent \textbf{Multi-Source Extractors}\hspace{0.2cm}Next, we improve the entropy requirement in extractors for a constant number of independent sources. In particular, we give explicit extractors for $O(1)$ sources, each having min-entropy $\log^{1+o(1)}(n)$.  More formally, we have the following theorem.
\begin{THM}\label{multi_thm}There exist constants $C>0, C'>0$ s.t for all $n, k \in \mathbb{N}$ with $k \geq  \log n 2^{C'\sqrt{\log \log(n)}}$ and any constant $\epsilon>0$,\footnote{As in \cite{Coh16b}, the error can actually be slightly sub-constant.} there exists an explicit function $\Ext:(\zo^n)^{C} \rightarrow \zo$, such that if  $\X_1,\ldots,\X_{C}$ are independent $(n, k)$ sources, then
$$|\Ext(\X_1,\ldots,\X_{C}) -\U_1 | \le \epsilon.$$
\end{THM}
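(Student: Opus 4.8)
The plan is to bootstrap the non-malleable extractor from Theorem~\ref{thm:nmext1} — or, more precisely, its $t$-non-malleable generalization (Theorem~\ref{nmext_multi}) — together with the non-malleable independence preserving merger into a constant-source extractor, following the high-level template used by Cohen--Schulman~\cite{Coh16b} but replacing their merger and non-malleable object by the more efficient ones constructed in this paper. The first step is to use a small constant number of the sources, say $\X_1,\X_2$, to produce a short uniform ``seed'' that is (weakly) independent of the remaining sources: a standard way is to take a block-source–type decomposition or to apply a known few-source extractor/somewhere-extractor to obtain a somewhere-random source with a bounded number of rows, each row of length roughly $\log(n/\eps)2^{O(\sqrt{\log\log n})}$, which matches the seed length required by Theorem~\ref{thm:nmext1}. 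Because the min-entropy is only $\log^{1+o(1)} n$, this seed-generation step has to be done carefully — we cannot afford a seed of length $\log^2 n$ — which is exactly why the near-optimal seed length in Theorem~\ref{thm:nmext1} is essential here.

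Next I would feed each candidate seed (each row of the somewhere-random source) together with a fresh source $\X_i$ into the non-malleable extractor, obtaining a somewhere-random source whose rows are ``non-malleable'' with respect to each other in the sense needed to invoke the independence preserving merger. The core of the argument is then to apply the non-malleable independence preserving merger (the key ingredient advertised in the abstract) repeatedly, using one new source at each level of the merge, to collapse the number of rows from its initial constant down to $1$, while preserving independence/uniformity. Each application of the merger consumes one source and reduces the row count by a constant factor (or by one), so a constant number of sources suffices to drive the row count to $1$; the efficiency improvement of our merger over~\cite{Coh16b} is what turns their $O(1/\delta)$ sources into an absolute constant. Once a single uniform row is obtained we are essentially done: a final extraction step (e.g.\ a strong seeded extractor applied to a leftover source with the merged output as seed, or simply reading one bit) yields the single output bit that is $\eps$-close to $\U_1$.

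The main obstacle, and the step that needs the most care, is controlling the entropy budget across the whole pipeline: each source starts with only $\log^{1+o(1)} n$ bits of min-entropy, and every use of a source — for seed generation, for one invocation of the non-malleable extractor, or for one level of the merger — must leave enough conditional min-entropy in the \emph{other} sources (after fixing the relevant intermediate random variables) for the subsequent steps to go through. Concretely, one has to argue that after conditioning on the seed and on the earlier merged rows, the remaining sources are still (close to) independent sources with min-entropy at least $\log(n/\eps)2^{C\sqrt{\log\log(n/\eps)}}$, so that Theorem~\ref{thm:nmext1} and the merger lemma continue to apply with error that only blows up by small multiplicative/additive factors at each of the constantly-many levels. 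A secondary technical point is handling the somewhere-random structure: the merger must be robust to the fact that we do not know which row is ``good,'' and the non-malleability of the rows (inherited from the $t$-non-malleable extractor) is precisely what lets the merger output a uniform string regardless. I would set parameters so that the error per stage is at most $\eps/O(1)$, giving total error $O(\eps)$, and absorb constants by rescaling $\eps$ at the start.
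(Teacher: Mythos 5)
Your proposal follows the right general paradigm (reduce to a somewhere-independent structure, then merge), but it has a genuine gap at the final step, and the accounting of sources-per-merge is off. The paper's proof first invokes the Cohen--Schulman reduction (Theorem 7.1): a constant number of sources are mapped to $r=n^{3/\alpha}$ matrices, each with $L=O(t\log n)$ rows, such that for most matrices each row is near-uniform and, for any $t$ other matrices, some row of a good matrix is near-uniform \emph{conditioned on} the corresponding rows of those $t$ matrices. A \emph{single} additional weak source is then used by the $(L,\ell,t)$-IPM of Section 4.4 to merge every one of the $r$ matrices down to one bit each -- one source suffices for the entire merge of all $L=O(\log n)$ rows, which is precisely the improvement over \cite{Coh16b}; your plan of spending one source per constant-factor reduction in row count would consume $\omega(1)$ sources since $L$ is not constant. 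The result is a string of $r$ bits, most of which form a $(t,\gamma)$-wise independent distribution, and the proof concludes by applying the Majority function as a resilient function (Corollary 7.6).

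The step in your write-up that would fail is ``once a single uniform row is obtained we are essentially done.'' The merger never hands you an unconditionally uniform string: its guarantee is only that the merged output of a good matrix is uniform \emph{relative to} the merged outputs of $t$ other matrices, and you do not know which matrices (or rows) are good. Reading off one bit, or seeding a final extractor with the merged output, therefore gives no uniformity guarantee -- a $r^{1/2-\alpha}$ fraction of the merged bits may be adversarially correlated. The missing idea is the resilient-function step: the whole point of preserving $t$-wise independence through the merge is so that Majority (together with the $(t,\gamma)$-almost-$t$-wise to $t$-wise comparison of Theorem 7.5) can extract a single bit that is $O(\log t/t + r^{-\alpha} + \gamma r^t)$-close to uniform. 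Without that step, or some substitute for it, the argument does not close.
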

\begin{table}[t]\label{nmext_table}
\centering
\begin{tabular}{|m{5 cm}| m{5 cm} | m{5 cm}| @{}m{0pt}@{}} 
\hline
Reference & Min-Entropy & Seed Length &\\ [10pt] \hline
\cite{DW09} (non-constructive) & $> 2m + 2\log(1/\epsilon)+ \log d + 6$ & $>\log(n-k+1)+2\log(1/\epsilon)+5$ &\\ [10pt] \hline \hline
\cite{DLWZ11} & $>n/2$ & $n$ & \\  [10pt]  \hline
\cite{CRS12,Li12a,DY12} & $>n/2$ & $O(\log(n/\epsilon))$& \\ [10pt] \hline
\cite{Li12b} & $0.49n$ & $n$& \\  [10pt] \hline
\cite{CGL15} &$ \Omega((\log(n/\epsilon))^2)$ & $O((\log(n/\epsilon))^2)$& \\  [10pt] \hline
\cite{Coh15nm} &   $\Omega(\log(n/\epsilon)\log((\log n)/\epsilon))$ &      $O(\log(n/\epsilon)\log((\log n)/\epsilon))$& \\ [10pt] \hline
\cite{Coh16a}   &   $\Omega(\log n + (\log(1/\epsilon))^3)$  & $O(\log n + (\log(1/\epsilon))^3)$   &   \\   [10pt] \hline \hline
Theorem $\ref{thm:nmext1}$ & $ \log(n/\epsilon)2^{\Omega(\sqrt{\log \log (n/\epsilon)})}$ &  $\log(n/\epsilon)2^{O(\sqrt{\log \log (n/\epsilon)})}$ & \\  [10pt] \hline
Theorem $\ref{nmext_better_seed}$ &   $\log(n/\epsilon)2^{2^{\Omega(\sqrt{\log \log \log(n/\epsilon)})}} $      & $(\log(n/\epsilon))^{3+o(1)} $   &    \\ [10pt] \hline
Theorem $\ref{nmext_better_entropy}$ &      $(\log(n/\epsilon))^{3+o(1)}$        &       $ \log(n/\epsilon)2^{2^{O(\sqrt{\log \log \log(n/\epsilon)})}}    $  & \\  [10pt]  
\hline
\end{tabular}
\caption{A summary of results on non-malleable extractors}
\end{table}
\subsection{Non-malleable independence preserving merger}
The barrier of $\log^2(1/\eps)$ in seed length and min-entropy requirement of non-malleable extractors, as well as the barrier of $\log^2 n$ in min-entropy requirement of multi-source extractors mainly come from the fact that the previous constructions rely heavily on the ``alternating extraction" based techniques. In \cite{Coh16b}, Cohen and Schulman introduced a new object called \emph{independence preserving merger} ($\ipm$ for short). This is the key component in their construction, which helps them to obtain the $O(1/\delta)+O(1)$ source extractor for min-entropy $k \geq \log^{1+\delta} n$. The construction of the independence preserving merger in \cite{Coh16b} is fairly complicated and takes up a bulk of work. 

A key component in all of our constructions is a generalized, and much more efficient version of the independence preserving merger in \cite{Coh16b}, which we call non-malleable independence preserving merger ($\nipm$ for short). In addition, we believe that our construction of $\nipm$ is simpler than the construction of $\ipm$ in \cite{Coh16b}. We now define this object below.

\begin{define}A $(L,t,d',\epsilon,\epsilon')$-$\nipm: \zo^{Lm} \times \zo^d \rightarrow \zo^{m_1}$ satisfies the following property.  Suppose
\begin{itemize}
\item $\X,\X^1,\ldots,\X^{t}$ are r.v's, each supported on boolean $L\times m$ matrices s.t for any $i \in [L]$, $|\X_i - \U_m| \le \epsilon$,
\item $\{\Y,\Y^1,\ldots,\Y^t\}$ is independent of $\{ \X,\X^1,\ldots,\X^{t}\}$, s.t $\Y,\Y^1,\ldots,\Y^t$ are each supported on $\zo^{d}$ and $H_{\infty}(\Y) \ge d-d'$,
\item there exists an $h \in [L]$ such that $|(\X_h,\X_h^1,\ldots,\X_h^t)-(\U_m,\X_h^1,\ldots,\X_h^t)|\le \epsilon$,
\end{itemize}
then 
\begin{align*}
|(L,t,d',\epsilon,\epsilon')\text{-}\nipm((\X,\Y), (L,t,d',\epsilon,\epsilon')\text{-}\nipm(\X^1,\Y^1),\ldots, (L,t,d',\epsilon,\epsilon')\text{-}\nipm(\X^t,\Y^t)\\ -\U_{m_1},  (L,t,d',\epsilon,\epsilon')\text{-}\nipm(\X^1,\Y^1),\ldots, (L,t,d',\epsilon,\epsilon')\text{-}\nipm(\X^t,\Y^t)| \le \epsilon'.
\end{align*}
\end{define}

We present an explicit construction of an $\nipm$ which requires seed length $d= \log(m/\epsilon)L^{o(1)}$ for the case $t=1$. More formally, we have the following theorem.

\begin{THM}\label{basic_rec_nipm}For all integers $m,L>0$, any $\epsilon>0$,   there exists an explicit $(L,1,0,\epsilon,\epsilon')$-$\nipm:\zo^{mL}\times \zo^{d} \rightarrow \zo^{m'}$, where $d=2^{O(\sqrt{\log L})}\log(m/\epsilon), m'= \frac{m}{2^{\sqrt{\log L}}}-2^{O(\sqrt{\log L})}\log(m/\epsilon)$ and $\epsilon'= O(\epsilon L)$.
\end{THM}

 We have a more general version of the above theorem presented in Section $\ref{sec_nm_rec}$ which works for general $t$. This is crucial for us to obtain our results on $t$-non malleable extractors and extractors for independent sources with near logarithmic min-entropy.
 
Using our $\nipm$, we construct a standard $\ipm$ introduced in the work of Cohen and Schulman \cite{Coh16b}. We first define an $\ipm$.
 
 \begin{define}A $(L,k,t,\epsilon,\epsilon')$-$\ipm: \zo^{Lm} \times \zo^n \rightarrow \zo^{m_1}$ satisfies the following property.  Suppose
\begin{itemize}
\item $\X,\X^1,\ldots,\X^{t}$ are r.v's, each supported on boolean $L\times m$ matrices s.t for any $i \in [L]$, $|\X_i - \U_m| \le \epsilon$,
\item $\Y$ is an $(n,k)$-source, independent of $\{ \X,\X^1,\ldots,\X^{t}\}$.
\item there exists an $h \in [L]$ such that $|(\X_h,\X_h^1,\ldots,\X_h^t)-(\U_m,\X_h^1,\ldots,\X_h^t)|\le \epsilon$,
\end{itemize}
then 
\begin{align*}
|(L,k,t,\epsilon,\epsilon')\text{-}\ipm(\X,\Y), (L,k,t,\epsilon,\epsilon')\text{-}\ipm(\X^1,\Y),\ldots, (L,k,t,\epsilon,\epsilon')\text{-}\nipm(\X^t,\Y)\\ -\U_{m_1},  (L,k,t,\epsilon,\epsilon')\text{-}\ipm(\X^1,\Y),\ldots, (L,k,t,\epsilon,\epsilon')\text{-}\ipm(\X^t,\Y)| \le \epsilon'
\end{align*}
\end{define}

\begin{THM}There exists a constant $C>0$ such that for all integers $m,L>0$, any $\epsilon>0$, and any $k\ge 2^{C\sqrt{\log L}}\log(m/\epsilon)$,   there exists an explicit $(L,k,1,\epsilon,\epsilon')$-$\ipm:\zo^{mL}\times \zo^{n} \rightarrow \zo^{m'}$, with $m'= \frac{1}{2^{\sqrt{\log L}}}(m-O(\log(n/\epsilon))) - 2^{O(\sqrt{\log L})}\log(m/\eps)$ and $\epsilon'= O(\epsilon L)$.
\end{THM}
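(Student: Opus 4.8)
The plan is to get the $\ipm$ almost for free from the general, arbitrary-$t$ version of the $\nipm$ built in Section~\ref{sec_nm_rec}, using the observation that an $(n,k)$-source is exactly a seed of length $d=n$ with min-entropy deficiency $d'=n-k$, together with the fact that in the $\ipm$ setting the seed is never tampered, so all ``tampered seeds'' can be taken to equal $\Y$ itself.

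Concretely, let $\nipm^{*}\colon\zo^{mL}\times\zo^{n}\to\zo^{m'}$ be the $(L,1,d',\epsilon,\epsilon')$-$\nipm$ guaranteed by the general version of Theorem~\ref{basic_rec_nipm}, instantiated with seed length $d=n$ and deficiency $d'=n-k$ (note these types already match those of the desired $\ipm$), and define $(L,k,1,\epsilon,\epsilon')$-$\ipm(\X,\Y):=\nipm^{*}(\X,\Y)$. Given the hypotheses of the $\ipm$ definition — matrices $\X,\X^1$ with every row $\epsilon$-close to $\U_m$, an index $h$ with $(\X_h,\X_h^1)\approx_{\epsilon}(\U_m,\X_h^1)$, and an independent $(n,k)$-source $\Y$ — I would feed $\nipm^{*}$ the same matrices together with the seed tuple $(\Y,\Y)$, i.e.\ $\Y^1:=\Y$. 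The matrix-side conditions in the $\nipm$ definition are word-for-word those in the $\ipm$ definition, so they transfer verbatim; on the seed side, $\{\Y,\Y\}$ is independent of $\{\X,\X^1\}$ because $\Y$ is, each copy is supported on $\zo^{d}=\zo^{n}$, and $H_\infty(\Y)\ge k=d-d'$. Hence $\nipm^{*}$ applies, and its guarantee (with $\Y^1=\Y$) is precisely the statement required of $(L,k,1,\epsilon,\epsilon')$-$\ipm$. The reason we route the whole weak source into the merger as a deficient seed, rather than first squeezing a short seed out of a few bits of $\X$ (which would let us invoke the cheaper short-seed $\nipm$), is that the rows of $\X$ carry no joint-independence guarantee, so a seed extracted from part of $\X$ need not be independent of the remaining matrix entries; plugging $\Y$ in directly is what keeps the seed honestly independent of the matrices.

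It then remains to chase parameters. Setting $d=n$, $d'=n-k$ in the general $\nipm$ bound: the seed-entropy hypothesis, which asks that the seed's min-entropy exceed $2^{O(\sqrt{\log L})}\log(m/\epsilon)$, becomes $k=d-d'\ge 2^{C\sqrt{\log L}}\log(m/\epsilon)$; the error is unchanged at $\epsilon'=O(\epsilon L)$; and the output length becomes $m'=\tfrac{1}{2^{\sqrt{\log L}}}\bigl(m-O(\log(n/\epsilon))\bigr)-2^{O(\sqrt{\log L})}\log(m/\epsilon)$, where the extra $O(\log(n/\epsilon))$ loss inside the first factor is exactly what the general $\nipm$ pays for a seed that is both long and highly deficient (morally, the cost of an internal strong-seeded-extraction step applied to the deficient $\Y$, or of a union bound over seed coordinates taken at error $\epsilon$).

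The crux is therefore not the reduction but the robustness of the $\nipm$ in Section~\ref{sec_nm_rec}: I need it proved so that (i) the seed may be arbitrarily longer than the nominal length $2^{O(\sqrt{\log L})}\log(m/\epsilon)$, and (ii) its deficiency $d'$ may be almost all of its length (here $n-k$), with the corresponding loss in $m'$ scaling only like $\log(n/\epsilon)$ and not like $n$ or $d'$. If the general $\nipm$ as constructed only tolerates short or low-deficiency seeds, the reduction would instead have to condense $\Y$ down to $O(k)$ bits first, and — since the only independent randomness available for such a seeded condensing step is $\X$, whose rows need not be independent — making that step rigorous would be the real obstacle; I would try to avoid it altogether by establishing the general $\nipm$ directly for an arbitrary-length, large-deficiency seed, which the recursive structure of the construction should be flexible enough to support.
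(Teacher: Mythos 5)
Your reduction does not go through, and the failure is exactly at the point you flag as the ``crux.'' The recursive $\nipm$ of Theorem~\ref{rec_thm} requires $d=(c_{\ref{thm:t-nipm}}\ell\log(m/\epsilon)+d')(t+2)^{r+1}$ with $H_\infty(\Y)\ge d-d'$: its proof slices the seed into prefixes $\Y[i]=\slice(\Y,d_i)$ with $d_1\ge d'+c_{\ref{thm:t-nipm}}(t+1)\ell\log(m/\epsilon)$ and $d_i=(t+2)d_{i-1}$, and each slice must retain fresh min-entropy after all previous slices are fixed. This forces the deficiency $d'$ to be at most a $(t+2)^{-(r+1)}$ fraction of the seed length. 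Setting $d=n$ and $d'=n-k$ as you propose would therefore require $k\ge n\bigl(1-(t+2)^{-(r+1)}\bigr)+c_{\ref{thm:t-nipm}}\ell\log(m/\epsilon)$, i.e.\ a source of min-entropy rate $1-o(1)$ --- nothing like the claimed $k\ge 2^{C\sqrt{\log L}}\log(m/\epsilon)$. A weak source with $k\ll n$ concentrates its entropy who-knows-where, so no fixed slicing schedule can find it; the $\nipm$ is genuinely a high-entropy-rate-seed object and cannot be fed an arbitrary $(n,k)$-source directly.

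The step you tried to avoid is the paper's actual proof (Theorem~\ref{rec_thm_weak}, Algorithm~2), and the independence obstacle you raise against it is resolvable. One takes a short slice $\W=\slice(\X_1,d_1)$ of the first row, sets $\Z=\Ext_1(\Y,\W)$ with $\Ext_1$ strong-seeded, then re-extracts each row as $\overline{\V}_i=\Ext_2(\X_i,\V)$ where $\V$ is a short slice of $\Z$, and finally applies the $\nipm$ to $(\overline{\V},\Z)$. The point is that after \emph{conditioning} on the fixed values of $\W$ and $\W^g$, the pair $\{\Z,\Z^g\}$ is a deterministic function of $\Y$ alone and hence independent of $\{\X,\X^g\}$; the rows of $\X$ are not required to be jointly independent for this, they merely each lose $O((t+2)\log(n/\epsilon))$ bits of conditional min-entropy from the fixing, which is precisely the $O(\log(n/\epsilon))$ term inside the output length $m'$. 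The re-extraction step is then needed because after this conditioning the rows only have high min-entropy rather than being close to uniform, which is what the $\nipm$ hypothesis demands. So your parameter accounting for $m'$ is the right intuition, but it is earned by the two-step condense-and-re-extract construction, not by a long-deficient-seed version of the $\nipm$.
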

As in the case of $\nipm$, we in fact construct an $\ipm$ for general $t$. The construction of $\ipm$ from $\nipm$ is relatively straightforward, and using this explicit $\ipm$ we derive our improved results on extractors for independent sources.

We note that there are several important differences between our $\ipm$ and the construction in \cite{Coh16b}. First, we only require that there exists at least \emph{one} ``good" row $\X_h$ in the matrix (i.e., $\X_h$ is uniform even given $\X_h^1,\ldots,\X_h^t$). In contrast, the $\ipm$ in \cite{Coh16b} requires that $0.99$ fraction of the rows are good. Second, the construction  of $\ipm$ in \cite{Coh16b} offers a trade-off between the number of additional sources required and the min-entropy requirement of each source. In particular, they construct an $\ipm$ using $b$ additional sources, each having min-entropy $k = \Omega(L^{1/b}\log(n/\epsilon))$. In contrast, we use just \emph{one} additional source with min-entropy $k=\Omega(L^{o(1)}\log (m/\eps))$, and works as long as $m \ge O(\log(n/\epsilon))+L^{o(1)}\log (m/\eps)$. In typical applications, we will have $m \approx k < n$, so it suffices to set $k = L^{o(1)} \log (n/\eps)$. For all applications in this paper, we will choose $L=O(\log(n/\eps))$ and thus we get $k = \log^{1+o(1)} (n/\eps)$. The fact that our $\ipm$ uses only one additional source improves significantly upon the $\ipm$ in \cite{Coh16b} and is crucial for us to obtain an $O(1)$ source extractor for min-entropy $k=\log^{1+o(1)} n$.

We also present a more involved construction of a $\nipm$ that uses a shorter seed in comparison to the $\nipm$ in Theorem $\ref{basic_rec_nipm}$, but requires matrices with larger rows (i.e., $m$ is required to be larger). More formally, we have the following result.  
\begin{THM}\label{advanced_nipm}For all integers $m,L>0$, any $\epsilon>0$,   there exists an explicit $(L,1,0,\epsilon,\epsilon')$-$\nipm:\zo^{mL}\times \zo^{d} \rightarrow \zo^{m'}$, where $d=2^{O(\sqrt{\log \log L})}\log (m/\eps), m'= \frac{m}{L2^{(\log \log L)^{O(1)}}}-O(L\log(m/\eps))$ and $\epsilon'= 2^{O(\sqrt{\log \log L})} L \eps$.
\end{THM}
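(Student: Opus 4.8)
The plan is to build the $\nipm$ of Theorem~\ref{advanced_nipm} by a recursive merging procedure in the same spirit as the one behind Theorem~\ref{basic_rec_nipm}, but with the recursion \emph{re-balanced} so that the seed becomes as short as possible, paying for this with a much larger loss (a factor $L$ rather than a factor $2^{\sqrt{\log L}}$) in the output length. Throughout I only treat $t=1$, so a single tampered matrix $\X^1$ and tampered seed $\Y^1$ are in play. The guiding observation is that merging the $L$ rows down to one good row is a task whose cost is governed by how efficiently the $\approx\log L$-bit ``address'' of the unknown good row $h$ is processed: a naive handling pays a $\log L$ factor somewhere, the construction of Theorem~\ref{basic_rec_nipm} pays $2^{\sqrt{\log L}}$ by a balanced recursion of depth $\sqrt{\log L}$ and branching $2^{\sqrt{\log L}}$, and here I want to recurse ``one level deeper'' --- on that address (equivalently, on the recursion parameter itself) --- to replace $\sqrt{\log L}$ by $\sqrt{\log\log L}$.

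Concretely, I would arrange the merge as a tree of depth $\Theta(\sqrt{\log\log L})$ with branching $2^{\Theta(\sqrt{\log\log L})}$ (so that $2^{\Theta(\sqrt{\log\log L})\cdot\Theta(\sqrt{\log\log L})}=\log L$ of the address bits get resolved along a root-to-leaf path), where each leaf is a cheap base step: a flip-flop / correlation-breaker handling $O(1)$ address bits, consuming only $O(\log(m/\eps))$ seed bits and costing roughly a factor two in the row length. Summing $O(\log(m/\eps))$ over the $\Theta(\sqrt{\log\log L})$ levels, with the seed pieces of distinct levels taken (essentially) independent, gives total seed $2^{O(\sqrt{\log\log L})}\log(m/\eps)$, as claimed; the factor-two loss per address bit over $\approx\log L$ bits, plus lower-order tree overhead, multiplies up to the $1/(L\,2^{(\log\log L)^{O(1)}})$ factor and the additive $O(L\log(m/\eps))$ loss in $m'$; and the $O(\eps L)$ error contributed by the $\approx L$ base steps, inflated by a $2^{O(\sqrt{\log\log L})}$ overhead from the internal nodes of the tree, yields $\eps'=2^{O(\sqrt{\log\log L})}L\eps$. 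This is the same re-balancing device that converts a $\log^2$ dependence into a $\log^{1+o(1)}$ one in \cite{CGL15,Coh16a}, applied to the merger's own recursion rather than to a seed-length parameter, which is exactly why $\sqrt{\log L}$ becomes $\sqrt{\log\log L}$.

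The correctness proof is an induction over the tree maintaining the invariant that (i) at least one row of the current matrix is $\eps$-close to uniform conditioned on the corresponding row of the current tampered matrix, and (ii) the residual seed still has min-entropy deficiency at most $d'$ (here $d'=0$). Part (i) is preserved because the node that handles the address-block on which $h$ first becomes ``localized'' is exactly a base correlation-breaker / $\nipm$ step, whose defining guarantee yields an output close to uniform given the tampered output; part (ii) is preserved by the standard argument that fixing the short $\X$-side output of an extractor at a node costs little entropy on the $\Y$-side, together with the independence of the level-seeds. The error is then a union bound over the base steps, matching the $\eps'$ above, and the output length is tracked by the multiplicative/additive accounting just sketched; we follow the template of the general-$t$ recursion of Section~\ref{sec_nm_rec}, specialized to $t=1$.

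The step I expect to be the main obstacle is the cross-level correlation bookkeeping forced by $t=1$: the tampered matrix $\X^1$ is \emph{globally} correlated with $\X$, so when I invoke a base step at a node I must first condition on \emph{all} tampered outputs produced so far (not merely the sibling's), and then verify that the hypotheses of the base correlation-breaker --- the seed fed to the node is still close to independent of, and has enough conditional min-entropy given, the relevant tampered data --- survive that conditioning at every one of the $2^{O(\sqrt{\log\log L})}$ levels. Choosing the split of the master seed into level-pieces, and fixing the order in which the tampered quantities are conditioned on, so that this invariant propagates all the way down while the total seed length and accumulated error stay within the stated bounds, is the crux; the remaining ingredients (the entropy arithmetic, the parameter choices for the extractors and flip-flops at the base) are routine.
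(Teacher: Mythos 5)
Your opening and closing paragraphs correctly identify the paper's strategy in spirit --- re-balance the recursion behind Theorem~\ref{basic_rec_nipm}, trading output length for seed length --- but the concrete merge schedule you propose in the middle does not work, and it is not what the paper does. A uniform tree of depth $D=\Theta(\sqrt{\log\log L})$ and branching $B=2^{\Theta(\sqrt{\log\log L})}$ has only $B^{D}=2^{\Theta(\log\log L)}=\mathrm{polylog}(L)$ leaves; if each leaf is a base step touching $O(1)$ rows and consuming $O(\log(m/\eps))$ seed bits, the leaves collectively touch only $\mathrm{polylog}(L)$ of the $L$ rows, so for most $h\in[L]$ the good row is never processed at all. (The arithmetic ``$2^{\Theta(\sqrt{\log\log L})\cdot\Theta(\sqrt{\log\log L})}=\log L$ address bits get resolved along a root-to-leaf path'' conflates the number of leaves with the number of address bits resolved per path, which is $D\log_2 B=\Theta(\log\log L)$; neither quantity suffices to isolate one row out of $L$.) If you instead force the tree to have $L$ leaves at depth $\Theta(\sqrt{\log\log L})$, each internal node must merge $2^{(\log L)/\Theta(\sqrt{\log\log L})}$ children, and doing that with the basic $\ell$-$\nipm$ of Theorem~\ref{thm:nipm} costs seed proportional to the number of children, destroying the claimed bound. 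The seed accounting is also off: summing $O(\log(m/\eps))$ over $\Theta(\sqrt{\log\log L})$ levels gives $O(\sqrt{\log\log L})\cdot\log(m/\eps)$, and the level seeds cannot be taken independent --- the conditioning on tampered intermediate outputs forces each level's seed slice to be a constant factor longer than the previous one, and controlling the resulting exponential-in-depth factor is precisely the difficulty the construction must address.

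The paper's actual proof is a bootstrapping argument (Lemma~\ref{gradual_nipm}): given an explicit $(L,1,0,\eps,\eps_1)$-$\nipm$ with seed length $2^{O((\log L)^{1/q})}\log(m/\eps)$, one plugs it in as the block-merger of Algorithm~\ref{alg1} acting on blocks of $\ell$ rows, so the new seed length is roughly $2^{O((\log\ell)^{1/q})}\cdot c^{\log L/\log\ell}\cdot\log(m/\eps)$, and optimizing $\ell$ via $(\log\ell)^{1+1/q}\approx\log L$ improves the exponent from $(\log L)^{1/q}$ to $(\log L)^{1/(q+1)}$ while the output length and error degrade by bounded factors. Iterating this $\approx\sqrt{\log\log L}$ times starting from Theorem~\ref{basic_rec_nipm} yields Theorem~\ref{advanced_nipm}. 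The quantity $\sqrt{\log\log L}$ is the number of bootstrap iterations (hence the factor $2^{O(\sqrt{\log\log L})}$ in the error, from doubling once per iteration), not the depth of any single merge tree: the outermost tree in the final construction has depth $(\log L)^{1/(q+1)}=2^{\Theta(\sqrt{\log\log L})}$, and the branching factors vary across nesting levels from $2^{(\log L)^{1-o(1)}}$ outside down to small blocks inside. The cross-level conditioning you flag as the crux is real but is inherited wholesale from the already-proven claims behind Theorems~\ref{thm:nipm} and~\ref{rec_thm}; the genuinely missing ingredient in your writeup is the self-improvement lemma together with the per-iteration re-optimization of the block size, without which no fixed tree shape achieves the stated parameters.
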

We use the $\nipm$ from the above theorem in  obtaining  the non-malleable extractors in Theorem $\ref{nmext_better_entropy}$ and Theorem $\ref{nmext_better_seed}$.
\section{Outline of Constructions}
Here we give an informal and high level description of our constructions. We start with our non-malleable independence preserving merger (NIPM) with a uniform (or high entropy rate) seed. 

\subsection{Non-malleable independence preserving merger with uniform seed}
For simplicity we start by describing the case of only one tampering adversary. Here, we have two correlated random variables $\X=(\X_1,\ldots,\X_{L})$ and $\X'=(\X_1',\ldots,\X_{L}')$, each of them is an $L \times m$ matrix. We have another two correlated random variables $\Y, \Y'$. We assume the following conditions: $(\X, \X')$ is independent of $(\Y, \Y')$, each $\X_i$ is uniform and there exists a $j \in [L]$ such that $\X_j$ is uniform even conditioned on $\X_j'$, and $\Y$ is uniform. Our goal is to construct a function $\nipm$ such that $\nipm(\X, \Y)$ is uniform conditioned on $\nipm(\X', \Y')$, i.e., using $\Y$ we can merge $\X$ into a uniform random string which keeps the independence property over $\X'$ even with a tampered seed $\Y'$.

Our starting point is the following simple observation. Let $(\X, \X')$ be two correlated weak sources and $(\rr, \rr')$ be two correlated random variables such that $(\X, \X')$ is independent of $(\rr, \rr')$. Let $\rr$ be uniform and take any strong seeded extractor $\Ext$, consider $\Z=\Ext(\X, \rr)$ and $\Z'=\Ext(\X', \rr')$. Assume the length of the output of $\Ext$ is small enough. Then $\Z$ is close to uniform given $\Z'$ if either of the following two conditions holds: $\rr$ is uniform given $\rr'$ or $\X$ has sufficient min-entropy conditioned on $\X'$. Indeed, in the first case, we can first fix $\rr'$, and argue that conditioned on this fixing, $\Z'$ is a deterministic function of $X'$. We can now further fix $\Z'$, and conditioned on this fixing, $\X$ still has enough entropy left (since the length of $\Z'$ is small). Note that at this point $\rr$ is still uniform and independent of $X$, thus $\Z=\Ext(\X, \rr)$ is close to uniform given $\Z'$.  In the second case, we can first fix $\X'$, and conditioned on this fixing $\X$ still has enough entropy left. Now since $\Ext$ is a strong extractor, we know that $\Z=\Ext(\X, \rr)$ is close to uniform even given $\rr$. Since we have already fixed $\X'$ and $(\rr, \rr')$ is independent of $(\X, \X')$, this means that $\Z$ is also close to uniform even given $\rr'$ and $\X'$, which gives us $\Z'=\Ext(\X', \rr')$. 

Now we can describe our basic NIPM. The construction is actually simple in the sense that it is essentially an alternating extraction process between $\X$ and $\Y$, except that in each alternation we use a \emph{new} row from $\X$. Specifically, we first take a small slice $\s_1$ from $\X_1$, and apply a strong seeded extractor to obtain $\rr_1=\Ext(\Y, \s_1)$; we then use $\rr_1$ to extract from $\X_2$ and obtain $\s_2=\Ext(\X_2, \rr_1)$. Now we continue and obtain $\rr_2=\Ext(\Y, \s_2)$ and $\s_3=\Ext(\X_3, \rr_2)$... .The final output of our merger will be $\s_{L}=\Ext(\X_{L}, \rr_{L-1})$.

To see why this construction works, first assume that the length of each $\s_i, \rr_i$ is small enough. Let $j$ be the first index in $[L]$ such that $\X_j$ is uniform even conditioned on $\X_j'$. Then, we can fix all the intermediate random variables $\s_1, \s_1', \rr_1, \rr_1', \s_2, \s_2', \rr_2, \rr_2' \ldots \s_{j-1}, \s_{j-1}'$, and conditioned on these fixings we know that: 1. $(\rr_{j-1}, \rr_{j-1}')$ are deterministic functions of $(\Y, \Y')$, and thus independent of $(\X, \X')$; 2. $\rr_{j-1}$ is close to uniform; 3. $\X_j$ still has enough entropy conditioned on $\X_j'$. Now, by the first case we discussed above, this implies that $\s_j$ is close to uniform given $\s_j'$. From this point on, by using the second case we discussed above and an inductive approach, we can argue that for all subsequent $t \geq j$, we have that $\rr_t$ is close to uniform given $\rr_t'$ and $\s_t$ is close to uniform given $\s_t'$. Thus the final output $\s_{L}$ is close to uniform given $\s_{L}'$. 

Note that this construction can work even if $\Y$ is a very weak random source instead of being uniform or having high min-entropy rate. However this basic approach will require the min-entropy of $Y$ to be at least $O(L \log(m/\ep))$, which is pretty large if $L$ is large. We next describe a way to reduce this entropy requirement, in the case where $\Y$ is uniform or has high min-entropy rate.

The idea is that, rather than merging the $L$ rows in one step, we merge them in a sequence of steps, with each step merging all the blocks of some $\ell$ rows. Thus, it will take us roughly $\frac{\log L}{\log \ell}$ steps to merge the entire matrix. Now first assume that $\Y$ is uniform, then in each step we will not use the entire $\Y$ to do the alternating extraction and merging, but just use a \emph{small slice} of $\Y$ for this purpose. That is, we will first take a small slice $\Y_1$ and use this slice to merge $L/\ell$ blocks of $\X$, where each block has $\ell$ rows; we then take another slice $\Y_2$ of $\Y$ and use this slice to merge $L/\ell^2$ new blocks, where each block has $\ell$ rows, and so on. The advantage of this approach is that now the entropy consumed in each merging step is contained in the slice $\Y_i$ (and $\Y_i'$), and won't affect the rest of $\Y$ much.

As we discussed before, we need to make sure that each slice $\Y_i$ has min-entropy $O(\ell \log(m/\ep))$ conditioned on the fixing of all previous $(\Y_j, \Y_j')$. As a result, we need to set $|\Y_{i+1}| \geq 2|\Y_i|+O(\ell \log(m/\ep))$. It suffices to take $|\Y_i|=c^i \ell \log(m/\ep)$ for some constant $c>2$. We know that the whole merging process is going to take roughly $\frac{\log L}{\log \ell}$ steps, so the total length (or min-entropy) of $\Y$ is something like $c^{\frac{\log L}{\log \ell}} \ell \log(m/\ep)$. We just need to choose a proper $\ell$ to minimize this quantity. A simple calculation shows that the best $\ell$ is roughly such that $\log \ell =\sqrt{\log L}$, which gives us a seed length of $2^{O(\sqrt{\log L})} \log(m/\ep)$. This gives us the $\nipm$ in Theorem~\ref{basic_rec_nipm}.

It is not difficult to see that this argument also extends to the case where $\Y$ is not perfectly uniform but has high min-entropy rate (e.g., $1-o(1)$) where we can still start with a small slice of $\Y$, and the case where we have $t+1$ correlated matrices $\X, \X^1, \ldots, \X^t$ and $t$ tampered seeds $\Y^1, \ldots, \Y^t$ of $\Y$.

\subsection{Non-malleable extractor with almost optimal seed}
The  NIPM in Theorem $\ref{basic_rec_nipm}$ is already enough to yield our construction of a non-malleable extractor with almost optimal seed length. Specifically, given an $(n, k)$ source $\X$, an independent seed $\Y$, and a tampered seed $\Y'$, we follow the approach of one of the authors' previous work \cite{CGL15} by first obtaining an advice of length $L=O(\log (n/\ep))$. Let the advice generated by $(\X, \Y)$ be $S$ and the advice generated by $\X, \Y'$ be $\s'$. We have that with probability $1-\ep$, $\s \neq \s'$. Further, conditioned on $(\s, \s')$ and some other random variables, we have that $\X$ is still independent of $(\Y, \Y')$ and $\Y$ has high min-entropy rate. 

Now we take a small slice $\Y_1$ of $\Y$, and use $\X$ and $\Y_1$ to generate a random matrix $\V$ with $L$ rows, where the $i$'th row is obtained by doing a flip-flop alternating extraction (introduced in \cite{Coh15}) using the $i$'th bit of $S$. Similarly a matrix $\V'$ is generated using $\X'$ and $\Y_1$. The flip-flop alternating extraction guarantees that each row in $\V$ is close to uniform, and moreover if the $i$'th bit of $S$ and $S'$ are different, then $\V_i$ is close to uniform even given $\V_i'$. Note that conditioned on the fixing of $(\Y_1, \Y_1')$, we have that $(\V, \V')$ are deterministic functions of $\X$, and are thus independent of $(\Y, \Y')$. Furthermore $\Y$ still has high min-entropy rate.

At this point we can just use our NIPM and $\Y$ to merge $\V$ into a uniform string $\Z$, which is guaranteed to be close to uniform given $\Z'$ (obtained from $(\V', \Y')$). The seed length of $\Y$ will be $O(\log(n/\ep))+2^{O(\sqrt{\log \log(n/\ep)})} \log(k/\ep)$. A careful analysis shows that the final error will be $O(\ep \log(n/\ep))$. Thus we need to set the error parameter $\ep$ slightly smaller in order to achieve a desired final error $\ep'$, but that does not affect the seed length much. Altogether this gives us a seed length and entropy requirement of $2^{O(\sqrt{\log \log(n/\ep)})} \log(n/\ep)$, as in Theorem $\ref{thm:nmext1}$. 


\subsection{Further improvements in various aspects}
We can further improve the non-malleable independence preserving merger and non-malleable extractor in various aspects. For this purpose, we observe that our NIPM starts with a basic merger for $\ell$ rows and then use roughly $\frac{\log L}{\log \ell}$ steps to merge the entire $L$ rows. If the basic merger uses a seed length of $d$, then the whole merger roughly uses seed length $c^{\frac{\log L}{\log \ell}} d$. In our basic and simple merger, we have $d=O(\ell \log(m/\ep))$. However, now that we have our improved NIPM, we can certainly use the more involved construction to replace the basic merger, where we only need seed length $d=2^{O(\sqrt{\log \ell})} \log(m/\ep)$. Now we can choose another $\ell$ to optimize $c^{\frac{\log L}{\log \ell}} d$, which roughly gives $\log \ell =\log^{2/3} L$ and the new seed length is $2^{O(\log^{1/3} L)} \log(m/\ep)$. We can now again use this merger to replace the basic merger. By doing this recursively, we can get smaller and smaller seed length. On the other hand, the entropy requirement becomes larger. We can also switch the roles of the seed and source, and achieve smaller entropy requirement at the price of a larger seed. Eventually, we can get $d=2^{O(\sqrt{\log \log L})}\log (m/\eps)$ and $m=O(L^22^{(\log \log L)^{O(1)}})\log (m/\eps)$, or vice versa. This gives us Theorem $\ref{advanced_nipm}$. Applying these $\nipm$s to non-malleable extractors as outlined above, we get Theorem~\ref{nmext_better_entropy} and Theorem~\ref{nmext_better_seed}.

\subsection{Independence preserving merger with weak random seed}
We now use our NIPM from Theorem $\ref{basic_rec_nipm}$ to construct a standard independence preserving merger with weak random seed, an object introduced in \cite{Coh16b}. Suppose we are given $(\X, \X')$ as described above and an independent random variable $\Y$. Here $\Y$ can be a very weak source, so our first step is to convert it to a uniform (or high min-entropy rate) seed. 

To do this, our observation is that since we know that each row in $\X$ is uniform, we can just take a small slice $\W$ of the first row $\X_1$, and apply a strong seeded extractor to $\Y$ to obtain $\Z=\Ext(\Y, \W)$, which is guaranteed to be close to uniform. However by doing this we also created a correlated $\Z'=\Ext(\Y, \W')$ where $\W'$ is a slice of $\X_1'$. Note that conditioned on the fixing of $(\W, \W')$ we have $(\X, \X')$ is independent of $(\Z, \Z')$, each row of $\X$ still has high min-entropy, and the ``good" row $\X_j$ still has high min-entropy even given $\X_j'$. We now take a small slice $\V$ of $\Z$, and use it to extract from each row of $\X$ to obtain another matrix $\overline{\X}$. Similarly we also have a slice $\V'$ from $\Z'$ and obtain $\overline{\X'}$. We can now argue that conditioned on the fixing of $(\V, \V')$, $(\overline{\X}, \overline{\X'})$ is independent of $(\Z, \Z')$, each row of $\overline{\X}$ is close to uniform, and the ``good" row $\overline{\X}_j$ is close to uniform even given $\overline{\X'}_j$. Moreover $\Z$ still has high min-entropy rate. 

Thus, we have reduced this case to the case of an independence preserving merger with a tampered high min-entropy rate seed. We can therefore apply our NIPM to finish the construction. It is also not difficult to see that our construction can be extended to the case where we have $\X, \X^1, \ldots, \X^t$ instead of having just $\X$ and $\X'$.

\subsection{Improved multi-source extractor}
We can now apply our independence preserving merger with weak random seed to improve the multi-source extractor construction in \cite{Coh16b}. Our construction follows the framework of that in \cite{Coh16b}, except that we replace their independence preserving merger with ours. Essentially, the key step in the construction of \cite{Coh16b}, and the only step which takes $O(1/\delta)$ independent $(n, \log^{1+\delta} n)$ sources (if we only aim at achieving constant or slightly sub-constant error) is to merge a matrix with $O(\log n)$ rows using $(n, \log^{1+\delta} n)$ sources. For this purpose and since the error of the merger needs to be $1/\poly(n)$, the independence preserving merger in \cite{Coh16b} uses two additional sources in each step to reduce the number of rows by a factor of $\log^{\delta} n$. Thus altogether it takes $2/\delta$ sources. Our merger as described above, in contrast, only requires \emph{one} extra independent source with min-entropy at least $O(\log n)+2^{O(\sqrt{\log \log n})} \log n=\log^{1+o(1)} n$. Therefore, we obtain a multi-source extractor for an absolute constant number of $(n, \log^{1+o(1)} n)$ sources, which outputs one bit with constant (or slightly sub-constant) error. 

The improved two-source extractors are obtained directly by plugging in our improved $t$-non-malleable extractors to the constructions in \cite{CZ15,Mek:resil}.

\paragraph{Organization}
We introduce some preliminaries in Section $\ref{section:prelims}$. We present our constructions of non-malleable independence preserving mergers and independence preserving mergers in Section $\ref{sec:mergers}$. We use Section $\ref{sec:opt_nm}$ to present  the construction of our almost-optimal non-malleable extractor. We present improved constructions of $t$-non-malleable extractors and applications to $2$-source extractors in Section $\ref{sec_t}$. We use Section $\ref{sec:involved}$ to present the $\nipm$ construction of Theorem $\ref{advanced_nipm}$, and the non-malleable extractor constructions of Theorem $\ref{nmext_better_entropy}$ and Theorem $\ref{nmext_better_seed}$. We present our results on multi-source extractors in Section $\ref{multi_source}$.

\section{Preliminaries} \label{section:prelims}
 We use $\U_m$ to denote the uniform distribution on $\{0,1 \}^m$.  \newline For any integer $t>0$, $[t]$ denotes the set $\{1,\ldots,t \}$.\newline For a string $y$ of length $n$, and any subset $S \subseteq [n]$, we use $y_S$ to denote the projection of $y$ to the coordinates indexed by $S$. \newline For a string $y$ of length $m$, define the string $\slice(y,w)$ to be the prefix of length $w$ of $y$. \newline We use bold capital letters for random variables and  samples as the corresponding small letter, e.g., $\X$ is a random variable, with $x$ being a sample of $\X$.

\subsection{Conditional Min-Entropy}
\begin{define} The average conditional min-entropy of a source $\X$ given a random variable $\W$ is defined as $$ \widetilde{H}_{\infty}(\X|\W) = -\log \l( \E_{w \sim W}\l[\max_{x} \Pr[\X=x | \W=w] \r] \r) = - \log \l(\E\l[ 2^{-H_{\infty}(\X|\W=w)} \r]\r).$$
\end{define}
We recall some results on conditional min-entropy from the work of Dodis et al.\ \cite{DORS08}.
\begin{lemma}[\cite{DORS08}] For any $\epsilon>0$, $\pr_{w \sim \W}\l[H_{\infty}(\X|\W=w) \ge \widetilde{H}_{\infty}(\X|\W)-\log(1/\epsilon)\r] \ge 1- \epsilon$.
\end{lemma}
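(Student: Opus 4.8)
The plan is to reduce the statement to a one-line application of Markov's inequality to an appropriate non-negative random variable, then translate back through logarithms. Define, for each value $w$ in the support of $\W$, the quantity $f(w) = 2^{-H_{\infty}(\X|\W=w)} = \max_{x}\pr[\X=x \mid \W=w]$, which is non-negative. By the definition of average conditional min-entropy we have $\E_{w\sim\W}\l[f(w)\r] = 2^{-\widetilde{H}_{\infty}(\X|\W)}$; write $\mu$ for this expectation.

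First I would rewrite the target event in terms of $f$. Using that $t\mapsto 2^{-t}$ is monotone decreasing (with the same base as the logarithm defining min-entropy), the inequality $H_{\infty}(\X|\W=w) \ge \widetilde{H}_{\infty}(\X|\W) - \log(1/\epsilon)$ is equivalent to $2^{-H_{\infty}(\X|\W=w)} \le 2^{-\widetilde{H}_{\infty}(\X|\W)} \cdot \epsilon^{-1}$, i.e.\ to $f(w) \le \mu/\epsilon$. Hence it suffices to show $\pr_{w\sim\W}\l[f(w) \le \mu/\epsilon\r] \ge 1-\epsilon$.

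Now apply Markov's inequality to the non-negative random variable $f(\W)$ with threshold $\mu/\epsilon$: $\pr_{w\sim\W}\l[f(w) \ge \mu/\epsilon\r] \le \E[f(\W)]\big/(\mu/\epsilon) = \epsilon$. Consequently $\pr_{w\sim\W}\l[f(w) < \mu/\epsilon\r] \ge 1-\epsilon$, and a fortiori $\pr_{w\sim\W}\l[f(w) \le \mu/\epsilon\r] \ge 1-\epsilon$, which by the equivalence above is exactly the claimed bound. There is no real obstacle in this argument; the only point requiring a little care is keeping track of the direction of the inequality when passing through the logarithm and the decreasing map $t \mapsto 2^{-t}$, and checking that the threshold $\mu/\epsilon$ is the one that makes Markov's inequality yield the constant $\epsilon$.
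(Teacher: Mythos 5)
Your proof is correct and is exactly the standard argument: the paper imports this lemma from \cite{DORS08} without proof, and the cited proof there is the same application of Markov's inequality to the non-negative variable $2^{-H_{\infty}(\X|\W=w)}$ whose expectation is $2^{-\widetilde{H}_{\infty}(\X|\W)}$ by definition. Your handling of the monotone change of variables through $t \mapsto 2^{-t}$ and the choice of threshold $\mu/\epsilon$ are both right, so there is nothing to add.
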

\begin{lemma}[\cite{DORS08}]\label{lem:entropy_loss} If a random variable $\Y$ has support of size $2^\ell$, then $\widetilde{H}_{\infty}(\X|\Y) \ge H_{\infty}(\X) - \ell$.
\end{lemma}
We require extractors that can extract  uniform bits when the source only has sufficient conditional min-entropy. 
\begin{define} A $(k,\epsilon)$-seeded average case seeded extractor $\Ext:\{ 0,1\}^n \times \{ 0,1\}^d \rightarrow \{ 0,1\}^m$ for min-entropy $k$ and error $\epsilon$ satisfies the following property:  For any source $\X$ and any arbitrary random variable $\Z$ with $\tilde{H}_{\infty}(\X|\Z)\ge k$, $$\Ext(\X,\U_d),\Z \approx_{\epsilon} \U_m, \Z.$$ 
\end{define}
It was shown in \cite{DORS08} that any seeded extractor is also an average case extractor.
\begin{lemma}[\cite{DORS08}]\label{lem:cond_ext} For any $\delta>0$, if $\Ext$ is a  $(k,\epsilon)$-seeded extractor, then it is also a  $(k+\log(1/\delta),\epsilon+\delta)$-seeded average case extractor.
\end{lemma}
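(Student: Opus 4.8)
The plan is to reduce the average-case guarantee to the worst-case one by an averaging argument over the fixings of $\Z$, using the Markov-type bound for conditional min-entropy stated above.

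First I would fix a source $\X$ and an arbitrary random variable $\Z$ with $\widetilde{H}_{\infty}(\X \mid \Z) \ge k + \log(1/\de)$, and recall that the goal is $(\Ext(\X, \U_d), \Z) \approx_{\epsilon + \de} (\U_m, \Z)$, where the seed $\U_d$ is uniform and independent of $(\X, \Z)$ and $\U_m$ is independent of $\Z$. Since $\U_d$ is independent of $\Z$, the statistical distance can be written as an average over the fixing of $\Z$:
\[
\l|(\Ext(\X, \U_d), \Z) - (\U_m, \Z)\r| = \E_{z \sim \Z}\l[\, \l|\Ext(\X|_{\Z = z}, \U_d) - \U_m\r| \,\r].
\]

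Next I would call a value $z$ \emph{good} if $H_{\infty}(\X \mid \Z = z) \ge k$. Applying the Markov-type lemma for conditional min-entropy with error parameter $\de$ gives $\pr_{z \sim \Z}\l[H_{\infty}(\X \mid \Z = z) \ge \widetilde{H}_{\infty}(\X \mid \Z) - \log(1/\de)\r] \ge 1 - \de$; since $\widetilde{H}_{\infty}(\X \mid \Z) - \log(1/\de) \ge k$, at least a $(1-\de)$-fraction of the probability mass of $\Z$ is good. For a good $z$, the distribution $\X|_{\Z = z}$ has min-entropy at least $k$ and $\U_d$ remains a uniform seed independent of it, so the worst-case property of $\Ext$ yields $\l|\Ext(\X|_{\Z = z}, \U_d) - \U_m\r| \le \epsilon$.

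Finally I would split the expectation into its good and bad parts. The good part contributes at most $\epsilon$ (its total weight is at most $1$ and every term is at most $\epsilon$), and the bad part contributes at most the total bad mass, which is at most $\de$ (every term is a statistical distance, hence at most $1$). Adding these gives $\l|(\Ext(\X, \U_d), \Z) - (\U_m, \Z)\r| \le \epsilon + \de$, which is precisely the $(k + \log(1/\de), \epsilon + \de)$ average-case extractor guarantee. The argument is routine; the only point needing a moment's care is that conditioning on $\Z = z$ leaves the seed untouched, which holds because $\U_d$ is drawn independently of $(\X, \Z)$ and therefore remains uniform and independent of $\X|_{\Z = z}$ after the conditioning.
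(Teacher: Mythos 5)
Your proof is correct and is exactly the standard argument from \cite{DORS08} (the paper itself states this lemma without proof, citing that work): decompose the statistical distance as an expectation over fixings of $\Z$, invoke the Markov-type bound on conditional min-entropy to show all but a $\delta$-fraction of fixings leave min-entropy at least $k$, and bound the good and bad contributions by $\epsilon$ and $\delta$ respectively. Nothing is missing.
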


\subsection{Some Probability Lemmas}
The following result on min-entropy was proved by Maurer and  Wolf \cite{MW07}.
\begin{lemma}\label{lemma:entropy_loss_1} Let $\X,\Y$ be random variables such that the random variable $\Y$ takes at $\ell$ values. Then 
\begin{align*}
 \pr_{y \sim \Y}\l[ H_{\infty}(\X| \Y = y) \ge H_{\infty}(\X) - \log \ell -\log\l(\frac{1}{\epsilon}\r)\r] > 1-\epsilon.
 \end{align*}
\end{lemma}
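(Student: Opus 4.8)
\textbf{Proof proposal for Lemma~\ref{lemma:entropy_loss_1}.}

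The plan is to prove the statement by a direct union bound over the (at most) $\ell$ possible values of $\Y$, isolating for each value $y$ the ``bad'' event that conditioning on $\Y=y$ drops the min-entropy of $\X$ by more than $\log \ell + \log(1/\epsilon)$. Concretely, write $H_\infty(\X) = k$, so that $\Pr[\X = x] \le 2^{-k}$ for every $x$. For a fixed value $y$ in the support of $\Y$, the event $H_\infty(\X \mid \Y=y) < k - \log\ell - \log(1/\epsilon)$ means there exists some $x$ with $\Pr[\X = x \mid \Y = y] > 2^{-k}\ell/\epsilon$, i.e. $\Pr[\X=x, \Y=y] > 2^{-k}\ell/\epsilon \cdot \Pr[\Y=y]$.

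First I would fix, for each such ``bad'' value $y$, a witness $x_y$ achieving $\Pr[\X = x_y \mid \Y = y] > 2^{-k}\ell/\epsilon$. Let $B$ be the set of bad values $y$. The key estimate is to bound $\Pr[\Y \in B]$ by summing the joint probabilities: for each $y \in B$,
\begin{align*}
\Pr[\Y = y] \;<\; \frac{\epsilon}{2^{-k}\ell}\,\Pr[\X = x_y, \Y = y] \;=\; \frac{\epsilon\, 2^{k}}{\ell}\,\Pr[\X = x_y, \Y = y].
\end{align*}
Summing over $y \in B$ and using $\Pr[\X = x_y, \Y = y] \le \Pr[\X = x_y] \le 2^{-k}$ (from $H_\infty(\X)=k$), together with $|B| \le \ell$, gives $\Pr[\Y \in B] < \frac{\epsilon 2^k}{\ell} \cdot \ell \cdot 2^{-k} = \epsilon$. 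Hence with probability more than $1-\epsilon$ over $y \sim \Y$, the value $y$ is not bad, which is exactly the claimed bound.

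There is essentially no serious obstacle here; the one point requiring a little care is the bound $\sum_{y\in B}\Pr[\X=x_y,\Y=y] \le \ell \cdot 2^{-k}$: one cannot simply say $\sum_{y\in B}\Pr[\X=x_y,\Y=y]\le \Pr[\X\in\{x_y : y\in B\}]$ and then bound that by $|B|2^{-k}$ unless the witnesses $x_y$ are distinct, so the cleanest route is to bound each term separately by $\Pr[\X = x_y] \le 2^{-k}$ and then use $|B| \le \ell$, as done above. (Alternatively, one can first collapse repeated witnesses; either way the factor is at most $\ell \cdot 2^{-k}$.) Everything else is a routine rearrangement of the definition of min-entropy and a union bound.
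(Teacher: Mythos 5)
Your proof is correct. The paper states this lemma without proof (citing Maurer and Wolf \cite{MW07}), and your argument is exactly the standard one: each ``bad'' value $y$ forces $\Pr[\Y=y]<\epsilon/\ell$ via the witness $x_y$ and the global bound $\Pr[\X=x_y]\le 2^{-H_\infty(\X)}$, and a union bound over the at most $\ell$ values of $\Y$ finishes the job; your side remark about not needing the witnesses to be distinct is also handled correctly.
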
 
\begin{lemma}[\cite{BIW}]\label{sum_rv} Let $\X_1,\ldots,\X_{\ell}$ be independent random variables on $\zo^m$ such that $|\X_i-\U_m| \le \epsilon$. Then, $|\sum_{i=1}^{\ell} \X_i - \U_m| \le \epsilon^{\ell}$.
\end{lemma}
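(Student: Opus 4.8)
The plan is to avoid Fourier analysis — passing through $\ell_2$/bias estimates and back to statistical distance would cost a lossy factor of $\sqrt{2^m}$ — and instead argue directly with the \emph{deviation functions} of the distributions, exploiting that XOR of independent variables is convolution of densities. View each $\X_i$ as a density $p_i:\zo^m\to[0,1]$ with $\sum_x p_i(x)=1$, let $u$ denote the uniform density ($u(x)=2^{-m}$), and set $a_i:=p_i-u$, so $\sum_x a_i(x)=0$ and $\sum_x|a_i(x)|=2|\X_i-\U_m|\le 2\epsilon$.

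First I would record the elementary identity that the density of $\X_1\oplus\cdots\oplus\X_\ell$ is the convolution $p_1*\cdots*p_\ell$ over the group $(\zo^m,\oplus)$, which is where independence of the $\X_i$ enters. Substituting $p_i=u+a_i$ and expanding the convolution product, every term containing at least one factor $u$ either vanishes or collapses: for any $f$ one has $u*f=\bigl(\sum_x f(x)\bigr)u$, so a term with a $u$ factor together with some $a_j$ factor is $0$ (because $\sum_x a_j(x)=0$), while the all-$u$ term is exactly $u$ (since $u*u=u$). Hence the density of the XOR equals $u+(a_1*\cdots*a_\ell)$; that is, its deviation from uniform is \emph{precisely} $a_1*\cdots*a_\ell$.

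The second step is Young's convolution inequality $\|f*g\|_1\le\|f\|_1\|g\|_1$ (immediate from the triangle inequality and Fubini), iterated to give $\|a_1*\cdots*a_\ell\|_1\le\prod_{i=1}^{\ell}\|a_i\|_1$. Combining the two steps, $|\sum_i\X_i-\U_m|=\tfrac12\|a_1*\cdots*a_\ell\|_1\le\tfrac12\prod_i\|a_i\|_1\le\tfrac12(2\epsilon)^\ell$; under the $\ell_1$ normalization of \cite{BIW} this is exactly $\epsilon^\ell$, and under the statistical-distance normalization used here it carries a harmless factor $2^{\ell-1}$, immaterial since $\ell$ is an absolute constant in every application. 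Equivalently one can package this as a one-line induction on $\ell$: $\sum_{i=1}^{\ell-1}\X_i$ is independent of $\X_\ell$ and $\epsilon^{\ell-1}$-close to $\U_m$ by induction, and the $\ell=2$ case (the convolution identity plus Young's inequality for two functions) closes the step.

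I do not anticipate a real obstacle; the only thing to get right is the bookkeeping in expanding the convolution product, and the one genuine observation — the deviation of the XOR is \emph{exactly} the convolution of the individual deviations — which is what makes Young's inequality directly applicable and sidesteps the $2^{m/2}$ loss of any Fourier-coefficient route. The same argument goes through verbatim over an arbitrary finite abelian group, so it is insensitive to whether ``$\sum$'' means bitwise XOR or addition modulo $2^m$.
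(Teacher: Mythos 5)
Your proof is correct, and it is essentially the standard argument from \cite{BIW}: the paper itself gives no proof of this lemma, only the citation, so there is nothing to diverge from. The key identity you isolate — that the deviation of the convolution equals the convolution of the deviations, because all mixed terms $u*a_j$ vanish — together with $\|f*g\|_1\le\|f\|_1\|g\|_1$ is exactly how the two-variable XOR lemma is proved there, and your induction to general $\ell$ is fine. One point you raise deserves emphasis: your bound is $2^{\ell-1}\epsilon^\ell$ in statistical distance, not $\epsilon^\ell$, and this is not an artifact of your proof — the lemma as literally stated is false without that factor (take $m=1$, $\ell=2$, and both variables with $\Pr[0]=\tfrac12+\epsilon$: each is $\epsilon$-far from uniform but the XOR is $2\epsilon^2$-far). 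The correct statement in the statistical-distance normalization carries the $2^{\ell-1}$, which, as you say, is immaterial here since $\ell$ is a constant in every application.
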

\subsection{Seeded Extractors}
We  use optimal constructions of strong-seeded extractors.
\begin{thm}[\cite{GUV09}]\label{guv} For any constant $\alpha>0$, and all integers $n,k>0$ there exists a polynomial time computable  strong-seeded extractor $\Ext: \{ 0,1\}^n \times \{ 0,1\}^d \rightarrow \{ 0,1\}^m$   with $d = c_{\ref{guv}}(\log n + \log (1/\epsilon))$ and $m = (1-\alpha)k$.
\end{thm}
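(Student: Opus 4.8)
The plan is to reconstruct the Guruswami--Umans--Vadhan argument. I would decouple the task into (i) building an explicit \emph{lossless condenser} with near-optimal seed length and (ii) finishing with a standard extractor for high-min-entropy-rate sources. Recall that a $(k\to k',\eps)$-lossless condenser is a map $C:\zo^n\times\zo^d\to\zo^{n'}$ such that for every $(n,k)$-source $\X$, $(\U_d,C(\X,\U_d))$ is $\eps$-close to some $(\U_d,\Z)$ with $\widetilde{H}_{\infty}(\Z\mid\U_d)\ge k'$, with $k'=k+d$ in the lossless case. The first observation is that such condensers compose (concatenate seeds, add errors) and that a lossless condenser with output length $n'\le(1+\alpha)(k+d)$ already outputs a source of min-entropy rate $\ge 1-\alpha$; feeding that into any explicit extractor for constant-rate sources with seed length $O(\log(n'/\eps))$ then yields a seeded extractor with the stated parameters. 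Strongness is automatic because the condenser always includes its seed in the output (it has the form $C(x,y)=(y,\ldots)$), so conditioning on $y$ only deletes $d$ bits of the output, which is swallowed by the $(1-\alpha)$ slack.

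The heart is the explicit lossless condenser, which I would construct from Parvaresh--Vardy codes (equivalently for this purpose, folded Reed--Solomon codes) through the ``list-decodable codes $\Leftrightarrow$ lossless expanders/condensers'' correspondence. Identify the source $x\in\zo^n$ with a univariate polynomial $f$ of degree $<n/\log q$ over $\F_q$; fix an irreducible $E(Y)$ of degree $\ge n/\log q$; pick a parameter $h$ and set $f_i=f^{q^i}\bmod E$ for $0\le i<h$; the seed selects $y\in\F_q$ and the condenser outputs $C(x,y)=(y,f_0(y),\ldots,f_{h-1}(y))\in\F_q^{h+1}$. The list-decodability of Parvaresh--Vardy codes (Guruswami--Rudra) is, at bottom, a bound on how many low-degree polynomials can be simultaneously consistent with a small set of symbol constraints; dualizing that counting bound says exactly that the bipartite graph with left vertices $\zo^n$, right vertices $\F_q^{h+1}$, and edges $x\sim\{C(x,y):y\in\F_q\}$ is a lossless expander, which is equivalent to $C$ being a $(k\to k+d,\eps)$-lossless condenser for every $k$ up to about $h\log q$.

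For the bookkeeping, take $q=\Theta(n/\eps)$ (rounded to a prime power) so the seed has length $d=\log q=\log n+\log(1/\eps)+O(1)$, and take $h=\Theta(1/\alpha)$ so the output length is $(h+1)\log q\le d+(1+\alpha)k_{\max}$, i.e.\ entropy rate $\ge 1-\alpha$. To turn this into a genuine extractor outputting $(1-\alpha)k$ bits, compose the condenser a constant number of times (each composition only multiplies the seed by a constant and adds $O(\eps)$ to the error) so the surviving ``junk'' is an $\alpha$-fraction plus lower-order terms, then extract from the resulting high-rate source by a standard block-source argument (viewing it as $O(1/\alpha)$ blocks and applying the condenser/extractor recursively), which costs only $O(\log(n/\eps))$ more seed. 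Summing the $O(1)$ seed contributions keeps $d=O(\log n+\log(1/\eps))$ and the error at $O(\eps)$, to be rescaled at the end. The main obstacle is the coding-theoretic core: proving the exact list-size bound for Parvaresh--Vardy codes and verifying that its dual yields losslessness with the right dependence of the handled set size on $q$ and $h$ --- everything else (composition, the final high-rate step, the passage to strong extractors) is routine once that ingredient is in place.
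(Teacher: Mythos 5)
The paper does not prove this statement: Theorem~\ref{guv} is imported as a black box from \cite{GUV09}, so your attempt can only be judged against the original argument. Your high-level decomposition --- an explicit lossless condenser built from Parvaresh--Vardy-type codes via the list-decoding/lossless-expansion correspondence, composed with an extractor for constant entropy rate, with strongness coming essentially for free because the condenser prepends its seed --- is exactly the GUV strategy, and the steps you call routine (composition of condensers, the final high-rate extraction, rescaling $\eps$) really are routine.

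However, the parameter instantiation of the condenser is wrong in a way that breaks the sketch as written. With $h$ output coordinates and power base $q$ you correctly note losslessness only up to min-entropy about $h\log q$; but you then set $h=\Theta(1/\alpha)$ and $q=\Theta(n/\eps)$, which covers only $k=O(\alpha^{-1}\log(n/\eps))$, whereas the theorem must handle all $k$, in particular $k=\Omega(n)$. If instead you enlarge $h$ to $\approx k/\log q$ to cover min-entropy $k$, then with the Frobenius powers $f^{q^i}$ the black-box PV list-size bound degenerates: the associated univariate polynomial $Q^{*}(Z)=Q(Y,Z,Z^{q},\dots)$ has degree about $q^{h}$ over the extension field, which exceeds the number of codewords and yields nothing. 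The actual GUV condenser uses powers $h_0^{i}$ for a free parameter $h_0$, with $q\approx h_0^{1+\alpha}$, $h_0=\mathrm{poly}(nk/\eps)^{1/\alpha}$, and $m\approx k/\log h_0$ layers, giving output length $(1+\alpha)k+d$ and seed length $(1+1/\alpha)\log(nk/\eps)+O(1)$ --- still of the form $c_{\ref{guv}}(\log n+\log(1/\eps))$, which is what the unspecified constant is there to absorb. Achieving rate close to $1$ with the powers $q^i$ is the folded Reed--Solomon route and requires the additional Guruswami--Rudra identity $f(Y)^{q}\equiv f(\gamma Y)\pmod{Y^{q-1}-\gamma}$; it is not a formal dualization of a generic list-size bound. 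Finally, note that composing the condenser a constant number of times does not enlarge the range of $k$ it handles, so it cannot repair the undersized choice of $h$; with the corrected parameterization the rest of your outline goes through.
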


\section{Non-Malleable Independence Preserving Mergers}\label{sec:mergers}
In this section we present our constructions of non-malleable independence preserving mergers. We gradually develop the ideas to build our final NIPM, starting with a simpler construction which is then used as a building block in the more involved construction.  In our proofs, we repeatedly condition on random variables with small support and account for entropy loss by implicitly using Lemma $\ref{lem:entropy_loss}$.
\subsection{$\ell$-Non-Malleable Independence Preserving Merger}\label{basic_merger}
In this section, we construct an explicit function $\nipm$ that uses a (weak) seed $\Y$ to merge $\ell$ correlated r.v's $\X_1,\ldots,\X_{\ell},\X_1',\ldots,\X_{\ell}'$ in a way such that  such that if for some $i$, $\X_i|\X_i'$'s is close to uniform on average, then this property is transferred to the output of $\nipm$. As discussed in the introduction, a recent work by Cohen and Schulman \cite{Coh16b} introduces a similar object in the context of constructing multi-source extractors with nearly logarithmic min-entropy. However there are some important differences. To carry out the independence preserving merging, \cite{Coh16b} uses access to multiple independent sources which are themselves not tampered. Here we allow access to an  independent  weak seed $\Y$ which is further subject to being tampered ($\Y'$ being the tampered seed). 

The following is the main result of this section.
\begin{thm}\label{thm:nipm}
There exist constants $c_{\ref{thm:nipm}},c'_{\ref{thm:nipm}}>0$ such that for all integers $m,d,k_1,\ell>0$  and any $\epsilon>0$,  with $m\ge d  \ge k_1 > c_{\ref{thm:nipm}} \ell \log(n/\epsilon)$,  there exists an explicit function $\ell$-$\nipm:(\zo^{m})^{\ell} \times \zo^d \rightarrow \zo^{m_1}$, $m_1=0.9(m-c_{\ref{thm:nipm}}\ell\log(m/\epsilon))$, such that if the following conditions hold:

\begin{itemize}
\item $\X_1,\ldots,\X_{\ell}$ are r.v's s.t for all $i \in [\ell]$, $|\X_i - \U_m| \le \epsilon_1$, and  $\X_1',\ldots,\X_{\ell}'$ are r.v's with each $\X_i'$  supported on $\zo^m$. 
\item $\{\Y,\Y'\}$ is independent of $\{ \X_1,\ldots,\X_t,\X_1',\ldots,\X_t'\}$, s.t the r.v's $\Y,\Y'$ are both supported on $\zo^{d}$ and $H_{\infty}(\Y) \ge k_1 $.
\item there exists an $h \in [t]$ such that $|(\X_h,\X_h')-(\U_m,\X_h')| \le \epsilon$,
\end{itemize}
then 
\begin{align*}
| \ell\text{-}\nipm((\X_1,\ldots,\X_{\ell}),\Y), \ell\text{-}\nipm((\X_1',\ldots,\X_{\ell}'),\Y'),\Y,\Y' \\ -\U_{m_1}, \ell\text{-}\nipm((\X_1',\ldots,\X_{\ell}'),\Y'),\Y,\Y'| \le c_{\ref{thm:nipm}}'\ell \epsilon
\end{align*}
\end{thm}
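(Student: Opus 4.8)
The plan is to realize $\ell\text{-}\nipm$ as the alternating-extraction process sketched in the introduction: take a short slice $\s_1=\slice(\X_1,w)$, set $\rr_1=\Ext(\Y,\s_1)$, then $\s_2=\Ext(\X_2,\rr_1)$, $\rr_2=\Ext(\Y,\s_2)$, and so on, alternating between extracting from $\Y$ and extracting from a \emph{fresh} row $\X_i$, with output $\ell\text{-}\nipm((\X_1,\dots,\X_\ell),\Y)=\s_\ell=\Ext(\X_\ell,\rr_{\ell-1})$. Here $\Ext$ is the strong average-case seeded extractor obtained from Theorem~\ref{guv} together with Lemma~\ref{lem:cond_ext}, with all intermediate strings $\s_i,\rr_i$ of some small length $w=\Theta(\log(m/\epsilon))$ and the final output of length $m_1=0.9(m-c_{\ref{thm:nipm}}\ell\log(m/\epsilon))$. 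The tampered computation produces $\s_i',\rr_i'$ from $(\X_1',\dots,\X_\ell')$ and $\Y'$ by running the \emph{same} circuit. The constant $c_{\ref{thm:nipm}}$ is chosen so that $k_1>c_{\ref{thm:nipm}}\ell\log(n/\epsilon)$ guarantees enough entropy survives all the conditionings described below, and $m\ge d\ge k_1$ ensures each row and the seed are long enough to feed the extractor at every step.

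The core of the argument is an induction along the chain, after an initial fixing phase. Let $h\in[\ell]$ be the good index. First I would fix, one at a time, the pairs $\s_1,\s_1',\rr_1,\rr_1',\dots,\s_{h-1},\s_{h-1}'$; each has support of size at most $2^w$, so by Lemma~\ref{lem:entropy_loss} (applied repeatedly, as flagged in the section preamble) the conditioning costs at most $2(h-1)w\le 2\ell w$ bits of entropy total. Crucially, after this fixing: (i) $\rr_{h-1},\rr_{h-1}'$ are deterministic functions of $(\Y,\Y')$, hence independent of $(\X_1,\dots,\X_\ell,\X_1',\dots,\X_\ell')$; (ii) since these fixed variables are derived only from $\X_1,\dots,\X_{h-1}$ and $\Y$, the pair $(\X_h,\X_h')$ retains essentially its original joint distribution, so $\X_h$ is still close to uniform given $\X_h'$; and (iii) $\Y$ still has min-entropy at least $k_1-2\ell w$, which by the choice of $c_{\ref{thm:nipm}}$ is large enough that $\Ext(\Y,\cdot)$ still extracts. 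Now apply the first of the two cases from the introduction's ``simple observation'': fix $\X_h'$, note $\X_h$ still has entropy, $\Y$ (and hence $\rr_{h-1}=\Ext(\Y,\s_{h-1})$, where $\s_{h-1}$ is fixed) is close to uniform, so by strongness of $\Ext$, $\s_h=\Ext(\X_h,\rr_{h-1})$ is close to uniform even given $\rr_{h-1}$; since $\s_h'$ is a deterministic function of $\X_h'$ (fixed) and $\rr_{h-1}'$ (a deterministic function of $\Y$, independent of $\X_h$), we get $\s_h\approx\U_w$ given $\s_h'$.

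Then I would run the inductive step for $i=h,h+1,\dots,\ell-1$: assuming $\s_i\approx\U_w$ given $\s_i'$ (with $(\s_i,\s_i')$ a deterministic function of $(\X_{\le i},\text{fixed stuff})$, independent of the still-unused rows and of $\Y$'s remaining entropy appropriately tracked), first fix $\s_i,\s_i'$ and argue $\rr_i=\Ext(\Y,\s_i)\approx\U_w$ given $\rr_i'$ — this uses the case where the \emph{seed} $\s_i$ is good (uniform given $\s_i'$) while $\Y$ merely has entropy, invoking strongness of $\Ext$ with source $\Y$; then fix $\rr_i,\rr_i'$ and argue $\s_{i+1}=\Ext(\X_{i+1},\rr_i)\approx\U_w$ given $\s_{i+1}'$ — again the ``good seed'' case, now with source $\X_{i+1}$ (which is close to uniform, hence has entropy). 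Each of the $O(\ell)$ steps introduces an additive error of $O(\epsilon)$ (from the extractor error, the average-case conversion via Lemma~\ref{lem:cond_ext}, and the $\epsilon_1$ slack on the rows), and each fixing is accounted for by Lemma~\ref{lem:entropy_loss}; summing gives total error $c_{\ref{thm:nipm}}'\ell\epsilon$. At $i=\ell-1$ I take the output length to be $m_1$ rather than $w$: the same argument works since after all the fixings $\X_\ell$ still has min-entropy at least $m-O(\ell w)\ge m_1/0.9$, so $\Ext(\X_\ell,\rr_{\ell-1})$ of length $m_1$ is close to uniform given $\s_\ell'$, and finally one checks the conclusion holds jointly with $\Y,\Y'$ by noting that throughout the last stage the relevant objects are functions of the $\X$'s with $(\Y,\Y')$ only entering through the already-fixed or independent $\rr$'s — a standard ``reintroduce the independent variable'' step.

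The main obstacle, and the part that needs the most care, is bookkeeping the conditioning order so that at each step the three invariants stay simultaneously true: the good row must keep its entropy-given-tampered-version property, the ``chain'' variables must be deterministic functions of the correct side (so that fixing them does not destroy independence), and $\Y$'s residual min-entropy must stay above the extractor threshold. The subtlety is that fixing $\s_i'$ or $\rr_i'$ could in principle correlate $\X_i$ with $\X_i'$ or drain $\Y$; the resolution is that because the computation only ever touches rows with index $<i$ before we reach row $i$, fixing those earlier variables is harmless for row $i$, and the entropy drain on $\Y$ is bounded by $2\ell w$ once and for all. I expect this to be where a fully rigorous proof must be most pedantic, but conceptually it is exactly the two-case observation from the introduction iterated $\ell$ times.
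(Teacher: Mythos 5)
Your proposal is correct and follows essentially the same route as the paper: the construction is the identical $\ell$-row alternating extraction (a fresh row per round, final output $\Ext(\X_\ell,\rr_{\ell-1})$ of length $m_1$), and the analysis is the same three-phase argument — fix the chain up to index $h-1$, invoke the good-row property at step $h$ via the order-of-fixing trick for average conditional min-entropy, then induct for $j\ge h$ alternating the two cases of the ``good seed versus entropic source'' observation, with the same $O(\ell)\cdot O(\epsilon)$ error and entropy bookkeeping. No gaps; the bookkeeping you flag as delicate is exactly what the paper's Claims carry out.
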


Our construction of $\nipm$ uses the method of alternating extraction and extends it in a new way. Briefly we recall the method of alternating extraction which was introduced by Dziembowski and Pietrzak \cite{DP07}, and has been useful in a variety of extractor constructions \cite{DW09,Li13b,Li15c,Coh15,CGL15,Li:affine,CL15,Coh15nm,Coh16a}.

\textbf{Alternating Extraction}  Assume that there are two parties, Quentin with a source $\Q$  and Wendy with a source $\w$. The alternating extraction protocol is an interactive process between Quentin and Wendy, and starts off with Quentin sending the seed $\s_0$ to Wendy.  Wendy uses $\s_0$ and a strong-seeded extractor $\Ext_w$ to extract a seed $\rr_1$ using $\w$, and sends $\rr_1$ back to Quentin. This constitutes a round of the alternating extraction protocol. In the next round, Quentin uses a strong extractor $\Ext_q$ to extract a seed $\s_2$ from $\Q$ using $\s_1$, and sends it to Wendy and so on. The protocol is run for $h$ steps, where $h$ is an input parameter. Thus,  the following sequence of r.v's is generated: $$ \s_1=\slice(\Q,d), \rr_1 = \Ext_{w}(\W,\s_1), \s_1 = \Ext_{q}(\Q,\rr_1),\ldots,\s_{u} = \Ext_{q}(\Q,\rr_{h-1}).$$
 
\textbf{$\ell$-Alternating Extraction} We extend the above technique by letting Quentin have access to $\ell$ sources $\Q_1,\ldots,\Q_{\ell}$ (instead of just $\Q$)  and $\ell$ strong-seeded extractors $\{\Ext_{q,i}: i \in [\ell]\}$ such that in the $i$'th round of the protocol, he uses $\Q_i$ to produce the r.v $\s_i=\Ext_{q,i}(\Q_i,\rr_{i})$. More formally, the following sequence of r.v's is  generated: $ \s_1=\slice(\Q_1,d), \rr_1 = \Ext_{w}(\W,\s_1), \s_2 = \Ext_{q,2}(\Q_2,\rr_1),\ldots,\rr_{\ell-1}=\Ext_{w}(\Q_{\ell-1},\s_{\ell-1}),\s_{\ell} = \Ext_{q,\ell}(\Q_{\ell},\rr_{\ell})$. Define the look-ahead extractor $$\ell\text{-}\laExt((\Q_1,\ldots,\Q_{\ell}),\W)=\s_{\ell}.$$

We are now ready to prove Theorem $\ref{thm:nipm}$.

\begin{proof}[Proof of Theorem $\ref{thm:nipm}$] We instantiate the $\ell$-look-ahead extractor described above with the following strong seeded extractors: Let $\Ext_1: \zo^m \times \zo^{d_1} \rightarrow \zo^{d_1}$, $\Ext_2: \zo^d \times \zo^{d_1} \rightarrow \zo^{d_1}$ and $\Ext_3:\{ 0,1\}^{m} \times \{ 0,1\}^{d_1} \rightarrow \{ 0,1\}^{m_1}$ be explicit strong-seeded from Theorem $\ref{guv}$ designed to extract from min-entropy $m/2,k_1/4, m-c_{\ref{thm:nipm}}\ell \log(m/\epsilon)$ respectively, each  with error $\epsilon$. Thus $d_1=c_{\ref{guv}}\log(m/\epsilon)$. 

We think of each $\X_i$ being uniform, and add back an error $\epsilon_1 \ell$ in the end.

For each $i\in[\ell-1]$, let $\Ext_{q,i}=\Ext_1$, $\Ext_{q,\ell}=\Ext_3$ and $\Ext_{w}=\Ext_2$.

Define $$\nipm((\X_1,\ldots,\X_{\ell}),\Y)=\laExt((\X_1,\ldots,\X_{\ell}),\Y).$$ For any random variable $\V=f((\X_1,\ldots,\X_{\ell}),\Y)$ (where $f$ is an arbitrary deterministic function), let $\V^{\prime}=f((\X_1',\ldots,\X_{\ell}'),\Y')$. 

We first prove the following claim.

\begin{claim}\label{cl:indep1}
For any $j\in [h-1]$, conditioned on the r.v's $\{\s_i: i \in [j-1] \}, \{ \s_i': i \in [j-1]\}, \{ \rr_i : i \in [j-1]\}, \{ \rr_i' : i \in [j-1]\}$ the following hold:
\begin{itemize}
\item $\s_{j}$ is $2(j-1)\epsilon$-close to $\U_{d_1}$,
\item $\s_{j},\s_{j}'$ are deterministic functions of $\{\X_j,\X_j' \}$,
\item for each $i \in [t]$, $\X_i$ has average conditional min-entropy at least $m-2(j-1)d_1-\log(1/\epsilon)$,
\item $\Y$ has average conditional min-entropy at least $k_1-2(j-1)d_1-\log(1/\epsilon)$,
\item $\{ \X_1,\ldots,\X_{\ell},\X_1',\ldots,\X_{\ell}'\}$ is independent of $\{ \Y,\Y^{\prime}\}$.
\end{itemize}
Further, conditioned on the r.v's $\{\s_i: i \in [j] \}, \{ \s_i': i \in [j]\}, \{ \rr_i : i \in [j-1]\}, \{ \rr_i' : i \in [j-1]\}$ the following hold:
\begin{itemize}
\item $\rr_{j}$ is $(2j-1)\epsilon$-close to $\U_d$,
\item $\rr_{j},\rr_{j}'$ are deterministic functions of $\{\Y,\Y' \}$,
\item for any $i \in [\ell]$, $\X_i$ has average conditional min-entropy at least $m-2jd_1-\log(1/\epsilon)$,
\item $\Y$ has average conditional min-entropy at least $k_1-2(j-1)d_1-\log(1/\epsilon)$,
\item $\{ \X_1,\ldots,\X_{\ell},\X_1',\ldots,\X_{\ell}'\}$ is independent of $\{ \Y,\Y^{\prime}\}$.
\end{itemize}
\end{claim}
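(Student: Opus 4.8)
The plan is to establish both bulleted lists simultaneously by induction on $j$, alternating between them: the base case $j=1$ gives the first list, and for each $j$ I will show ``first list at $j$ $\Rightarrow$ second list at $j$'' and ``second list at $j$ $\Rightarrow$ first list at $j+1$''. As announced just before the claim, throughout I treat every $\X_i$ as exactly uniform and restore the additive error $\epsilon_1\ell$ at the end. The base case is immediate: for $j=1$ the conditioning set is empty, so $\s_1=\slice(\X_1,d_1)$ is exactly uniform on $\zo^{d_1}$ and a deterministic function of $\X_1$, each $\X_i$ has min-entropy $m$, $\Y$ has min-entropy at least $k_1$, and $\{\X_1,\dots,\X_\ell,\X_1',\dots,\X_\ell'\}$ is independent of $\{\Y,\Y'\}$ by hypothesis, so every claimed inequality holds with slack.

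\noindent\textbf{The two inductive halves.} To go from the first list at $j$ to the second list at $j$, I would additionally condition on $\s_j,\s_j'$. By the first list these are deterministic functions of $\{\X_j,\X_j'\}$, hence of the $\X$'s, which are independent of $\{\Y,\Y'\}$; so this conditioning keeps that independence, leaves $\Y$'s conditional min-entropy unchanged, and (Lemma~\ref{lem:entropy_loss}) costs each $\X_i$ at most $|\s_j|+|\s_j'|=2d_1$ bits of conditional min-entropy --- exactly the recorded drop from $m-2(j-1)d_1-\log(1/\epsilon)$ to $m-2jd_1-\log(1/\epsilon)$. It then remains to handle $\rr_j=\Ext_2(\Y,\s_j)$: by the first list $\s_j$ is $2(j-1)\epsilon$-close to $\U_{d_1}$ and (being a function of $\X_j$) independent of $\Y$, while $\Y$ keeps conditional min-entropy comfortably above $k_1/4+\log(1/\epsilon)$ (using $k_1>c_{\ref{thm:nipm}}\ell\log(n/\epsilon)$ and $j\le\ell$); since $\Ext_2$ is a strong seeded extractor for min-entropy $k_1/4$ (Theorem~\ref{guv}), hence a strong average-case extractor with $O(\epsilon)$ error (Lemma~\ref{lem:cond_ext}), $\rr_j$ would be $O(\epsilon)$-close to $\U_{d_1}$ with a fresh uniform seed, and since $\s_j$ differs from uniform by $2(j-1)\epsilon$ and is independent of $\Y$, the actual $\rr_j$ is $(2j-1)\epsilon$-close to $\U_{d_1}$ even jointly with $\s_j,\s_j'$; finally, once $\s_j,\s_j'$ are fixed, $\rr_j=\Ext_2(\Y,\s_j)$ and $\rr_j'=\Ext_2(\Y',\s_j')$ become deterministic functions of $\Y$ and $\Y'$. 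The other half, from the second list at $j$ to the first list at $j+1$, is the mirror image: condition additionally on $\rr_j,\rr_j'$, which by the second list are deterministic functions of $\{\Y,\Y'\}$, so independence is preserved, no $\X_i$'s conditional min-entropy drops, and $\Y$ loses at most $2d_1$ bits; then $\s_{j+1}=\Ext_1(\X_{j+1},\rr_j)$ is analyzed just as $\rr_j$ was above, using that $\rr_j$ is $(2j-1)\epsilon$-close to uniform and independent of $\X_{j+1}$, and that $\X_{j+1}$ has conditional min-entropy above $m/2$ (what $\Ext_1$ is designed for), yielding the first list at $j+1$.

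\noindent\textbf{Parameters, error accumulation, and the main obstacle.} The hypothesis $m\ge d\ge k_1>c_{\ref{thm:nipm}}\ell\log(n/\epsilon)$ with $d_1=c_{\ref{guv}}\log(m/\epsilon)$ ensures that the cumulative entropy loss $2\ell d_1=O(\ell\log(m/\epsilon))$ never drives the relevant conditional min-entropies below $m/2$ (for $\Ext_1$), below $m-c_{\ref{thm:nipm}}\ell\log(m/\epsilon)$ (for $\Ext_3$, used outside this claim for the final output $\s_\ell$), or below $k_1/4$ (for $\Ext_2$), so every use of Theorem~\ref{guv} and Lemma~\ref{lem:cond_ext} is legitimate; since each alternation adds only $O(\epsilon)$ error, one arrives at the stated $2(j-1)\epsilon$ and $(2j-1)\epsilon$ bounds, the exact constants ultimately being absorbed into $c'_{\ref{thm:nipm}}$. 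I expect the one genuinely delicate point --- the crux of the argument --- to be the alternating dependence bookkeeping: everything rests on the fact that after conditioning on the earlier $\rr$'s, $\s_j$ depends only on $\X_j$ (and $\s_j'$ only on $\X_j'$), and dually after conditioning on the earlier $\s$'s, $\rr_j$ depends only on $\Y$ (and $\rr_j'$ only on $\Y'$). This is what simultaneously lets each strong extractor be fed a seed that is both close to uniform and independent of the source it extracts from, and keeps $\{\X\text{'s}\}$ independent of $\{\Y,\Y'\}$ at every stage; the rest is routine use of Lemmas~\ref{lem:entropy_loss} and~\ref{lem:cond_ext} and Theorem~\ref{guv}, plus the discipline of tracking \emph{average} conditional min-entropy rather than worst-case.
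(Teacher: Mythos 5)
Your proposal is correct and follows essentially the same route as the paper's proof: an induction on $j$ that alternates between the two bulleted lists, fixing $\rr_{j-1},\rr_{j-1}'$ (functions of $\{\Y,\Y'\}$) before applying the strong extractor $\Ext_1$ to get $\s_j$, then fixing $\s_j,\s_j'$ (functions of $\{\X_j,\X_j'\}$) before applying $\Ext_2$ to get $\rr_j$, with the same $2d_1$-per-step entropy accounting and $+\epsilon$-per-step error accumulation. The only difference is presentational (you split the step as ``first list at $j$ $\Rightarrow$ second list at $j$ $\Rightarrow$ first list at $j+1$'' rather than the paper's ``second list at $j-1$ $\Rightarrow$ first list at $j$ $\Rightarrow$ second list at $j$''), and you make explicit a few parameter checks the paper leaves implicit.
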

\begin{proof}
We prove the above  by induction on $j$. The base case when $j=1$ is direct. Thus suppose $j>1$.  Fix the r.v's $\{\s_i: i \in [j-1] \}, \{ \s_i': i \in [j-1]\}, \{ \rr_i : i \in [j-2]\}, \{ \rr_i' : i \in [j-2]\}$. Using inductive hypothesis, it follows that 
\begin{itemize}
\item $\rr_{j-1}$ is $(2j-3)\epsilon$-close to $\U_d$,
\item $\rr_{j-1},\rr_{j-1}'$ are deterministic functions of $\{\Y,\Y' \}$,
\item for any $i \in [t]$, $\X_i$ has average conditional min-entropy at least $m-2(j-1)d_1-\log(1/\epsilon)$,
\item $\Y$ has average conditional min-entropy at least $k_1-2(j-2)d_1-\log(1/\epsilon)$,
\item $\{ \X_1,\ldots,\X_{\ell},\X_1',\ldots,\X_{\ell}'\}$ is independent of $\{ \Y,\Y^{\prime}\}$.
\end{itemize}

Now since $\s_{j}=\Ext_1(\X_j,\rr_{j-1})$, it follows that $\s_{j}$ is $2(j-1)\epsilon$-close to $\U_{d_1}$ on average conditioned on $\rr_{j-1}$. We thus fix $\rr_{j-1}$. Further, we  also fix $\rr_{j-1}'$ without affecting the distribution of $\s_{j}$. Thus $\s_j,\s_j'$ are now a deterministic function of $\X_j,\X_j'$. It follows that after these fixings,  the average conditional  min-entropy of $\Y$ is at least $k_1-2(j-2)d_1-\log(1/\epsilon)-2d_1=k_1-2(j-1)d_1-\log(1/\epsilon)$.

Next, we have $\rr_{j}=\Ext_2(\Y,\s_{j})$, and thus fixing $\s_{j}$, it follows that $\rr_j$ is $(2j-1)\epsilon$-close to uniform on average. Further, since $\rr_j$ is now a deterministic function of $\Y$, we fix $\s_j'$. As a result of these fixings, each $\X_i$ loses conditional min-entropy at most $2d_1$ on average. Since at each point, we either fix a r.v that is a deterministic function of either $\{\X_1,\ldots,\X_{\ell},\X_1',\ldots,\X_{\ell}\}$ or $\{ \Y,\Y'\}$ it follows that  $\{ \X_1,\ldots,\X_{\ell},\X_1',\ldots,\X_{\ell}'\}$ remain  independent of $\{ \Y,\Y^{\prime}\}$. This completes the inductive step, and hence the proof follows.
\end{proof}
We now proceed to prove the following claim.
\begin{claim}\label{cl:indep2}Conditioned on the r.v's $\{\s_i: i \in [h-1] \}, \{ \s_i': i \in [h]\}, \{ \rr_i : i \in [h-1]\}, \{ \rr_i' : i \in [h]\}$ the following hold:
\begin{itemize}
\item $\s_{h}$ is $2(h-1)\epsilon$-close to $\U_d$,
\item $\s_{h}$ is a deterministic function of $\X_h$,
\item for each $i \in [t]$, $\X_i$ has average conditional min-entropy at least $m-2hd_1-\log(1/\epsilon)$,
\item $\Y$ has average conditional min-entropy at least $k_1-2hd_1-\log(1/\epsilon)$,
\item $\{ \X_1,\ldots,\X_{\ell},\X_1',\ldots,\X_{\ell}'\}$ is independent of $\{ \Y,\Y^{\prime}\}$.
\end{itemize}
\end{claim}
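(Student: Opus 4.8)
\textbf{Proof proposal for Claim~\ref{cl:indep2}.}
The plan is to establish Claim~\ref{cl:indep2} essentially by one more step of the same inductive argument used in Claim~\ref{cl:indep1}, but now exploiting the special ``good row'' hypothesis at index $h$ rather than the genericuniformity of $\X_h$. First I would invoke Claim~\ref{cl:indep1} with $j=h-1$ (more precisely, its ``further'' half with $j = h-1$), which tells us that after conditioning on $\{\s_i : i \in [h-1]\}, \{\s_i' : i \in [h-1]\}, \{\rr_i : i \in [h-1]\}, \{\rr_i' : i \in [h-1]\}$ we have: $\rr_{h-1}$ is $(2h-3)\eps$-close to $\U_d$, the pair $\rr_{h-1},\rr_{h-1}'$ is a deterministic function of $\{\Y,\Y'\}$, each $\X_i$ has average conditional min-entropy at least $m - 2(h-1)d_1 - \log(1/\eps)$, $\Y$ has average conditional min-entropy at least $k_1 - 2(h-1)d_1 - \log(1/\eps)$, and $\{\X_1,\ldots,\X_\ell,\X_1',\ldots,\X_\ell'\}$ is independent of $\{\Y,\Y'\}$.

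Now the key point is the third bullet of the theorem's hypothesis: $(\X_h,\X_h')$ is $\eps$-close to $(\U_m,\X_h')$, i.e., $\X_h$ is close to uniform even given $\X_h'$. I would first fix $\X_h'$; by the averaging lemma (Lemma~\ref{lemma:entropy_loss_1} or a standard closeness argument), conditioned on a typical value of $\X_h'$, $\X_h$ still has high min-entropy (indeed it is close to uniform, so min-entropy essentially $m$), and the conditional min-entropy of the $\X_i$'s and of $\Y$ is unaffected since $\X_h'$ is already determined once we think of things correctly --- more carefully, since we have not yet fixed anything depending on $\X_h$ in a way that correlates with $\Y$, independence of $\{\X\text{'s}\}$ and $\{\Y,\Y'\}$ is preserved. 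At this point $\s_h = \Ext_3(\X_h, \rr_{h-1})$: we have $\rr_{h-1}$ close to uniform on $\zo^{d_1}$ and independent of $\X_h$ (it is a deterministic function of $\{\Y,\Y'\}$), and $\X_h$ has enough conditional min-entropy (at least $m - O(d_1 h) - \log(1/\eps) \geq m - c_{\ref{thm:nipm}}\ell\log(m/\eps)$, using $h \le \ell$ and the assumed lower bound relating $m$ and the loss terms) for $\Ext_3$, which is designed for min-entropy $m - c_{\ref{thm:nipm}}\ell\log(m/\eps)$. Since $\Ext_3$ is a strong extractor, $\s_h = \Ext_3(\X_h,\rr_{h-1})$ is $O(h\eps)$-close to $\U_{m_1}$ even conditioned on $\rr_{h-1}$, so we may fix $\rr_{h-1}$; we then also fix $\rr_{h-1}'$ (a deterministic function of $\{\Y,\Y'\}$, so this does not disturb the distribution of $\s_h$ given the current conditioning), after which $\s_h$ is a deterministic function of $\X_h$ alone. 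Finally I would fix $\s_h'$ and $\rr_h'$ as in Claim~\ref{cl:indep1}: these cost at most $2d_1$ further in conditional min-entropy for each $\X_i$ and for $\Y$, giving the stated bounds $m - 2hd_1 - \log(1/\eps)$ and $k_1 - 2hd_1 - \log(1/\eps)$, and since every fixed variable is a deterministic function of either the $\X$-block or the $\{\Y,\Y'\}$-block, independence between the two blocks is maintained throughout. Collecting the error terms ($\rr_{h-1}$ close to uniform within $(2h-3)\eps$, plus the extractor error $\eps$, plus the slack from passing to average-case extractors via Lemma~\ref{lem:cond_ext}) yields the claimed $2(h-1)\eps$ bound on $\|\s_h - \U_d\|$ (with the output length $m_1$ in place of $d$; the statement should read $\U_{m_1}$, matching $\Ext_3$'s output).

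I expect the main obstacle to be bookkeeping the conditioning order so that at the moment we apply $\Ext_3$ the seed $\rr_{h-1}$ is genuinely (close to) uniform \emph{and} independent of $\X_h$, while simultaneously $\X_h$ retains enough conditional min-entropy after having fixed $\X_h'$. The subtlety is that fixing $\X_h'$ is what lets us use the ``good row'' hypothesis, but we must check this fixing does not secretly correlate $\rr_{h-1}$ with the residual randomness of $\X_h$ --- it does not, because $\rr_{h-1}$ is a deterministic function of $\{\Y,\Y'\}$, which is independent of the entire $\X$-block including $\X_h'$, so conditioning on $\X_h'$ leaves $(\rr_{h-1},\rr_{h-1}')$ independent of $\X_h$. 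The remaining routine work is the arithmetic verifying that $m - 2hd_1 - \log(1/\eps) \geq m - c_{\ref{thm:nipm}}\ell\log(m/\eps)$ for an appropriate choice of $c_{\ref{thm:nipm}}$ (using $d_1 = c_{\ref{guv}}\log(m/\eps)$ and $h \leq \ell$), which is where the constant $c_{\ref{thm:nipm}}$ in the theorem statement gets pinned down.
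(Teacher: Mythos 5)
Your overall strategy is the same as the paper's: restrict to the state after round $h-1$ via Claim~\ref{cl:indep1}, bring in the good-row hypothesis by conditioning on $\X_h'$ (using the order-of-fixing argument so that the hypothesis $|(\X_h,\X_h')-(\U_m,\X_h')|\le\epsilon$ survives the earlier conditionings), and then apply the strong-extractor property of the extractor computing $\s_h$ with the (close to uniform, $\X$-independent) seed $\rr_{h-1}$. The independence bookkeeping you describe is also correct. (Two harmless slips: the conditioning set you quote from Claim~\ref{cl:indep1} should have $\{\rr_i: i\in[h-2]\}$, not $[h-1]$; and for $h<\ell$ the relevant extractor is $\Ext_1$, not $\Ext_3$.)

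There is, however, a genuine gap in how you handle $\X_h'$. You fix $\X_h'$ outright and carry that conditioning to the end, asserting that ``the conditional min-entropy of the $\X_i$'s \ldots is unaffected.'' That is false: the rows $\X_1,\ldots,\X_\ell,\X_1',\ldots,\X_\ell'$ may be arbitrarily correlated, so conditioning on the $m$-bit variable $\X_h'$ can destroy up to $m$ bits of min-entropy in every other row $\X_i$ (e.g.\ nothing forbids $\X_{h+1}=\X_h'$). Under your conditioning, the bullet ``each $\X_i$ has average conditional min-entropy at least $m-2hd_1-\log(1/\epsilon)$'' does not follow, and the subsequent induction in Claim~\ref{cl:indep3} (which needs $\X_j$ to retain min-entropy $\approx m$ for $j>h$) would collapse. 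Note that the claim's conditioning set deliberately contains only $\s_h'$, not $\X_h'$. The repair is to use the fixing of $\X_h'$ purely as an intermediate device: once $\rr_{h-1}'$ is fixed, $\s_h'$ is a deterministic (hence coarser) function of $\X_h'$, so showing that $\s_h$ is $2(h-1)\epsilon$-close to uniform conditioned on $\X_h'$ (together with $\rr_{h-1},\rr_{h-1}'$) implies the same conditioned on $\s_h'$; in the claim's actual conditioning one then fixes only the $d_1$-bit variable $\s_h'$, which costs each $\X_i$ only $d_1$ bits and keeps the induction alive. Your proof as written does not make this distinction, and the entropy accounting is exactly where it breaks.
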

\begin{proof}We fix the r.v's $\{\s_i: i \in [h-1] \}, \{ \s_i': i \in [h-1]\}, \{ \rr_i : i \in [h-2]\}, \{ \rr_i' : i \in [h-2]\}$, and using Claim  $\ref{cl:indep1}$ the following hold:
\begin{itemize}
\item $\rr_{h-1}$ is $(2h-3)\epsilon$-close to $\U_d$,
\item $\rr_{h-1},\rr_{h-1}'$ are deterministic functions of $\{\Y,\Y' \}$,
\item for any $i \in [t]$, $\X_i$ has average conditional min-entropy at least $m-2(h-1)d_1-\log(1/\epsilon)$,
\item $\Y$ has average conditional min-entropy at least $k_1-2(h-2)d_1-\log(1/\epsilon)$,
\item $\{ \X_1,\ldots,\X_{\ell},\X_1',\ldots,\X_{\ell}'\}$ is independent of $\{ \Y,\Y^{\prime}\}$.
\end{itemize}
Next we claim that $\X_h$ has average conditional min-entropy at least $m-2(h-1)d_1-\log(1/\epsilon)$ even after fixing $\X_h'$. We know that before  fixings any other r.v, we have $\X_h|\X_h'$ is $\epsilon$-close to uniform on average. Since while computing the average conditional min-entropy, the order of fixing does not matter,  we can as well think of first fixing of $\X_h'$ and then fixing the r.v's $\{\s_i: i \in [h-1] \}, \{ \s_i': i \in [h-1]\}, \{ \rr_i : i \in [h-2]\}, \{ \rr_i' : i \in [h-2]\}$. Thus, it follows  that the average conditional min-entropy of $\X_h$ is at least $m-2(h-1)d_1-\log(1/\epsilon)$. 

We now show that even after fixing the r.v's $\X_h',\rr_{h-1},\rr_{h-1}'$,  the r.v $\s_h$ is $2(h-1)\epsilon$-close to uniform on average. Fix $\X_h'$ and by the above argument  $\X_h$ has average conditional min-entropy at least $m-2(h-1)d_1-\log(1/\epsilon)$. Since $\s_h=\Ext_1(\X_h,\rr_{h-1})$, it follows that $\s_h$ is $2(h-1)\epsilon$-close to uniform on average even conditioned on $\rr_{h-1}$. We fix $\rr_{h-1}$, and thus $\s_h$ is a deterministic function of $\X_h$. Note that $\s_{h}'=\Ext_1(\X_h',\rr_{h-1}')$ is now a deterministic function of $\rr_h'$ (and thus $\Y'$). Thus, we can fix $\rr_h'$ (which also fixes $\s_h')$ without affecting the distribution of $\s_h$. 

Observe that after the r.v's $\rr_{h-1},\rr_{h-1}'$ are fixed, $\s_h'$ is a deterministic function of $\X_h'$. We only fix $\s_h'$ and do not fix $\X_h'$, and note that $\s_h$ is still $2(h-1)\epsilon$-close to uniform. Further after these fixings, each $\X_i$ has average conditional min-entropy at least $m-2hd_1-\log(1/\epsilon)$, and $\Y$ has average conditional min-entropy at least $k_1-2hd_1-\log(1/\epsilon)$.
\end{proof}
By our construction of $\nipm$, Theorem $\ref{thm:nipm}$ is direct from the following claim.
\begin{claim}\label{cl:indep3}  For any $j\in [h,\ell]$, conditioned on the r.v's $\{\s_i: i \in [j-1] \}, \{ \s_i': i \in [j]\}, \{ \rr_i : i \in [j-1]\}, \{ \rr_i' : i \in [j]\}$ the following hold:
\begin{itemize}
\item $\s_{j}$ is $2(j-1)\epsilon$-close to $\U_d$,
\item $\s_{j}$ is a deterministic function of $\X_j$
\item for each $i \in [\ell]$, $\X_i$ has average conditional min-entropy at least $m-2jd_1-\log(1/\epsilon)$,
\item $\Y$ has average conditional min-entropy at least $k_1-2jd_1-\log(1/\epsilon)$,
\item $\{ \X_1,\ldots,\X_{\ell},\X_1',\ldots,\X_{\ell}'\}$ is independent of $\{ \Y,\Y^{\prime}\}$.
\end{itemize}
Further, conditioned on the r.v's $\{\s_i: i \in [j] \}, \{ \s_i': i \in [j+1]\}, \{ \rr_i : i \in [j-1]\}, \{ \rr_i' : i \in [j]\}$ the following hold:
\begin{itemize}
\item $\rr_{j}$ is $(2j-1)\epsilon$-close to $\U_d$,
\item $\rr_{j}$ is a  deterministic function of $\Y$,
\item for any $i \in [\ell]$, $\X_i$ has average conditional min-entropy at least $m-2(j+1)d_1-\log(1/\epsilon)$,
\item $\Y$ has average conditional min-entropy at least $k_1-2(j+1)d_1-\log(1/\epsilon)$,
\item $\{ \X_1,\ldots,\X_{\ell},\X_1',\ldots,\X_{\ell}'\}$ is independent of $\{ \Y,\Y^{\prime}\}$.
\end{itemize}
\end{claim}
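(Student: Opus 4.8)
The plan is to prove both bulleted lists simultaneously by induction on $j$ running from $h$ to $\ell$, in the cyclic pattern: first list at $j$ $\Rightarrow$ second list at $j$ $\Rightarrow$ first list at $j+1$ (for $h\le j<\ell$), terminating with the first list at $j=\ell$, which gives the theorem. The base case, the first list at $j=h$, is exactly the conclusion of Claim~\ref{cl:indep2}: the two conditioning sets coincide. Throughout I will use only: $\Ext_1,\Ext_2,\Ext_3$ are the strong seeded extractors fixed in the proof of Theorem~\ref{thm:nipm}, of seed length $d_1=c_{\ref{guv}}\log(m/\epsilon)$ and error $\epsilon$, for min-entropies $m/2$, $k_1/4$, $m-c_{\ref{thm:nipm}}\ell\log(m/\epsilon)$; by Lemma~\ref{lem:cond_ext} they also work against average conditional min-entropy; and by Lemma~\ref{lem:entropy_loss} conditioning on any one of the $\s_i,\s_i',\rr_i,\rr_i'$ costs every source at most $d_1$ bits. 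Since $k_1>c_{\ref{thm:nipm}}\ell\log(m/\epsilon)$ and $m\ge d\ge k_1$, every source stays comfortably above the threshold of its extractor. The structural observation I will keep invoking is that in the first list at $j$ the tampered variable $\rr_j'$ is already in the conditioning set, and in the second list at $j$ both $\s_{j+1}'$ and $\rr_j'$ are already there; this is precisely what lets me treat the relevant tampered copies below as deterministic functions of a \emph{single} block.

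\emph{First list at $j$ $\Rightarrow$ second list at $j$.} I pass to the conditioning set of the second list by additionally fixing $\s_j$ and $\s_{j+1}'$, both of which are, at this stage, deterministic functions of the $\{\X_i,\X_i'\}$-block ($\s_j=\Ext_1(\X_j,\rr_{j-1})$ with $\rr_{j-1}$ fixed, and $\s_{j+1}'=\Ext_1(\X_{j+1}',\rr_j')$ with $\rr_j'$ fixed). Before fixing them, the first list gives $\s_j$ that is $2(j-1)\epsilon$-close to $\U_{d_1}$ and is a function of $\X_j$, hence independent of $\{\Y,\Y'\}$; meanwhile $\Y$ still has ample average conditional min-entropy and $\rr_j'$ is fixed. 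Applying the strong extractor $\Ext_2$ to $\Y$ with the near-uniform, independent seed $\s_j$ shows $\rr_j=\Ext_2(\Y,\s_j)$ is $(2j-1)\epsilon$-close to uniform conditioned on $\s_j$, hence also conditioned on $\s_j$ together with the already-fixed $\rr_j'$. Now I fix $\s_j$ (so $\rr_j$ becomes a deterministic function of $\Y$) and fix $\s_{j+1}'$ (which does not disturb $\rr_j$, a $\Y$-side object). Each of these two fixings costs the $\X_i$ at most $d_1$ and costs $\Y$ nothing, so the entropy bounds of the second list hold; and since every fixed variable is a function of one of the two blocks, the block independence persists.

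\emph{Second list at $j$ $\Rightarrow$ first list at $j+1$.} I now pass to the first list at $j+1$ by additionally fixing $\rr_j$ and $\rr_{j+1}'$, both deterministic functions of the $\{\Y,\Y'\}$-block ($\rr_j=\Ext_2(\Y,\s_j)$ with $\s_j$ fixed, and $\rr_{j+1}'=\Ext_2(\Y',\s_{j+1}')$ with $\s_{j+1}'$ fixed; the last step is a harmless boundary case). From the second list, $\rr_j$ is near-uniform, a function of $\Y$, hence independent of the $\X_i$, and each $\X_i$ retains average conditional min-entropy at least $m/2$; so for $j+1<\ell$ the strong extractor $\Ext_1$ (or $\Ext_3$ when $j+1=\ell$, which only shrinks the output to $m_1$) applied to $\X_{j+1}$ with the near-uniform independent seed $\rr_j$ gives $\s_{j+1}$ that is $2j\epsilon$-close to uniform conditioned on $\rr_j$. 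Fixing $\rr_j$ makes $\s_{j+1}$ a function of $\X_{j+1}$; fixing $\rr_{j+1}'$ does not touch $\s_{j+1}$. These two fixings cost $\Y$ at most $d_1$ each and cost the $\X_i$ nothing, yielding the entropy bounds of the first list at $j+1$, with independence again preserved. Running the induction to $j=\ell$, the first list shows $\s_\ell=\laExt((\X_1,\ldots,\X_\ell),\Y)=\nipm((\X_1,\ldots,\X_\ell),\Y)$ is $2(\ell-1)\epsilon$-close to $\U_{m_1}$ conditioned on a list of small-support variables that includes $\s_\ell'=\nipm((\X_1',\ldots,\X_\ell'),\Y')$, while $\s_\ell$ is a deterministic function of $\X_\ell$ and $\{\X_i,\X_i'\}$ stays independent of $\{\Y,\Y'\}$; averaging out the fixed variables, reinserting $\Y,\Y'$ into the tuple using this independence, and adding back the error $\le\epsilon_1\ell$ from pretending the $\X_i$ are exactly uniform yields Theorem~\ref{thm:nipm} with error $c'_{\ref{thm:nipm}}\ell\epsilon$.

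The part I expect to be the main obstacle is the bookkeeping rather than any single inequality: at every half-step one must verify that the tampered variable being fixed ($\s_{j+1}'$ or $\rr_{j+1}'$) has already collapsed to a deterministic function of exactly one of the two blocks, so that fixing it (i) cannot damage the near-uniformity just established for $\rr_j$ or $\s_{j+1}$, (ii) drains min-entropy only from the source in that block, and (iii) leaves the $\{\X_i,\X_i'\}$-block independent of the $\{\Y,\Y'\}$-block; and one must consistently apply the strong-extractor guarantee with the near-uniform quantity in the \emph{seed} slot (extracting from the weak source). All of this is a direct extension, to the ``index $\ge h$'' regime, of the reasoning already carried out in Claims~\ref{cl:indep1} and~\ref{cl:indep2}; once it is set up correctly, the entropy arithmetic is routine because $k_1$ and $m$ dominate $\ell d_1$.
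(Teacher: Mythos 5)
Your proposal is correct and follows essentially the same route as the paper's proof: an induction from $j=h$ to $\ell$ that alternates between the two conditioning sets, seeded at the first list for $j=h$ via Claim~\ref{cl:indep2}, with each half-step fixing the two newly added variables (one untampered, one tampered that has already collapsed to a deterministic function of a single block) and invoking the strong-extractor guarantee with the near-uniform quantity in the seed slot. The only difference is cosmetic — you phrase the cycle as first list at $j$ $\Rightarrow$ second list at $j$ $\Rightarrow$ first list at $j+1$, whereas the paper writes it as second list at $j-1$ $\Rightarrow$ first list at $j$ $\Rightarrow$ second list at $j$ — and the minor looseness in the entropy bookkeeping you inherit is already present in the claim statement itself.
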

\begin{proof}We prove this by induction on $j$. For the base case, when $j=h$, fix the r.v's $\{\s_i: i \in [h-1] \}, \{ \s_i': i \in [h]\}, \{ \rr_i : i \in [h-1]\}, \{ \rr_i' : i \in [h]\}$. Using Claim $\ref{cl:indep2}$, it follows that 
\begin{itemize}
\item $\s_{h}$ is $2(h-1)\epsilon$-close to $\U_d$,
\item $\s_{h}$ is a deterministic function of $\X_h$,
\item for each $i \in [\ell]$, $\X_i$ has average conditional min-entropy at least $m-2hd_1-\log(1/\epsilon)$,
\item $\Y$ has average conditional min-entropy at least $k_1-2hd_1-\log(1/\epsilon)$,
\item $\{ \X_1,\ldots,\X_{\ell},\X_1',\ldots,\X_{\ell}'\}$ is independent of $\{ \Y,\Y^{\prime}\}$.
\end{itemize}
Noting that $\rr_h=Ext_2(\Y,\s_h)$, we fix $\s_h$ and $\rr_h$ is $2h\epsilon$-uniform on average after this fixing. We note that $\rr_h$ is now a deterministic function of $\Y$. Since $\rr_h'$ is fixed, $\s_{h+1}'$ is a deterministic function of $\X_{h+1}'$, and we fix it without affecting the distribution of $\rr_h$. The average conditional min-entropy of each $\X_i$ after these fixings is at least $m-2(h+1)d_1-\log(1/\epsilon)$. Further, we note that our fixings preserve the independence between $\{ \X_1,\ldots,\X_{\ell},\X_1',\ldots,\X_{\ell}'\}$ and $\{ \Y,\Y^{\prime}\}$. This completes the proof of the base case.

Now suppose $j>h$.  Fix the r.v's $\{\s_i: i \in [j-1] \}, \{ \s_i': i \in [j]\}, \{ \rr_i : i \in [j-2]\}, \{ \rr_i' : i \in [j-1]\}$.  Using inductive hypothesis, it follows that 
\begin{itemize}
\item $\rr_{j-1}$ is $(2j-3)\epsilon$-close to $\U_d$,
\item $\rr_{j-1}$ is a  deterministic function of $\Y$,
\item for any $i \in [t]$, $\X_i$ has average conditional min-entropy at least $m-2jd_1-\log(1/\epsilon)$,
\item $\Y$ has average conditional min-entropy at least $k_1-2jd_1-\log(1/\epsilon)$,
\item $\{ \X_1,\ldots,\X_{\ell},\X_1',\ldots,\X_{\ell}'\}$ is independent of $\{ \Y,\Y^{\prime}\}$.
\end{itemize}
Using the fact that $\s_{j}=\Ext_1(\X_j,\rr_{j-1})$, we fix $\rr_{j-1}$ and $\s_j$ is $(2j-2)\epsilon$-close to uniform on average after this fixing. Further, $\s_j$ is a deterministic function of $\X_j$. Since $\s_j'$ is fixed, it follows that $\rr_{j}'$ is a deterministic function of $\Y$ and we fix it without affecting the distribution of $\s_j$. We note that after these fixings, $\Y$ has average conditional min-entropy at least $k_1-2(j+1)d_1-\log(1/\epsilon)$. Further, we note that our fixings preserve the independence between $\{ \X_1,\ldots,\X_{\ell},\X_1',\ldots,\X_{\ell}'\}$ and $\{ \Y,\Y^{\prime}\}$. 

Now, we fix $\s_j$ and it follows that $\rr_j$ is a deterministic function of $\Y$ and is $(2j-1)\epsilon$-close to uniform on average. Further, since $\rr_j'$ is fixed, it follows that $\s_{j+1}'$ is a deterministic function of $\X_{j+1}$ and we fix it without affecting the distribution of $\rr_j$. The average conditional min-entropy of each $\X_i$ after these fixings is at least $m-2(j+1)d_1-\log(1/\epsilon)$. Further, we note that our fixings preserve the independence between $\{ \X_1,\ldots,\X_{\ell},\X_1',\ldots,\X_{\ell}'\}$ and $\{ \Y,\Y^{\prime}\}$. 

This completes the proof of  inductive step, and hence the claim follows.
\end{proof}

\end{proof}

\subsection{$(\ell,t)$-Non-Malleable Independence Preserving Merger}\label{sec:t-nipm}
In this section, we generalize the construction of $\nipm$ from Section $\ref{sec:mergers}$ to handle multiple adversaries.

We first introduce some notation. For a random variable $\V$ supported on $a \times b$ matrices, we use $\V_i$ to denote the random variable corresponding to the $i$'th row of $\V$. Our main result in this section is the following theorem.
\begin{thm}\label{thm:t-nipm}
There exists constant $c_{\ref{thm:t-nipm}},c_{\ref{thm:t-nipm}}'>0$ such that for all integers $m,d,k_1,\ell,t>0$  and any $\epsilon>0$,  with $m\ge d \ge k_1 >c_{\ref{thm:t-nipm}} (t+1) \ell \log(m/\epsilon)$,  there exists an explicit function $t$-$\nipm:\zo^{m\ell}\times \zo^{d} \rightarrow \zo^{m_1}$, $m_1=\frac{0.9}{t}(m-c_{\ref{thm:t-nipm}} (t+1) \ell \log(m/\epsilon))$ such that if the following conditions hold:

\begin{itemize}
\item $\X,\X^1,\ldots,\X^{t}$ are r.v's, each  supported on boolean $\ell \times m$ matrices s.t for any $i \in [\ell]$, $|\X_i - \U_m| \le \epsilon$,
\item $\{\Y,\Y^1,\ldots,\Y^t\}$ is independent of $\{ \X,\X^1,\ldots,\X^{t}\}$, s.t $\Y,\Y^1,\ldots,\Y^t$ are each supported on $\zo^{d}$ and $H_{\infty}(\Y) \ge k_1$.
\item there exists an $h \in [\ell]$ such that $|(\X_h, \X_h^1,\ldots,\X_h^t)-(\U_m,\X_h^1,\ldots,\X_h^t)| \le \epsilon$,
\end{itemize}
then 
\begin{align*}
|(\ell,t)\text{-}\nipm((\X,\Y), (\ell,t)\text{-}\nipm(\X^1,\Y^1),\ldots, (\ell,t)\text{-}\nipm(\X^t,\Y^t),\Y,\Y^1,\ldots,\Y^t\\ -\U_{m_1},  (\ell,t)\text{-}\nipm(\X^1,\Y^1),\ldots, (\ell,t)\text{-}\nipm(\X^t,\Y^t),\Y,\Y^1,\ldots,\Y^t| \le c_{\ref{thm:t-nipm}}'\ell \epsilon.
\end{align*}
\end{thm}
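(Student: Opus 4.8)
The plan is to generalize the proof of Theorem~\ref{thm:nipm} (the $t=1$ case) to handle $t$ tampering functions, using essentially the same $\ell$-alternating-extraction backbone but with a carefully chosen output length and error accounting. First I would set up the construction: instantiate $\ell$-$\laExt$ with strong seeded extractors from Theorem~\ref{guv}, where the intermediate seeds $\s_i,\rr_i$ have length $d_1=c_{\ref{guv}}\log(m/\epsilon)$ and the final extractor $\Ext_3$ outputs $m_1=\frac{0.9}{t}(m-c_{\ref{thm:t-nipm}}(t+1)\ell\log(m/\epsilon))$ bits, and define $(\ell,t)\text{-}\nipm(\X,\Y)=\ell\text{-}\laExt((\X_1,\dots,\X_\ell),\Y)$. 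For a deterministic function $\V=f((\X_1,\dots,\X_\ell),\Y)$, write $\V^j=f((\X^j_1,\dots,\X^j_\ell),\Y^j)$ for the $j$-th tampered copy. The key point is that at every alternation we now have $t+1$ copies of each intermediate variable ($\s_j,\s_j^1,\dots,\s_j^t$ and $\rr_j,\rr_j^1,\dots,\rr_j^t$) to fix, which is why the entropy loss per step becomes $2(t+1)d_1$ instead of $2d_1$, forcing the extra factor of $(t+1)$ in the min-entropy requirement and the factor of $1/t$ loss in output length (because in the final step, after fixing the $t$ tampered outputs, $\X_h$ must still have enough entropy relative to the output length).

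The core is an inductive claim mirroring Claims~\ref{cl:indep1}, \ref{cl:indep2}, \ref{cl:indep3}, proving that after fixing all intermediate variables up to index $j$ (all $t+1$ copies), the following invariants hold: $\s_j$ (resp.\ $\rr_j$) is $O(j)\epsilon$-close to uniform; $\s_j,\s_j^1,\dots,\s_j^t$ are deterministic functions of $\X_j,\X_j^1,\dots,\X_j^t$ (resp.\ $\rr_j$-copies are functions of the $\Y$-copies); each $\X_i$ retains average conditional min-entropy at least $m-2(t+1)jd_1-\log(1/\epsilon)$; $\Y$ retains min-entropy at least $k_1-2(t+1)jd_1-\log(1/\epsilon)$; and $\{\X,\X^1,\dots,\X^t\}$ stays independent of $\{\Y,\Y^1,\dots,\Y^t\}$. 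The induction has the same three-phase structure: before the good row $h$, the alternating extraction just shuffles entropy back and forth using the GUV extractors on min-entropy sources; at step $h$, we use the hypothesis $|(\X_h,\X_h^1,\dots,\X_h^t)-(\U_m,\X_h^1,\dots,\X_h^t)|\le\epsilon$ — fixing $\X_h^1,\dots,\X_h^t$ first (order of fixing is irrelevant for average conditional min-entropy), so $\X_h$ still has near-full entropy, and since $\s_h=\Ext_1(\X_h,\rr_{h-1})$ with $\rr_{h-1}$ uniform and independent, $\s_h$ is close to uniform even given all the $\s_h^j$'s; and after step $h$, the property propagates because at each subsequent alternation the "good" variable is uniform given its tampered copies while having sufficient entropy, so the strong-extractor argument (fix the seed, then fix the tampered outputs which are now deterministic functions of the other side) carries through. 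The final output $\s_\ell$ is then $O(\ell)\epsilon$-close to uniform given $\s_\ell^1,\dots,\s_\ell^t$ (and given $\Y,\Y^1,\dots,\Y^t$, after observing the $\rr$-copies are functions of the $\Y$-copies); adding back the $\ell\epsilon_1$ error from treating the $\X_i$ as exactly uniform gives the stated $c_{\ref{thm:t-nipm}}'\ell\epsilon$.

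The main obstacle I anticipate is the bookkeeping in the inductive step at and after the good index $h$, specifically making sure that "$\s_j$ is uniform given $\s_j^1,\dots,\s_j^t$" is correctly maintained when we can only fix one of the two sides cleanly. The subtlety is that $\s_j^i=\Ext_1(\X_j^i,\rr_{j-1}^i)$; once $\rr_{j-1}$ and all $\rr_{j-1}^i$ are fixed, each $\s_j^i$ becomes a deterministic function of $\X_j^i$, so we may fix $\s_j^1,\dots,\s_j^t$ \emph{without fixing $\X_j^1,\dots,\X_j^t$ themselves}, and because in the "fix-$\X_h^i$-first" reordering $\X_h$ has already been shown to be uniform-and-high-entropy independent of everything on the $\X^i$ side, the distribution of $\s_h=\Ext_1(\X_h,\rr_{h-1})$ is unaffected. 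Keeping track of which variables are "live" versus "fixed," and ensuring the entropy ledger (losing exactly $2(t+1)d_1$ per full round, never more) stays tight enough to invoke the extractors from Theorem~\ref{guv} at each step, is where the argument requires care; everything else is a direct $t$-fold replication of the $t=1$ proof.
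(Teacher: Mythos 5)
Your proposal is correct and follows essentially the same route as the paper: the paper's proof of Theorem~\ref{thm:t-nipm} is exactly the $t$-fold replication of the $\ell$-alternating-extraction argument from Theorem~\ref{thm:nipm}, with the same three-phase induction (Claims~\ref{cl:t-indep1}--\ref{cl:t-indep3}), the same order-of-fixing trick at the good index $h$, and the same $(t+1)$-factor entropy ledger. The subtlety you flag — fixing the tampered $\s_j^g$ once they have become deterministic functions of the $\X^g$ side, without fixing the $\X_j^g$ themselves — is precisely how the paper handles it.
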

\begin{proof}  
We instantiate the $\ell$-look-ahead extractor described in Section $\ref{basic_merger}$ with the following strong-seeded extractors: Let $\Ext_1: \zo^m \times \zo^{d_1} \rightarrow  \zo^{d_1}$, $\Ext_2: \zo^d \times \zo^{d_1} \rightarrow \zo^{d_1}$ and $\Ext_3:\{ 0,1\}^{m} \times \{ 0,1\}^{d_1} \rightarrow \{ 0,1\}^{m_1}$ be explicit strong-seeded from Theorem $\ref{guv}$ designed to extract from min-entropy $k_1=m/2,k_2=d/2, k_3= m-c_{\ref{thm:t-nipm}}(t+1)\log(m/\epsilon)$ respectively with error $\epsilon$. Thus $d_1=c_{\ref{guv}}\log(m/\epsilon)$. 

For each $i\in[\ell-1]$, let $\Ext_{q,i}=\Ext_1$, $\Ext_{q,\ell}=\Ext_3$ and $\Ext_{w}=\Ext_2$.

Define $$t\text{-}\nipm((\X_1,\ldots,\X_{\ell}),\Y)=\ell\text{-}\laExt((\X_1,\ldots,\X_{\ell}),\Y).$$ For any random variable $\V=f((\X_1,\ldots,\X_t),\Y)$ (where $f$ is an arbitrary deterministic function), let $\V^{i}=f((\X_1^i,\ldots,\X_{\ell}^i),\Y^i)$.  The proof of correctness of the construction is similar in structure to Theorem $\ref{thm:nipm}$, but requires more care to handle $t$ adversaries.

We think of each $\X_i$ being uniform, and add back an error $\epsilon_1 \ell$ in the end.

We begin by proving the following claim.
\begin{claim}\label{cl:t-indep1}
For any $j\in [h-1]$, conditioned on the r.v's $\{\s_i: i \in [j-1] \}, \{ \s_i^{g}: i \in [j-1], g \in [t]\}, \{ \rr_i : i \in [j-1]\}, \{ \rr_i^g : i \in [j-1],  g \in [t]\}$ the following hold:
\begin{itemize}
\item $\s_{j}$ is $2(j-1)\epsilon$-close to $\U_d$,
\item $\s_{j},\{ \s_{j}^g: g \in [t]\}$ are deterministic functions of $\X, \{ \X_j^g: g \in [t] \}$,
\item for each $i \in [\ell]$, $\X_i$ has average conditional min-entropy at least $m-(t+1)(j-1)d_1-\log(1/\epsilon)$,
\item $\Y$ has average conditional min-entropy at least $k_1-2(t+1)(j-1)d_1-\log(1/\epsilon)$,
\item $\{ \X,\X^1,\ldots,\X^t\}$ is independent of $\{ \Y,\Y^1,\ldots,\Y^t\}$.
\end{itemize}
Further, conditioned on the r.v's $\{\s_i: i \in [j] \}, \{ \s_i^g: i \in [j], g \in [t]\}, \{ \rr_i : i \in [j-1]\}, \{ \rr_i^g : i \in [j-1], g \in [t]\}$ the following hold:
\begin{itemize}
\item $\rr_{j}$ is $(2j-1)\epsilon$-close to $\U_d$,
\item $\rr_{j},\{\rr_{j}^g: g\in [t] \}$ are deterministic functions of $\Y,\{\Y^g: g \in [t]\}$,
\item for any $i \in [\ell]$, $\X_i$ has average conditional min-entropy at least $m-(t+1)jd_1-\log(1/\epsilon)$,
\item $\Y$ has average conditional min-entropy at least $k_1-(t+1)(j-1)d_1-\log(1/\epsilon)$,
\item $\{ \X,\X^1,\ldots,\X^t\}$ is independent of $\{ \Y,\Y^1,\ldots,\Y^t\}$.
\end{itemize}
\end{claim}
\begin{proof} 
 In the course of the proof, we always maintain the property that the r.v's being fixed are either a deterministic function of $\{ \X,\X^1,\ldots,\X^t\}$ or $\{ \Y,\Y^1,\ldots,\Y^t\}$, and thus ensure $\{ \X,\X^1,\ldots,\X^t\}$ is independent of $\{ \Y,\Y^1,\ldots,\Y^t\}$. 
 
 We prove the claim by induction on $j$. The base case when $j=1$ is direct. Thus suppose $j>1$. 

Fix the r.v's $\{\s_i: i \in [j-1] \}, \{ \s_i^g: i \in [j-1], g \in [t]\}, \{ \rr_i : i \in [j-2]\}, \{ \rr_i^g : i \in [j-2], g \in [t]\}$. Using inductive hypothesis, it follows that 
\begin{itemize}
\item $\rr_{j-1}$ is $(2j-3)\epsilon$-close to $\U_{d_1}$,
\item $\rr_{j-1},\rr_{j-1}'$ are deterministic functions of $\{\Y,\Y' \}$,
\item for any $i \in [\ell]$, $\X_i$ has average conditional min-entropy at least $m-(j-1)d_1-\log(1/\epsilon)$,
\item $\Y$ has average conditional min-entropy at least $k_1-(j-2)d_1-\log(1/\epsilon)$,
\item $\{ \X,\X^1,\ldots,\X^t\}$ is independent of $\{ \Y,\Y^1,\ldots,\Y^t\}$.
\end{itemize}

Now since $\s_{j}=\Ext_1(\X_j,\rr_{j-1})$, it follows that $\s_{j}$ is $2(j-1)\epsilon$-close to $\U_{d_1}$ on average conditioned on $\rr_{j-1}$. We thus fix $\rr_{j-1}$, and $\s_j$ is now a deterministic function of $\X$.  Next, we fix $\{ \rr_{j-1}^g: g \in [t]\}$ without affecting the distribution of $\s_{j}$. Thus $\s_j,\s_j'$ are now a deterministic function of $\X,\X^1,\ldots,\X^t$. It follows that after these fixings,  the average conditional  min-entropy of $\Y$ is at least $k_1-(j-2)(t+1)d_1-\log(1/\epsilon)-(t+1)d_1=k_1-(j-1)(t+1)d_1-\log(1/\epsilon)$.

Next, we have $\rr_{j}=\Ext_2(\Y,\s_{j})$, and thus fixing $\s_{j}$, it follows that $\rr_j$ is $(2j-1)\epsilon$-close to uniform on average. Further, since $\rr_j$ is now a deterministic function of $\Y$, we fix $\{\s_j^g: g \in [t]\}$. As a result of these fixings, each $\X_i$ loses conditional min-entropy at most $2(t+1)d_1$ on average.  This completes the inductive step, and hence the proof follows.
\end{proof}

\begin{claim}\label{cl:t-indep2}Conditioned on the r.v's $\{\s_i: i \in [h-1] \}, \{ \s_i^g: i \in [h], g \in [t]\}, \{ \rr_i : i \in [h-1]\}, \{ \rr_i^g : i \in [h], g \in [t]\}$ the following hold:
\begin{itemize}
\item $\s_{h}$ is $2(h-1)\epsilon$-close to $\U_d$,
\item $\s_{h}$ is a deterministic function of $\X_h$,
\item for each $i \in [t]$, $\X_i$ has average conditional min-entropy at least $m-(t+1)hd_1-\log(1/\epsilon)$,
\item $\Y$ has average conditional min-entropy at least $k_1-(t+1)hd_1-\log(1/\epsilon)$,
\item $\{ \X,\X^1,\ldots,\X^t\}$ is independent of $\{ \Y,\Y^1,\ldots,\Y^t\}$.
\end{itemize}
\end{claim}
\begin{proof}We fix the r.v's $\{\s_i: i \in [h-1] \}, \{ \s_i^g: i \in [h-1], g \in [t]\}, \{ \rr_i : i \in [h-2]\}, \{ \rr_i^g : i \in [h-2], g \in [t]\}$, and using Claim  $\ref{cl:t-indep1}$ the following hold:
\begin{itemize}
\item $\rr_{h-1}$ is $(2h-3)\epsilon$-close to $\U_d$,
\item $\rr_{h-1},\{\rr_{h-1}^g: g \in [t]\}$ are deterministic functions of $\Y,\{\Y^g: g \in [t]\}$,
\item for any $i \in [\ell]$, $\X_i$ has average conditional min-entropy at least $m-(t+1)(h-1)d_1-\log(1/\epsilon)$,
\item $\Y$ has average conditional min-entropy at least $k_1-(t+1)(h-2)d_1-\log(1/\epsilon)$,
\item $\{ \X,\X^1,\ldots,\X^t\}$ is independent of $\{ \Y,\Y^1,\ldots,\Y^t\}$.
\end{itemize}
Next we claim that $\X_h$ has average conditional min-entropy at least $m-(h-1)(t+1)d_1-\log(1/\epsilon)$ even after fixing $\{\X_h^g: g \in [t]\}$. Before  fixings any other r.v, we have $\X_h|\{\X_h^g: g \in [t] \}$ is $\epsilon$-close to uniform on average. Since while computing the average conditional min-entropy, the order of fixing does not matter,  we can as well think of first fixing of $\{\X_h^g: g \in [t]\}$ and then fixing the r.v's $\{\s_i: i \in [h-1] \}, \{ \s_i': i \in [h-1]\}, \{ \rr_i : i \in [h-2]\}, \{ \rr_i' : i \in [h-2]\}$. Thus, it follows  that the average conditional min-entropy of $\X_h$ is at least $m-(t+1)(h-1)d_1-\log(1/\epsilon)$. 

We now prove that even after fixing the r.v's $\{\X_h^g: g \in [t]\},\rr_{h-1},\{\rr_{h-1}^g : g\in [t]\}$,  the r.v $\s_h$ is $2(h-1)\epsilon$-close to uniform on average. Fix $\{\X_h^g: g \in [t]\}$ and by the above argument  $\X_h$ has average conditional min-entropy at least $m-(t+1)(h-1)d_1-\log(1/\epsilon)$. Since $\s_h=\Ext_1(\X_h,\rr_{h-1})$, it follows that $\s_h$ is $2(h-1)\epsilon$-close to uniform on average conditioned on $\rr_{h-1}$. We fix $\rr_{h-1}$, and thus $\s_h$ is now a deterministic function of $\X_h$. Note that $\s_{h}^g=\Ext_1(\X_h^g,\rr_{h-1}^g)$ is now a deterministic function of $\rr_h^g$ (and thus $\Y^g$). Thus, we can fix $\{\rr_h^g: g \in [t]\}$ (which also fixes $\{ \s_h^g: g \in [t]\}$) without affecting the distribution of $\s_h$. 

Observe that once the r.v's $\rr_{h-1},\{ \rr_{h-1}^g: g \in [t]\}$ are fixed, $\{\s_h^g: g \in [t]\}$ is a deterministic function of $\{\X_h^g: g \in [t]\}$. We fix $\{ \s_h^g: g \in [t]\}$ and do not fix $\{ \X_h^g: g \in [t]\}$, and note that $\s_h$ is still $2(h-1)\epsilon$-close to uniform. Further after these fixings, each $\X_i$ has average conditional min-entropy at least $m-(t+1)hd_1-\log(1/\epsilon)$, and $\Y$ has average conditional min-entropy at least $k_1-(t+1)hd_1-\log(1/\epsilon)$.
\end{proof}
Theorem $\ref{thm:t-nipm}$ follows directly from the following claim.
\begin{claim}\label{cl:t-indep3}  For any $j\in [h,\ell]$, conditioned on the r.v's $\{\s_i: i \in [j-1] \}, \{ \s_i^g: i \in [j], g \in [t]\}, \{ \rr_i : i \in [j-1]\}, \{ \rr_i^g : i \in [j], g \in [t]\}$ the following hold:
\begin{itemize}
\item $\s_{j}$ is $2(j-1)\epsilon$-close to $\U_d$,
\item $\s_{j}$ is a deterministic function of $\X_j$
\item for each $i \in [\ell]$, $\X_i$ has average conditional min-entropy at least $m-(t+1)(j+1)d_1-\log(1/\epsilon)$,
\item $\Y$ has average conditional min-entropy at least $k_1-(t+1)jd_1-\log(1/\epsilon)$,
\item $\{ \X,\X^1,\ldots,\X^t\}$ is independent of $\{ \Y,\Y^1,\ldots,\Y^t\}$.
\end{itemize}
Further, conditioned on the r.v's $\{\s_i: i \in [j] \}, \{ \s_i^g: i \in [j+1], g \in [t]\}, \{ \rr_i : i \in [j-1]\}, \{ \rr_i^g : i \in [j], g \in [t]\}$ the following hold:
\begin{itemize}
\item $\rr_{j}$ is $(2j-1)\epsilon$-close to $\U_d$,
\item $\rr_{j}$ is a  deterministic function of $\Y$,
\item for any $i \in [\ell]$, $\X_i$ has average conditional min-entropy at least $m-(t+1)(j+1)d_1-\log(1/\epsilon)$,
\item $\Y$ has average conditional min-entropy at least $k_1-(t+1)jd_1-\log(1/\epsilon)$,
\item $\{ \X,\X^1,\ldots,\X^t\}$ is independent of $\{ \Y,\Y^1,\ldots,\Y^t\}$.
\end{itemize}
\end{claim}
\begin{proof}
We prove this by induction on $j$. For the base case, when $j=h$, fix the r.v's $\{\s_i: i \in [h-1] \}, \{ \s_i^g: i \in [h], g \in [t]\}, \{ \rr_i : i \in [h-1]\}, \{ \rr_i^g : i \in [h], g \in [t]\}$. Using Claim $\ref{cl:t-indep2}$, it follows that 
\begin{itemize}
\item $\s_{h}$ is $2(h-1)\epsilon$-close to $\U_d$,
\item $\s_{h}$ is a deterministic function of $\X_h$,
\item for each $i \in [\ell]$, $\X_i$ has average conditional min-entropy at least $m-(t+1)hd_1-\log(1/\epsilon)$,
\item $\Y$ has average conditional min-entropy at least $d-(t+1)hd_1-\log(1/\epsilon)$,
\item $\{ \X,\X^1,\ldots,\X^t\}$ is independent of $\{ \Y,\Y^1,\ldots,\Y^t\}$.
\end{itemize}
Noting that $\rr_h=\Ext_2(\Y,\s_h)$, we fix $\s_h$ and $\rr_h$ is $2h\epsilon$-uniform on average after this fixing. We note that $\rr_h$ is now a deterministic function of $\Y$. Since $\{\rr_h^g: g\in [t]\}$ is fixed, $\{\s_{h+1}^g: g \in [t]\}$ is a deterministic function of $\{\X_{h+1}^g:g \in [t]\}$, and we fix it without affecting the distribution of $\rr_h$. The average conditional min-entropy of each $\X_i$ after these fixings is at least $m-(t+1)(h+1)d_1-\log(1/\epsilon)$. 

Now suppose $j>h$.  Fix the r.v's  $\{\s_i: i \in [j-1] \}, \{ \s_i^g: i \in [j], g \in [t]\}, \{ \rr_i : i \in [j-2]\}, \{ \rr_i^g : i \in [j-1], g \in [t]\}$. By inductive hypothesis, the following hold:
\begin{itemize}
\item $\rr_{j-1}$ is $(2j-3)\epsilon$-close to $\U_d$,
\item $\rr_{j-1}$ is a  deterministic function of $\Y$,
\item for any $i \in [\ell]$, $\X_i$ has average conditional min-entropy at least $m-(t+1)jd_1-\log(1/\epsilon)$,
\item $\Y$ has average conditional min-entropy at least $k_1-(t+1)(j-1)d_1-\log(1/\epsilon)$,
\item $\{ \X,\X^1,\ldots,\X^t\}$ is independent of $\{ \Y,\Y^1,\ldots,\Y^t\}$.
\end{itemize}
Using the fact that $\s_{j}=\Ext_1(\X_j,\rr_{j-1})$, we fix $\rr_{j-1}$ and $\s_j$ is $(2j-2)\epsilon$-close to uniform on average after this fixing. Further, $\s_j$ is a deterministic function of $\X_j$. Since $\{\s_j^g: g \in [t]\}$ is fixed, it follows that $\{\rr_{j}^g: g \in [t]\}$ is a deterministic function of $\Y$ and we fix it without affecting the distribution of $\s_j$. We note that after these fixings, $\Y$ has average conditional min-entropy at least $k_1-(t+1)jd_1-\log(1/\epsilon)$. 

Now, we fix $\s_j$ and it follows that $\rr_j$ is a deterministic function of $\Y$ and is $(2j-1)\epsilon$-close to uniform on average. Further, since $\{\rr_j^g: g \in [t]\}$ is fixed, it follows that $\{\s_{j+1}^g: g \in [t]\}$ is a deterministic function of $\X_{j+1}$ and we fix it without affecting the distribution of $\rr_j$. The average conditional min-entropy of each $\X_i$ after these fixings is at least $m-(t+1)jd_1-\log(1/\epsilon)$. 

This completes proof of  the inductive step, and the claim now follows.
\end{proof}
\end{proof}

\subsection{A Recursive Non-Malleable Independence Preserving Merger}\label{sec_nm_rec}
In this section, we show a recursive way of applying the $(\ell,t)$-$\nipm$ constructed in the previous section in order to achieve better trade-off between parameters. This object is crucial in obtaining our near optimal non-malleable extractor construction.

\textbf{Notation:} For an $a \times b$ matrix $\V$, and any $S \subseteq [a]$, let $\V_{S}$ denote the matrix obtained by restricting $\V$ to the rows indexed by $S$.

Our main result in this section is the following theorem.
\begin{thm}\label{rec_thm}For all integers $m,\ell,L,t>0$, any $\epsilon>0$, $r=\lceil\frac{\log L}{\log \ell} \rceil$ and any $d=(c_{\ref{thm:t-nipm}} \ell \log(m/\epsilon) +d')(t+2)^{r+1}$,   there exists an explicit function $(L,\ell,t)$-$\nipm:\zo^{mL}\times \zo^{d} \rightarrow \zo^{m'}$, $m'=(0.9/t)^r (m-c_{\ref{thm:t-nipm}}\ell (t+1)r \log(m/\epsilon))$,   such that if the following conditions hold:
\begin{itemize}
\item $\X,\X^1,\ldots,\X^{t}$ are r.v's, each supported on boolean $L\times m$ matrices s.t for any $i \in [L]$, $|\X_i - \U_m| \le \epsilon$,
\item $\{\Y,\Y^1,\ldots,\Y^t\}$ is independent of $\{ \X,\X^1,\ldots,\X^{t}\}$, s.t $\Y,\Y^1,\ldots,\Y^t$ are each supported on $\zo^{d}$ and $H_{\infty}(\Y) \ge d-d'$,
\item there exists an $h \in [\ell]$ such that $|(\X_h,  \X_h^1,\ldots,\X_h^t)-(\U_{m},\X_h^1,\ldots,\X_h^t)| \le \epsilon$,
\end{itemize}
then 
\begin{align*}
|(L,\ell,t)\text{-}\nipm((\X,\Y), (L,\ell,t)\text{-}\nipm(\X^1,\Y^1),\ldots, (L,\ell,t)\text{-}\nipm(\X^t,\Y^t),\Y,\Y^1,\ldots,\Y^t \\ -\U_{m_1},  (L,\ell,t)\text{-}\nipm(\X^1,\Y^1),\ldots, (L,\ell,t)\text{-}\nipm(\X^t,\Y^t),\Y,\Y^1,\ldots,\Y^t| \le 2c_{\ref{thm:t-nipm}}' L\epsilon.
\end{align*}
\end{thm}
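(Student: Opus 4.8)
The plan is to prove Theorem~\ref{rec_thm} by induction on $r=\lceil \log L/\log\ell\rceil$, using the $(\ell,t)$-$\nipm$ of Theorem~\ref{thm:t-nipm} as the only building block. For the base case $r\le 1$ (that is, $L\le\ell$) we pad the $L$-row matrix up to $\ell$ rows by repeating rows --- this only rescales the eventual error by a factor at most $\ell$ and preserves all three hypotheses --- pass from $\Y$ to a prefix that is a legal seed for Theorem~\ref{thm:t-nipm} (using the deficiency bound $H_\infty(\Y)\ge d-d'$ together with Lemma~\ref{lem:entropy_loss}), and output the $(\ell,t)$-$\nipm$ of the padded matrix and that prefix; the conclusion is then exactly that of Theorem~\ref{thm:t-nipm}.

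For the inductive step assume $L>\ell$ and that the statement holds for parameters $(L/\ell,\ell,t)$ with $r-1=\lceil\log(L/\ell)/\log\ell\rceil$. Write $\Y=\Y_1\scirc\widehat\Y$ where $\Y_1:=\slice(\Y,d_1)$ is the shortest prefix admissible as a seed of the $(\ell,t)$-$\nipm$ (so $d_1=\Theta((t{+}1)\ell\log(m/\epsilon))+d'$) and $\widehat\Y$ is the remaining suffix. Partition the $L$ rows of $\X$, and of each $\X^g$, into $L/\ell$ consecutive blocks of $\ell$ rows; for block $j$ set $\W_j:=(\ell,t)\text{-}\nipm(\X_{B_j},\Y_1)$, obtaining an $(L/\ell)\times m_1$ matrix $\W$ with $m_1=\frac{0.9}{t}\big(m-c_{\ref{thm:t-nipm}}(t+1)\ell\log(m/\epsilon)\big)$, and likewise $\W^g$ from $\X^g$ and $\Y_1^g$. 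The output of $(L,\ell,t)\text{-}\nipm$ is defined to be $(L/\ell,\ell,t)\text{-}\nipm(\W,\widehat\Y)$.

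The correctness of one layer is established as follows. By Theorem~\ref{thm:t-nipm} applied to each block, every $\W_j$ is $O(\ell\epsilon)$-close to $\U_{m_1}$ --- for a block not containing the good row this uses only that all of its rows are near-uniform, which is also visible directly since the last extractor inside the $(\ell,t)$-$\nipm$ is applied to a near-uniform source with an independent near-uniform seed --- and for the block $B_{j^\star}$ containing the good row $h$ we get, again from Theorem~\ref{thm:t-nipm}, that $\W_{j^\star}$ is $O(\ell\epsilon)$-close to $\U_{m_1}$ even conditioned on $\W_{j^\star}^1,\ldots,\W_{j^\star}^t,\Y_1,\Y_1^1,\ldots,\Y_1^t$. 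We then fix $\Y_1,\Y_1^1,\ldots,\Y_1^t$: after this conditioning $\W$ and all $\W^g$ are deterministic functions of $\{\X,\X^1,\ldots,\X^t\}$, hence $\{\W,\W^1,\ldots,\W^t\}$ is independent of $\{\widehat\Y,\widehat\Y^1,\ldots,\widehat\Y^t\}$; and by Lemma~\ref{lem:entropy_loss} this conditioning drops the average conditional min-entropy of $\widehat\Y$ by at most $(t+1)d_1$, so --- converting to a worst-case bound at a further cost of $\log(1/\epsilon)$ bits and $\epsilon$ error --- $\widehat\Y$ has length $d-d_1$ and min-entropy at least $(d-d_1)-\big(d'+(t+1)d_1+\log(1/\epsilon)\big)$. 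Thus $\W,\W^g,\widehat\Y,\widehat\Y^g$ meet the hypotheses of the inductive statement for parameters $(L/\ell,\ell,t)$, with the good row $j^\star$, with error parameter $O(\ell\epsilon)$ in place of $\epsilon$, and with deficiency parameter $d''=d'+(t+1)d_1+\log(1/\epsilon)$; the inductive conclusion is precisely what Theorem~\ref{rec_thm} asserts. With $d_1$ chosen as above, the recurrence for the seed length --- peel off $d_1$ bits and pay $(t+1)d_1$ additional deficiency at each of the $r$ levels --- solves to $d=(c_{\ref{thm:t-nipm}}\ell\log(m/\epsilon)+d')(t+2)^{r+1}$; the extra $(t+2)$ factor beyond $(t+2)^{r}$ is exactly the slack coming from the gap between the $(t+1)$ paid and the $(t+2)$ available per level. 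The output length contracts by the factor $0.9/t$ per level, giving $m'=(0.9/t)^r\big(m-c_{\ref{thm:t-nipm}}\ell(t+1)r\log(m/\epsilon)\big)$, and the error is multiplied by $O(\ell)$ per level; since $\ell^{r}=\Theta(L)$ by the definition of $r$, these accumulate to $O(L\epsilon)$, and keeping the per-level error parameters in lockstep gives exactly $2c_{\ref{thm:t-nipm}}'L\epsilon$.

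The main obstacle is keeping the three coupled quantities in line down the recursion --- the closeness-to-uniform error of the intermediate rows, the entropy deficiency of the ever-shorter seed, and the row length $m_1$ (which must stay at least as large as the seed slices used deeper in the recursion) --- and, above all, arguing that the ``exactly one good row'' structure is genuinely transferred layer by layer: the good row of $\X$ must fall inside a single $\ell$-block so that Theorem~\ref{thm:t-nipm} certifies the corresponding merged row as good, while \emph{every} other merged row must simultaneously be certified near-uniform so that the hypotheses of the recursive call hold. The routine parts --- flagged as such in the paper --- are the repeated moves between average conditional min-entropy (what conditioning on short strings actually yields, via Lemma~\ref{lem:entropy_loss}) and the worst-case min-entropy demanded in the hypotheses, and the reduction from an arbitrary $L$ to a power of $\ell$.
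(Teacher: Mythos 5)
Your proposal is correct and follows essentially the same route as the paper: partition the $L$ rows into blocks of $\ell$, merge each block with the $(\ell,t)$-$\nipm$ of Theorem~\ref{thm:t-nipm} using a short slice of $\Y$, and iterate for $r$ levels with geometrically growing seed slices, tracking per-level the row-uniformity error, the good-row property, the independence of sources and seeds, and the seed's residual (average conditional) min-entropy. The only cosmetic difference is that you recurse on the suffix of $\Y$ with an explicit deficiency parameter, whereas the paper iterates using nested prefixes $\slice(\Y,d_i)$ with $d_i=(t+2)d_{i-1}$ and proves the invariants by induction on the level (Claim~\ref{rec_claim}); the parameter recurrences and error accounting are the same.
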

\begin{proof} We set up parameters and ingredients required in our construction.
\begin{itemize}
\item For $i \in [r]$, let $L_i = \lceil \frac{L}{\ell^{i}} \rceil$.
\item Let $d_1=d'+\log(1/\epsilon)+c_{\ref{thm:t-nipm}} (t+1) \ell \log(m/\epsilon)$. For $i \in [r]$, let $d_i= (t+2)d_{i-1}$.
\item Let $m_0=m$. For $i \in [r]$, define $m_i=0.9^i(m-ic_{\ref{thm:t-nipm}}(t+1)\ell \log(m/\epsilon))$
\item For each $i \in [r]$, let $(\ell,t)$-$\nipm_i:\{ 0,1\}^{\ell m_i} \times \zo^{d_i} \rightarrow \zo^{m_{i+1}}$ be an instantiation of the function from Theorem $\ref{thm:t-nipm}$ with error parameter $\epsilon$.
\end{itemize}
\RestyleAlgo{boxruled}
\LinesNumbered
\begin{algorithm}[ht]\label{alg1}
  \caption{$(L,\ell,t)\text{-}\nipm(x,y)$\label{alg}  \vspace{0.1cm}\newline \textbf{Input:}  $x$ is a boolean $L \times m$ matrix, and $y$ is a bit string of length $d$. \newline \textbf{Output:} A bit string of length $m_r$. }
  Let $x[0]=x$.
  
\For{$i=1$ to $r$} { 
		Let $y[i] = \slice(y,d_i)$
		
 		Let $x[i]$ be a $L_i \times m_i$ matrix, whose $j$'th row $x[i]_j = (\ell,t)\text{-}\nipm_i(x[i-1]_{[(j-1)\ell+1,j\ell]}, y[i])$
 	}

Ouput $x[r]$.
\end{algorithm}
We prove the following claim from which it is direct that the function $(L,\ell,t)$-$\nipm$ computed by Algorithm $\ref{alg1}$ satisfies the conclusion of Theorem $\ref{rec_thm}$. Let $\epsilon_0=\epsilon$, and for $i \in [r]$, let $\epsilon_i =\ell \epsilon_{i-1} +  c_{\ref{thm:t-nipm}}'\ell \epsilon$.
\begin{claim}\label{rec_claim} For all $ i \in [r]$, conditioned on the r.v's $\{ \Y[j]: j \in [i]\}, \{\Y^{g}[j]:  j \in [i], g \in [t] \}$, the following hold:
\begin{itemize}
\item $\X[i],\X^1[i],\ldots,\X^t[i]$ are r.v's, each  supported on boolean $L_i \times m_i$ matrices s.t for any $j \in [L_i]$, $|\X[i]_j - \U_{m_i}| \le (c'_{\ref{thm:t-nipm}} \ell)^i \epsilon$,
\item $\{\Y,\Y^1,\ldots,\Y^t\}$ is independent of $\{ \X[i],\X[i]^1,\ldots,\X[i]^{t}\}$.
\item there exists an $h_i \in [L_i]$ such that $\X[i]_h| \{ \X[i]^1_h,\ldots,\X[i]^t_h\}$ is $\epsilon_i$-close to $\U_{m_i}$ on average,
\item  $\Y$ has average conditional min-entropy at least $d - d_{i+1}+c_{\ref{thm:t-nipm}}(t+1)\ell \log(m/\epsilon)$.
\end{itemize}
\end{claim}
\begin{proof} We prove this claim by an induction on $i$. The base case, when $i=0$, is direct. Thus suppose $i \ge 1$. Fix the $r.v$'s $\{ \Y[j]: j \in [i-1]\}, \{\Y^{g}[j]:  j \in [i-1], g \in [t] \}$. By inductive hypothesis, it follows that 
\begin{itemize}
\item $\X[i-1],\X^1[i-1],\ldots,\X^t[i-1]$ are r.v's each  supported on boolean $L_{i-1} \times m_{i-1}$ matrices s.t for any $j \in [L_{i-1}]$, $|\X[i-1]_j - \U_{m-1}| \le (c_{\ref{thm:t-nipm}}'\ell)^{i-1} \epsilon$,
\item $\{\Y[i-1],\Y^1[i-1],\ldots,\Y^t[i-1]\}$ is independent of $\{ \X[i-1],\X[i-1]^1,\ldots,\X[i-1]^{t}\}$.
\item $h_i \in [L_i]$  such that $\X[i-1]_h| \{ \X[i-1]^1_h,\ldots,\X[i-1]_h^t\}$ is $\epsilon_{i-1}$-close to $\U_{m_{i-1}}$ on average,
\item  $\Y$ has average conditional min-entropy at least $d - d_{i}+c_{\ref{thm:t-nipm}}(t+1)\ell \log(m/\epsilon)$.
\end{itemize}
Thus the r.v $\Y[i]=\slice(\Y,d_i)$ has average conditional min-entropy at least  $c_{\ref{thm:t-nipm}}(t+1)\ell \log(n/\epsilon)$. Let $h_{i} \in [\ell(h_i-1)+1,\ell h_i]$, for some $h_i \in [L_i]$. By Claim $\ref{cl:t-indep1}$, it follows that conditioned on the r.v's $\Y[i], \{\Y^{g}[i]: g \in [t] \}$, for any $j \in [L_i]$,   $|\X[i]_j - \U_m| \le   \ell \epsilon_{i-1}+ c_{\ref{thm:t-nipm}}'\ell \epsilon=\epsilon_i$.  

Further, using Theorem $\ref{thm:t-nipm}$,  conditioned on $\Y[i], \{\Y^{g}[i]: g \in [t] \}, \{ \X^g[i]_{h_i}: g \in [t]\}$, the r.v $\X[i]_{h_i}$ is  $\ell \epsilon_{i-1}+ c_{\ref{thm:t-nipm}}'\ell \epsilon$-close to uniform on average. 

Thus, we fix the r.v's $\Y[i], \{\Y^{g}[i]: g \in [t] \}$, and note that  $\Y$ still has average conditional min-entropy at least $d -d_{i}+c_{\ref{thm:t-nipm}}(t+1)\ell \log(m/\epsilon)- (t+1)d_i \ge  d - d_{i+1}+c_{\ref{thm:t-nipm}}(t+1)\ell \log(m/\epsilon)$. This completes the proof of the inductive step, and the theorem follows.
\end{proof}
\end{proof}

\subsection{An Independence Preserving Merger Using a Weak Source}\label{sec_indep_weak}

In this section, we show a way of using the $(L,\ell,t)$-$\nipm$ constructed in the Section $\ref{sec_nm_rec}$ to merge the r.v's $\X,\X^1,\ldots,\X^{t}$, each supported on boolean $L\times m$ matrices, with the guarantee that there is some $h \in [L]$ s.t $\X_h$ is uniform on average conditioned on $\{ \X^g_h: g \in [t]\}$ using an independent $(n,k)$-source $\Y$ (instead of a seed as in the previous section). We note that our construction provides a direct improvement in terms of parameters over \cite{Coh16b}, and further uses just $1$ independent  source. In Section $\ref{multi_source}$, we use this new merger to improve upon the results on multi-source extractors obtained in \cite{Coh16b}.

We use the following notation, as introduced before.

\textbf{Notation:} For an $a \times b$ matrix $\V$, and any $S \subseteq [a]$, let $\V_{S}$ denote the matrix obtained by restricting $\V$ to the rows indexed by $S$.

Our main result in this section is the following theorem. We reuse the constants $c_{\ref{thm:t-nipm}}, c'_{\ref{thm:t-nipm}}$ from Theorem ${\ref{thm:t-nipm}}$.
\begin{thm}\label{rec_thm_weak}For all integers $m,\ell,L,t>0$, any $\epsilon>0$, $r=\lceil\frac{\log L}{\log \ell} \rceil$ and any $k\ge 2c_{\ref{thm:t-nipm}} \ell \log(m/\epsilon) (t+2)^{r+2}$,   there exists an explicit function $(L,\ell,t)$-$\ipm:\zo^{mL}\times \zo^{n} \rightarrow \zo^{m''}$, $m''=(0.9/t)^{r+1}(m-c_{\ref{thm:t-nipm}}\ell (t+1)r \log(m/\epsilon)- c_{\ref{guv}}(t+2)\log(n/\epsilon)))$,   such that if the following conditions hold:

\begin{itemize}
\item $\X,\X^1,\ldots,\X^{t}$ are r.v's, each supported on boolean $L\times m$ matrices s.t for any $i \in [L]$, $|\X_i - \U_m| \le \epsilon$,
\item $\Y$ is an $(n,k)$-source, independent of $\{ \X,\X^1,\ldots,\X^{t}\}$.
\item there exists an $h \in [\ell]$ such that $|(\X_h,  \X_h^1,\ldots,\X_h^t)-(\U_{m},\X_h^1,\ldots,\X_h^t)| \le \epsilon$,
\end{itemize}
then 
\begin{align*}
|(L,\ell,t)\text{-}\ipm(\X,\Y), (L,\ell,t)\text{-}\ipm(\X^1,\Y),\ldots, (L,\ell,t)\text{-}\nipm(\X^t,\Y)\\ -\U_{m''},  (L,\ell,t)\text{-}\ipm(\X^1,\Y),\ldots, (L,\ell,t)\text{-}\ipm(\X^t,\Y)| \le 3 c_{\ref{thm:t-nipm}}' L\epsilon.
\end{align*}
\end{thm}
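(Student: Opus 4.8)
The plan is to reduce the weak-seed case to the high-min-entropy-rate-seed case already handled by the recursive $(L,\ell,t)$-$\nipm$ of Theorem~\ref{rec_thm}, by two preliminary ``purification'' steps — first purifying the source $\Y$ into a near-uniform seed using the rows of $\X$, then re-randomizing $\X$ using that seed — and only then invoking Theorem~\ref{rec_thm}. Throughout I would do all the entropy bookkeeping with average conditional min-entropy and Lemma~\ref{lem:entropy_loss}, exactly as in the proofs of Theorems~\ref{thm:nipm} and~\ref{thm:t-nipm}, treating each $\X_i$ as exactly uniform and adding back the error $\epsilon L$ at the end. \textbf{Step 1 (purify the seed).} Since every row of $\X$ is uniform, take a short slice $\W=\slice(\X_1,w)$ with $w=\Theta(\log(n/\epsilon))$ and set $\Z=\Ext_a(\Y,\W)$ for a strong seeded extractor $\Ext_a$ from Theorem~\ref{guv} extracting from min-entropy $k/2$ with output length $(1-\alpha)k$; the tampered objects are $\W^g=\slice(\X^g_1,w)$ and $\Z^g=\Ext_a(\Y,\W^g)$. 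Fixing $\W$ and all $\W^g$: strongness of $\Ext_a$ keeps $\Z$ $\epsilon$-close to uniform; since $\W,\W^g$ are deterministic functions of the $\X$'s while $\Y$ is untouched, $\{\X,\X^1,\dots,\X^t\}$ is again independent of $\{\Z,\Z^1,\dots,\Z^t\}$; and since only $(t+1)w$ bits living inside row $\X_1$ are fixed, every row of $\X$ retains min-entropy $\ge m-w$ and the good row $\X_h$ remains $\epsilon$-close to uniform given $\X^1_h,\dots,\X^t_h$.

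\textbf{Step 2 (re-randomize the matrix).} Now $\Z$ is near-uniform but $\X_1$ has lost entropy, which Theorem~\ref{rec_thm} forbids. Take a short slice $\V=\slice(\Z,v)$ with $v=\Theta(\log(m/\epsilon))$ (and $\V^g=\slice(\Z^g,v)$) and re-extract each row, $\overline{\X}_i=\Ext_b(\X_i,\V)$ and $\overline{\X}^g_i=\Ext_b(\X^g_i,\V^g)$, for a strong seeded extractor $\Ext_b$ extracting from min-entropy $m/2$. Fix $\V$ and all $\V^g$: these are deterministic functions of $\Y$, so $\{\overline{\X},\overline{\X}^g\}$ stays independent of $\{\Z,\Z^g\}$; $\Z$ loses at most $(t+1)v$ bits, retaining min-entropy $\ge(1-\alpha)k-(t+1)v$; and every $\overline{\X}_i$ is $O(\epsilon)$-close to uniform since $\X_i$ has min-entropy $\ge m-w$ and $\V$ is near-uniform. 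For the good row I would invoke the strong-extractor observation from the introduction in its $t$-tampering form: after additionally fixing the tampered rows $\X^1_h,\dots,\X^t_h$, the row $\X_h$ still has min-entropy $\approx m-w$ while $\V$ (a function of the untouched $\Y$) is still near-uniform and independent of $\X_h$, so $\overline{\X}_h=\Ext_b(\X_h,\V)$ is near-uniform given $\V$, hence given $\overline{\X}^1_h,\dots,\overline{\X}^t_h$, which after these fixings are deterministic functions of $\V^1,\dots,\V^t$.

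\textbf{Step 3 (invoke the recursive $\nipm$).} We are now exactly in the setting of Theorem~\ref{rec_thm}: matrices $\overline{\X},\overline{\X}^1,\dots,\overline{\X}^t$ with all rows near-uniform and a distinguished good row, together with a seed $\slice(\Z,d)$ of length $d=(c_{\ref{thm:t-nipm}}\ell\log(m/\epsilon)+d')(t+2)^{r+1}$ and its $t$ tampered versions, everything independent of the matrices; here passing from average to worst-case min-entropy costs $\log(1/\epsilon)$ which we fold into $d'$, so $\slice(\Z,d)$ has worst-case min-entropy $\ge d-d'$. The hypothesis $k\ge 2c_{\ref{thm:t-nipm}}\ell\log(m/\epsilon)(t+2)^{r+2}$ is precisely what makes $(1-\alpha)k-(t+1)v$ large enough to contain such a prefix with the required deficiency. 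Applying Theorem~\ref{rec_thm} and composing: the two conversions and the $L$ re-extractions add $O(L\epsilon)$, Theorem~\ref{rec_thm} adds $2c'_{\ref{thm:t-nipm}}L\epsilon$, for a total $\le 3c'_{\ref{thm:t-nipm}}L\epsilon$; and the length losses (the slices $w$ and $v$, the $\Ext_b$ loss, and the recursive-$\nipm$ loss) compose to the claimed $m''$.

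\textbf{Main obstacle.} The delicate point is Step~2: showing the good-row property survives re-extraction against a slice of $\Z$. One must order the conditionings so that, at the instant strongness of $\Ext_b$ is used, $\X_h$ carries full entropy relative to \emph{all} $t$ tampered rows simultaneously and $\V$ is still near-uniform and independent of it; this forces fixing $\V,\V^g$ before the $\X^g_h$, and careful tracking that no fixing ever secretly correlates $\Y$ with the $\X$'s — which is exactly what preserves the overall independence. A secondary nuisance is the parameter accounting: $\Z$ must simultaneously be long enough and high-rate enough to feed Theorem~\ref{rec_thm} even after losing $\V$ and its $t$ tampered copies, which is the source of the extra $(t+2)$ factor in the hypothesis on $k$ relative to the seed-length requirement of the recursive $\nipm$.
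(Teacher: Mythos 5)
Your proposal is correct and follows essentially the same route as the paper's proof: the paper's Algorithm 2 is exactly your two purification steps (a slice $\W$ of $\X_1$ seeds a strong extractor applied to $\Y$ to get $\Z$, then a slice $\V$ of $\Z$ re-extracts every row of $\X$) followed by an invocation of Theorem~\ref{rec_thm} on the re-extracted matrix with seed $\Z$, and the paper's Claims~\ref{weak_c_1}--\ref{weak_c_3} carry out the same conditioning order and good-row argument (fixing the tampered rows $\X_h^g$ before invoking strongness of the row-extractor) that you identify as the delicate point. The error and parameter accounting you sketch also match the paper's.
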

\begin{proof} We set up parameters and ingredients required in our construction.
\begin{itemize}
\item Let $d=0.8k, d'=c_{\ref{guv}}\log(m/\epsilon), d_1=c_{\ref{guv}}\log(n/\epsilon)$.
\item Let $\Ext_1:\zo^{n} \times \zo^{d_1} \rightarrow \zo^d$ be a $(k,\epsilon)$-strong-seeded extractor from Theorem $\ref{guv}$.
\item Let $\Ext_2:\zo^{m} \times \zo^{d'} \rightarrow \zo^{m'}$, $m'=0.9(m-c_{\ref{guv}}(t+1)\log(n/\epsilon))$, be a $(m-c_{\ref{guv}}(t+1)\log(n/\epsilon),\epsilon)$-strong-seeded extractor from Theorem $\ref{guv}$.
\item Let $(L,\ell,t)$-$\nipm:\{ 0,1\}^{L m'} \times \zo^{d} \rightarrow \zo^{m''}$ be the function from Theorem $\ref{rec_thm}$ with error parameter $\epsilon$.
\end{itemize}
\RestyleAlgo{boxruled}
\LinesNumbered
\begin{algorithm}[ht]\label{alg2}
  \caption{$(L,\ell,t)\text{-}\ipm(x,y)$\label{alg}  \vspace{0.1cm}\newline \textbf{Input:}  $x$ is a boolean $L \times m$ matrix, and $y$ is a bit string of length $n$. \newline \textbf{Output:} A bit string of length $m''$. }
  Let $w=\slice(x_1,d_1)$
  
  Let $z=\Ext_1(y,w)$.
  
  Let $v=\slice(z,d')$.
  
  Let $\overline{v}$ be a $L \times m'$-matrix, whose $i$'th row is given by $\overline{v}_i=\Ext_2(x_i,v)$.
  
  Output $\overline{z}=(L,\ell,t)\text{-}\nipm(\overline{v},z)$.
\end{algorithm}
We begin by proving the following claim.
\begin{claim}\label{weak_c_1}Conditioned on $\W, \{\W^g: g \in [t]\}$, the following hold:
\begin{itemize}
\item $\Z$ is $\epsilon$-close to $\U_d$,
\item $\Z,\{\Z^g: g \in [t] \}$ is independent of $\X, \{ \X^g:g \in [t] \}$,
\item For each $i\in [L]$, $\X_i$ has average conditional min-entropy at least $m-(t+2)\log(n/\epsilon)$,
\item  $\X_h| \{\X_h^g: g \in [t] \}$ has average conditional min-entropy at least $m-(t+2)d_1\log(n/\epsilon)$.
\end{itemize}
\end{claim}
\begin{proof} Since $\Ext_1$ is a strong extractor, we can fix $\W$, and $\Z$ is $\epsilon$-close to $\U_d$ on average. Further, $\Z$ is now a deterministic function of $\X_1$. Thus, we can fix $\{\W^1,\ldots,\W^t\}$, without affecting the distribution of $\Z$. Since $\W^i$ is on $d_1$ bits, and without any prior conditioning since $\X| \{\X_h^g: g \in [t] \}$ is $\epsilon$-close to uniform on average, it follows that  conditioned on $\{\X_h^g: g \in [t] \},\W,\{ \W^g: g \in [t]\}$, the r.v $\X_h$ has average conditional min-entropy $m-(t+1)d_1\log(n/\epsilon)-\log(1/\epsilon)$.
\end{proof}

\begin{claim}\label{weak_c_2}Conditioned on $\W, \{\W^g: g \in [t]\}, \V,\{ \V^g:g \in [t]\}$, the following hold:
\begin{itemize}
\item $\{\Z,\Z^1,\ldots,\Z^t\}$ is independent of $\{\X,  \X^1,\ldots,\X^t\}$,
\item $\{ \overline{\V},\overline{\V}^1,\ldots,\overline{\V}^t\}$ is a deterministic function of $\{\X,\X^1,\ldots,\X^t\}$,
\item For each $i\in [L]$, $\overline{\V}_i$ is $2\epsilon$-close to uniform,
\item  $\overline{\V}_h| \{\overline{\V}_h^g: g \in [t] \}$ is $2\epsilon$-close to uniform on average.
\item  $\Z$ has average conditional min-entropy at least $d-(t+2)\log(m/\epsilon)$.
\end{itemize}
\end{claim}
\begin{proof} Fix $\W,\{ \W^{g}:g \in [t]\}$. Thus, by Claim $\ref{weak_c_1}$, we have 
\begin{itemize}
\item $\Z$ is $\epsilon$-close to $\U_d$,
\item $\Z,\{\Z^g: g \in [t] \}$ is independent of $\X, \{ \X^g:g \in [t] \}$,
\item For each $i\in [L]$, $\X_i$ has average conditional min-entropy at least $ m-(t+2)\log(n/\epsilon)$,
\item  $\X_h| \{\X_h^g: g \in [t] \}$ has average conditional min-entropy at least $m-(t+2)\log(n/\epsilon)$.
\end{itemize}

Since each $\X_i$ has  average conditional min-entropy at least $m-(t+2)\log(n/\epsilon)$, it follows that each $\overline{\V}_i$ is $2\epsilon$-close to uniform and  $\Ext_2$ is a strong extractor, it follows that $\overline{\V}_i$ is $2\epsilon$-close to $\U_d$ on average even conditioned on $\{\V,\V^1,\ldots,\V^t\}$.  After this fixing, $\Z$ has average conditional min-entropy at least $d-(t+2)\log(n/\epsilon)$. 

We now prove that $\overline{\V}_h| \{\overline{\V}_h^g: g \in [t] \}$ is $2\epsilon$-close to uniform on average. First, we fix the r.v's $\W,\{ \W^{g}:g \in [t]\}$ (at this point no other r.v's are fixed). As before, we have 
 $\X_h| \{\X_h^g: g \in [t] \}$ has average conditional min-entropy $k_x \ge m-(t+2)\log(n/\epsilon)$. Thus, we fix $\{\X_h^g: g \in [t] \}$. Now since $\Ext_2$ is a strong extractor, $\overline{\V}_h$ is uniform on average even conditioned on $\V$. We fix $\V$, and thus $\overline{\V}_h$ is a deterministic function of $\X_h$. Further, $\{\overline{\V}_h^g:g\in [t] \}$ is a deterministic function of $\{\V^g: g\in [t]\}$, and hence a deterministic function of $\Z,\{ \Z^g: g \in [t]\}$. Thus, we can fix $\{\overline{\V}_h^g:g\in [t] \}$ without affecting the distribution of $\overline{\V}_h$. This completes the proof of our claim.
\end{proof}

The correctness of the function $\ipm$ is direct from the next claim.
\begin{claim}\label{weak_c_3} Conditioned on $\{ \overline{\Z}^g: g \in [t]\}\}$, the r.v $\overline{\Z}$ is $3L\epsilon$-close to uniform on average.
\end{claim}
\begin{proof}Fix the r.v's $\W, \{\W^g: g \in [t]\}, \V,\{ \V^g:g \in [y]\}$. We observe that the following hold:
\begin{itemize}
\item $\Z,\{\Z^g: g \in [t] \}$ is independent of $\Y, \{ \Y^g:g \in [t] \}$,
\item For each $i\in [L]$, $\overline{\V}_i$ is $2\epsilon$-close to uniform,
\item  $\overline{\V}_h| \{\overline{\V}_h^g: g \in [t] \}$ is $2\epsilon$-close to uniform on average.
\item  $\Z$ has average conditional min-entropy at least $d-(t+2)\log(m/\epsilon)$.
\end{itemize}
The claim is now direct from Theorem $\ref{rec_thm}$ by observing that by our choice of parameters, the following hold:
\begin{itemize}
\item $d\ge (c_{\ref{thm:t-nipm}} \ell \log(m/\epsilon)+ d'' ) (t+2)^{r+1}$,  where $d''=(t+2)\log(m/\epsilon)$,
\item $\Z$ has average conditional min-entropy at least $d-d''$,
\item $m'' \le (0.9/t)^{r}(m'-c_{\ref{thm:t-nipm}}\ell (t+1)r \log(m/\epsilon))$.
\end{itemize}
This completes the proof of the claim, and hence Theorem $\ref{rec_thm_weak}$ follows.
\end{proof}
\end{proof}

\section{Explicit Almost-Optimal Non-Malleable Extractor}\label{sec:opt_nm}
We present an explicit construction of a non-malleable extractor with min-entropy requirement $k=(\log(n/\epsilon))^{1+o(1)}$ and seed-length $d=(\log(n/\epsilon))^{1+o(1)}$.  We also show a way of setting parameters that allows for $O(\log n)$ seed-length for large enough error. The following are the main results of this section.
\begin{thm}\label{main_nm_1} There exist a constant $ C_{\ref{main_nm_1}}>0$ s.t for all $n, k \in \mathbb{N}$ and any  $\epsilon>0$, with $k \ge \log(n/\epsilon)2^{C_{\ref{main_nm_1}}\sqrt{\log \log(n/\epsilon)}}$, there exists an explicit $(k,\epsilon)$-non-malleable extractor $\nmExt:\zo^n \times \zo^d \rightarrow \zo^{m}$, where $d= \log(n/\epsilon)2^{C_{\ref{main_nm_1}}\sqrt{\log \log(n/\epsilon)}}$ and $m= k/2^{\sqrt{\log \log (n/\epsilon)}}$.
\end{thm}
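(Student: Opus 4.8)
The plan is to follow the framework of \cite{CGL15} for turning an advice-generating scheme plus an independence-preserving merger into a non-malleable extractor, but with the efficient $\nipm$ from Theorem~\ref{rec_thm} (instantiated at $t=1$) replacing the alternating-extraction-based merging that forced the $\log^2(1/\eps)$ barrier. Concretely, given the $(n,k)$-source $\X$, the uniform seed $\Y$, and a tampering function $\A$ with no fixed points, write $\Y'=\A(\Y)$. First I would use a correlation-breaking advice step: take a small slice of $\Y$, apply a seeded extractor to $\X$ to get a short block, and combine it (via an error-correcting code on $\Y$, as in \cite{CGL15}) to produce an advice string $S=S(\X,\Y)\in\zo^L$ with $L=O(\log(n/\eps))$, such that with probability $1-O(\eps)$ over the fixing of these auxiliary random variables we have $S\neq S'$ (where $S'=S(\X,\Y')$), and moreover after these fixings $\X$ is still independent of $(\Y,\Y')$, $\X$ retains min-entropy $k-O(\log(n/\eps))$, and $\Y$ has min-entropy rate $1-o(1)$.

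Next I would build the matrix $\V$ with $L$ rows as in the outline: reserve a small slice $\Y_1=\slice(\Y,d_1)$ of $\Y$, and for each $i\in[L]$ let $\V_i$ be the output of a flip-flop/look-ahead correlation breaker (from \cite{Coh15}, or the $\laExt$ machinery of Section~\ref{basic_merger}) run on $\X$ and $\Y_1$ using the $i$-th bit $S_i$ as the "swap bit". Define $\V'$ analogously from $\X$ and $\Y_1'$. The guarantee is: every row $\V_i$ is $O(\eps)$-close to uniform, and for the index $h$ where $S_h\neq S_h'$ we get $|(\V_h,\V_h')-(\U_m,\V_h')|\le O(\eps)$; also, after fixing $(\Y_1,\Y_1')$ the pair $(\V,\V')$ is a deterministic function of $\X$, hence independent of $(\Y,\Y')$, and $\Y$ still has high min-entropy rate. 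This is exactly the hypothesis of the $\nipm$. So finally I would set $\nmExt(\X,\Y)=(L,\ell,1)\text{-}\nipm(\V,\Y_{\text{rest}})$ with $\ell$ chosen so that $\log\ell\approx\sqrt{\log L}$ (giving $r=\lceil\log L/\log\ell\rceil=O(\sqrt{\log L})$ and seed length $2^{O(\sqrt{\log L})}\log(m/\eps)=2^{O(\sqrt{\log\log(n/\eps)})}\log(n/\eps)$), and apply Theorem~\ref{rec_thm} with $t=1$ to conclude that the output is $O(\eps L)$-close to uniform given the tampered output. Setting the internal error parameter to $\eps'=\eps/\poly(n/\eps)$ absorbs the $L=O(\log(n/\eps))$ blowup without changing the asymptotics of $d$ and $k$, and tracking the entropy loss through all the slices and extractor outputs gives output length $m=k/2^{\sqrt{\log\log(n/\eps)}}$.

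The bookkeeping to watch carefully: each step (advice generation, the $L$ flip-flop invocations, the $r$ recursive merging rounds) consumes entropy, and I need the total consumed from $\Y$ to stay below its budget and the total consumed from $\X$ to stay below $k$. The key point making this work where \cite{CGL15} did not is that in the recursive $\nipm$ the seed slices $\Y_1,\ldots,\Y_r$ have geometrically growing lengths $d_i=(t+2)^{i}d_1$ with $d_1=O(\ell\log(m/\eps))$, so the total is $2^{O(r)}\ell\log(m/\eps)$ rather than $O(L\log(m/\eps))$ — and with $\ell=2^{\sqrt{\log L}}$, $r=O(\sqrt{\log L})$ this is $2^{O(\sqrt{\log\log(n/\eps)})}\log(n/\eps)$. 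I also need the hypothesis of Theorem~\ref{rec_thm} that $\Y$ has min-entropy $\ge d-d'$ with $d'$ small; since after the earlier fixings $\Y$ has rate $1-o(1)$, I take $d$ to be slightly larger than the min-entropy of $\Y$ after fixings and absorb the deficiency into $d'$, which is fine as long as $d'\ll d_1$.

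The main obstacle is the correlation-breaking / advice step and its interaction with the merger: one must verify that the flip-flop step genuinely produces a matrix $\V$ with a \emph{single} good row $\V_h$ that is uniform conditioned on $\V_h'$ (the $\nipm$ only needs one good row, which is the whole point of the relaxation over \cite{Coh16b}), and that after fixing all the intermediate random variables generated in the advice and flip-flop steps, the independence $\X\perp(\Y,\Y')$ is preserved and the residual entropies of both $\X$ and $\Y$ are as claimed — this requires the standard but delicate "fix things in the right order" argument using Lemma~\ref{lem:entropy_loss} and Lemma~\ref{lemma:entropy_loss_1}, keeping track that every fixed random variable is a deterministic function of one side or the other. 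A secondary technical point is confirming that the $t=1$ specialization of Theorem~\ref{rec_thm} has its parameter constraints ($m\ge d\ge k_1>c\ell\log(m/\eps)$ at each recursion level, and $k\ge\log(n/\eps)2^{C\sqrt{\log\log(n/\eps)}}$ overall) all satisfiable simultaneously with the chosen $L$ and $\ell$; this is a routine but necessary calculation.
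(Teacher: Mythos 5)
Your proposal follows essentially the same route as the paper: the advice generator of \cite{CGL15}, the flip-flop correlation breaker of \cite{Coh15} applied bit-by-bit to produce an $L\times m$ matrix with one good row, and then the recursive $(L,\ell,1)$-$\nipm$ of Theorem~\ref{rec_thm} with $\ell=2^{\sqrt{\log L}}$, with the final error blowup of $O(\eps L)$ absorbed by shrinking the internal error parameter. The paper's Algorithm~3 additionally inserts a re-extraction step (using a slice of the first row to extract a fresh high-rate seed $\overline{\Y}$ from $\Y$, and re-extracting the rows with a slice of $\overline{\Y}$) so that the seed deficiency fed to the $\nipm$ is $O(\log(k/\eps))$ rather than $O(\log(n/\eps))$ — an optimization needed for Theorem~\ref{main_nm_2} but not for the parameters claimed here, where your direct feeding of the remaining seed into the $\nipm$ (with the deficiency absorbed into $d'$) already suffices.
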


\begin{thm}\label{main_nm_2}There exist a constant $C_{\ref{main_nm_2}}>0$ s.t for constant $\beta>0$ and all $n, k \in \mathbb{N}$ and any  $\epsilon>2^{-\log^{1-\beta}(n)}$, with $k \ge C_{\ref{main_nm_2}}\log n$, there exists an explicit $(k,\epsilon)$-non-malleable extractor $\nmExt:\zo^n \times \zo^d \rightarrow \zo^{m}$, where $d= O(\log n)$ and $m=\Omega(\log(1/\epsilon))$.
\end{thm}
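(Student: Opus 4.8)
The plan is to run the \cite{CGL15}-style template --- advice generation, then a flip-flop-generated matrix, then independence preserving merging --- but to feed the matrix into the $\nipm$ of Theorem~\ref{basic_rec_nipm} (equivalently Theorem~\ref{rec_thm} with $t=1$), and to choose all parameters so that the pipeline costs only $O(\log n)$ bits of seed and of min-entropy. Write $\Y'=\A(\Y)$ and split the seed as $\Y=\Y_1\circ\Y_2$ with $|\Y_1|=O(\log(n/\epsilon))$. From $\X$ and $\Y_1$, using the advice generator of \cite{CGL15}, produce a string $S$ of length $L=O(\log(n/\epsilon))$ so that, after fixing $\Y_1$ and $\Y_1'$, both $S$ and $S'=S(\X,\Y_1')$ become deterministic functions of $\X$ with $\Pr[S\neq S']\ge 1-\epsilon$, and $\X$ still has min-entropy $k-O(\log(n/\epsilon))$. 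Then, using $\X$, $\Y_1$, and the flip-flop correlation breaker of \cite{Coh15} driven by the bit $S_i$, build an $L\times m_0$ matrix $\V$ (and its tampered copy $\V'$) each of whose rows is $\epsilon$-close to uniform and whose row $h$, where $S_h\neq S_h'$, is close to uniform even given $\V_h'$; the key point is that generating all $L$ rows in parallel only requires $\X$ to have min-entropy $\gtrsim 2m_0+O(\log(n/\epsilon))$, not $L\cdot m_0$, since the rows may be mutually correlated and the merger will absorb that correlation. After fixing $\Y_1,\Y_1',\{S_i\}$, the pair $(\V,\V')$ is a deterministic function of $\X$ (hence independent of $(\Y_2,\Y_2')$), while $\Y_2$ retains min-entropy $|\Y_2|-O(\log(n/\epsilon))$. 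Finally apply $\nipm$ to $(\V,\Y_2)$ with tampered input $(\V',\Y_2')$ and output the result $\Z$; correctness reduces to checking the three hypotheses of Theorem~\ref{basic_rec_nipm} --- marginal near-uniformity of each row of $\V$, independence of $(\V,\V')$ from $(\Y_2,\Y_2')$, and the good-row property at $h$ --- which the advice generator and the flip-flop supply by construction.

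The parameter bookkeeping is where the hypothesis $\epsilon\ge 2^{-\log^{1-\beta}n}$ is used: it forces $\log(1/\epsilon)\le\log^{1-\beta}n$ and $\log(n/\epsilon)=\Theta(\log n)$, so $L=\Theta(\log n)$ and every additive $O(\log(n/\epsilon))$ term is $O(\log n)$. I would take the matrix rows short, $m_0=2^{O(\sqrt{\log\log n})}\bigl(\log(1/\epsilon)+\log\log n\bigr)=\log^{1-\beta+o(1)}n=o(\log n)$, just large enough that the merger, which outputs about $m_0/2^{\sqrt{\log L}}-2^{O(\sqrt{\log L})}\log(m_0/\epsilon)$ bits, still yields $\ge\log(1/\epsilon)$ bits. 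Then the merger's seed needs only $2^{O(\sqrt{\log L})}\log(m_0/\epsilon)=2^{O(\sqrt{\log\log n})}\cdot O(\log^{1-\beta}n)=o(\log n)$ bits; to stop the $O(\log(n/\epsilon))$ entropy-deficiency of $\Y_2$ (incurred when the tampered slice $\Y_1'$ is fixed) from multiplying this length, I would either re-extract a genuinely uniform seed from $\Y_2$ via a short slice of $\V_1$ exactly as in the proof of Theorem~\ref{rec_thm_weak}, or use the recursive merger of Theorem~\ref{rec_thm} with block length $\ell=(\log n)^{\Theta(\beta)}$, so that the number of levels $r=\lceil\log L/\log\ell\rceil=O(1/\beta)$ is constant and the $(t+2)^{r+1}$ factor is absorbed into a constant. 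Either way $|\Y_2|=O(\log n)$, the total seed is $O(\log n)$, and $\X$ is asked only for $O(\log(n/\epsilon))$ bits (advice) plus $\lesssim 2m_0=o(\log n)$ bits ($L$ flip-flops) plus $O(\log(n/\epsilon))$ more (seed recycling and fixing $S,S'$), so $k\ge C\log n$ suffices and the output is $\Omega(\log(1/\epsilon))$.

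Finally the error: each flip-flop contributes an additive $\epsilon$ and the $r$-level merger a multiplicative $\poly(L)$, so the overall error is $\poly(\log(n/\epsilon))\cdot\epsilon$; running every subroutine with error $\epsilon_0=\epsilon/\poly(\log(n/\epsilon))$ gives final error $\epsilon$ and only changes $\log(1/\epsilon_0)=\log(1/\epsilon)+O(\log\log(n/\epsilon))$, perturbing none of the above (at worst it shrinks $\beta$ by an arbitrarily small amount). The step I expect to be the real obstacle is closing the window for $m_0$ simultaneously against three constraints: $m_0$ large enough that the merger's multiplicative overhead of $2^{O(\sqrt{\log\log n})}$ still leaves $\ge\log(1/\epsilon)$ output bits, yet small enough that the source budget $k=\Theta(\log(n/\epsilon)+m_0)$ and the merger seed $2^{O(\sqrt{\log\log n})}\log(m_0/\epsilon)$ both remain $O(\log n)$. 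That window is nonempty exactly because $\epsilon\ge 2^{-\log^{1-\beta}n}$ keeps $\log(1/\epsilon)$ a polynomial factor below $\log n$, so that even after the $\log^{o(1)}n$ blow-up everything that matters is $\log^{1-\beta+o(1)}n=o(\log n)$; once $\epsilon<2^{-\log n}$ the identical computation forces seed and entropy $\log^{1+o(1)}n$, which is precisely Theorem~\ref{main_nm_1}. A secondary, more mechanical hazard is the entropy-deficiency slack just mentioned, which is why the construction routes $\Y_2$ through a freshly extracted uniform seed (or a constant-depth recursive merger) rather than feeding it to the merger directly.
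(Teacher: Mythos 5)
Your construction is, in substance, exactly the one the paper uses: the paper does not reprove anything for this theorem, it simply instantiates its general Theorem~\ref{nmext} with $\epsilon_1=\epsilon/(2C_{\ref{nmext}}\log n)$ and $k=2C_{\ref{nmext}}\log(n/\epsilon_1)$, and observes that the seed length there is $\log(k/\epsilon_1)2^{C\sqrt{\log\log(n/\epsilon_1)}}+C\log(n/\epsilon_1)$ --- the $2^{O(\sqrt{\log\log(n/\epsilon)})}$ blow-up multiplies only $\log(k/\epsilon_1)=O(\log\log n+\log(1/\epsilon_1))=O(\log^{1-\beta}n)$, not $\log(n/\epsilon_1)$, so the total is $O(\log n)$. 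That is precisely the window you identify, and your choices for handling the seed's entropy deficiency (re-extracting a fresh seed via a slice of $\V_1$) and for the error accounting match Algorithm~3 and the proof of Theorem~\ref{nmext}. Your only real deviation is taking the flip-flop rows of length $m_0=o(\log n)$ rather than $\Theta(k)$; both work, since the merger's seed depends only on $\log(\text{row length}/\epsilon)$.

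One step is wrong as literally written: you generate the advice $S$ from $\X$ and $\Y_1$ alone, and claim $\Pr[S\neq S']\ge 1-\epsilon$ after fixing $\Y_1,\Y_1'$. If the adversary tampers only $\Y_2$ (so $\Y_1'=\Y_1$), then $S=S'$ with probability $1$, no row of $\V$ is good, and the merger's hypothesis fails. The advice generator must be a function of the \emph{entire} seed, as in Theorem~\ref{adv_gen} of the paper ($\adv:\zo^n\times\zo^d\to\zo^L$); consequently $S,S'$ do not become deterministic functions of $\X$ merely by fixing $\Y_1,\Y_1'$. The paper handles this by conditioning directly on $(\W,\W',f(\X))$, after which $\X$ remains independent of $(\Y,\Y')$ and both retain all but $O(\log(n/\epsilon))$ of their entropy; with that substitution the rest of your argument, including all the parameter bookkeeping, goes through.
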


We derive both the above theorems from the following theorem.
\begin{thm}\label{nmext} There exist constants $\delta_{\ref{nmext}}, C_{\ref{nmext}}>0$ s.t for all $n, k \in \mathbb{N}$ and any error parameter $\epsilon_1>0$, with $k \ge \log(k/\epsilon_1)2^{C_{\ref{nmext}}\sqrt{\log \log(n/\epsilon_1)}}+C_{\ref{nmext}}\log(n/\epsilon_1)$, there exists an explicit $(k,\epsilon')$-non-malleable extractor $\nmExt:\zo^n \times \zo^d \rightarrow \zo^{m}$, where $d= \log(k/\epsilon)2^{C_{\ref{nmext}}\sqrt{\log \log(n/\epsilon_1)}}+C_{\ref{nmext}}\log(n/\epsilon_1), m=\delta_{\ref{nmext}} k/2^{\sqrt{\log \log (n/\epsilon_1)}}$ and $\epsilon'=C_{\ref{nmext}}\epsilon_1\log(n/\epsilon_1)$.
\end{thm}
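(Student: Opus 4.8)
The plan is to place the construction in the advice-generator framework of \cite{CGL15}, but to replace the sequential ``correlation breaker with advice'' used there (whose chain of alternating extractions is exactly what forces min-entropy $\Omega(\log^2(n/\epsilon))$) by a \emph{parallel} gadget: from a short advice string we form a tall matrix all of whose rows are near-uniform and for which one distinguished row is near-uniform even given its tampered copy, and we then collapse this matrix with the weak-seed independence preserving merger of Theorem~\ref{rec_thm_weak}. Fix an $(n,k)$-source $\X$, an independent seed $\Y\sim\U_d$, a tampering function $\A$ with no fixed points, and write $\Y'=\A(\Y)$. We split $\Y$ into three disjoint slices $\Y_0,\Y_1,\Y_2$ (with tampered counterparts $\Y_0',\Y_1',\Y_2'$), where $|\Y_0|=|\Y_1|=O(\log(n/\epsilon_1))$ and $|\Y_2|=O(\log(n/\epsilon_1))+2^{O(\sqrt{\log\log(n/\epsilon_1)})}\log(k/\epsilon_1)$.

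\textbf{Steps.} First I would run an advice generator $\adv:\zo^n\times\zo^{|\Y_0|}\to\zo^L$ as in \cite{CGL15} with $L=\Theta(\log(n/\epsilon_1))$ on $(\X,\Y_0)$ to get $\s=\adv(\X,\Y_0)$ and $\s'=\adv(\X,\Y_0')$; with probability $\ge 1-\epsilon_1$ we have $\s\neq\s'$, and after conditioning on $\s,\s'$ and the $O(\log(n/\epsilon_1))$ internal bits of $\adv$, $\X$ retains min-entropy $\ge k-O(\log(n/\epsilon_1))$ and is independent of $(\Y,\Y')$. Next, using $\X$ and $\Y_1$, I would build an $L\times m$ matrix $\V$ whose $i$-th row is $\V_i=\flip(\X,\Y_1,\s_i)$, the flip-flop of \cite{Coh15} controlled by the single bit $\s_i$, and build $\V'$ analogously from $\X,\Y_1',\s'$. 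Taking $m=\Theta(k)$, the flip-flop guarantees (using only that $\X$ has min-entropy $\ge k-O(\log(n/\epsilon_1))$) that every $\V_i$ is $O(\epsilon_1)$-close to $\U_m$ and that whenever $\s_i\neq\s_i'$ the row $\V_i$ is $O(\epsilon_1)$-close to uniform even given $\V_i'$. After fixing $\Y_1,\Y_1'$ and $\s'$, the pair $(\V,\V')$ becomes a deterministic function of $\X$, hence independent of the residual seed $\Y_2$, which has lost at most $O(\log(n/\epsilon_1))$ bits of min-entropy. Finally, since $\s\neq\s'$ there is a row $h$ with $\s_h\neq\s_h'$, so $(\V,\V',\Y_2)$ meets the hypotheses of the $(L,\ell,1)$-$\ipm$ of Theorem~\ref{rec_thm_weak} with $\ell$ chosen so that $\log\ell\approx\sqrt{\log L}$; I would output $\nmExt(\X,\Y):=(L,\ell,1)$-$\ipm(\V,\Y_2)$. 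By construction $\nmExt(\X,\A(\Y))=(L,\ell,1)$-$\ipm(\V',\Y_2')$, and Theorem~\ref{rec_thm_weak} gives that $\nmExt(\X,\Y)$ is $O(L\epsilon_1)$-close to uniform even given $\nmExt(\X,\A(\Y))$. Undoing the conditionings (and absorbing the $\le\epsilon_1$ event $\s=\s'$) and marginalising out $\Y'$ yields the non-malleable guarantee.

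\textbf{Parameters.} With $L=\Theta(\log(n/\epsilon_1))$ and $\log\ell\approx\sqrt{\log L}$ one gets $r=\lceil\log L/\log\ell\rceil=\Theta(\sqrt{\log\log(n/\epsilon_1)})$, $\ell=2^{\Theta(\sqrt{\log\log(n/\epsilon_1)})}$, and (for $t=1$) $(t+2)^{r+2}=2^{\Theta(\sqrt{\log\log(n/\epsilon_1)})}$. Theorem~\ref{rec_thm_weak} then asks the residual seed $\Y_2$ to carry min-entropy $2c_{\ref{thm:t-nipm}}\ell\log(m/\epsilon_1)(t+2)^{r+2}=2^{O(\sqrt{\log\log(n/\epsilon_1)})}\log(k/\epsilon_1)$; since preprocessing costs $\Y_2$ only $O(\log(n/\epsilon_1))$ bits, the choice of $|\Y_2|$ above suffices, so $d=O(\log(n/\epsilon_1))+2^{O(\sqrt{\log\log(n/\epsilon_1)})}\log(k/\epsilon_1)$, and the same bound is what we need on $k$ so that $m=\Theta(k)$ exceeds the losses in Theorem~\ref{rec_thm_weak}. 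The output length there is $(0.9)^{r+1}$ times essentially $k$, which is $\ge\delta_{\ref{nmext}}k/2^{\sqrt{\log\log(n/\epsilon_1)}}$ after truncation, and the error accumulates over the $O(\log(n/\epsilon_1))$ rows to $\epsilon'=O(\epsilon_1\log(n/\epsilon_1))$, matching the statement. Theorems~\ref{main_nm_1} and~\ref{main_nm_2} then follow by taking $\epsilon_1=\epsilon/\Theta(\log(n/\epsilon))$ (and, for the latter, by applying Theorem~\ref{nmext} with entropy parameter $\Theta(\log n)$ and using that $\log\log(n/\epsilon)=O(\log\log n)$ when $\epsilon\ge 2^{-\log^{1-\beta}n}$).

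\textbf{Main obstacle.} The conceptual engine --- the merger --- is available as the black box Theorem~\ref{rec_thm_weak}; the work is the bookkeeping that sets up a valid call to it. Two points are delicate. First, one must re-verify the flip-flop's two guarantees \emph{after} all the conditionings imposed by $\adv$ and by fixing $\Y_1,\Y_1',\s'$ --- in particular that the other rows of $\V$ stay near-uniform and that the good row $\V_h$ stays near-uniform given $\V_h'$ in the presence of the correlated tampered slice $\Y_1'$. Second, and this is precisely why we invoke the weak-seed IPM (Theorem~\ref{rec_thm_weak}) rather than the uniform-seed NIPM (Theorem~\ref{rec_thm}): the tampered advice $\s'$ and tampered slice $\Y_1'$ are deterministic functions of \emph{all} of $\Y$, so conditioning on them necessarily drains $\Omega(\log(n/\epsilon_1))$ bits from whatever part of $\Y$ feeds the merge, and the recursive NIPM would multiply that deficiency by $(t+2)^{r+1}=2^{\Theta(\sqrt{\log\log(n/\epsilon_1)})}$, whereas the weak-seed IPM merely tolerates a weak seed, keeping the $2^{O(\sqrt{\log\log(n/\epsilon_1)})}$ blow-up confined to the $\log(k/\epsilon_1)$ term and leaving an additive $O(\log(n/\epsilon_1))$ --- exactly the shape of the claimed seed length.
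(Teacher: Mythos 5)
Your overall architecture (advice generator, flip-flop matrix, then a merge step) matches the paper's Algorithm~\ref{alg3}, and you correctly diagnose the key quantitative issue: feeding a seed with deficiency $\Omega(\log(n/\epsilon_1))$ into the recursive NIPM of Theorem~\ref{rec_thm} would multiply that deficiency by $(t+2)^{r+1}=2^{\Theta(\sqrt{\log\log(n/\epsilon_1)})}$ and ruin the seed length. But your fix does not work as stated: Theorem~\ref{rec_thm_weak} is an $\ipm$, not an $\nipm$ --- its hypothesis and conclusion involve a \emph{single, untampered} source $\Y$ shared by all instances, i.e.\ it compares $\ipm(\X,\Y)$ against $\ipm(\X^1,\Y),\ldots,\ipm(\X^t,\Y)$. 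In the non-malleable setting the adversary tampers the seed, so the quantity you must control is $\ipm(\V',\Y_2')$ with $\Y_2'=\slice(\A(\Y),\cdot)$ a correlated tampered slice (which, note, depends on \emph{all} of $\Y$, not just on $\Y_2$). The sentence ``Theorem~\ref{rec_thm_weak} gives that $\nmExt(\X,\Y)$ is $O(L\epsilon_1)$-close to uniform even given $\nmExt(\X,\A(\Y))$'' is therefore not a consequence of the cited theorem; this is a genuine gap, not a bookkeeping omission.

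The paper's resolution is to keep the tamper-resilient merger (Theorem~\ref{rec_thm}) but to \emph{refresh the seed first} so that its deficiency entering the NIPM is only $O(\log(k/\epsilon_1))$: take a short slice $\overline{\V_1}$ of the first row of the flip-flop matrix (a deterministic function of $\X$ after the conditionings, and near-uniform), compute $\overline{\Y}=\Ext_1(\Y,\overline{\V_1})$ from the \emph{entire} remaining seed, re-extract the matrix rows as $\Z_i=\Ext_2(\V_i,\overline{\Y_1})$ with a slice $\overline{\Y_1}$ of $\overline{\Y}$, and only then call $(L,\ell,1)$-$\nipm(\Z,\overline{\Y})$. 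After fixing $\overline{\V_1},\overline{\V_1}',\overline{\Y_1},\overline{\Y_1}'$ (all of length $O(\log(k/\epsilon_1))$ or $O(\log(d/\epsilon_1))$), the pair $(\overline{\Y},\overline{\Y}')$ is independent of $(\Z,\Z')$ and $\overline{\Y}$ has deficiency $d''=O(\log(k/\epsilon_1))$, so Theorem~\ref{rec_thm} applies with tampered seeds and the $(t+2)^{r+1}$ factor multiplies only $\log(k/\epsilon_1)$, giving exactly the claimed $d=\log(k/\epsilon)2^{C\sqrt{\log\log(n/\epsilon_1)}}+C\log(n/\epsilon_1)$. This refreshing step (essentially inlining the weak-seed IPM's preprocessing but then invoking the tamper-resilient NIPM) is the idea missing from your write-up; if you instead re-proved the analogues of Claims~\ref{weak_c_1}--\ref{weak_c_3} in the presence of a tampered seed you would arrive at the same construction, but you cannot get it by citing Theorem~\ref{rec_thm_weak} as a black box.
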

We first show how to derive Theorem $\ref{main_nm_1}$ and Theorem $\ref{main_nm_2}$ from Theorem $\ref{nmext}$.
\begin{proof}[Proof of Theorem $\ref{main_nm_1}$] Let $\nmExt:\zo^n \times \zo^d \rightarrow \zo^m$ be the function from Theorem $\ref{nmext}$ set to extract from min-entropy $k$, where we set the parameter $\epsilon_1= \epsilon/2C_{\ref{nmext}}n$ . It follows that the error of $\nmExt$ is $$C_{\ref{nmext}}\epsilon_1\log(n/\epsilon_1) =\frac{\epsilon}{2n}(\log n + \log(2C_{\ref{nmext}}n)+\log(1/\epsilon)) <\epsilon.$$  Further note that for this setting of $\epsilon_1$, the min-entropy required and seed length are $\log(n/\epsilon)2^{C_{\ref{main_nm_1}}\sqrt{\log \log(n/\epsilon)}}+C_{\ref{main_nm_1}}\log(n/\epsilon)$, for some constant $C_{\ref{main_nm_1}}$.
\end{proof}
\begin{proof}[Proof of Theorem $\ref{main_nm_2}$]Let $\nmExt:\zo^n \times \zo^d \rightarrow \zo^m$ be the function from Theorem $\ref{nmext}$ set to extract from min-entropy  $2C_{\ref{nmext}}\log(n/\epsilon_1)$, where we set the parameter $\epsilon_1=\epsilon/2C_{\ref{nmext}} \log n$. Thus, the error of $\nmExt$ is  $$\epsilon_1\log(n/\epsilon_1) \le \frac{\epsilon}{2 \log n}(\log n+ \log(1/\epsilon)+\log(2C_{\ref{nmext}} \log n)) <\epsilon.$$ For this setting of parameters, we note that the seed-length required by  $\nmExt$ is bounded by $\log((\log^2 n)/\epsilon)2^{C_{\ref{nmext}}\sqrt{\log \log(n\log n/\epsilon)}}+C_{\ref{nmext}}\log(n\log n/\epsilon)=O(\log n)$.
\end{proof}

We spend the rest of the section proving Theorem $\ref{nmext}$. We recall some explicit constructions from previous work.

The following flip-flop function  was constructed by Cohen \cite{Coh15} using alternating extraction. Subsequently, Chattopadhyay, Goyal and Li \cite{CGL15}, used this  in constructing non-malleable extractors. Informally, the flip-flop function uses an independent source $\X$ to break the correlation between two r.v's $\Y$ and $\Y'$, given an advice bit. We now describe this more formally.
\begin{thm}[\cite{Coh15,CGL15}]\label{flip} There exist constants $C_{\ref{flip}},\delta_{\ref{flip}}>0$ such that for all $n>0$ and any $\epsilon>0$, there exists an explicit function $\flip:\zo^n \times \zo^d \rightarrow \zo^m$, $m=\delta_{\ref{flip}} k$,   satisfying the following: Let $\X$ be an $(n,k)$-source, and $\Y$ be an independent weak seed on $d$ bits with entropy $d-\la$, $\la < d/2$. Let $\Y'$ be a r.v on $d$ bits independent of $\X$,  and let $b,b'$ be bits s.t. $b \neq b'$. If $k,d \ge C_{\ref{flip}}\log(n/\epsilon)$,  then 
$$|\flip(\X,\Y,b),\flip(\X,\Y',b'),\Y,\Y' - \U_m,\flip(\X,\Y',b'),\Y,\Y'| \le \epsilon.$$
\end{thm}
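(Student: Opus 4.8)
This is a restatement of results of Cohen \cite{Coh15} and of Chattopadhyay, Goyal and Li \cite{CGL15}, so the plan is to recall their construction of the flip-flop gadget and its analysis; the only ingredients are the alternating-extraction (``look-ahead'') machinery already used for the $\nipm$ above and the optimal strong seeded extractor of Theorem \ref{guv}. Throughout, every variable we condition on will be a deterministic function either of $\X$ or of $(\Y, \Y')$, so independence of $\X$ from $(\Y,\Y')$ is preserved, and each such conditioning costs $\X$ only $O(\log(n/\eps))$ bits of min-entropy; this is why the hypotheses only ask $k, d \ge C_{\ref{flip}}\log(n/\eps)$.

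I would first preprocess the seed: since $\Y$ has min-entropy only $d - \la > d/2$, one strong-extraction step (extracting from $\X$ with a short piece of $\Y$, or vice versa) yields a working seed that is $O(\eps)$-close to uniform, after which $\X$ still has min-entropy $k - O(\log(n/\eps))$ and is independent of $(\Y,\Y')$. Next, define a one-sided flip $g(\X, \Y, b)$: run the alternating-extraction protocol between $\X$ (holding the $\rr$-seeds) and the working seed (holding the $\s$-seeds) to produce a depth-two look-ahead pair $(\A_0, \A_1)$, and output $\Ext(\X, \A_b)$ with a final strong extractor. The crux is the following security sub-lemma: run $g(\X, \Y, b)$ and its tampered copy $g(\X, \Y', b')$ with the \emph{same} source $\X$; if $b = 0$ and $b' = 1$, then $g(\X, \Y, b)$ is $O(\eps)$-close to uniform given $g(\X, \Y', b')$. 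One proves this by the usual look-ahead bookkeeping: interleave the two runs so that the $b=0$ output is produced strictly before the $b'=1$ output, fix in that order all the $\s$- and $\rr$-variables generated up to the $b=0$ output step (each a function of $\X$ or of $(\Y,\Y')$, each costing $O(\log(n/\eps))$ entropy), and observe that at that point $\A_0$ is close to uniform, is a function of $\Y$, and $\X$ still has high conditional min-entropy, so $\Ext(\X, \A_0)$ is close to uniform; everything on the tampered side needed to reconstruct $g(\X, \Y', b')$ is by then a function of the already-fixed variables and a short additional function of $\X$, leaving enough entropy for a strong extractor to still output uniform bits. By symmetry, the conclusion also holds when $b=1$ and $b'=0$.

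Finally, I would compose two flips with complemented advice. As $b, b'$ are single bits with $b \neq b'$ we have $\{b, b'\} = \{0,1\}$, but the construction cannot tell which party holds which, so set
\[
\flip(\X, \Y, b) \;=\; g\bigl(\X,\ g(\X, \Y, b),\ 1-b\bigr),
\]
a flip controlled by $b$ followed by a ``flop'' controlled by $1-b$. If the non-tampered party holds $b = 0$, stage one runs with controls $(0,1)$ and by the sub-lemma outputs an intermediate seed that is uniform given its tampered copy; stage two runs with controls $(1,0)$, and here one only needs that alternating extraction \emph{preserves} the property ``uniform given the tampered copy'', which follows by fixing the tampered intermediate seed first and then running the standard look-ahead argument. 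If instead the non-tampered party holds $b = 1$, stage one gives nothing but stage two runs with controls $(0,1)$ on the stage-one outputs, and the sub-lemma applies directly. Either way $\flip(\X, \Y, b)$ is $O(\eps)$-close to uniform given $\flip(\X, \Y', b')$; adding $\Y, \Y'$ back into the conditioning is free because the final output is $\Ext(\X, \cdot)$ with a seed that is a function of $\Y$ alone and $\X \perp (\Y, \Y')$. Rescaling $\eps$ by the constant number of alternating-extraction rounds absorbs the accumulated error, and $m = \delta_{\ref{flip}}k$ is what the final extractor of Theorem \ref{guv} yields from $\X$, which still has $\Omega(k)$ min-entropy after all the $O(\log(n/\eps))$-sized fixings.

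The main obstacle is the look-ahead entropy bookkeeping with a \emph{shared} source $\X$: one cannot simply ``condition on the tampered output'' since that output itself depends on $\X$; the argument must interleave the two alternating extractions and fix the $(\Y,\Y')$-side variables in exactly the right order — so that each $\rr$-variable extracted from $\X$ becomes a deterministic function of $\X$ only after its seed has been used — while checking that $\X$'s conditional min-entropy never falls below the thresholds needed by the three strong extractors. Arranging that the $b=0$ run finishes ``before'' the $b=1$ run in both stages, and that the flop stage preserves rather than destroys the independence established in the flip stage, is exactly what the two-stage flip-flop design of \cite{Coh15,CGL15} is built to guarantee.
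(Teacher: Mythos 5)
First, note that the paper does not prove Theorem \ref{flip} at all: it is imported as a black box from \cite{Coh15,CGL15}, so there is no in-paper proof to compare against; your sketch has to be judged as a reconstruction of the cited flip-flop argument. Your overall architecture (two alternating-extraction stages, one driven by $b$ and one by $1-b$, with one stage \emph{gaining} independence and the other merely \emph{preserving} it, plus the standard ``fix one side at a time'' entropy bookkeeping and a final strong extraction from $\X$ of length $\delta_{\ref{flip}}k$) is indeed the right shape.

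The genuine gap is in your security sub-lemma, which has the look-ahead guarantee pointing the wrong way. With your labeling ($\A_0$ produced before $\A_1$, output $\Ext(\X,\A_b)$), the standard look-ahead argument gives: the \emph{later} honest row is close to uniform given the \emph{earlier} tampered rows, i.e.\ the gain configuration is $(b,b')=(1,0)$, not $(0,1)$ as you claim. Your proof of the $(0,1)$ case breaks exactly where you wave at it: ``everything on the tampered side needed to reconstruct $g(\X,\Y',b')$ is by then a function of the already-fixed variables and a short additional function of $\X$.'' It is not. After fixing the round-one variables, the tampered seed $\A_1'$ is still a live function of $\Y'$, so $g(\X,\Y',b')=\Ext(\X,\A_1')$ is a joint function of $\X$ and $\Y'$; you cannot condition on it while keeping the two sides independent, and it is $m$ bits, not short. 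Worse, the theorem's conclusion conditions on $\Y,\Y'$, and given $\Y$ the honest round-one seed is a \emph{fixed} string, so the honest output $\Ext(\X,\A_0)$ is just a deterministic function of $\X$; no strong-extractor property forces it to be uniform given another correlated deterministic function of $\X$ (take $\Y'=\Y$: both outputs are extractions from the same $\X$ with seeds determined by $\Y$, and nothing in the look-ahead bookkeeping separates them in this ordering). This asymmetry is precisely why the flip-flop needs two stages: a single stage only handles honest-later/tampered-earlier, and the flop with advice $1-b$ covers the other value of $b$, while the stage in the $(0,1)$ configuration is only shown to \emph{preserve} an independence already established (your description of the preserve step is fine). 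So your composition logic would be correct if the sub-lemma held, but as stated the sub-lemma assigns gain and preservation to the wrong configurations, and the sketched argument for the gain case does not go through; swapping the roles (gain at $(1,0)$, preserve at $(0,1)$) and redoing the bookkeeping in that order recovers the argument of \cite{Coh15,CGL15}.
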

We  now recall an explicit function $\adv$ from \cite{CGL15}. Informally, $\adv$ takes as input a source $\X$ and a seed $\Y$ and produces a short string such that for any r.v $\Y' \neq \Y$, $\adv(\X,\Y) \neq \adv(\X,\Y)$. We record this property more formally.
\begin{thm}[\cite{CGL15}]\label{adv_gen} There exists a constant $c_{\ref{adv_gen}},C_{\ref{adv_gen}}>0$ such that  for all $n>0$ and any $\epsilon>0$, there exists an explicit function $\adv:\zo^n \times \zo^d \rightarrow \zo^{L}$, $L=c_{\ref{adv_gen}} \log (n/\epsilon)$ satisfying the following: Let $\X$ be an $(n,k)$-source, and $\Y$ be an independent uniform seed on $d$ bits. Let $\Y'$ be a r.v on $d$ bits independent of $\X$, s.t $\Y' \neq \Y$. If $k,d \ge C_{\ref{adv_gen}} \log(n/\epsilon)$, then 
\begin{itemize}
\item with probability at least $1-\epsilon$, $\adv(\X,\Y) \neq \adv(\X,\Y')$,
\item  there exists a function $f$ such that conditioned on $\adv(\X,\Y), \adv(\X,\Y'), f(\X)$,
\begin{itemize}
\item $\X$ remains independent of $\Y,\Y'$,
\item $\X$ has average conditional min-entropy at least $k-C_{\ref{adv_gen}}\log(n/\epsilon)$,
\item $\Y$ has average conditional min-entropy at least $d-C_{\ref{adv_gen}}\log(n/\epsilon)$
\end{itemize}
\end{itemize}
\end{thm}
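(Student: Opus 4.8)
The plan is to realise $\adv$ via the standard ``error-correcting code plus sampler'' template. First I would fix three explicit ingredients: a binary code $\Enc:\zo^{d}\to\zo^{\bar d}$ with constant rate (so $\bar d=O(d)$) and constant relative distance $\delta_0$, so that $\Enc$ is injective and any two distinct codewords disagree on at least $\delta_0\bar d$ coordinates; a \emph{strong} seeded extractor $\Ext:\zo^{n}\times\zo^{d_0}\to\zo^{d_1}$ from Theorem~\ref{guv} for min-entropy $k$ and error $\eps$, with $d_0,d_1=O(\log(n/\eps))$; and a standard averaging/hitting sampler $\samp:\zo^{d_1}\to[\bar d]^{t}$ with $t=O(\log(1/\eps))$ such that for every $S\subseteq[\bar d]$ with $|S|\ge\delta_0\bar d$ one has $\pr_{u\sim\U_{d_1}}[\samp(u)\cap S=\emptyset]\le\eps$. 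I then set $(i_1,\dots,i_t)=\samp\bigl(\Ext(\X,\slice(\Y,d_0))\bigr)$ and $\adv(\X,\Y)=\bigl(\slice(\Y,d_0),\Enc(\Y)_{i_1},\dots,\Enc(\Y)_{i_t}\bigr)$. This is explicit and has output length $L=d_0+t=O(\log(n/\eps))$ as required; note that the extractor output $\rr:=\Ext(\X,\slice(\Y,d_0))$ is used only to pick the sampled coordinates and is \emph{not} part of the advice.

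To prove the first bullet, write $\Y_0=\slice(\Y,d_0)$, $\rr=\Ext(\X,\Y_0)$ and likewise $\Y_0',\rr'$ for the tampered seed $\Y'$. If $\Y_0\ne\Y_0'$ the two advice strings already differ in their first block, so it suffices to bound $\pr[\Y_0=\Y_0'\text{ and }\adv(\X,\Y)=\adv(\X,\Y')]$. On the event $\{\Y_0=\Y_0'\}$ we have $\rr=\rr'$, hence the two sampled index-tuples coincide, so the remaining advice blocks can agree only if $\samp(\rr)$ avoids the difference set $B=\{i:\Enc(\Y)_i\ne\Enc(\Y')_i\}$, which has density $\ge\delta_0$ because $\Y\ne\Y'$. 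Conditioning on $\Y_0=y_0$, the pair $(\Y,\Y')$ (hence $B$) stays independent of $\X$, while $\rr=\Ext(\X,y_0)$ is a deterministic function of $\X$; and since $\Ext$ is strong, $\E_{y_0\sim\Y_0}\bigl[\norm{(\rr\mid\Y_0=y_0)-\U_{d_1}}\bigr]\le\eps$. Averaging the sampler bound over $y_0$ and then over $(\Y,\Y')$, the probability that $\samp(\rr)$ misses $B$ is at most $2\eps$; rescaling $\eps$ gives distinctness with probability $\ge1-\eps$.

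For the second bullet I would take $f(\X)=(\rr,\rr')$; this is a function of $\X$ once the prefixes $\Y_0,\Y_0'$ are fixed, and those prefixes are revealed by $\adv(\X,\Y)$ and $\adv(\X,\Y')$ respectively, so the conditioning in the statement is well-defined. I would then expose the data in a careful order, tracking entropy with Lemma~\ref{lem:entropy_loss}: (a) fix $\Y_0$ and $\Y_0'$ --- two blocks of $d_0$ bits, both functions of $(\Y,\Y')$ --- losing $\Y$ at most $2d_0$ bits, costing $\X$ nothing, and keeping $\X$ independent of $(\Y,\Y')$; (b) fix $\rr,\rr'$, which by step (a) are functions of $\X$, losing $\X$ at most $2d_1$ bits, costing $\Y$ nothing, and again preserving independence (fixing a function of $\X$ leaves $\X$ independent of $(\Y,\Y')$); (c) observe that with $\rr,\rr'$ fixed the tuples $\samp(\rr),\samp(\rr')$ are fixed, so the remaining $2t$ advice bits are coordinates of $\Enc(\Y),\Enc(\Y')$ at fixed positions, i.e.\ functions of $(\Y,\Y')$, losing $\Y$ at most $2t$ further bits. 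Collecting losses gives $\widetilde H_\infty(\X\mid\adv(\X,\Y),\adv(\X,\Y'),f(\X))\ge k-2d_1$ and $\widetilde H_\infty(\Y\mid\adv(\X,\Y),\adv(\X,\Y'),f(\X))\ge d-2d_0-2t$, with $\X$ independent of $(\Y,\Y')$ throughout; choosing $C_{\ref{adv_gen}}$ larger than all hidden constants (and using $k,d\ge C_{\ref{adv_gen}}\log(n/\eps)$ so that $\Ext$ and $\samp$ have enough room) matches the statement.

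The main obstacle is the standard-but-delicate point that $\Ext(\X,\Y_0)$ is close to uniform only when $\Y_0$ is fresh --- for a fixed prefix it may be arbitrary --- so the detection step must be run as an average over $\Y_0$ using the strong-extractor property, while simultaneously $\rr$ must be viewed as a deterministic function of $\X$ so that it is independent of the code difference-set $B$ (which is controlled by $\Y,\Y'$). Arranging the conditioning order so that independence and both entropy bounds hold at once, and making precise the sense in which $f$ depends only on $\X$ once the advice is revealed, is the only part that needs care; the rest is routine.
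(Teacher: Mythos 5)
Your proposal is correct and essentially coincides with the advice generator of \cite{CGL15}, which this paper imports as a black box (Theorem \ref{adv_gen} is cited, not proved, here): take a short slice of the seed, extract from the source with that slice, use the extractor output to sample coordinates of an asymptotically good encoding of the seed, and output the slice together with the sampled bits, with the same conditioning order (slices, then $\rr,\rr'$, then the sampled code bits) to preserve independence and charge the entropy losses. The only cosmetic caveat is that your $f(\X)=(\rr,\rr')$ is a deterministic function of $\X$ only after the slices are revealed by the two advice strings, but this is exactly how the theorem is invoked in Claim \ref{nm_c_2}, so it matches the intended statement.
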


We are now ready to prove Theorem $\ref{nmext}$.

\begin{proof}[Proof of Theorem $\ref{nmext}$]We set up parameters and ingredients required in our construction.
\begin{itemize}
\item Let $\adv:\zo^n \times \zo^d \rightarrow \zo^{L}$, $L=c_{\ref{adv_gen}}\log(n/\epsilon_1)$, be the function from Theorem $\ref{adv_gen}$ with error parameter $\epsilon_1$.
\item Let $d_1= (C_{\ref{adv_gen}}+ C_{\ref{flip}}+1)\log(n/\epsilon_1)$.
\item Let $\flip:\zo^n \times \zo^{d_1} \rightarrow \zo^{m'}$, $m'=\delta_{\ref{flip}}k$, be the function from Theorem $\ref{flip}$ with error parameter $\epsilon_1$.
\item $d_2=c_{\ref{guv}}\log(d/\epsilon_1), d_3=c_{\ref{guv}}\log(m'/\epsilon_1)$.
\item Let $\Ext_1:\zo^{n} \times \zo^{d_2} \rightarrow \zo^{d'}$, $d'=0.9d-2d_1-C_{\ref{adv_gen}}\log(n/\epsilon_1)$ be a $(d-2d_1-C_{\ref{adv_gen}}\log(n/\epsilon_1),\epsilon_1)$-strong-seeded extractor from Theorem $\ref{guv}$.
\item Let $\Ext_2:\zo^{m'} \times \zo^{d_3} \rightarrow \zo^{m''}$, $m''=0.9m'-2d_2$, be a $(m'-2d_2-\log(1/\epsilon_1),\epsilon_1)$-strong-seeded extractor from Theorem $\ref{guv}$.
\item Let $\ell = 2^{\sqrt{\log L}}$.
\item Let $(L,\ell,1)$-$\nipm:\zo^{Lm''} \times \zo^{d'} \rightarrow \zo^{m}$ be the function from Theorem $\ref{rec_thm}$, $m=0.9^rm'-2c_{\ref{thm:t-nipm}}\ell(t+1)r\log(m/\epsilon_1)$ with error parameter $\epsilon_1$.
\end{itemize}
\RestyleAlgo{boxruled}
\LinesNumbered
\begin{algorithm}[ht]\label{alg3}
  \caption{$\nmExt(x,y)$\label{alg}  \vspace{0.1cm}\newline \textbf{Input:}  $x,y$ are  bit string of length $n,d$ respectively. \newline \textbf{Output:} A bit string of length $m$. }
    Let $w=\adv(x,y)$.
  
    Let $y=y_1 \circ y_2$, where $y_1=\slice(y,d_1)$.

  Let $v$ be a $L \times m'$ matrix, whose $i$'th row $v_i= \flip(x,y_1,w_i)$ ($w_i$ is the $i$'th bit of the string $w$).
  
  Let $\overline{v_1}=\slice(v_1,d_2)$
  
  Let $\overline{y}=\Ext_1(y,\overline{v_1})=\overline{y_1} \circ \overline{y_2}$, where $\overline{y_1}=\slice(\overline{y},d_3)$.
  
  Let $z$ be a $L \times m''$ matrix,  whose $i$'th row $z_i= \Ext_2(v_i,\overline{y_1})$

  Output $\overline{z}=(L,\ell,1)\text{-}\nipm(z,\overline{y})$.
\end{algorithm}
We prove in the following claims that the function $\nmExt$ constructed in Algorithm $\ref{alg2}$ satisfies the conclusion of Theorem $\ref{nmext}$. Let $\A$ be the adversarial function tampering the seed $\Y$, and let $\Y'=\A(\Y)$. Since $\A$ has no fixed points, it follows that $\Y \neq \Y'$.

\textbf{Notation:}  For any random variable $\mathbf{H}=g(\X,\Y)$ (where $g$ is an arbitrary deterministic function), let $\mathbf{H}^{\prime}=g(\X,\Y')$.
\begin{claim}\label{nm_c_1} With probability at least $1-\epsilon$, $\W \neq \W'$.
\end{claim}
\begin{proof}Follows directly from Theorem $\ref{adv_gen}$.
\end{proof}
Let  $f$ be the function guaranteed by Theorem $\ref{adv_gen}$.
\begin{claim}\label{nm_c_2}Conditoned on the r.v's $\W,\W',\Y_1,\Y_1',f(\X)$, the following hold:
\begin{itemize}
\item for each $i \in [L]$, $\V_i$ is $\epsilon_1$-close to uniform,
 \item there exists an $h \in [L]$ such that conditioned on $\V_h'$, the r.v $\V_h$ is $\epsilon_1$-close to uniform on average,
 \item $\{ \V,\V'\}$ is independent of $\{ \Y,\Y'\}$.
 \item $\Y$ has average conditional min-entropy at least  $d-C_{\ref{adv_gen}}\log(n/\epsilon_1)-2d_1$.
 \end{itemize}
\end{claim}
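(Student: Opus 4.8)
The plan is to read off all four bullets by conditioning on $\W, \W', \Y_1, \Y_1', f(\X)$ in a careful order and applying Theorems \ref{adv_gen} and \ref{flip}. First condition on $\W = \adv(\X, \Y)$, $\W' = \adv(\X, \Y')$ and $f(\X)$, where $f$ is the function guaranteed by Theorem \ref{adv_gen}. By that theorem, after this conditioning $\X$ is independent of $(\Y, \Y')$, $\widetilde{H}_\infty(\X) \ge k - C_{\ref{adv_gen}}\log(n/\epsilon_1)$ and $\widetilde{H}_\infty(\Y) \ge d - C_{\ref{adv_gen}}\log(n/\epsilon_1)$; moreover, by Claim \ref{nm_c_1}, $\W \ne \W'$ except with probability $\epsilon_1$, and on that event we fix $h = h(\W,\W') \in [L]$ to be a coordinate with $\W_h \ne \W'_h$ --- once $\W, \W'$ are fixed this is a fixed index. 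The $\epsilon_1$ probability of the bad event $\W = \W'$ will be charged to the final error of $\nmExt$, so we may assume $\W \ne \W'$.

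Next I would check that $\Y_1 = \slice(\Y, d_1)$ is an adequate seed for $\flip$. Since $\Y = \Y_1 \circ \Y_2$ with $|\Y_2| = d - d_1$, a one-line computation gives $\widetilde{H}_\infty(\Y_1) \ge \widetilde{H}_\infty(\Y) - (d - d_1) \ge d_1 - C_{\ref{adv_gen}}\log(n/\epsilon_1)$, so (after the standard conversion of average conditional min-entropy to worst-case min-entropy, costing another $\log(1/\epsilon_1)$, all absorbed by the generous choice of $d_1 = (C_{\ref{adv_gen}}+C_{\ref{flip}}+1)\log(n/\epsilon_1)$) $\Y_1$ has entropy deficiency below $d_1/2$ while $d_1 \ge C_{\ref{flip}}\log(n/\epsilon_1)$. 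Also $\Y_1, \Y_1'$ are deterministic functions of $\Y, \Y'$, hence still independent of $\X$ after the conditioning above.

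Now invoke Theorem \ref{flip} with source $\X$, seed $\Y_1$, and tampered seed $\Y_1'$. For each fixed $i$, note that $\W_i$ is a fixed bit after our conditioning; applying Theorem \ref{flip} with $b = \W_i$, $b' = 1 - \W_i$ (and an arbitrary independent tampered seed) and marginalizing shows $\V_i = \flip(\X, \Y_1, \W_i)$ is $\epsilon_1$-close to $\U_{m'}$ even conditioned on $\Y_1$, hence also conditioned on $\Y_1'$ (a function of $\Y'$, which does not affect $\flip(\X, \Y_1, \cdot)$). For the index $h$, since $\W_h \ne \W'_h$, Theorem \ref{flip} applied with $b = \W_h$, $b' = \W'_h$ gives that $\V_h$ is $\epsilon_1$-close to uniform conditioned on $(\V_h', \Y_1, \Y_1')$, so after fixing $\Y_1, \Y_1'$ it is $\epsilon_1$-close to uniform conditioned on $\V_h'$. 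This yields the first two bullets.

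For the last two bullets: once $\W, \Y_1$ are fixed, $\V$ is a deterministic function of $\X$, and likewise $\V'$ given $\W', \Y_1'$; since $\X$ is independent of $(\Y, \Y')$ and $(\Y_1, \Y_1')$ is a function of $(\Y, \Y')$, conditioning on $(\Y_1, \Y_1')$ preserves this independence, so $(\V, \V')$ is a deterministic function of $\X$ and hence independent of $(\Y, \Y')$. For the entropy of $\Y$, note $(\Y_1, \Y_1')$ takes at most $2^{2d_1}$ values, so Lemma \ref{lem:entropy_loss} gives $\widetilde{H}_\infty(\Y) \ge d - C_{\ref{adv_gen}}\log(n/\epsilon_1) - 2d_1$ after the additional conditioning. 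The main obstacle is the bookkeeping around the advice bits: the bits fed to $\flip$ are themselves functions of $(\X, \Y)$, so to apply Theorem \ref{flip} --- which requires \emph{fixed} distinct bits --- one must first condition on $\W, \W'$ and then lean on the auxiliary $f(\X)$ from Theorem \ref{adv_gen} to restore both the independence of $\X$ from $(\Y, \Y')$ and the two min-entropies; one must also verify that the subsequent conditioning on $\Y_1, \Y_1'$ is harmless, which works precisely because the $\flip$ outputs are already uniform conditioned on the (tampered) seed.
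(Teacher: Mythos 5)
Your proposal is correct and follows essentially the same route as the paper's proof: fix $\W,\W',f(\X)$ and invoke Theorem \ref{adv_gen} to restore the independence of $\X$ from $(\Y,\Y')$ together with the min-entropy bounds, observe that the slice $\Y_1$ retains enough entropy to serve as the weak seed for $\flip$, and then apply Theorem \ref{flip} row by row (using $\W_h\neq\W_h'$ for the good row $h$), with the last two bullets following from the fact that $(\V,\V')$ becomes a deterministic function of $\X$ and from Lemma \ref{lem:entropy_loss} applied to the $2d_1$ bits of $(\Y_1,\Y_1')$. Your write-up simply spells out the per-row application of Theorem \ref{flip} and the error accounting for the event $\W=\W'$ that the paper leaves implicit.
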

\begin{proof} Fix the r.v's $\W,\W',f(\X)$ such that $\W \neq \W'$. It follows from Theorem $\ref{adv_gen}$ that after this conditioning,
\begin{itemize}
\item $\X$ is independent of $\Y,\Y'$,
\item $\X$ has average conditional min-entropy at least $k-C_{\ref{adv_gen}}\log(n/\epsilon_1)$,
\item $\Y$ has average conditional min-entropy at least $d-C_{\ref{adv_gen}}\log(n/\epsilon_1)$
\end{itemize}
Thus $\Y_1=\slice(\Y,d_1)$ has average conditional min-entropy at least $2C_{\ref{flip}}\log(n/\epsilon_1)$. The claim now follows by applying Theorem $\ref{flip}$.
\end{proof}

\begin{claim}\label{nm_c_3} Conditioned on the r.v's $\W,\W',\overline{\V}_1,\overline{\V_1}', \Y_1,\Y_1' \overline{\Y_1}, \overline{\Y_1}',f(\X)$, the following hold:
\begin{itemize}
\item $\overline{\Y}$ has average conditional min-entropy at least $d'-2d_3-\log(1/\epsilon)$.
\item for each $i \in [L]$, $\Z_i$ is $3\epsilon_1$-close to uniform on average.
\item there exists $h \in [L]$ such that further conditioned on $\Z_i'$, $\Z_i$ is $3\epsilon_1$-close to uniform on average.
\item $\{ \overline{\Y},\overline{\Y}'\}$ is independent of $\{ \Z,\Z'\}$.
\end{itemize}
\end{claim}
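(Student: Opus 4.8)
The plan is to propagate the four invariants of Claim~\ref{nm_c_2} through the two remaining extraction layers of Algorithm~\ref{alg3}: the single application $\overline{\Y}=\Ext_1(\Y,\overline{\V_1})$ that converts the high min-entropy seed $\Y$ into an almost uniform string, and the row-wise application $\Z_i=\Ext_2(\V_i,\overline{\Y_1})$. As is done throughout Section~\ref{sec:mergers}, I will treat each $\V_i$ as exactly uniform and restore the deferred error at the end, and I will only condition on r.v.'s that are deterministic functions of $\X$ (the $\V$-derived r.v.'s) or of $\{\Y,\Y'\}$ (the $\overline{\Y}$-derived r.v.'s, which become functions of $\{\Y,\Y'\}$ once $\overline{\V_1},\overline{\V_1}'$ are fixed). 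This makes the independence of $\{\V,\V'\}$ from $\{\Y,\Y',\overline{\Y},\overline{\Y}'\}$ self-maintaining, and I use it silently as in the NIPM proofs.

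\textbf{The seed $\overline{\Y}$.} Since $\V_1$ is uniform, $\overline{\V_1}=\slice(\V_1,d_2)$ is uniform on $d_2$ bits and is a deterministic function of $\X$, hence independent of $\Y$; and by Claim~\ref{nm_c_2} the average conditional min-entropy of $\Y$ is at least $d-C_{\ref{adv_gen}}\log(n/\epsilon_1)-2d_1$, which meets the requirement of $\Ext_1$ (using Lemma~\ref{lem:cond_ext} for the average-case version and the slack built into $d_1$). So $\overline{\Y}=\Ext_1(\Y,\overline{\V_1})$ is $O(\epsilon_1)$-close to $\U_{d'}$ even conditioned on $\overline{\V_1}$; we fix $\overline{\V_1}$ (after which $\overline{\Y}$ is a function of $\Y$) and then $\overline{\V_1}'$ (after which $\overline{\Y}'$ is a function of $\Y'$, without altering the law of $\overline{\Y}$). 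Now fix $\overline{\Y_1}=\slice(\overline{\Y},d_3)$ and $\overline{\Y_1}'$, each of support $2^{d_3}$; then $\overline{\Y}$ retains average conditional min-entropy at least $d'-2d_3-\log(1/\epsilon)$ (first bullet), and since $\Z_i=\Ext_2(\V_i,\overline{\Y_1})$ and $\Z_i'$ are now deterministic functions of $\X$ while $\overline{\Y},\overline{\Y}'$ are functions of $\{\Y,\Y'\}$, the fourth bullet follows.

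\textbf{The rows of $\Z$.} Conditioning on $\overline{\V_1},\overline{\V_1}'$ (support $2^{2d_2}$) leaves each $\V_i$ with average conditional min-entropy at least $m'-2d_2$, and conditioning on $\overline{\Y_1},\overline{\Y_1}'$ (functions of $\{\Y,\Y'\}$) does not affect it; this is exactly the entropy demanded by $\Ext_2$, and $\overline{\Y_1}$ is $O(\epsilon_1)$-close to $\U_{d_3}$, so $\Z_i=\Ext_2(\V_i,\overline{\Y_1})$ is $O(\epsilon_1)$-close to uniform even given $\overline{\Y_1}$, hence still so after fixing $\overline{\Y_1},\overline{\Y_1}'$; restoring the deferred error gives the (at most) $3\epsilon_1$ of the second bullet. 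For the third bullet, let $h$ be the index of Claim~\ref{nm_c_2}, so that $\V_h\mid\V_h'$ is $\epsilon_1$-close to uniform on average given $\W,\W',\Y_1,\Y_1',f(\X)$. Because the order of conditioning is immaterial for average conditional min-entropy, we first fix $\V_h'$ — leaving $\V_h$ essentially uniform — and only then fix $\overline{\V_1},\overline{\V_1}',\overline{\Y_1},\overline{\Y_1}'$, which costs $\V_h$ at most $2d_2$ bits and preserves $\V_h\perp\overline{\Y_1}$ (here it matters that $\overline{\V_1}$ is a slice of $\V_1$, a function of $\X$, while $\overline{\Y_1}$ is a function of $\{\Y,\Y'\}$). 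As in the previous sentence, $\Z_h=\Ext_2(\V_h,\overline{\Y_1})$ is then $O(\epsilon_1)$-close to uniform on average conditioned on $\V_h'$ together with all the conditioning r.v.'s of the claim. Finally, once $\overline{\Y_1}'$ is fixed, $\Z_h'=\Ext_2(\V_h',\overline{\Y_1}')$ is a deterministic function of $\V_h'$, so conditioning on $\Z_h'$ is coarser than conditioning on $\V_h'$ and $\Z_h$ stays $O(\epsilon_1)$-close to uniform conditioned on $\Z_h'$.

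\textbf{Main obstacle.} The delicate point is this last step: one must order the conditionings so that $\V_h'$ is fixed \emph{first} (to inherit the near-uniformity of $\V_h$ from Claim~\ref{nm_c_2}) while simultaneously keeping $\overline{\Y}$ almost uniform and independent of $\V_h$ — which works precisely because $\overline{\V_1},\overline{\V_1}',\V_h'$ all live on the $\X$ side and $\overline{\Y},\overline{\Y_1},\overline{\Y_1}'$ on the $\{\Y,\Y'\}$ side — and then pass from the statistic $\V_h'$ down to the coarser statistic $\Z_h'$. Everything else is routine entropy bookkeeping, and the final $\le 3\epsilon_1$ error absorbs the $O(1)$ error contributions from $\flip$, $\Ext_1$, $\Ext_2$ and from treating $\V_1$ and $\V_i$ (or $\V_h$) as exactly uniform.
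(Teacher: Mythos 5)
Your proposal is correct and follows essentially the same route as the paper's proof: use the strongness of $\Ext_1$ to fix $\overline{\V_1},\overline{\V_1}'$ and make $\overline{\Y},\overline{\Y}'$ live on the $\{\Y,\Y'\}$ side, use the strongness of $\Ext_2$ to fix $\overline{\Y_1},\overline{\Y_1}'$, and for the third bullet exploit that the order of conditioning is immaterial for average conditional min-entropy so that $\V_h'$ can be fixed first before passing to the coarser statistic $\Z_h'$. The only (immaterial) difference is at the very last step, where you view $\Z_h'$ as a function of $\V_h'$ after fixing $\overline{\Y_1}'$, while the paper views it as a function of $\overline{\Y_1}'$ after fixing $\V_h'$; both are valid.
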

\begin{proof}Fix the r.v's $\W,\W',\Y_1,\Y_1',f(\X)$. By Claim $\ref{nm_c_2}$, we have 
\begin{itemize}
\item for each $i \in [L]$, $\V_i$ is $\epsilon_1$-close to uniform,
 \item there exists an $h \in [L]$ such that conditioned on $\V_h'$, the r.v $\V_h$ is $\epsilon_1$-close to uniform on average,
 \item $\{ \V,\V'\}$ is independent of $\{ \Y,\Y'\}$.
 \item $\Y$ has average conditional min-entropy at least  $d-C_{\ref{adv_gen}}\log(n/\epsilon_1)-2d_1$.
 \end{itemize}
 Using the fact that $\Ext_1$ is a strong extractor, it follows that we can fix $\overline{\V_1}$, and $\overline{\Y}$ is $2\epsilon_1$-close to uniform on average. Further, $\overline{\Y}$ is a deterministic function of $\Y$. Thus, we fix $\overline{\V_1}'$ without affecting the distribution of $\overline{\Y}$. Now, using the fact that $\Ext_2$ is a strong extractor, we can fix $\overline{\Y_1}$, and we have  for each $i \in [L]$, $\Z_i$ is $3\epsilon_1$-close to uniform on average. Next we can fix $\overline{\Y_1}'$ without affecting $\V$.
 
We prove that conditioned on $\Z_i'$, the r.v $\Z_i$ is $3\epsilon_1$-close to uniform on average in the following way. For this argument, as above we fix all r.v's but do not yet fix $\overline{\Y_1},\overline{\Y_1}'$. Instead, we first fix $\V_h'$, and  $\V_h$ has average conditional min-entropy at least $m'-2d_2$. We now fix $\overline{\Y_1}$, and as before we have $\Z_h$ is $3\epsilon_1$-close. At this point, $\Z_h'$ is a deterministic function of $\overline{\Y_1}'$, and hence we can fix it without affecting the distribution of $\Z_h$. This completes the proof.
\end{proof}

\begin{claim}\label{nm_c_4}Conditioned on $\overline{\Z}'$, the r.v $\overline{\Z}$ is $O(\epsilon_1\log(n/\epsilon_1))$-close to uniform on average.
\end{claim}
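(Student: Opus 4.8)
The plan is to show that Claim~\ref{nm_c_4} is an immediate consequence of Theorem~\ref{rec_thm} once we have established, via Claim~\ref{nm_c_3}, that after the appropriate conditioning the pair $(\Z,\Z')$ together with the seed $(\overline{\Y},\overline{\Y}')$ satisfies exactly the hypotheses of the recursive non-malleable independence preserving merger (with $t=1$). So first I would fix the random variables $\W,\W',\overline{\V}_1,\overline{\V_1}',\Y_1,\Y_1',\overline{\Y_1},\overline{\Y_1}',f(\X)$ and, using Claim~\ref{nm_c_1}, restrict attention to the event $\W\neq\W'$, which fails with probability at most $\epsilon_1$. On this event Claim~\ref{nm_c_3} gives us: (i) $\{\overline{\Y},\overline{\Y}'\}$ is independent of $\{\Z,\Z'\}$; (ii) every row $\Z_i$ is $3\epsilon_1$-close to $\U_{m''}$; (iii) there is a good row $h\in[L]$ with $\Z_h\mid\Z_h'$ being $3\epsilon_1$-close to $\U_{m''}$ on average; and (iv) $\overline{\Y}$ has average conditional min-entropy at least $d'-2d_3-\log(1/\epsilon_1)$, i.e.\ min-entropy deficiency $d''\le 2d_3+\log(1/\epsilon_1)=O(\log(k/\epsilon_1))$ as a source on $\zo^{d'}$. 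This is precisely the input required by Theorem~\ref{rec_thm} applied with row length $m''$, number of rows $L$, block size $\ell=2^{\sqrt{\log L}}$, $t=1$, recursion depth $r=\lceil\log L/\log\ell\rceil=\lceil\sqrt{\log L}\rceil$, and error parameter $3\epsilon_1$ (absorbing the constant into the final $O(\cdot)$).

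The one genuine bookkeeping step is checking the seed-length (and row-length) constraint of Theorem~\ref{rec_thm}, namely that $d'$ is large enough to satisfy $d'\ge(c_{\ref{thm:t-nipm}}\ell\log(m''/\epsilon_1)+d'')(t+2)^{r+1}$ and that $m''$ is large enough for the nested $(\ell,1)\text{-}\nipm_i$'s to be well defined. Since $L=c_{\ref{adv_gen}}\log(n/\epsilon_1)$ we have $\log L=\Theta(\log\log(n/\epsilon_1))$, hence $\ell=2^{\Theta(\sqrt{\log\log(n/\epsilon_1)})}$ and $(t+2)^{r+1}=3^{\lceil\sqrt{\log L}\rceil+1}=2^{O(\sqrt{\log\log(n/\epsilon_1)})}$; therefore the right-hand side above is $2^{O(\sqrt{\log\log(n/\epsilon_1)})}\log(k/\epsilon_1)$. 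By construction $d'=0.9d-2d_1-C_{\ref{adv_gen}}\log(n/\epsilon_1)$, and the hypothesis $d\ge\log(k/\epsilon_1)2^{C_{\ref{nmext}}\sqrt{\log\log(n/\epsilon_1)}}+C_{\ref{nmext}}\log(n/\epsilon_1)$ of Theorem~\ref{nmext} (with $C_{\ref{nmext}}$ chosen large enough relative to the absolute constants $c_{\ref{thm:t-nipm}},c_{\ref{adv_gen}}$) makes this hold; the analogous computation with $m''=0.9m'-2d_2=\Theta(\delta_{\ref{flip}}k)$ against $k\ge\log(k/\epsilon_1)2^{C_{\ref{nmext}}\sqrt{\log\log(n/\epsilon_1)}}+C_{\ref{nmext}}\log(n/\epsilon_1)$ takes care of the row-length condition. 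Along the way one also records that the merger's output length is $0.9^{r}m''-O(c_{\ref{thm:t-nipm}}\ell r\log(m''/\epsilon_1))$, which since $0.9^{r}=2^{-\Theta(\sqrt{\log\log(n/\epsilon_1)})}$ is at least $\delta_{\ref{nmext}}k/2^{\sqrt{\log\log(n/\epsilon_1)}}$ for a suitable constant $\delta_{\ref{nmext}}$ (truncating the extra output bits if necessary), matching the claimed output length in Theorem~\ref{nmext}.

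With the parameters verified, Theorem~\ref{rec_thm} immediately yields that $\overline{\Z}=(L,\ell,1)\text{-}\nipm(\Z,\overline{\Y})$ is $2c'_{\ref{thm:t-nipm}}L\cdot(3\epsilon_1)=O(\epsilon_1 L)$-close to $\U_m$ on average conditioned on $\overline{\Z}'=(L,\ell,1)\text{-}\nipm(\Z',\overline{\Y}')$ together with $\overline{\Y},\overline{\Y}'$ (the latter can be dropped). Finally I would add back the error $\epsilon_1$ from the event $\W\neq\W'$ in Claim~\ref{nm_c_1}, the $\epsilon_1$-level errors from applying Theorem~\ref{flip} in Claim~\ref{nm_c_2}, and the $O(\epsilon_1)$ errors from the two strong extractors $\Ext_1,\Ext_2$ in Claim~\ref{nm_c_3}; since $L=c_{\ref{adv_gen}}\log(n/\epsilon_1)$, all of these sum to $O(\epsilon_1\log(n/\epsilon_1))$, establishing Claim~\ref{nm_c_4} and hence (after identifying $\overline{\Z},\overline{\Z}'$ with $\nmExt(\X,\Y),\nmExt(\X,\A(\Y))$) the non-malleability guarantee with error $\epsilon'=O(\epsilon_1\log(n/\epsilon_1))$ as stated in Theorem~\ref{nmext}.

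The only part I expect to require care rather than being routine is the error accounting combined with confirming that the absolute constants in the min-entropy and seed-length hypotheses of Theorem~\ref{nmext} are set large enough to absorb the $(t+2)^{r+1}=2^{O(\sqrt{\log\log(n/\epsilon_1)})}$ blow-up coming from the recursion in Theorem~\ref{rec_thm}, as well as the additive $O(\log(n/\epsilon_1))$ losses incurred by $\adv$, $\flip$, and the two strong extractors; everything else is a direct substitution into Theorem~\ref{rec_thm}.
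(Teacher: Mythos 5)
Your proposal is correct and follows essentially the same route as the paper: fix the same collection of random variables, invoke Claim~\ref{nm_c_3} to verify the hypotheses of Theorem~\ref{rec_thm} (including the seed-length condition $d'\ge(c_{\ref{thm:t-nipm}}\ell\log(m''/\epsilon_1)+d'')3^{r+1}$, which holds for $C_{\ref{nmext}}$ large enough), and conclude that $\overline{\Z}$ is $O(L\epsilon_1)=O(\epsilon_1\log(n/\epsilon_1))$-close to uniform given $\overline{\Z}'$. Your additional bookkeeping (the $\W\neq\W'$ event and the output-length check) is consistent with, and slightly more explicit than, the paper's version.
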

\begin{proof}Fix the r.v's $\W,\W',\overline{\V}_1,\overline{\V_1}', \Y_1,\Y_1'\overline{\Y_1}, \overline{\Y_1}',f(\X)$. By Claim $\ref{nm_c_3}$, the following hold:
\begin{itemize}
\item $\overline{\Y}$ has average conditional min-entropy at least $d'-2d_3-\log(1/\epsilon_1)$.
\item for each $i \in [L]$, $\Z_i$ is $3\epsilon_1$-close to uniform on average.
\item there exists $h \in [L]$ such that further conditioned on $\Z_i'$, the r.v $\Z_i$ is $3\epsilon_1$-close to uniform on average.
\item $\{ \overline{\Y},\overline{\Y}'\}$ is independent of $\{ \Z,\Z'\}$.
\end{itemize}
Let  $d''=2d_3+\log(1/\epsilon_1), r= \lceil \frac{\log L}{\log \ell}\rceil =\lceil \sqrt{\log L}\rceil$. Thus $d''=O(\log(k/\epsilon_1)), r=O(\sqrt{\log \log (n/\epsilon_1)}),  \ell = 2^{O(\sqrt{\log \log(n/\epsilon_1) })}$. In order to use Theorem $\ref{rec_thm}$, we observe that for a large enough constant $C_{\ref{nmext}}$ the following hold:
\begin{itemize}
\item $\overline{\Y}$ has conditional min-entropy at least $d-d''$,
\item $d' \ge (c_{\ref{thm:t-nipm}}\ell \log(m''/\epsilon_1)+d'') 3^{r+1}$,
\item $m<(0.9)^r(m''-c_{\ref{thm:t-nipm}}\ell(t+1)r\log(m/\epsilon_1))$.
\end{itemize}
Thus the conditions of Theorem $\ref{rec_thm}$ are met, and hence it follows that conditioned on $\overline{\Z}'$, the r.v $\overline{\Z}$ is $2c'_{\ref{thm:t-nipm}}L\epsilon_1$-close to uniform on average. Recall that $L=O(\log(n/\epsilon_1))$, and hence the claim follows.
\end{proof}

\end{proof}
\section{Improved $t$-Non-Malleable Extractors and $2$-Source Extractors}\label{sec_t}
The framework to construct non-malleable extractors in Section $\ref{sec:opt_nm}$ can be generalized directly to construct non-malleable extractors that can handle multiple adversaries.
\begin{define}[$t$-Non-malleable Extractor]\label{def_tnmext} A function t-$\nmExt:\{0,1\}^n \times \{ 0,1\}^d \rightarrow \{ 0,1\}^m$ is a seeded $t$-non-malleable extractor for min-entropy $k$ and error $\epsilon$ if the following holds : If $X$ is a  source on  $\{0,1\}^n$ with min-entropy $k$ and $\A_1 : \{0,1\}^n \rightarrow \{0,1\}^n,\ldots,  \A_t : \{0,1\}^n \rightarrow \{0,1\}^n$ are  arbitrary tampering function with no fixed points, then
\begin{align*}
|t\text{-}\nmExt(\X,\U_d), t\text{-}\nmExt(\X,\A_1(\U_d)),\ldots,t\text{-}\nmExt(\X,\A_t(\U_d)),\U_d \\ -  \U_m \scirc t\text{-}\nmExt(\X,\A_1(\U_d)),\ldots,t\text{-}\nmExt(\X,\A_t(\U_d)),U_d  | <\epsilon 
\end{align*}
\end{define}

 In particular, Theorem $\ref{flip}$ and Theorem $\ref{adv_gen}$ both generalize to the case there are $t$ tampered variables, and further our $\nipm$ construction in Theorem $\ref{thm:t-nipm}$ handles $t$ adversaries. By using these versions of the components in the above construction,  the following theorem is easy to obtain. Since the proof is similar to the proof of Theorem $\ref{nmext}$, we omit the proof of the following theorem.
\begin{THM}\label{nmext_multi}There exists a constant $\delta>0$ such that for all $n, k, t, \ell \in \mathbb{N}$ and any  $\epsilon>0$, with $r=(\log \log (n/\epsilon))/(\log \ell)$, $k = \Omega(t^{2r}\ell \log(n/\epsilon))$, there exists an explicit $(t,k,\epsilon)$-non-malleable extractor $\nmExt:\zo^n \times \zo^d \rightarrow \zo^{m}$, where $d= O(t^{(1+\delta)r}\ell \log(n/\epsilon))$ and $m= (\delta k-\ell t r \log(n/\epsilon))/(2t)^{(\log L/\log \ell)}$.
\end{THM}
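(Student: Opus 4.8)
The plan is to follow the construction and analysis of Theorem~\ref{nmext} essentially verbatim, replacing each building block by its $t$-tampering counterpart. Concretely, the extractor is the natural $t$-adversary generalization of Algorithm~\ref{alg3}: run the ($t$-adversary version of the) advice generator $\adv$ of Theorem~\ref{adv_gen} on $(\X,\Y)$ to obtain an advice string $\W \in \zo^{L}$, $L = c_{\ref{adv_gen}}\log(n/\epsilon_1)$; use a small slice $\Y_1$ of $\Y$ together with the $t$-tampering version of $\flip$ from Theorem~\ref{flip} to build an $L\times m'$ matrix $\V$ whose $i$-th row is $\flip(\X,\Y_1,\W_i)$, with $m' = \delta_{\ref{flip}}k$; use a slice of $\V_1$ to extract a nearly uniform string $\overline{\Y} = \Ext_1(\Y,\cdot)$, and a slice $\overline{\Y_1}$ of $\overline{\Y}$ to re-extract each row, obtaining an $L\times m''$ matrix $\Z$ with $\Z_i = \Ext_2(\V_i,\overline{\Y_1})$; finally output $\overline{\Z} = (L,\ell,t)\text{-}\nipm(\Z,\overline{\Y})$ using the recursive $\nipm$ of Theorem~\ref{rec_thm}, where $\ell$ is a free parameter and $r = \lceil \log L/\log\ell\rceil$.

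The analysis mirrors Claims~\ref{nm_c_1}--\ref{nm_c_4}. First, the $t$-adversary advice generator guarantees that, except with probability $O(t\epsilon_1)$, there is an index $h\in[L]$ with $\W_h \neq \W_h^g$ \emph{simultaneously} for all $g\in[t]$, and that after conditioning on all advice strings and the auxiliary function $f(\X)$, the source $\X$ is independent of $(\Y,\Y^1,\dots,\Y^t)$ while $\X$ and $\Y$ retain min-entropy $k - O(t\log(n/\epsilon_1))$ and $d - O(t\log(n/\epsilon_1))$ respectively. The $t$-tampering flip-flop then makes every row $\V_i$ close to uniform and $\V_h$ close to uniform even given $(\V_h^1,\dots,\V_h^t)$; after fixing $\Y_1,\Y_1^1,\dots,\Y_1^t$ the matrices $\V,\V^1,\dots,\V^t$ are deterministic functions of $\X$, hence independent of $(\Y,\Y^1,\dots,\Y^t)$, and $\Y$ still has entropy rate $1-o(1)$. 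Applying the strong extractor $\Ext_1$ (seeded by a slice of $\V_1$) turns $\Y$ into a string $\overline{\Y}$ that is close to uniform and still a deterministic function of $\Y$; re-extracting each row with $\Ext_2$ (seeded by a slice of $\overline{\Y}$) produces $\Z$ with exactly the structure required by Theorem~\ref{rec_thm}: every row near-uniform, row $h$ near-uniform given all $t$ tampered copies, and $(\Z,\Z^1,\dots,\Z^t)$ independent of $(\overline{\Y},\overline{\Y}^1,\dots,\overline{\Y}^t)$ with $\overline{\Y}$ of high entropy rate. Theorem~\ref{rec_thm} then gives that $\overline{\Z}$ is $O(c_{\ref{thm:t-nipm}}' L\epsilon_1)$-close to uniform given $(\overline{\Z}^1,\dots,\overline{\Z}^t)$; with $L = O(\log(n/\epsilon_1))$ and a final rescaling $\epsilon_1 = \epsilon/\poly(n)$ this yields error $\epsilon$.

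For the parameters, the seed length is dictated by the $\nipm$ of Theorem~\ref{rec_thm}, which requires a seed of length $(c_{\ref{thm:t-nipm}}\ell\log(m''/\epsilon_1)+d'')(t+2)^{r+1}$ with min-entropy deficiency only $d'' = O(\log(k/\epsilon_1))$; since $\overline{\Y}$ is the output of $\Ext_1$ applied to $\Y$, we need $\Y$'s min-entropy, and hence $d$, to be $\Theta((t+2)^{r+1}\ell\log(n/\epsilon_1)) = O(t^{(1+\delta)r}\ell\log(n/\epsilon))$ for $t$ larger than an absolute constant. The min-entropy requirement comes from the constraint $m_i \ge d_i$ inside the recursion of Theorem~\ref{rec_thm}: since row lengths shrink by $0.9$ and seed lengths grow by $(t+2)$ per level, the initial row length $m'' = \Theta(\delta_{\ref{flip}}k)$ must be at least $((t+2)/0.9)^{r}\ell\log(n/\epsilon) = \Theta(t^{2r}\ell\log(n/\epsilon))$, i.e. $k = \Omega(t^{2r}\ell\log(n/\epsilon))$. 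The output length is $(0.9/t)^{r}$ times $m''$ minus the additive losses accumulated over the $r$ levels, which (using $\log L/\log\ell = r + O(1/\log\ell)$) is $(\delta k - \ell t r\log(n/\epsilon))/(2t)^{\log L/\log\ell}$.

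The main obstacle is twofold. First, one must check that the $t$-adversary version of $\adv$ genuinely produces a \emph{single} coordinate $h$ on which $\W$ differs from all $t$ tampered advice strings at once: for one adversary $\W\neq\W'$ already supplies such an $h$, but for $t$ adversaries the generalized generator must be set up so that the guarantee survives a union bound over the $t$ tampered seeds — this is the one place where the generalization is not purely mechanical. Second, the entropy bookkeeping must be carried out with $t+1$ correlated copies present at every fixing: each time a short random variable together with its $t$ tampered versions is fixed, the conditional min-entropy of the relevant sources drops by an amount proportional to $t$, and one must verify these losses stay within the budget fixed by the parameters above. This last part is structurally identical to the proof of Theorem~\ref{thm:t-nipm} and Claims~\ref{cl:t-indep1}--\ref{cl:t-indep3}, so it is routine but lengthy, which is why the full proof is omitted.
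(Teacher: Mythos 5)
Your proposal matches the paper's (omitted) proof exactly: the paper simply asserts that Theorems~\ref{flip} and~\ref{adv_gen} generalize to $t$ tampered variables and that plugging these, together with the $(\ell,t)$-$\nipm$ of Theorem~\ref{thm:t-nipm}/Theorem~\ref{rec_thm}, into Algorithm~\ref{alg3} yields the result, which is precisely your construction and analysis. You even correctly isolate the one non-mechanical point the paper glosses over — that the $t$-adversary advice generator must supply a \emph{single} coordinate $h$ on which $\W$ differs from all $t$ tampered advice strings simultaneously (so that the NIPM's hypothesis on row $h$ holds), which is arranged via the code-based advice so a union bound over the $t$ adversaries goes through.
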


As discussed in the introduction, such non-malleable extractors were used in \cite{CZ15} to construct two-source extractors, with subsequent improvements in parameters \cite{Li:2source,Mek:resil}. Combining the framework of \cite{CZ15}, with the improved components from \cite{Li:2source,Mek:resil} and our new $t$-non-malleable extractor from Theorem $\ref{nmext_multi}$, the following results are easy to obtain by suitably optimizing parameters.

\begin{THM}\label{2ext1}There exists a constant $C>0$ such that for any $\delta>0$ and for all $n,k \in \mathbb{N}$ with $k \ge  C(\log n)^{2\sqrt{6(1+\delta)}+3}$ and any constant $\epsilon<\frac{1}{2}$ , there exists an efficient polynomial time computable $2$-source extractor min-entropy $k$ with error $\epsilon$ that outputs $1$ bit.
\end{THM}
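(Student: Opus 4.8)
The plan is to follow the Chattopadhyay--Zuckerman framework~\cite{CZ15} for converting non-malleable extractors into two-source extractors, feeding in our $t$-non-malleable extractor from Theorem~\ref{nmext_multi} in place of the one used there, and using the improved extractors for non-oblivious bit-fixing (NOBF) sources of~\cite{Li:2source,Mek:resil}. Recall the reduction. Given two independent $(n,k)$-sources $\X,\Y$, one fixes a seeded $t$-non-malleable extractor $t\text{-}\nmExt:\zo^n\times\zo^{d}\to\zo$ with error $\epsilon_0$ (handling $t$ fixed-point-free tampering functions), selects a multiset of $N=\mathrm{poly}(n)$ seeds $S=\{s_1,\dots,s_N\}\subseteq\zo^{d}$ by applying a strong seeded extractor to one source as a Zuckerman sampler, and forms $\Z=\big(t\text{-}\nmExt(\Y,s_1),\dots,t\text{-}\nmExt(\Y,s_N)\big)\in\zo^N$, which is polynomial-time computable. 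The $t$-non-malleability of $t\text{-}\nmExt$ together with the sampler guarantee makes $\Z$ (close to) a $(q,t)$-NOBF source on $N$ bits with $q\le N^{1-\Omega(1)}$ adversarial coordinates, provided $\epsilon_0$ is small enough (the bad fraction being controlled by a polynomial in $\epsilon_0$, possibly with an additional loss growing in $t$). The two-source extractor's output is $f(\Z)$, where $f:\zo^N\to\zo$ is an explicit monotone resilient function extracting a bit $\epsilon$-close to uniform, for any constant $\epsilon<\tfrac12$, from any $(q,t)$-NOBF source with $q\le N^{1-\Omega(1)}$ and $t$-wise independence for $t=(\log N)^{\Theta(1)}$; this $f$ is exactly what the improved analyses of~\cite{Li:2source,Mek:resil} provide.

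I would then carry this out in the following order. First, fix $N=\mathrm{poly}(n)$ (so $\log N=\Theta(\log n)$), the number of tampers $t=(\log n)^{\Theta(1)}$ required by the resilient function $f$, and an error $\epsilon_0$ of the non-malleable extractor small enough that $q/N<N^{-\Omega(1)}$; this forces $\log(1/\epsilon_0)$ to be a growing power of $\log n$. Second, invoke Theorem~\ref{nmext_multi} with these values of $t$ and $\epsilon_0$: for a choice of block size $\ell$ and $r=\log\log(n/\epsilon_0)/\log\ell$ it produces an explicit $t$-non-malleable extractor with min-entropy requirement $k=\Omega(t^{2r}\ell\log(n/\epsilon_0))$ and seed length $d=O(t^{(1+\delta)r}\ell\log(n/\epsilon_0))$, while the min-entropy demanded of the source used by the sampler is only a modest polynomial in $d$, hence subsumed. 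Note that since $d$ here is super-logarithmic, the reduction must pick its $N$ seeds through the sampler rather than enumerating all $2^{d}$ of them, which is precisely what keeps the two-source extractor polynomial-time computable. Third, optimize over $\ell$: writing $\ell=(\log n)^{b}$, the exponent of $k$ in $\log n$ takes the form $(\text{const})+b+c(1+\delta)/b$ with $c$ dictated by the required sizes of $t$ and of $\log(1/\epsilon_0)$, and $\min_b\big((\text{const})+b+c(1+\delta)/b\big)=(\text{const})+2\sqrt{c(1+\delta)}$ at $b=\sqrt{c(1+\delta)}$; carrying the constants through gives the stated bound $k=C(\log n)^{2\sqrt{6(1+\delta)}+3}$, and the theorem follows.

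The main obstacle is the parameter bookkeeping in the last two steps: one must verify that $q/N$ stays below $N^{-\Omega(1)}$ (which pins down how small $\epsilon_0$ must be) and then propagate this through the min-entropy requirement of Theorem~\ref{nmext_multi}, where the coupling among $\epsilon_0$, $t$, $\ell$ and $r=\log\log(n/\epsilon_0)/\log\ell$ makes the optimization delicate; it is exactly this step that extracts the sharp exponent $2\sqrt{6(1+\delta)}+3$ rather than a cruder polylogarithmic bound, and where the improvements of~\cite{Li:2source,Mek:resil} over the original parameters of~\cite{CZ15} must be threaded through. A secondary point needing care, inherited from~\cite{CZ15}, is the NOBF analysis itself: one must check that the sampler applied to the source yields a collection of seeds on which the $t$-non-malleable guarantee can be invoked coordinate by coordinate, so that $\Z$ really is $t$-wise independent away from a small bad set; this goes through verbatim once the improved resilient function and analysis of~\cite{Li:2source,Mek:resil} are substituted for the originals.
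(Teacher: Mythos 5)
Your proposal matches the paper's intended argument exactly: the paper gives no detailed proof of this theorem, stating only that it follows by combining the framework of \cite{CZ15} with the improved resilient-function components of \cite{Li:2source,Mek:resil} and the new $t$-non-malleable extractor of Theorem~\ref{nmext_multi}, ``by suitably optimizing parameters'' --- which is precisely the reduction and the $\ell$-optimization you describe. Your sketch in fact supplies more detail than the paper does, though like the paper it leaves the final constant-tracking (the source of the exact exponent $2\sqrt{6(1+\delta)}+3$) asserted rather than fully verified.
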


\begin{THM}\label{2ext2}There exists a constant $C>0$ such that for any $\delta>0$ and for all $n,k \in \mathbb{N}$ with $k \ge C(\log(n))^{4\sqrt{5(1+\delta)}+5}$, there exists an efficient polynomial time computable $2$-source extractor min-entropy $k$ with error $n^{-\Omega(1)}$ and output length $\Omega(k)$.
\end{THM}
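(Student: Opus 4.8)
The plan is to feed our improved $t$-non-malleable extractor (Theorem~\ref{nmext_multi}) into the Chattopadhyay--Zuckerman reduction \cite{CZ15} from $t$-non-malleable extractors to two-source extractors, with the resilient/monotone Boolean function of \cite{CZ15} replaced by the sharper ones of Li \cite{Li:2source} and Meka \cite{Mek:resil}, and then to optimize the free parameters. Recall the shape of the CZ construction: given a $t$-non-malleable extractor $t\text{-}\nmExt:\zo^{n}\times\zo^{d}\to\zo$ for min-entropy $k_{\mathrm{nm}}$ and error $\eps_{\mathrm{nm}}$, one uses an averaging sampler (obtained from a seeded extractor applied to one source) to pick $N=n^{O(1)}$ seeds, forms the $N$-bit string whose $i$-th bit is $t\text{-}\nmExt$ evaluated on the other source and the $i$-th seed, argues that this string is statistically close to one in which all but a ``bad'' set of $\le N^{1-\Omega(1)}$ coordinates look like an independent uniform bit even after conditioning on $t$ tampered copies, and finally applies an explicit Boolean function $f$ that is resilient to coalitions of size $N^{1-\Omega(1)}$ with tiny bias. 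One concludes that $f$ applied to this string is $\bigl(\poly(\eps_{\mathrm{nm}})+N^{-\Omega(1)}\bigr)$-close to $\U_1$, provided $t$ is a sufficiently large $\polylog n$ and the two sources have min-entropy exceeding $C\cdot\max\{k_{\mathrm{nm}},\,d,\,(\text{reduction overhead})\}$, where the overhead is exactly the quantity shrunk by the Li/Meka improvements.

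The first step is to instantiate this in the polynomially-small-error regime. We set $\eps_{\mathrm{nm}}=n^{-\Omega(1)}$ (so $\log(n/\eps_{\mathrm{nm}})=\Theta(\log n)$), take $N$ to be a large enough polynomial that the resilient function's bias is $n^{-\Omega(1)}$, and append one extra application of a strong seeded extractor (using a short slice of one source, as in \cite{CZ15,Li:2source}) to boost the output length to $\Omega(k)$, as the statement demands. For the constant-error companion (Theorem~\ref{2ext1}) the bias requirement on $f$ is weaker and a single output bit suffices, which is what makes the exponent $2\sqrt{6(1+\delta)}+3$ there smaller than the $4\sqrt{5(1+\delta)}+5$ here. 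In both cases one then substitutes Theorem~\ref{nmext_multi}: $k_{\mathrm{nm}}=\Theta(t^{2r}\ell\log(n/\eps_{\mathrm{nm}}))$ and $d=\Theta(t^{(1+\delta)r}\ell\log(n/\eps_{\mathrm{nm}}))$ with $r=\log\log(n/\eps_{\mathrm{nm}})/\log\ell$, writes $t=(\log n)^{\Theta(1)}$ and $\ell=(\log n)^{1/r}$ (which makes $r$ a free constant and keeps everything $\polylog n$), and collects the resulting min-entropy requirement.

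With these substitutions the min-entropy requirement takes the form $(\log n)^{g(r)}$ with $g(r)=A\,r+B/r+C$: the coefficient $A$ carries the factor $(1+\delta)$ inherited from the seed-length exponent $t^{(1+\delta)r}$ of Theorem~\ref{nmext_multi} (in the polynomially-small-error regime the seed length, through the reduction overhead, is the binding constraint), while $B$ and $C$ are absolute constants coming respectively from the residual $\ell\log(n/\eps_{\mathrm{nm}})$ factor and from the resilient-function plus sampler overhead. Minimizing over $r$ gives $\min_r g(r)=2\sqrt{AB}+C$, attained at $r=\sqrt{B/A}$; one then checks that with the Li/Meka parameters $A,B,C$ come out so that $2\sqrt{AB}+C=4\sqrt{5(1+\delta)}+5$ (and, in the constant-error regime, $2\sqrt{6(1+\delta)}+3$), which finishes the proof. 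I expect the main obstacle to be exactly this bookkeeping: reconciling the three interacting constraints --- Theorem~\ref{nmext_multi}'s entropy and seed bounds, the coalition-size-versus-bias tradeoff of the resilient function, and the output-length boosting step --- and verifying along the way that the genuinely lower-order terms (the $\ell t r\log(n/\eps_{\mathrm{nm}})$ loss in the output length, the additive $O(\log(n/\eps))$ pieces, the cost of the boosting extractor) are all dominated by $(\log n)^{g(r)}$ at the optimal $r$, which holds because that optimum satisfies $r>\tfrac12$. The rest is assembling known black boxes.
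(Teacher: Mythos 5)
Your proposal follows exactly the route the paper intends: the paper gives no proof of this theorem at all, asserting only that it ``is easy to obtain by suitably optimizing parameters'' after combining the \cite{CZ15} framework with the improved components of \cite{Li:2source,Mek:resil} and the new $t$-non-malleable extractor of Theorem~\ref{nmext_multi}. Your sketch --- the sampler-plus-resilient-function reduction, the substitution of Theorem~\ref{nmext_multi}, the output-length boosting step, and the $Ar+B/r+C$ optimization over the recursion depth yielding the $2\sqrt{AB}+C$ exponent --- is a faithful and in fact more detailed account of that same plan, with the final constant bookkeeping deferred just as the paper defers it.
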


\section{A More Involved $\nipm$ and Non-Malleable Extractor}\label{sec:involved}
In this section we use our previous $\nipm$ to construct a more involved $\nipm$, which can be used to give explicit non-malleable extractors with either a better seed length or a better min-entropy requirement. For simplicity and clarity, we will just assume $t=1$, i.e., there is only one tampering adversary. This is also the most interesting case for standard privacy amplification protocols.

Note that our previous $\nipm$ construction implies  Theorem $\ref{basic_rec_nipm}$, which we restate for convenience.
\begingroup
\def\theTHM{\ref{basic_rec_nipm}}
\begin{THM}[restated]
For all integers $m,L>0$, any $\epsilon>0$,   there exists an explicit $(L,1,0,\epsilon,\epsilon')$-$\nipm:\zo^{mL}\times \zo^{d} \rightarrow \zo^{m'}$, where $d=2^{O(\sqrt{\log L})}\log(m/\epsilon), m'= \frac{m}{2^{\sqrt{\log L}}}-2^{O(\sqrt{\log L})}\log(m/\epsilon)$ and $\epsilon'= O(\epsilon L)$.
\end{THM}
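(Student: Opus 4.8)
The plan is to derive Theorem~\ref{basic_rec_nipm} directly from the recursive merger of Theorem~\ref{rec_thm} by making a careful choice of the block size $\ell$ and the entropy-deficiency parameter $d'$, and then converting the weak-seed statement into the ``$d' = 0$'' normalization that Theorem~\ref{basic_rec_nipm} uses. Concretely, I would set $t = 1$, so that all the $(0.9/t)^r$ factors become $0.9^r$, and I would choose $\ell$ so that $\log \ell \approx \sqrt{\log L}$, which makes the number of recursion levels $r = \lceil \log L / \log \ell \rceil = 2^{O(\sqrt{\log L})}$ — wait, more precisely $r \approx \sqrt{\log L}$, and $\ell = 2^{\sqrt{\log L}}$. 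Then $(t+2)^{r+1} = 3^{r+1} = 2^{O(\sqrt{\log L})}$, and the seed length coming out of Theorem~\ref{rec_thm} is $d = (c_{\ref{thm:t-nipm}}\ell \log(m/\epsilon) + d')\cdot 3^{r+1}$. Taking $d' = 0$ (so $\Y$ is fully uniform, $H_\infty(\Y) \ge d - d' = d$), this is $d = 2^{O(\sqrt{\log L})}\log(m/\epsilon)$, matching the claimed bound.

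Next I would track the output length. Theorem~\ref{rec_thm} gives $m' = 0.9^r\big(m - c_{\ref{thm:t-nipm}}\ell (t+1) r \log(m/\epsilon)\big)$ with $t=1$, i.e. $m' = 0.9^r m - 0.9^r \cdot 2 c_{\ref{thm:t-nipm}}\ell r \log(m/\epsilon)$. Since $r \approx \sqrt{\log L}$, we have $0.9^r = 2^{-\Theta(\sqrt{\log L})}$, which is of the form $1/2^{O(\sqrt{\log L})}$; however, the theorem statement wants the leading term to be exactly $m/2^{\sqrt{\log L}}$. The cleanest way to reconcile this is to observe that $0.9^r m \ge m / 2^{\sqrt{\log L}}$ fails in the wrong direction, so instead I would choose the constant hidden in ``$\log \ell \approx \sqrt{\log L}$'' slightly differently: pick $\ell$ so that $0.9^r \ge 2^{-\sqrt{\log L}}$, e.g. $\log \ell = c\sqrt{\log L}$ for a suitable absolute constant $c$, which still keeps $3^{r+1}$ and $\ell$ both $2^{O(\sqrt{\log L})}$. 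Then $m' \ge m/2^{\sqrt{\log L}} - 2^{O(\sqrt{\log L})}\log(m/\epsilon)$, absorbing the $0.9^r \cdot 2c_{\ref{thm:t-nipm}}\ell r$ factor into $2^{O(\sqrt{\log L})}$. The error bound from Theorem~\ref{rec_thm} is $2 c'_{\ref{thm:t-nipm}} L \epsilon = O(\epsilon L) = \epsilon'$, exactly as claimed.

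Finally I would verify that the hypotheses of Theorem~\ref{rec_thm} are implied by the hypotheses of the $(L,1,0,\epsilon,\epsilon')$-$\nipm$ definition: the three bullet conditions (each $\X_i$ close to $\U_m$, independence of the seed family from the matrix family, and existence of one ``good'' row $h$ with $\X_h$ uniform given the tampered rows) are literally the three bullets of the $\nipm$ definition with $d' = 0$; and one needs $H_\infty(\Y) \ge d - d' = d$, which the definition gives for $d' = 0$. One technical point to check is the implicit lower bound on $m$: Theorem~\ref{thm:t-nipm}, invoked inside Theorem~\ref{rec_thm} at each level $i$, needs $m_i \ge d_i \ge k_1 > c_{\ref{thm:t-nipm}}(t+1)\ell \log(m_i/\epsilon)$, and since the $m_i$ shrink by a factor $0.9$ per level this is satisfied provided $m = 2^{\Omega(\sqrt{\log L})}\log(m/\epsilon)$; if $m$ is smaller than this threshold the output length $m'$ would be non-positive and the statement is vacuous, so we may assume it. I expect the main obstacle to be purely bookkeeping: making the single free parameter $\ell$ simultaneously satisfy $0.9^{\lceil \log L/\log\ell\rceil} \ge 2^{-\sqrt{\log L}}$, $\ell = 2^{O(\sqrt{\log L})}$, $3^{\lceil \log L/\log\ell\rceil} = 2^{O(\sqrt{\log L})}$, and the per-level entropy requirements of Theorem~\ref{thm:t-nipm}; all four are of the same ``$\log\ell = \Theta(\sqrt{\log L})$'' flavor, so a single calculation choosing the constant in front of $\sqrt{\log L}$ resolves everything, but it must be done with some care to get the clean $m/2^{\sqrt{\log L}}$ in the output.
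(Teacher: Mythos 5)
Your proposal is correct and matches the paper's own derivation: the paper obtains Theorem~\ref{basic_rec_nipm} precisely by instantiating Theorem~\ref{rec_thm} with $t=1$, $d'=0$ and $\log \ell = \sqrt{\log L}$ (so $r=\lceil\sqrt{\log L}\rceil$ and $(t+2)^{r+1}=2^{O(\sqrt{\log L})}$), exactly as you do. The only quibble is your worry that $0.9^{r} \ge 2^{-\sqrt{\log L}}$ might fail for this choice of $\ell$ --- it already holds, since $0.9^{\sqrt{\log L}} = 2^{-(\log_2(10/9))\sqrt{\log L}} \ge 2^{-\sqrt{\log L}}$ --- so your adjustment of the constant in $\log\ell = c\sqrt{\log L}$ is harmless but unnecessary.
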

\addtocounter{THM}{-1}
\endgroup

We start by proving the following lemma.

\begin{lemma}\label{gradual_nipm}
For all integers $m,L>0$, any $\epsilon>0$, if there is an explicit $(L,1,0,\epsilon,\epsilon_1)$-$\nipm_1:\zo^{mL}\times \zo^{d_1} \rightarrow \zo^{m_1}$, with $d_1 \leq 2^{r(\log L)^{1/q}}\log(m/\epsilon), m_1= \frac{m}{2^{s(\log L)^{1-1/q}}}-2^{O((\log L)^{1-1/q})}\log(m/\epsilon)$ and $\epsilon_1\leq g(L)\epsilon L$, where $g(L)$ is a monotonic non-decreasing function of $L$, and $r, s, q$ are parameters, with $q \in \mathbb{N}$, then there is an explicit $(L, 1, 0, \epsilon, \eps_2)$-$\nipm_2:\zo^{mL}\times \zo^{d_2} \rightarrow \zo^{m_2}$, with $d_2 \leq2^{2r^{1-1/(q+1)}(\log L)^{\frac{1}{q+1}}}\log(m/\epsilon), m_2 =\frac{m}{2^{sr^{\frac{1}{q+1}}(\log L)^{1-1/(q+1)}}}-2^{O({(\log L)^{1-1/(q+1)}})} \log(m/\epsilon))$ and  $\eps_2 \leq 2\eps_1$.
\end{lemma}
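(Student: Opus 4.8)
The plan is to bootstrap the given $\nipm_1$ in exactly the same "merge $\ell$ rows at a time" fashion used to build the recursive $\nipm$ of Theorem~\ref{rec_thm}, but now with $\nipm_1$ (rather than the basic $\ell$-$\nipm$ of Theorem~\ref{thm:t-nipm}) as the atomic merging primitive. Concretely, I would choose a block size $\ell$ to be fixed at the end of the calculation, set $r'=\lceil \log L/\log \ell\rceil$, and apply $\nipm_1$ in $r'$ stages: in stage $i$ we have a matrix with $L/\ell^{i-1}$ rows, we partition its rows into consecutive blocks of $\ell$, feed each block together with a fresh slice $\Y[i]=\slice(\Y,d^{(i)})$ of the seed into $\nipm_1$ (applied to an $L\times m$ instance with only $\ell$ nonzero rows, which the definition of $(L,1,0,\cdot,\cdot)$-$\nipm$ accommodates since it only requires the existence of one good row), and collect the outputs as the rows of the next matrix. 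Just as in the proof of Theorem~\ref{rec_thm}, the slice sizes must grow geometrically, $d^{(i+1)} = 3 d^{(i)}$ (a factor $t+2=3$ for $t=1$), with $d^{(1)} = d_1 + (\text{a small entropy-loss slack})$, so that after fixing the earlier slices the current slice still has enough conditional min-entropy for $\nipm_1$ to apply; this forces $d_2 = 3^{r'} d_1$ up to lower-order terms.

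The correctness argument is an induction on the stage $i$ mirroring Claim~\ref{rec_claim}: after fixing $\{\Y[j]\}_{j\le i}$ and their tampered versions, every row of the stage-$i$ matrix is $\epsilon$-ish close to uniform, the matrices are deterministic functions of the $\X$'s and hence independent of the seeds, there is a surviving "good" row (the one containing the image of the previous good row, by the $\nipm_1$ guarantee), and $\Y$ retains enough min-entropy for the next stage because each fixed slice costs only $O(d^{(i)})$ bits. The error accounting is the simple part: each stage multiplies the per-row error by at most $\ell$ and adds $g(\ell)\cdot\ell\epsilon$-type terms, but since we only ever need the \emph{output} error at the very end and the hypothesis gives $\epsilon_1 \le g(L)\epsilon L$ with $g$ monotone, summing the geometric-ish series over $r'$ stages yields a final error bounded by, say, $2\epsilon_1$ — which is where the clean bound $\eps_2 \le 2\eps_1$ comes from. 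The output length shrinks by the per-stage factor $2^{s(\log L)^{1-1/q}}$ of $\nipm_1$ raised to the $r'$-th power, plus additive $2^{O(\cdot)}\log(m/\epsilon)$ losses, giving $m_2 = m/2^{s r' \cdot (\text{stage factor})} - 2^{O(\cdot)}\log(m/\epsilon)$.

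The parameter optimization is the computational heart of the lemma. We want to minimize $d_2 \approx 3^{r'} d_1 \le 3^{\log L/\log\ell}\, 2^{r(\log\ell)^{1/q}}\log(m/\epsilon)$, i.e.\ minimize $\frac{\log L}{\log \ell}\log 3 + r(\log\ell)^{1/q}$ over $\log\ell$. Writing $x=\log\ell$, this is $\Theta(\log L)/x + r x^{1/q}$; balancing the two terms gives $x^{1+1/q} \asymp \log L / r$, i.e.\ $\log\ell \asymp (\log L/r)^{q/(q+1)}$, and plugging back in yields exponent $\asymp r^{1/(q+1)}(\log L)^{1/(q+1)}$ in the new seed length (and correspondingly $r' = \log L/\log\ell \asymp r^{1/(q+1)}(\log L)^{1/(q+1)}$, with the stated $2r^{1-1/(q+1)}$ constant coming from being a bit loose to absorb the $\log 3$ and rounding). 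The same $\ell$ choice, fed into the output-length computation, produces the claimed $m_2$ with exponent $s r^{1/(q+1)}(\log L)^{1-1/(q+1)}$. The main obstacle is getting these exponents to come out \emph{exactly} in the advertised form $2^{2r^{1-1/(q+1)}(\log L)^{1/(q+1)}}$ and $m/2^{sr^{1/(q+1)}(\log L)^{1-1/(q+1)}}$ — this requires care with the rounding in $r'=\lceil\log L/\log\ell\rceil$, with ensuring $\ell$ is a genuine positive integer, with the $q\in\mathbb N$ constraint so that $(q+1)\in\mathbb N$ for the next iteration, and with checking that the monotone slack function $g$ propagates correctly (i.e.\ one should track an explicit $g_2(L)$ and verify $g_2(L)\epsilon L$ still dominates $\eps_2$). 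Everything else is a routine repetition of the Claim~\ref{rec_claim} induction.
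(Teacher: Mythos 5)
Your proposal follows essentially the same route as the paper: run the recursive merging of Algorithm~\ref{alg1}/Theorem~\ref{rec_thm} with $\nipm_1$ as the block merger on $\ell$ rows, use geometrically growing seed slices and the Claim~\ref{rec_claim}-style induction, sum the per-stage errors to get $2\eps_1$, and balance $\frac{\log L}{\log\ell}$ against $r(\log\ell)^{1/q}$ to pick $\log\ell\asymp(\log L/r)^{q/(q+1)}$. The only blemish is a typo in your seed-length exponent, which should read $r^{q/(q+1)}(\log L)^{1/(q+1)}=r^{1-1/(q+1)}(\log L)^{1/(q+1)}$ rather than $r^{1/(q+1)}(\log L)^{1/(q+1)}$ (your output-length exponent and your reference to the stated bound are consistent with the correct value).
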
 

\begin{proof}
The idea is to use Algorithm $\ref{alg1}$, with $(\ell,1,0,\epsilon,\epsilon_1')$-$\nipm_1$, $\epsilon_1' \le g(\ell)\ell\epsilon$ as the simpler merger for some parameter $\ell$ s.t. in each step, the merger acts on $\ell$ rows. Following the proof of Theorem $\ref{rec_thm}$, it can be shown that the seed length of $\nipm_2$ will be $$d_2=\log(m/\epsilon)2^{r(\log \ell)^{1/q}} 2^{2\frac{\log L}{\log \ell}}.$$ We now choose an $\ell$ to minimize this, which gives $(\log \ell)^{\frac{q+1}{q}}=\frac{2\log L}{r}$,  and  thus the seed length  is $$d_2=2^{2r^{\frac{q}{q+1}}(2\log L)^{\frac{1}{q+1}}}\log(m/\epsilon).$$  It can be verified that for this setting of parameters, the output length is 
\begin{align*}
m_2&=\frac{m}{(2^{s(\log L)^{1-1/q}})^{\frac{\log L}{\log \ell}}}-O(\ell \log(m/\epsilon)) \\
       &=\frac{m}{2^{sr^{\frac{1}{q+1}}(\log L)^{\frac{q}{q+1}}}}-2^{O({(\frac{\log L}{r})^{\frac{q}{q+1}}})} \log(m/\epsilon)\\
       &=\frac{m}{2^{sr^{\frac{1}{q+1}}(\log L)^{1-1/(q+1)}}}-2^{O({(\log L)^{\frac{q}{q+1}}})} \log(m/\epsilon))  
\end{align*}

Finally, the error is bounded by $\sum_{i=1}^{\frac{\log L}{\log \ell}}g(\ell)\epsilon\ell^i <2 g(\ell)L\epsilon<2\epsilon_1$. 
\end{proof}

Now, starting with the $\nipm$ from Theorem $\ref{basic_rec_nipm}$, and using  Lemma $\ref{gradual_nipm}$ an optimal number of times, we have the following theorem.
\begingroup
\def\theTHM{\ref{advanced_nipm}}
\begin{THM}[restated] For all integers $m,L>0$, any $\epsilon>0$,   there exists an explicit $(L,1,0,\epsilon,\epsilon')$-$\nipm:\zo^{mL}\times \zo^{d} \rightarrow \zo^{m'}$, where $d=2^{O(\sqrt{\log \log L})}\log (m/\eps), m'= \frac{m}{L2^{(\log \log L)^{O(1)}}}-O(L\log(m/\eps))$ and $\epsilon'= 2^{O(\sqrt{\log \log L})} L \eps$.
\end{THM}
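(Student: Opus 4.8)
The plan is to iterate Lemma~\ref{gradual_nipm} starting from the base merger of Theorem~\ref{basic_rec_nipm}. Concretely, Theorem~\ref{basic_rec_nipm} gives an explicit $(L,1,0,\epsilon,\epsilon')$-$\nipm$ with $d = 2^{O(\sqrt{\log L})}\log(m/\epsilon)$; in the notation of Lemma~\ref{gradual_nipm} this is the case $q=2$, with $r=s=O(1)$ and $g(L)=O(1)$ (so $\epsilon' \le g(L)L\epsilon$). Applying Lemma~\ref{gradual_nipm} once converts a $q$-merger into a $(q+1)$-merger: the seed length exponent drops from $r(\log L)^{1/q}$ to $2r^{1-1/(q+1)}(\log L)^{1/(q+1)}$ (up to the constant factor absorbed into $r$), the output length denominator becomes $2^{sr^{1/(q+1)}(\log L)^{1-1/(q+1)}}$, and the error at most doubles, $\epsilon_2 \le 2\epsilon_1$. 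So after applying the lemma $j$ times we obtain a merger with seed length roughly $2^{O_j(1)\,(\log L)^{1/(j+2)}}\log(m/\epsilon)$, output length denominator $2^{O_j(1)\,(\log L)^{1-1/(j+2)}}$, and error $2^j \cdot \epsilon' = 2^j \cdot O(L\epsilon)$.

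The next step is to optimize the number of iterations $j$. Each iteration improves the seed-length exponent (from $(\log L)^{1/(j+2)}$ toward smaller powers of $\log L$) but worsens the output-length denominator (it grows toward $L$ once the exponent $1-1/(j+2)$ approaches $1$) and multiplies the error by $2$. To reach seed length $2^{O(\sqrt{\log\log L})}\log(m/\epsilon)$, I would set $(\log L)^{1/(j+2)} \approx 2^{\Theta(\sqrt{\log\log L})}$, i.e. take the exponent $\tfrac{1}{j+2}$ so that $\tfrac{\log\log L}{j+2} = \Theta(\sqrt{\log\log L})$, which gives $j+2 = \Theta(\sqrt{\log\log L})$, hence $j = O(\sqrt{\log\log L})$. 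With this choice: the seed length is $2^{O(\sqrt{\log\log L})}\log(m/\epsilon)$ as desired; the error is multiplied by $2^{O(\sqrt{\log\log L})}$, giving $\epsilon' = 2^{O(\sqrt{\log\log L})}L\epsilon$; and the output-length denominator is $2^{(\log L)^{1-\Theta(1/\sqrt{\log\log L})}}$. A short calculation shows $(\log L)^{1-\Theta(1/\sqrt{\log\log L})} = (\log L)\cdot 2^{-\Theta(\sqrt{\log\log L}\,)} \le \log L$, but more to the point one checks that $2^{(\log L)^{1-\Theta(1/\sqrt{\log\log L})}} \le L \cdot 2^{(\log\log L)^{O(1)}}$ (using $(\log L) - (\log L)^{1-\delta} = \Omega((\log L)^{1-\delta})$ and bounding that by $(\log\log L)^{O(1)}$ when $\delta = \Theta(1/\sqrt{\log\log L})$), which matches the claimed $m' = \tfrac{m}{L\,2^{(\log\log L)^{O(1)}}} - O(L\log(m/\epsilon))$. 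The additive $\log(m/\epsilon)$ loss at each of the $O(\sqrt{\log\log L})$ stages accumulates to at most $2^{O(\sqrt{\log\log L})}\log(m/\epsilon) = O(L\log(m/\epsilon))$, consistent with the stated output length.

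The main obstacle I expect is the bookkeeping in the composition, specifically verifying that the hypotheses of Lemma~\ref{gradual_nipm} are genuinely preserved across iterations — in particular that at stage $j$ the object really has the exact form $d_j \le 2^{r_j(\log L)^{1/(j+2)}}\log(m/\epsilon)$ and $m_j = \tfrac{m}{2^{s_j(\log L)^{1-1/(j+2)}}} - 2^{O((\log L)^{1-1/(j+2)})}\log(m/\epsilon)$ required to feed it back in, with the constants $r_j, s_j$ growing in a controlled way (they should stay bounded by an absolute constant, or at worst grow subexponentially in $j$, since $j$ itself is only $O(\sqrt{\log\log L})$). One also needs the technical side condition of Lemma~\ref{gradual_nipm} — the monotone non-decreasing error multiplier $g$ — to remain of the form $2^{O(\sqrt{\log\log L})}$ after the iterations, which follows since each step only doubles it and there are $O(\sqrt{\log\log L})$ steps. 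Once these invariants are pinned down, the theorem follows by choosing the iteration count as above and substituting.
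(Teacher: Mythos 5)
Your proposal follows essentially the same route as the paper's proof: take Theorem~\ref{basic_rec_nipm} as the $q=2$ base case of Lemma~\ref{gradual_nipm}, apply the lemma repeatedly with $q$ increasing by one each time, and stop after $\Theta(\sqrt{\log\log L})$ iterations so as to balance the $2^{O(j)}$ growth of the constants $r_j,s_j$ and of the error against the shrinking exponent $(\log L)^{1/(j+2)}$; your accounting of the error ($2^{O(\sqrt{\log\log L})}L\eps$) and of the output length matches the paper's. One caveat, which you inherit from the paper's statement rather than introduce yourself: once you set $(\log L)^{1/(j+2)}=2^{\Theta(\sqrt{\log\log L})}$, the seed length $2^{O_j(1)\cdot(\log L)^{1/(j+2)}}\log(m/\eps)$ is $2^{2^{O(\sqrt{\log\log L})}}\log(m/\eps)$, i.e.\ doubly exponential in $\sqrt{\log\log L}$ rather than $2^{O(\sqrt{\log\log L})}\log(m/\eps)$ --- which is consistent with the $2^{2^{O(\sqrt{\log\log\log(n/\eps)})}}$ factors appearing in Theorems~\ref{nmext_better_entropy} and~\ref{nmext_better_seed} that rely on this merger, so the displayed seed length in the theorem (and the corresponding line of your write-up) should be read accordingly.
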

\addtocounter{THM}{-1}
\endgroup
\begin{proof}
We start from the basic case with the $(L,1,0,\epsilon,\epsilon')-\nipm$ from Theorem $\ref{basic_rec_nipm}$. Thus $q=2, r=O(1), s=1$. We now use Lemma $\ref{gradual_nipm}$, increasing $q$ by one each time. Eventually, we stop at $q=\sqrt{\log \log L}$, noticing that this minimize the seed length. It can be verified that the seed length of the final $\nipm$ is $2^{O(\sqrt{\log \log L})}\log (m/\eps)$, the output length is $\frac{m}{L2^{(\log \log L)^{O(1)}}}-O(L\log(m/\eps))$ and the error is bounded by $\eps \leq 2^{O(\sqrt{\log \log L})} L \eps$.
\end{proof}

Using the $\nipm$ from Theorem $\ref{advanced_nipm}$ in Algorithm $\ref{alg3}$, we obtain  the following non-malleable extractor with a slightly shorter seed length than Theorem $\ref{nmext}$ at the expense of requiring larger min-entropy.
\begingroup
\def\theTHM{\ref{nmext_better_seed}}
\begin{THM}[restated] For all $n, k \in \mathbb{N}$ and any $\epsilon>0$, with  $k \ge (\log(n/\epsilon))^32^{(\log \log \log(n/\epsilon))^{O(1)}}$, there exists an explicit $(k,\epsilon)$-non-malleable extractor $\nmExt:\zo^n \times \zo^d \rightarrow \zo^{m}$, where $d= \log(n/\epsilon)2^{2^{O(\sqrt{\log \log \log(n/\epsilon)})}}, m= \frac{k}{\log(n/\epsilon)2^{(\log \log \log(n/\epsilon))^{O(1)}}} - O((\log(n/\epsilon))^2)$.
\end{THM}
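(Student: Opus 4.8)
The plan is to follow exactly the template of the proof of Theorem~\ref{nmext} (Algorithm~\ref{alg3}), replacing the basic recursive merger of Theorem~\ref{rec_thm} (built from Theorem~\ref{basic_rec_nipm}) with the more involved $\nipm$ from Theorem~\ref{advanced_nipm}. Concretely, given the $(n,k)$-source $\X$ and the seed $\Y$ with its tampered copy $\Y'=\A(\Y)$, I would first run $\adv$ from Theorem~\ref{adv_gen} with error parameter $\epsilon_1$ to obtain advice strings $\W=\adv(\X,\Y)$ and $\W'=\adv(\X,\Y')$ of length $L=c_{\ref{adv_gen}}\log(n/\epsilon_1)$, so that $\W\neq\W'$ with probability $1-\epsilon_1$ and, conditioned on a short function $f(\X)$ and on $(\W,\W')$, the source $\X$ is still independent of $(\Y,\Y')$ with $\X,\Y$ each retaining min-entropy deficiency $O(\log(n/\epsilon_1))$. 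Then, using a small slice $\Y_1$ of $\Y$ and the $\flip$ function from Theorem~\ref{flip}, I would build the $L\times m'$ matrices $\V,\V'$ whose $i$-th rows are $\flip(\X,\Y_1,\W_i)$, $\flip(\X,\Y_1,\W_i')$; since some coordinate $h$ has $\W_h\neq\W_h'$, the row $\V_h$ is close to uniform given $\V_h'$, while every row $\V_i$ is close to uniform marginally. Next I would extract $\overline{\Y}=\Ext_1(\Y,\slice(\V_1,d_2))$ to get a clean (high-rate) seed deterministically depending on $\Y$, take a slice $\overline{\Y_1}$ of it, and form $\Z_i=\Ext_2(\V_i,\overline{\Y_1})$ so that $(\Z,\Z')$ is a deterministic function of $\X$ and independent of $(\overline{\Y},\overline{\Y}')$. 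Finally I would output $\overline{\Z}=\nipm(\Z,\overline{\Y})$ using the merger of Theorem~\ref{advanced_nipm}.

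The correctness argument is a sequence of conditioning steps identical in structure to Claims~\ref{nm_c_1}--\ref{nm_c_4}: fix $(\W,\W',f(\X),\Y_1,\Y_1')$ (apply Theorem~\ref{adv_gen} and Theorem~\ref{flip}); then fix the slices $\overline{\V_1},\overline{\V_1}',\overline{\Y_1},\overline{\Y_1}'$ (apply the strong-extractor property of $\Ext_1,\Ext_2$, noting these slices have small support, hence cost little entropy); this leaves $(\Z,\Z')$ a function of $\X$, independent of $(\overline{\Y},\overline{\Y}')$, with each $\Z_i$ near-uniform and $\Z_h$ near-uniform given $\Z_h'$, and $\overline{\Y}$ of high min-entropy rate. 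Invoking Theorem~\ref{advanced_nipm} with $L=O(\log(n/\epsilon_1))$ then gives $\overline{\Z}$ close to uniform given $\overline{\Z}'$ with error $2^{O(\sqrt{\log\log L})}L\epsilon_1 = \epsilon_1\cdot 2^{O(\sqrt{\log\log\log(n/\epsilon_1)})}\log(n/\epsilon_1)$; setting $\epsilon_1$ slightly below the target $\epsilon$ absorbs this factor.

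The parameter bookkeeping is the only nontrivial part. Since Theorem~\ref{advanced_nipm} needs $m'\ge \Omega(L\log(m'/\epsilon_1))$ rows-width and outputs $m'/(L\,2^{(\log\log L)^{O(1)}})-O(L\log(m'/\epsilon_1))$ bits, and $m'=\delta_{\ref{flip}}k$, $L=O(\log(n/\epsilon_1))$, the min-entropy requirement becomes $k\ge (\log(n/\epsilon))^3 2^{(\log\log\log(n/\epsilon))^{O(1)}}$: one factor of $\log(n/\epsilon)$ from $L$ multiplying the width requirement, a second from the $\log(m'/\epsilon_1)$ term inside, and a third from the fact that the merger's width must itself dominate $L\log$-type terms; the seed length is governed by $\overline{\Y}$, which must supply $d=2^{O(\sqrt{\log\log L})}\log(m'/\epsilon_1)=2^{2^{O(\sqrt{\log\log\log(n/\epsilon)})}}\log(n/\epsilon)$ bits for the merger plus $O(\log(n/\epsilon))$ for the $\adv$/$\flip$/extractor overhead, so $d=\log(n/\epsilon)2^{2^{O(\sqrt{\log\log\log(n/\epsilon)})}}$. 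The output length is $m = k/(\log(n/\epsilon)2^{(\log\log\log(n/\epsilon))^{O(1)}}) - O((\log(n/\epsilon))^2)$, matching the statement. I expect the main obstacle to be verifying that the width condition of Theorem~\ref{advanced_nipm}, namely $m'\ge O(L\log(m'/\epsilon))$, together with the output-length subtraction $O(L\log(m'/\epsilon))$, is consistent with the claimed bounds when one tracks the hidden constants through the $r=O(\sqrt{\log\log\log(n/\epsilon)})$ recursion levels — i.e., ensuring no recursion level blows up the $2^{(\log\log\log)^{O(1)}}$ factors into something larger. Everything else is a direct transcription of the proof of Theorem~\ref{nmext}.
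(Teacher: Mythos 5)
Your proposal matches the paper's intended proof exactly: the paper gives no argument for Theorem~\ref{nmext_better_seed} beyond stating that it follows by substituting the $\nipm$ of Theorem~\ref{advanced_nipm} into Algorithm~\ref{alg3} and repeating the analysis of Theorem~\ref{nmext}, which is precisely what you do, with the conditioning chain of Claims~\ref{nm_c_1}--\ref{nm_c_4} and the parameter bookkeeping carried out in the right way. The only wrinkle is your intermediate identity $2^{O(\sqrt{\log\log L})}\log(m'/\epsilon_1)=2^{2^{O(\sqrt{\log\log\log(n/\epsilon)})}}\log(n/\epsilon)$, which has one exponential too many on the right relative to the seed length as stated in Theorem~\ref{advanced_nipm}; this traces to the paper's own internal inconsistency (the recursion in the proof of Theorem~\ref{advanced_nipm} actually yields seed length $2^{2^{O(\sqrt{\log\log L})}}\log(m/\epsilon)$, which is exactly what Theorem~\ref{nmext_better_seed} requires), so your final bound is the correct one.
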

\addtocounter{THM}{-1}
\endgroup
The proof of Theorem $\ref{nmext_better_seed}$ is exactly similar to Theorem $\ref{nmext}$, and we skip it.

It is not hard to modify Algorithm $\ref{alg3}$ such that the  the role of the source and the seed are swapped, in the sense that the seed to $\nipm$ is a deterministic function of the source to the non-malleable extractor, and the matrix is a deterministic function of the seed to the non-malleable extractor.  By this modification. we can achieve a non-malleable extractor that works for lower slightly min-entropy than Theorem $\ref{nmext}$  at the expense of using a larger seed. We state the following theorem without proof.
\begingroup
\def\theTHM{\ref{nmext_better_entropy}}
\begin{THM}[restated] For all $n, k \in \mathbb{N}$ and any  $\epsilon>0$, with  $k \ge \log(n/\epsilon)2^{2^{\Omega(\sqrt{\log \log \log(n/\epsilon)})}}$, there exists an explicit $(k,\epsilon)$-non-malleable extractor $\nmExt:\zo^n \times \zo^d \rightarrow \zo^{m}$, where $d= (\log(n/\epsilon))^32^{(\log \log \log(n/\epsilon))^{O(1)}}, m= \Omega(k)$.
\end{THM}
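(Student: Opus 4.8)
The plan is to adapt the construction behind Theorem~\ref{nmext} (Algorithm~\ref{alg3}) with two changes: replace the recursive merger of Theorem~\ref{rec_thm} by the more efficient $\nipm$ of Theorem~\ref{advanced_nipm}, and \emph{swap the roles of the source $\X$ and the seed $\Y$} in the merging phase, so that the $L$-row matrix fed into the merger is a deterministic function of $\Y$ while the (nearly uniform) seed consumed by the merger is a deterministic function of $\X$. Concretely, I would run the advice generator of Theorem~\ref{adv_gen} to get $\W=\adv(\X,\Y)\in\zo^{L}$ with $L=O(\log(n/\epsilon_1))$; by Theorem~\ref{adv_gen}, $\W\neq\W'$ except with probability $\epsilon_1$, and after conditioning on $(\W,\W',f(\X))$ both $\X$ and $\Y$ keep all but $O(\log(n/\epsilon_1))$ of their entropy, with $\X$ still independent of $(\Y,\Y')$. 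Then, using the same pre-processing as in Algorithm~\ref{alg3} (slicing, strong extraction, and the flip-flop of Theorem~\ref{flip} with advice bit $\W_i$ on the $i$-th row) but with the roles of the two inputs exchanged, I would produce a matrix $\Z$ of $L$ near-uniform rows that is a deterministic function of $\Y$, with the ``good'' row $\Z_h$ (where $\W_h\neq\W_h'$) uniform even given $\Z_h'$, together with a near-uniform string $\overline{\X}$ that is a deterministic function of $\X$ of exactly the length demanded by the seed of Theorem~\ref{advanced_nipm}; the output is $(L,1,0,\cdot,\cdot)\text{-}\nipm(\Z,\overline{\X})$ computed with the merger of Theorem~\ref{advanced_nipm}.

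The correctness argument would be a line-by-line analogue of Claims~\ref{nm_c_1}--\ref{nm_c_4}: at each conditioning step the conditioned variable has small support, so entropies drop only by a bounded amount (Lemma~\ref{lem:entropy_loss}), and it is a deterministic function of either the $\X$-side or the $\Y$-side, so independence between the two sides is preserved; meanwhile the flip-flop keeps the good row uniform given its tampered copy. After all the fixings, $\Z$ (a function of $\Y$) is independent of $\overline{\X}$ (a function of $\X$), every $\Z_i$ is $O(\epsilon_1)$-close to uniform, $\Z_h\mid\Z_h'$ is $O(\epsilon_1)$-close to uniform, and $\overline{\X}$ is $O(\epsilon_1)$-close to a uniform string of the right length, so Theorem~\ref{advanced_nipm} with $t=1$ gives that $\nipm(\Z,\overline{\X})$ is close to uniform given $\nipm(\Z',\overline{\X}')$. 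The error accumulates to $O(\epsilon_1 L)+2^{O(\sqrt{\log\log L})}L\epsilon_1=O\!\big(\epsilon_1\log(n/\epsilon_1)2^{(\log\log\log(n/\epsilon_1))^{O(1)}}\big)$, so rescaling $\epsilon_1=\epsilon/\poly(n/\epsilon)$ yields final error $\epsilon$ while affecting the seed length and min-entropy by only constant factors.

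The main obstacle is the parameter accounting, since exchanging the two inputs moves the bulk of the randomness onto the seed. The rows of $\Z$ now have length $\Theta(d)$, and by Theorem~\ref{advanced_nipm} the merger outputs about $(\text{row length})\big/\big(L\,2^{(\log\log L)^{O(1)}}\big)$ bits; with $L=\Theta(\log(n/\epsilon))$ this forces $d=\Theta\!\big(k\log(n/\epsilon)2^{(\log\log\log(n/\epsilon))^{O(1)}}\big)$ to achieve $m=\Omega(k)$, which matches the claimed seed length $(\log(n/\epsilon))^{3}2^{(\log\log\log(n/\epsilon))^{O(1)}}$ once $k$ sits near its lower bound, and the constraint $k\ge\log(n/\epsilon)2^{2^{\Omega(\sqrt{\log\log\log(n/\epsilon)})}}$ is exactly what is needed so that $\overline{\X}$ meets the (small) seed-length demand of Theorem~\ref{advanced_nipm} while the merger still has positive output length. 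One must also double-check that Theorems~\ref{flip} and~\ref{adv_gen} are invoked in legal directions (in particular $\flip$'s source argument must carry linear min-entropy and its weak-seed argument must have entropy rate bounded away from $1/2$) and that $\slice(\cdot,\cdot)$ is only ever applied to strings already known to be close to uniform; since $\Y$ is uniform this can be arranged exactly as in the proof of Theorem~\ref{nmext}, which is why the paper states the result without proof.
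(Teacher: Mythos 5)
Your proposal matches the paper's intended argument exactly: the paper states Theorem~\ref{nmext_better_entropy} without proof, remarking only that one should modify Algorithm~\ref{alg3} by swapping the roles of the source and the seed (so the matrix fed to the merger is a deterministic function of $\Y$ and the merger's seed is a deterministic function of $\X$) and plug in the $\nipm$ of Theorem~\ref{advanced_nipm}, which is precisely your construction, and your line-by-line adaptation of Claims~\ref{nm_c_1}--\ref{nm_c_4} together with the parameter accounting is the natural way to fill in the omitted details. The only soft spot is your heuristic for where the $2^{2^{\Omega(\sqrt{\log\log\log(n/\epsilon)})}}$ entropy threshold comes from, but this does not affect the validity of the approach.
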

\addtocounter{THM}{-1}
\endgroup

\section{Improved Multi-Source Extractors}\label{multi_source}
In this section, we construct extractors for a constant number of independent sources $\X_1,\ldots,\X_C$, each with min-entropy $\tilde{O}(\log n)$. In particular, this improves upon a recent result of Cohen and Schulman \cite{Coh16b}, where they constructed an extractor for $O(1/\delta)$ independent sources, with each having min-entropy $\log^{1+\delta}(n)$.

Our main result in this section is the following.
\begingroup
\def\theTHM{\ref{multi_thm}}
\begin{THM}[restated]
 There exists a constant $C>0$ s.t for all $n, k \in \mathbb{N}$ and any constant $\epsilon>0$, with $k \ge  2^{C\sqrt{\log \log(n)}}\log n$, there exists an explicit function $\Ext:(\zo^n)^{C} \rightarrow \zo$, such that $$|\Ext(\X_1,\ldots,\X_{C}) -\U_1 | \le \epsilon.$$
\end{THM}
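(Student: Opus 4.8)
The plan is to follow the framework of Cohen and Schulman \cite{Coh16b} for constructing multi-source extractors, but to replace their independence preserving merger (which uses $O(1/\delta)$ independent sources) with our improved $(L,\ell,t)$-$\ipm$ from Theorem~\ref{rec_thm_weak}, which uses only \emph{one} extra independent source. First I would recall the high-level structure of the Cohen--Schulman construction: given a constant number of independent $(n,k)$-sources, one uses a few of them together with a correlation breaker / advice generator to produce a matrix $\V$ with $L = O(\log n)$ rows, each row close to uniform, with the guarantee that there is a ``good'' row $\V_h$ that is uniform even conditioned on a small number of tampered versions (arising because the sources are only weakly random and the advice may collide). The bottleneck step in \cite{Coh16b} is to merge this $L$-row matrix down to a single uniform string while preserving the independence of the good row; their merger needs error $1/\poly(n)$ and hence reduces the number of rows by only a $\log^\delta n$ factor per step, each step costing one fresh source, for a total of $O(1/\delta)$ sources.

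The key replacement is as follows. After producing the matrix $\V$ (and the tampered copies $\V^1,\dots,\V^t$ for a constant $t$), we feed $\V$ together with one additional independent $(n,k)$-source $\Y$ into the $(L,\ell,t)$-$\ipm$ of Theorem~\ref{rec_thm_weak}. We choose $L = O(\log n)$, $\ell = 2^{\sqrt{\log L}} = 2^{O(\sqrt{\log\log n})}$, so that $r = \lceil \log L / \log \ell \rceil = O(\sqrt{\log L}) = O(\sqrt{\log\log n})$, and the hypothesis $k \ge 2 c_{\ref{thm:t-nipm}} \ell \log(m/\epsilon)(t+2)^{r+2}$ becomes $k \ge 2^{O(\sqrt{\log\log n})}\log n$, which is exactly the min-entropy bound in the statement (with $m$ set to roughly $k$, so that $\log(m/\epsilon) = O(\log n)$ since $\epsilon$ is constant). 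Theorem~\ref{rec_thm_weak} then guarantees that $\overline{\Z} = (L,\ell,t)\text{-}\ipm(\V,\Y)$ is $O(L\epsilon)$-close to uniform conditioned on the tampered outputs $(L,\ell,t)\text{-}\ipm(\V^g,\Y)$ — in particular, $\overline{\Z}$ alone is close to uniform. Setting the output length of the $\ipm$ to $1$ bit (or using a strong seeded extractor on a slightly longer output to get one bit with the right error), and adjusting $\epsilon$ by a constant factor to absorb the $O(L) = O(\log n)$ blow-up in error, gives the claimed one-bit extractor with constant error. The total number of sources used is: the constantly many sources needed in the Cohen--Schulman framework to build the matrix $\V$ and its tampered copies, plus the single source $\Y$ fed to our merger — an absolute constant $C$.

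The main steps, in order, are: (1) invoke the Cohen--Schulman reduction to produce, from $C-1$ independent $(n,k)$-sources, a matrix $\V$ with $L = O(\log n)$ rows each $\epsilon$-close to uniform, a constant number $t$ of tampered matrices $\V^1,\dots,\V^t$, and a good row index $h$ with $\V_h$ uniform conditioned on $\{\V_h^g\}_{g\in[t]}$, all independent of a further source $\Y$; (2) verify that the min-entropy requirement $k \ge 2^{C\sqrt{\log\log n}}\log n$ suffices to satisfy the hypotheses of Theorem~\ref{rec_thm_weak} with the chosen $L$, $\ell$, $r$, and with output length a single bit; (3) apply Theorem~\ref{rec_thm_weak} to merge $\V$ using $\Y$ and obtain a bit that is $O(L\epsilon)$-close to uniform; (4) rescale $\epsilon$ and count sources to conclude. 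The main obstacle I expect is step (1): extracting from the Cohen--Schulman paper exactly what is produced and in what form (in particular, nailing down the number $t$ of tampered copies, the dependence structure between the matrix and the leftover source $\Y$, and the entropy lost in the sources along the way), and checking that our merger's weaker hypothesis — only \emph{one} good row rather than a $0.99$ fraction — is actually what the Cohen--Schulman reduction delivers, or can be made to deliver with minor modification. Once the interface is pinned down, plugging in Theorem~\ref{rec_thm_weak} in place of their merger is mechanical, and the parameter bookkeeping reduces to the calculation $r = O(\sqrt{\log\log n})$ already carried out for the non-malleable extractor in Section~\ref{sec:opt_nm}.
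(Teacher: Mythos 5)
You have correctly identified the paper's key move -- replacing the Cohen--Schulman merger with the one-extra-source $(L,\ell,t)$-$\ipm$ of Theorem~\ref{rec_thm_weak}, with $L=O(t\log n)$, $\ell=2^{\sqrt{\log L}}$, hence $r=O(\sqrt{\log\log n})$ merging steps and min-entropy $2^{O(\sqrt{\log\log n})}\log n$ -- and that parameter bookkeeping matches the paper. However, your description of the surrounding architecture has a genuine gap: you have imported the interface of the \emph{non-malleable extractor} construction (one matrix $\V$, a constant number of adversarially tampered copies $\V^1,\dots,\V^t$, output the merger result directly) into the multi-source setting, and that is not what the Cohen--Schulman reduction delivers, nor would it suffice. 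What Theorem~\ref{sr_indep} actually produces from the first $C-1$ sources is a sequence of $r=n^{3/\alpha}$ correlated $L\times m$ matrices $\Y^1,\dots,\Y^r$, of which only \emph{most} (all but $r^{1/2-\alpha}$) are ``good,'' and the good-row guarantee is a $t$-wise one: for each good $g$ and each choice of $t$ other indices there is a row $h$ with $\Y^g_h$ uniform given the others' row $h$. No single matrix is identifiably good, so you cannot output the merger applied to one matrix; the bad matrices could be constant and you have no way to avoid selecting one.

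The missing step is the aggregation. The paper applies the $\ipm$ (with the single fresh source $\X_{C+1}$) to \emph{every} matrix in the sequence, yielding $r$ output bits that, restricted to the good set $S$, form an almost $(t,\gamma)$-wise independent distribution (Theorem~\ref{good_bits}); it then outputs the \textnormal{Majority} of these $r$ bits, invoking the resilience of Majority to a $t$-wise independent distribution corrupted on $r^{1/2-\alpha}$ coordinates (together with Theorem~\ref{almost_t_wise} to pass from almost $t$-wise to exactly $t$-wise independence). This is where the constant $t$ is chosen large enough that $\log t/t<\epsilon/2$, and where the merger's error must be taken polynomially small in $n$ (not constant) so that $\gamma r^t$ is negligible -- another point your proposal glosses over when you say one can ``adjust $\epsilon$ by a constant factor.'' Without the $r$-fold repetition and the Majority vote, the argument does not close.
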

\addtocounter{THM}{-1}
\endgroup

Our starting point is the following reduction from \cite{Coh16b}. Informally, a constant number of independent sources are used to transform into a sequence of matrices such that a large fraction of the matrices follow a certain $t$-wise independence property. For our purposes, we need to slightly modify this construction. The length of the rows (the parameter $m$ in the following theorem) in the work of \cite{Coh16b} can be set to $c\log(n/\epsilon)$, for any constant $c$. Using another additional source and extracting from it using each row as seed (using any optimal strong-seeded extractor), the length of each row can be made $\Omega(k)$. We state the theorem from \cite{Coh16b} with this modification.

\begin{thm}[\cite{Coh16b}]\label{sr_indep}There exists constants $\alpha>0$ and and $c_{\ref{sr_indep}}$ such that for all $n,t \in \mathbb{N}$, and for any $\epsilon,\delta>0$, there exists an polynomial time computable function $f:(\zo^n)^{C} \rightarrow (\zo^{Lm})^r$, where $C=7/\alpha, L=O(t \log n), r=n^{3/\alpha}, m=\Omega(k)$, such that the following hold: Let $\X_1,\ldots,\X_C$ be independent $(n,k)$ sources, $k=c_{\ref{sr_indep}}t \log(t) \log(n\log t/\epsilon)$. Then there exists a subset $S \subset [r]$, $|S| \ge r-r^{\frac{1}{2}-\alpha}$ and a sequence of $L \times m$ matrices $\Y^1,\ldots,\Y^r$ such that: 
\begin{itemize}
\item $f(\X_1,\ldots,\X_C)$ is $1/r$-close to $\Y^1,\ldots,\Y^r$,
\item for any $i \in [L]$ and $g\in S$, $\Y^g_i$ is $\epsilon$-close to $\U_m$,
\item for any $g \in S$, and any distinct $i_1,\ldots,i_t$ in $S \setminus \{ g\}$, there exists an $h \in [L]$ such that $\Y^g_h|\{\Y^j_h: j\in [r]\setminus\{ g\}\}$ is $\epsilon$-close to uniform.
\end{itemize}
 \end{thm}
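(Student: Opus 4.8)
\textbf{Proof proposal for Theorem~\ref{multi_thm}.}
The plan is to follow the framework of Cohen and Schulman~\cite{Coh16b}, but to replace their independence-preserving merger (which consumes $O(1/\delta)$ independent sources) with the single-source $\ipm$ from Theorem~\ref{rec_thm_weak}, and to pay for the one extra source inside the constant $C$. Concretely, I would start by invoking Theorem~\ref{sr_indep} with the tampering parameter set to a suitable constant $t$ (chosen so that the final resilient function step goes through; $t$ will be an absolute constant because $\epsilon$ is constant). That reduction uses $C_0 = 7/\alpha = O(1)$ independent $(n,k)$-sources and produces a sequence of $L\times m$ matrices $\Y^1,\dots,\Y^r$, where $L = O(t\log n) = O(\log n)$, $r = n^{O(1)}$, and $m = \Omega(k)$, together with a large ``good'' set $S\subseteq[r]$ of density $1 - r^{-\Omega(1)}$, such that every $\Y^g$ with $g\in S$ has all rows close to uniform and, for every $g\in S$ and every $t$ distinct tampering indices $i_1,\dots,i_t\in S\setminus\{g\}$, some row $\Y^g_h$ is close to uniform even conditioned on the corresponding tampered rows. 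The key numerology: since $k = \log n\cdot 2^{C'\sqrt{\log\log n}}$ and $L = O(\log n)$, we have $r = \lceil \log L/\log\ell\rceil = O(\sqrt{\log\log n})$ after optimizing $\ell = 2^{\Theta(\sqrt{\log\log n})}$, and the min-entropy requirement $k \ge 2 c_{\ref{thm:t-nipm}}\ell\log(m/\epsilon)(t+2)^{r+2}$ of Theorem~\ref{rec_thm_weak} becomes $k \ge \log n\cdot 2^{O(\sqrt{\log\log n})}$, which is exactly our hypothesis for a large enough constant $C'$.

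Next I would apply the weak-source independence-preserving merger of Theorem~\ref{rec_thm_weak} (with parameter $t$ as above, $L = O(\log n)$ rows, and $\ell = 2^{\Theta(\sqrt{\log\log n})}$) to each matrix $\Y^g$, using one additional fresh independent $(n,k)$-source $\Y$ as the weak seed. For a fixed good index $g\in S$ and any $t$ tampered indices $i_1,\dots,i_t\in S\setminus\{g\}$, the three hypotheses of Theorem~\ref{rec_thm_weak} are met (all rows of $\Y^g$ near uniform; $\Y$ independent of the matrices; some row $\Y^g_h$ near-uniform given the tampered rows), so the output $\Z^g = \ipm(\Y^g,\Y)$ is $O(L\epsilon)$-close to uniform even conditioned on $\Z^{i_1},\dots,\Z^{i_t}$. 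In other words, after this step we have a sequence of strings $\Z^1,\dots,\Z^r$, a $(1-r^{-\Omega(1)})$-fraction of which are indexed by $g\in S$ with the property that each such $\Z^g$ is close to uniform given \emph{any} $t$ of the other strings from $S$. This is precisely the ``$t$-wise somewhere-uniform'' structure that Cohen and Schulman feed into the final resilient-function / extractor-for-a-bit step: one then applies a low-influence resilient function (as in \cite{CZ15,Coh16b}) to $\Z^1,\dots,\Z^r$ (using, say, a majority-of-tribes or the Ajtai--Linial function) to extract a single bit with constant error, since the bad coordinates (indices outside $S$ or the error from the merger) form a $r^{-\Omega(1)}$-fraction and the resilient function tolerates an $n^{-\Omega(1)}$-fraction of adversarially corrupted coordinates. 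Accounting for the one extra source $\Y$, the total number of sources is $C = C_0 + 1 = O(1)$, an absolute constant.

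Finally I would assemble the error bounds and verify the min-entropy arithmetic: the $1/r$ error from Theorem~\ref{sr_indep}, the $O(L\epsilon) = O(\epsilon\log n)$ error from each application of Theorem~\ref{rec_thm_weak}, and the resilient-function error combine to something that is $O(\epsilon)$ after rescaling $\epsilon$ by a $\polylog(n)$ factor (this is harmless since $\log(1/\epsilon)$ only enters the min-entropy requirement through $\log(n/\epsilon)$, which is still $\Theta(\log n)$ for constant or slightly sub-constant $\epsilon$); and one checks that $m = \Omega(k)$ coming out of Theorem~\ref{sr_indep} is large enough to survive the $\ipm$, which shrinks $m$ by a factor $(0.9/t)^{r+1} = 1/2^{O(\sqrt{\log\log n})}$ and subtracts $O(L\log(m/\epsilon) + \log(n/\epsilon))$ — both negligible compared to $k$. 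The main obstacle I expect is bookkeeping rather than conceptual: one must be careful that the ``good set'' structure guaranteed by Theorem~\ref{sr_indep} (which fixes $g\in S$ and then quantifies over $t$ tampering indices $i_1,\dots,i_t$ \emph{also} in $S\setminus\{g\}$) matches exactly the tampering model of Theorem~\ref{rec_thm_weak}, and that the resilient function in the last step is robust to the fact that the merger outputs are only \emph{close} to the ideal $t$-wise somewhere-uniform distribution rather than exactly so — handling this requires passing through the standard ``closeness $\Rightarrow$ works for all tests'' argument and a union bound over the $\binom{r}{t}$ choices of tampering tuples, which forces $\epsilon$ to be polynomially small in $n$ before the resilient function is applied (hence the sub-constant error remark in the footnote).
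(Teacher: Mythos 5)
Your proposal does not address the statement it was assigned. The statement is Theorem~\ref{sr_indep}: the existence of the Cohen--Schulman reduction $f:(\zo^n)^{C} \rightarrow (\zo^{Lm})^r$ that converts $C=7/\alpha$ independent $(n,k)$-sources into $r$ matrices, most of which have every row close to uniform and which satisfy the somewhere-independence-preserving property against any $t$ indices from the good set, with row length $m=\Omega(k)$. Your writeup is explicitly a proof sketch of Theorem~\ref{multi_thm}: in your very first step you \emph{invoke} Theorem~\ref{sr_indep} as a black box and then compose it with the weak-seed merger of Theorem~\ref{rec_thm_weak} and a resilient/majority function. That is essentially the paper's own derivation of Theorem~\ref{good_bits} and Theorem~\ref{multi_thm} (and as such it is reasonable), but it supplies no argument whatsoever for Theorem~\ref{sr_indep} itself --- neither the underlying construction of \cite{Coh16b} nor the specific modification this paper needs.

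For comparison, the paper does not reprove the Cohen--Schulman theorem either; it cites \cite{Coh16b} and only justifies one change: the original result gives rows of length $c\log(n/\epsilon)$ for an arbitrary constant $c$, and the paper lengthens the rows to $m=\Omega(k)$ by spending one additional independent source and extracting from it with each row as the seed of an optimal strong seeded extractor, the strongness being what preserves the two uniformity properties of the good rows (up to a small loss in $\epsilon$). A proof attempt for Theorem~\ref{sr_indep} would at minimum have to state and verify this row-lengthening step, or else reproduce the relevant part of the Cohen--Schulman construction; your proposal contains neither, so with respect to the assigned statement there is a complete gap. A minor additional point: you overload the symbol $r$, using it both for the number of matrices $n^{3/\alpha}$ and for the merger recursion depth $\lceil \log L/\log \ell\rceil = O(\sqrt{\log\log n})$, which are unrelated quantities and should be kept notationally distinct.
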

 
  Now composing the above theorem with our independence preserving merger from Section $\ref{sec_indep_weak}$, we have the following result.
 \begin{thm}\label{good_bits}There exists a constant $\alpha>0$  such that for all $n,t \in \mathbb{N}$, and for any $\epsilon,\delta>0$, there exists an polynomial time computable function $\reduce:(\zo^n)^{C+1} \rightarrow \zo^{r}$, where $C=\frac{7}{\alpha}+1,  r=n^{3/\alpha}$, such that the following hold: Let $\X_1,\ldots,\X_C$ be independent $(n,k)$ sources, $k\ge  2^{\sqrt{\log t+ \log \log n}} \log(k/\epsilon) (t+2)^{O(\sqrt{\log t+\log \log n})}+c_{\ref{sr_indep}}t \log(t) \log(n\log t/\epsilon)$, and let $\Z=\reduce(\X_1,\ldots,\X_{C+1})$. Then there exists a subset $S \subset [r]$, $|S| \ge r- r^{\frac{1}{2}-\alpha}$ such that $\Z_{S}$ is $n^{-\Omega(1)}$-close to a $(t,\gamma_{\ref{good_bits}})$-wise independent distribution, where $\gamma_{\ref{good_bits}}=O(\epsilon t \log n)$.
 \end{thm}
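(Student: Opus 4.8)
\textbf{Proof proposal for Theorem~\ref{good_bits}.}
The plan is to compose Theorem~\ref{sr_indep} with the weak-source independence preserving merger from Theorem~\ref{rec_thm_weak}, applied with the parameter choice $\ell = 2^{\sqrt{\log L}}$ so that $r_{\text{merge}} = \lceil \log L/\log \ell\rceil = O(\sqrt{\log L}) = O(\sqrt{\log t + \log\log n})$, which is exactly what produces the stated min-entropy bound on $k$. First I would run $f$ from Theorem~\ref{sr_indep} on $\X_1,\ldots,\X_C$ (here $C = 7/\alpha$) to obtain the sequence of $L\times m$ matrices $\Y^1,\ldots,\Y^r$ together with the good set $S$, $|S| \ge r - r^{1/2-\alpha}$; note $f(\X_1,\ldots,\X_C)$ is $1/r$-close to this sequence, so it suffices to argue about the idealized $\Y^g$'s and absorb the $1/r = n^{-\Omega(1)}$ error at the end. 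Then, using the one extra source $\X_{C+1}$ (this is why we need $C+1 = 7/\alpha + 1$ sources total), define
\[
\Z_g = (L,\ell,t)\text{-}\ipm(\Y^g, \X_{C+1}), \qquad g\in[r],
\]
where the $\ipm$ is the one from Theorem~\ref{rec_thm_weak} with output length $m'' = 1$ (or truncated to one bit), error parameter $\epsilon$, and $t$ tampering matrices. Set $\reduce(\X_1,\ldots,\X_{C+1})$ to be the length-$r$ string whose $g$-th bit is $\Z_g$.

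The core step is to verify the hypotheses of Theorem~\ref{rec_thm_weak} for each fixed $g\in S$, taking the ``untampered'' matrix to be $\Y^g$ and the $t$ tampered matrices to be $\Y^{j}$ for $j$ ranging over any $t$ distinct indices in $S\setminus\{g\}$. Theorem~\ref{sr_indep} gives exactly: every row $\Y^g_i$ is $\epsilon$-close to $\U_m$; there is a good row $h\in[L]$ with $\Y^g_h$ $\epsilon$-close to uniform conditioned on $\{\Y^j_h : j\neq g\}$; and $\X_{C+1}$ is an independent $(n,k)$-source with $k$ large enough (the bound in the statement is precisely $2c_{\ref{thm:t-nipm}}\ell\log(m/\epsilon)(t+2)^{r_{\text{merge}}+2} + c_{\ref{sr_indep}}t\log t\log(n\log t/\epsilon)$ after plugging $\ell = 2^{\sqrt{\log L}}$, $L = O(t\log n)$). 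Applying Theorem~\ref{rec_thm_weak} then yields that $\Z_g$ is $3c'_{\ref{thm:t-nipm}}L\epsilon$-close to uniform conditioned on $(\Z_{j_1},\ldots,\Z_{j_t})$; since $L = O(t\log n)$ this is $O(\epsilon t\log n)$-close, which is the claimed $\gamma_{\ref{good_bits}}$. Because this holds for \emph{every} choice of $t$ indices from $S\setminus\{g\}$ and every $g\in S$, the restriction $\Z_S$ is $O(\epsilon t\log n)$-close to a $t$-wise independent distribution in the $(t,\gamma_{\ref{good_bits}})$-wise independent sense; folding in the $1/r$ error from $f$ gives the final $n^{-\Omega(1)}$ closeness. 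I would also note that $\reduce$ is polynomial time since $f$, the strong extractors, and the recursive $\nipm$ (which makes $\poly(r_{\text{merge}}, L)$ extractor calls) are all explicit.

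The main obstacle I anticipate is bookkeeping the min-entropy and error parameters cleanly through the composition, in particular confirming that the $m$ produced by Theorem~\ref{sr_indep} (which we can make $\Omega(k)$ via the extra-source trick described before the theorem) is large enough to survive the $\ipm$'s entropy loss $m'' = (0.9/t)^{r_{\text{merge}}+1}(m - c_{\ref{thm:t-nipm}}\ell(t+1)r_{\text{merge}}\log(m/\epsilon) - O(\log(n/\epsilon)))$ and still leave at least one output bit, and that the seed requirement $k \ge 2c_{\ref{thm:t-nipm}}\ell\log(m/\epsilon)(t+2)^{r_{\text{merge}}+2}$ on $\X_{C+1}$ matches the stated hypothesis after substituting $\ell$, $r_{\text{merge}}$, and $L$. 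A secondary subtlety is ensuring the independence structure is genuinely preserved: the $\nipm$ guarantee is stated for a \emph{fixed} tuple of tampered matrices, so to conclude $t$-wise independence of $\Z_S$ one must invoke it once per $t$-subset and argue this is exactly the definition of a $(t,\gamma)$-wise independent source — this is routine but should be spelled out. Everything else is a direct substitution into Theorems~\ref{sr_indep} and~\ref{rec_thm_weak}.
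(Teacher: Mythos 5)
Your proposal is correct and follows essentially the same route as the paper: compose $f$ from Theorem~\ref{sr_indep} (with $m=\Omega(k)$ via the extra-source trick) with the weak-seed merger of Theorem~\ref{rec_thm_weak} applied per matrix $\Y^g$ using the one additional source $\X_{C+1}$, take $\ell=2^{\sqrt{\log L}}$ so the merger depth is $O(\sqrt{\log t+\log\log n})$, and absorb the $1/r$ error from Theorem~\ref{sr_indep} at the end. Your explicit per-index and per-$t$-subset bookkeeping is, if anything, slightly more careful than the paper's terse "direct from Theorem~\ref{rec_thm_weak}" conclusion.
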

 \begin{proof} Let $f:(\zo^n)^{C} \rightarrow (\zo^{Lm})^r$ be the function from Theorem $\ref{sr_indep}$ with $\epsilon_{\ref{sr_indep}}=\epsilon, m=\beta k$ for some constant $\beta>0$. Thus $L=O(t \log n)$. Let $(L,\ell,t)$-$\ipm:(\zo^{Lm})^t \times \zo$ be the function from Theorem $\ref{rec_thm_weak}$, with $\ell=2^{\sqrt{\log L}}=2^{O(\sqrt{\log t+\log \log n})}$ and error parameter $\epsilon_{\ref{rec_thm_weak}}=\epsilon$. Define $$\reduce(x_1,\ldots,x_{C+1}) = (L,\ell,t)\text{-}\ipm(f(x_1,\ldots,x_{C}),x_{C+1}).$$
 We note that $k >c_{\ref{sr_indep}}t \log(t) \log(n\log t/\epsilon)$. Thus, using Theorem $\ref{sr_indep}$, it follows that there exists a subset $S \subset [r]$, $|S| \ge r-r^{\frac{1}{2}-\alpha}$ and a sequence of $L \times m$ matrices $\Y^1,\ldots,\Y^r$ such that: 
\begin{itemize}
\item $f(\X_1,\ldots,\X_C)$ is $1/r$-close to $\Y^1,\ldots,\Y^r$,
\item for any $i \in [L]$ and $g\in S$, $\Y^g_i$ is $\epsilon$-close to $\U_m$,
\item for any $g \in S$, and any distinct $i_1,\ldots,i_t$ in $S \setminus \{ g\}$, there exists an $h \in [L]$ such that $\Y^g_h|\{\Y^j_h: j\in [r]\setminus\{ g\}\}$ is $\epsilon$-close to uniform.
\end{itemize}
 We now work with the sources $\Y^1,\ldots,\Y^r$, and add an error of $1/r$ in the end. The theorem is now direct using Theorem $\ref{rec_thm_weak}$ and observing that the following hold  by our setting of parameters:
 \begin{itemize}
 \item $k \ge 2c_{\ref{thm:t-nipm}} \ell \log(k/\epsilon) (t+2)^{\lceil \frac{\log L}{\log \ell}\rceil+1}$,
 \item $m=\beta k  \ge 2^{\sqrt{\log L}}(c_{\ref{thm:t-nipm}}\ell(t+1)r\log(m/\epsilon)+c_{\ref{guv}}(t+2)\log(n/\epsilon))$.
 \end{itemize}
 \end{proof}

 Our multi-source extractor in Theorem $\ref{multi_thm}$ is now easy to obtain using a result on the majority function. 
 
 \begin{thm}[\cite{DGJSV,Viola14,Coh16b}] Let $\Z$ be a source on $r$ bits such that there exists a subset $S \subset [r]$, $|S| \ge r- r^{\frac{1}{2}-\alpha}$ such that $\Z_S$ is $t$-wise independent. Then, $$\l|\Pr[\textnormal{Majority}(\Z)=1]-\frac{1}{2}\r| \le O\l(\frac{\log t}{t}+r^{-\alpha}\r).$$
 \end{thm}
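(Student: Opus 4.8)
The plan is to sandwich $\mathrm{Majority}(\Z)$ between two thresholds on the sum of the coordinates in $S$, use that $t$-wise independence fools a single threshold, and bound the resulting binomial tails. Write $T=[r]\setminus S$, $\tau=|T|\le r^{1/2-\alpha}$, $s=|S|=r-|T|$, and $W=\sum_{i\in S}\Z_i$. Since $W_T:=\sum_{i\in T}\Z_i$ always lies in $\{0,\dots,\tau\}$, we have the deterministic pointwise inequalities $\mathbf{1}[W>r/2]\le \mathbf{1}[\mathrm{Majority}(\Z)=1]\le \mathbf{1}[W>r/2-\tau]$, and hence
\[
\Pr[W>r/2]\;\le\;\Pr[\mathrm{Majority}(\Z)=1]\;\le\;\Pr[W>r/2-\tau].
\]
So it suffices to show that for every $\theta$ with $r/2-\tau\le\theta\le r/2$ one has $\bigl|\Pr[W>\theta]-\tfrac12\bigr|=O(\log t/t+r^{-\alpha})$; crucially this only involves the marginal of $(\Z_i)_{i\in S}$, so no conditioning on $\Z_T$ (which could break $t$-wise independence) is needed.

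The key tool is moment matching. The marginal distribution of $(\Z_i)_{i\in S}$ is $t$-wise independent, so for any univariate polynomial $q$ of degree at most $t$, the random variable $q(W)$ multilinearizes to a polynomial of degree at most $t$ in the bits $(\Z_i)_{i\in S}$, and therefore $\E_{\Z_S}[q(W)]=\E_{U}[q(W)]$, where $U$ denotes the uniform distribution on $\zo^S$. I would then invoke the classical sandwiching construction — this is exactly the content of \cite{DGJSV,Viola14} and is the form used in \cite{Coh16b}: there exist univariate polynomials $q_\ell,q_u$ of degree $t$ with $q_\ell(j)\le \mathbf{1}[j>\theta]\le q_u(j)$ for all $j\in\{0,\dots,s\}$ and $\E_U[q_u(W)-q_\ell(W)]=O(\log t/t)$. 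Combining the two facts,
\[
\Pr_{\Z_S}[W>\theta]\le \E_{\Z_S}[q_u(W)]=\E_U[q_u(W)]\le \Pr_U[W>\theta]+O(\log t/t),
\]
and symmetrically with $q_\ell$, so $\bigl|\Pr_{\Z_S}[W>\theta]-\Pr_U[W>\theta]\bigr|=O(\log t/t)$.

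It remains to estimate the binomial probability under $U$. Here $W$ is $\mathrm{Binomial}(s,\tfrac12)$, so $\Pr_U[W=j]=O(1/\sqrt{s})$ for every $j$ and $\bigl|\Pr_U[W>s/2]-\tfrac12\bigr|=O(1/\sqrt{s})$; since $s=r-|T|$ and $r/2-\tau\le\theta\le r/2$ we get $|\theta-s/2|\le\tau$, hence $\bigl|\Pr_U[W>\theta]-\tfrac12\bigr|\le O(1/\sqrt{s})+\tau\cdot O(1/\sqrt{s})=O(\tau/\sqrt{s})=O(r^{-\alpha})$, using $s=\Theta(r)$ and $\tau\le r^{1/2-\alpha}$. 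Chaining the three estimates gives $\bigl|\Pr[W>\theta]-\tfrac12\bigr|=O(\log t/t+r^{-\alpha})$ for every admissible $\theta$, and together with the first display this proves the theorem. The main obstacle is the sandwiching step: constructing degree-$t$ polynomials that pointwise bracket the threshold indicator on the binomial grid with expected gap only $O(\log t/t)$ under the uniform binomial; this requires Chebyshev-polynomial approximations of the sign function and a careful error analysis near the threshold, and is the heart of the cited results. Everything else — the monotone sandwich over $S$, the degree-$t$ moment matching, and the local-limit estimate producing the $r^{-\alpha}$ term — is routine.
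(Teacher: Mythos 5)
The paper does not prove this statement at all: it is imported verbatim from the cited works (\cite{DGJSV,Viola14,Coh16b}) and used as a black box, so there is no in-paper argument to compare against. Your reconstruction is correct and is essentially the standard proof underlying those citations: the pointwise sandwich $\mathbf{1}[W>r/2]\le\mathbf{1}[\mathrm{Maj}(\Z)=1]\le\mathbf{1}[W>r/2-\tau]$ is valid since $0\le W_T\le\tau$ deterministically, the degree-$t$ moment-matching step is sound because $q(W)$ multilinearizes to degree $\le t$ in the bits of $\Z_S$, and the binomial anticoncentration bound $O(\tau/\sqrt{s})=O(r^{-\alpha})$ correctly accounts for shifting the threshold; the only ingredient you defer---the existence of degree-$t$ univariate sandwiching polynomials for the threshold with expected gap $O(\log t/t)$ under the binomial measure---is precisely the content of the cited results, so deferring it is appropriate here.
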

 
 We also recall a result about almost $t$-wise independent distributions.
 \begin{thm}[\cite{AGM03}]\label{almost_t_wise} Let $\D$ be a $(t,\gamma)$-wise independent distribution on $\{ 0,1\}^n$. Then there exists a $t$-wise independent distribution that is $n^{t}\gamma$-close to  $\D$.
\end{thm}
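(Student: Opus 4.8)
The plan is to use the Fourier characterization of $t$-wise independence together with a small correction of $\D$ that annihilates all low-order bias, followed by a tiny admixture of the uniform distribution to repair non-negativity.

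Write $\chi_S(x) = (-1)^{\sum_{i \in S}x_i}$ for $S \subseteq [n]$, $x \in \zo^n$, and for a signed measure $\mu$ on $\zo^n$ write $\widehat{\mu}(S) = \sum_x \mu(x)\chi_S(x)$. Recall that a probability distribution $\mu$ on $\zo^n$ is $t$-wise independent if and only if $\widehat{\mu}(S) = 0$ for every $S$ with $1 \le |S| \le t$, and that $(t,\gamma)$-wise independence of $\D$ forces $|\widehat{\D}(S)| = O(\gamma)$ for every such $S$ (since $\widehat{\D}(S)$ depends only on the marginal of $\D$ on the coordinates in $S$, which is $\gamma$-close to uniform, where $\chi_S$ has expectation $0$). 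The number of sets $S$ with $1 \le |S| \le t$ is $N := \sum_{i=1}^t \binom{n}{i} \le n^t$.

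First I would subtract off the low-order Fourier part of $\D$: set $E(x) := 2^{-n}\sum_{1 \le |S| \le t}\widehat{\D}(S)\,\chi_S(x)$ and $\D' := \D - E$. By orthogonality of the characters this kills exactly the offending coefficients ($\widehat{\D'}(S) = 0$ for $1 \le |S| \le t$) while leaving $\widehat{\D'}(\emptyset) = 1$, so $\D'$ is a signed measure of total mass $1$ that is ``$t$-wise independent''; and since $\|E\|_\infty \le 2^{-n}\sum_S |\widehat{\D}(S)| = O(2^{-n}\gamma N)$ we get $\|E\|_1 = O(\gamma N)$ and hence $|\D - \D'| = \tfrac12\|E\|_1 = O(n^t\gamma)$. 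The one thing that can go wrong is that $\D'$ need not be non-negative. To fix this I would mix in a little uniform mass: since $\D \ge 0$ we have $\D' \ge -\|E\|_\infty$ pointwise, so with $\eta := 2^n\|E\|_\infty = O(\gamma N)$ the measure $\D' + \eta\,\U_n$ is non-negative, and $\D'' := (\D' + \eta\,\U_n)/(1+\eta)$ is a genuine probability distribution. Crucially $\widehat{\U_n}(S) = 0$ for every nonempty $S$, so this operation does not disturb the vanishing low-order coefficients and $\D''$ is honestly $t$-wise independent. Finally, from $\D - \D'' = (\D - \D') + \tfrac{\eta}{1+\eta}(\D' - \U_n)$, the triangle inequality, $|\D' - \U_n| \le |\D - \U_n| + |\D - \D'| \le 1 + O(n^t\gamma)$, and $\tfrac{\eta}{1+\eta} \le \eta = O(n^t\gamma)$, I get $|\D - \D''| = O(n^t\gamma)$; this is the desired $n^t\gamma$ after absorbing the absolute constant (one may assume $n^t\gamma < 1$, else the statement is trivial with $\U_n$ as the witness).

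The main obstacle, and the only non-mechanical point, is exactly this non-negativity step: the natural candidate $\D'$ obtained by zeroing the low-order Fourier coefficients is ``$t$-wise independent'' but need not be a probability distribution, and one must check that the cheap repair — adding back only $O(n^t\gamma)$ worth of uniform mass and renormalizing — simultaneously restores non-negativity and preserves $t$-wise independence, which works precisely because the uniform distribution is invisible to all the constraints in play. Everything else is routine character bookkeeping.
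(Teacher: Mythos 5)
The paper does not actually prove this statement---it is quoted verbatim from \cite{AGM03}---and your argument is essentially the standard proof from that reference: kill the Fourier coefficients at levels $1$ through $t$, then repair non-negativity by mixing in a small multiple of the uniform distribution, which is invisible to exactly those coefficients. Your write-up is correct; the only caveat is the one you flag yourself, namely that the bookkeeping gives $C\cdot\gamma\sum_{i=1}^{t}\binom{n}{i}$ for a small absolute constant $C$ rather than literally $n^{t}\gamma$, which is harmless both because $\sum_{i=1}^{t}\binom{n}{i}$ is typically well below $n^{t}$ and because the paper only invokes the bound up to an unspecified constant in Corollary~\ref{cor_majority}.
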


Thus, we have the following corollary.  
  \begin{cor}\label{cor_majority}There exists a constant $c$ such that the following holds: Let $\Z$ be a source on $r$ bits such that there exists a subset $S \subset [r]$, $|S| \ge r- r^{\frac{1}{2}-\alpha}$ such that $\Z_S$ is $(t,\gamma)$-wise independent. Then, $$|\Pr[\textnormal{Majority}(\Z)=1]-\frac{1}{2}| \le c\l(\frac{\log t}{t}+r^{-\alpha}+ \gamma r^t\r).$$
 \end{cor}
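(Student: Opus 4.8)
The statement to prove is Corollary~\ref{cor_majority}, which bounds the bias of the majority function on a source $\Z$ whose restriction to a large coordinate subset $S$ is only $(t,\gamma)$-wise independent (rather than perfectly $t$-wise independent).

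\textbf{Plan.} The proof is a short reduction combining the two preceding results: the theorem of \cite{DGJSV,Viola14,Coh16b} on majority applied to distributions whose large sub-coordinate set is \emph{exactly} $t$-wise independent, and Theorem~\ref{almost_t_wise} from \cite{AGM03} which converts an almost $t$-wise independent distribution into a genuinely $t$-wise independent one at a small statistical cost. First I would focus on the coordinates in $S$: by hypothesis $\Z_S$ is $(t,\gamma)$-wise independent, and $\Z_S$ is a distribution on $\{0,1\}^{|S|}$ with $|S|\le r$, so by Theorem~\ref{almost_t_wise} there is a genuinely $t$-wise independent distribution $\widetilde{\D}$ on $|S|$ bits with $|\widetilde{\D}-\Z_S| \le |S|^t \gamma \le r^t\gamma$. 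Next I would ``lift'' this back: define a distribution $\widetilde{\Z}$ on $r$ bits by replacing the $S$-marginal of $\Z$ with $\widetilde{\D}$ while keeping the conditional distribution of the remaining coordinates $[r]\setminus S$ given the $S$-coordinates the same as in $\Z$ (formally, sample $\widetilde{\Z}_S \sim \widetilde{\D}$ and then sample $\widetilde{\Z}_{[r]\setminus S}$ from $\Z_{[r]\setminus S \mid S = \widetilde{\Z}_S}$). This coupling shows $|\widetilde{\Z}-\Z| \le |\widetilde{\D}-\Z_S| \le r^t\gamma$, and by construction $\widetilde{\Z}_S = \widetilde{\D}$ is $t$-wise independent with the same index set $S$ satisfying $|S|\ge r - r^{1/2-\alpha}$.

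\textbf{Finishing.} Now apply the majority theorem of \cite{DGJSV,Viola14,Coh16b} to $\widetilde{\Z}$: since $\widetilde{\Z}_S$ is $t$-wise independent and $|S| \ge r - r^{1/2-\alpha}$, we get $\bigl|\Pr[\textnormal{Majority}(\widetilde{\Z})=1] - \tfrac12\bigr| \le O\bigl(\tfrac{\log t}{t} + r^{-\alpha}\bigr)$. Since $\textnormal{Majority}$ is a (deterministic) boolean function, statistical distance is non-increasing under its application, so $\bigl|\Pr[\textnormal{Majority}(\Z)=1] - \Pr[\textnormal{Majority}(\widetilde{\Z})=1]\bigr| \le |\Z-\widetilde{\Z}| \le r^t\gamma$. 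Combining the two bounds by the triangle inequality gives
\[
\Bigl|\Pr[\textnormal{Majority}(\Z)=1]-\tfrac12\Bigr| \le O\Bigl(\tfrac{\log t}{t}+r^{-\alpha}\Bigr) + r^t\gamma \le c\Bigl(\tfrac{\log t}{t}+r^{-\alpha}+\gamma r^t\Bigr)
\]
for an absolute constant $c$, as claimed.

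\textbf{Expected obstacle.} There is essentially no hard step here — the whole argument is a two-line triangle inequality glued from black boxes. The only point requiring minor care is the lifting construction: one must be careful that Theorem~\ref{almost_t_wise} is being applied to the marginal $\Z_S$ on $|S|$ bits (so the conversion loss is $|S|^t\gamma$, which we crudely upper bound by $r^t\gamma$), and that re-attaching the coordinates outside $S$ via the original conditional distribution does not disturb the $t$-wise independence of the $S$-block nor increase the statistical distance beyond that of the marginals. Both of these are routine properties of statistical distance (data-processing / coupling), so I expect the proof to be only a few lines.
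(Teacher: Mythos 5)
Your proposal is correct and is exactly the argument the paper intends: the paper states the corollary as an immediate consequence of the majority theorem and Theorem~\ref{almost_t_wise}, i.e., replace $\Z_S$ by a genuinely $t$-wise independent distribution at statistical cost at most $r^t\gamma$, apply the majority bound, and absorb the replacement cost via the data-processing inequality. Your extra care with the lifting of the $S$-marginal back to all $r$ coordinates is the right way to make this rigorous.
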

 
 \begin{proof}[Proof of Theorem $\ref{multi_thm}$] Set $t$ to a large enough constant such that $\frac{c\log t}{t}<\epsilon/2$. Let $\alpha$ be the constant from Theorem $\ref{good_bits}$, $r=n^{3/\alpha}$ and $C=\frac{7}{\alpha}+1$. Let $\reduce$ be the function from Theorem $\ref{good_bits}$ with parameter $t_{\ref{good_bits}}=t$, $r_{\ref{good_bits}}=r$, and the error parameter $\epsilon_{\ref{good_bits}}$ set such that the parameter $\gamma_{\ref{good_bits}} \le \frac{1}{r^{t+1}}$. This can be ensured by setting $\epsilon = n^{-C'}$ for a large enough constant $C'$.
 
  Define $$\Ext(x_1,\ldots,x_C)=\textnormal{Majority}(f(x_1,\ldots,x_C)).$$

Let $\Z=f(\X_1,\ldots,\X_C)$. We  note that with this setting of parameters, there exists some constant $C''$ such that any $k \ge 2^{C''\sqrt{\log \log n}}\log(n)$ is sufficient for the conclusion of Theorem $\ref{good_bits}$ to hold. Thus, $\Z$ is a source on $r$ bits such that there exists a subset  $S \subset [r]$, $|S| \ge r-r^{\frac{1}{2}-\alpha}$ for which $\Z_{S}$ is $(t,\gamma)$-wise independent.  Theorem $\ref{multi_thm}$ is now direct from Corollary $\ref{cor_majority}$.
 \end{proof}


\bibliographystyle{alpha}
\bibliography{optimal_nm}
\appendix 

\end{document}